\numberwithin{equation}{section}
\theoremstyle{definition}
\newtheorem{theorem}{Theorem}
\newtheorem*{theorem*}{Theorem}
\newtheorem*{definition*}{Definition}
\newtheorem{lemma}[theorem]{Lemma}
\newtheorem*{lemma*}{Lemma}
\newtheorem{proposition}[theorem]{Proposition}
\newtheorem*{proposition*}{Proposition}
\newcommand{\av}[1]{\overline{#1}}
\newcommand{\mc}[1]{\mathcal{#1}}
\newcommand{\ml}[1]{\mathcal{#1}}
\newcommand{\mr}[1]{\mathrm{#1}}
\newcommand{\mbb}[1]{\mathbb{#1}}
\newcommand{\ketbra}[1]{\ket{#1}\bra{#1}}
\newcommand{\lrs}[1]{\left( #1 \right)}
\newcommand{\lrm}[1]{{\left\{ #1 \right\}}}
\newcommand{\lrl}[1]{\left[ #1 \right]}
\newcommand{\braketL}[1]{\left\langle #1 \right\rangle}
\newcommand{\aln}[1]{
\begin{align}
#1
\end{align}
}
\newcommand{\pmat}[1]{
\begin{pmatrix}
#1
\end{pmatrix}
}
\newcommand{\ra}{\rightarrow}
\newcommand{\Tr}{\mr{Tr}}
\begin{document}

\title{An introduction to monitored quantum systems and quantum trajectories: spectrum, typicality, and phases}


\author[1,2,*]{Ryusuke Hamazaki}
\author[3,1]{Ken Mochizuki}
\author[3,1]{Hisanori Oshima}
\author[3]{Yohei~Fuji}

\affil[1]{Nonequilibrium Quantum Statistical Mechanics RIKEN Hakubi Research Team, RIKEN Pioneering Research Institute (PRI), RIKEN, Wako 351-0198, Saitama, Japan}
\affil[2]{RIKEN Center for Interdisciplinary Theoretical and Mathematical Sciences(iTHEMS), RIKEN, Wako 351-0198, Saitama, Japan}
\affil[3]{Department of Applied Physics, University of Tokyo, Tokyo 113-8656, Japan}
\affil[*]{email: ryusuke.hamazaki@riken.jp}


\begin{abstract}%
Thanks to recent experimental advances in simulating and detecting quantum dynamics with high precision and controllability, our understanding of the physics of monitored quantum systems has considerably deepened over the past decades.
In this article, we provide an introductory theoretical review on the basic formalisms governing open quantum dynamics under measurement, along with recent developments in their spectral and typical aspects.
After reviewing quantum measurement theory, we
introduce the concept of quantum trajectories, which are the conditional dynamics of monitored states shaped by a set of measurement outcomes.
We then discuss the spectral properties of the dynamical map describing the evolution averaged over measurement outcomes.
As has recently been recognized, these spectral features are intimately connected to whether quantum trajectories exhibit typical behaviors, such as ergodicity and purification.
Moreover, we introduce Lyapunov exponents of typical quantum trajectories and discuss how these quantities serve as indicators of measurement-induced phase transitions in monitored quantum many-body systems.

\end{abstract}

\subjectindex{A13,A50,A58,A63,A64}

\maketitle

\tableofcontents

\section{Introduction}
Quantum measurement theory~\cite{von2018mathematical, peres2002quantum, braginsky1995quantum, nielsen2010quantum} has been at the heart of the foundation of quantum mechanics since its inception in the early 20th century.
In the last few decades, the study of monitored quantum systems has seen remarkable progress and has become an essential topic even for understanding the dynamics of open quantum systems~\cite{breuer2002theory, rivas2012open, wiseman2009quantum}.
These developments have been driven by groundbreaking experimental advances in precisely manipulating and detecting quantum systems, from the seminal experiments~\cite{nagourney1986shelved, sauter1986observation, bergquist1986observation} to the advent of quantum simulators and computers~\cite{barreiro2011open, georgescu2014quantum, altman2021quantum, harrington2022engineered}.
A striking example is the observation that additional measurements on many-body unitary dynamics within a quantum processor can give rise to novel non-equilibrium phases, characterized by their entanglement structure~\cite{koh2023measurement, Hoke23}. 
As seen from this example, monitored quantum dynamics is now recognized as linking diverse research areas, ranging from quantum information physics and condensed-matter physics to thermodynamics.

Conceptually, monitored quantum systems have an interesting structure that is not present in open quantum systems coupled to uncontrollable environment.
While both systems evolve under non-unitary dynamics, monitored systems retain access to the measurement outcomes, which can be recorded in a classical register.
The time evolution is then conditioned by the measurement outcomes unless we discard them, forming what is known as a quantum trajectory~\cite{plenio1998quantum, daley2014quantum}.
Importantly, quantum trajectories can remain in pure states throughout the evolution, in contrast to open quantum systems driven by the environment where the state eventually becomes mixed.
This feature leads to  phenomena unique to monitored systems, including the measurement-induced entanglement phase transitions~\cite{Potter2022, Lunt2022Quantum, fisher2023random}.

While (continuous-time) quantum trajectories were first introduced in the 1990s~\cite{ueda1990nonequilibrium, dalibard1992wave, dum1992monte, carmichael2009open, plenio1998quantum}, uncovering their mathematical properties remains an intriguing topic even today~\cite{attal2015central, benoist2019invariant, carollo2019unraveling, bernard2021can, benoist2021invariant, benoist2023limit, tindall2023generality, girotti2023concentration, benoist2024quantum}.
One notable feature is that most quantum trajectories display universal behaviors for certain quantities, which we call ``typical" behaviors of quantum trajectories in this article\footnote{
We note that, while the term ``typical" is often used in physics (e.g., statistical mechanics), it may be a rather informal word from a mathematical point of view. 
For example, if we consider the convergence of sequences of random variables in mathematics, there are several distinct notions, such as the convergence in probability or almost-sure convergence (see footnote~\ref{foot:TypicalVSAlmostAll} for their precise statements).
In contrast, such exact meanings are not usually considered when one says ``typical" in physics.
Still, we here adopt this terminology to introduce the intuitive notion in a physicist-friendly manner.
In this review, ``convergence for typical (almost all) quantum trajectories" basically means the almost-sure convergence; when we want to stress this fact, especially in Chapters~\ref{sec:linear-quantity_purification} and~\ref{sec:nonlear-quantity_Lyapunov-spectrum}, we will explicitly use the term ``almost surely." 
}.
Note that justifying such typical properties is by no means a simple task, partly because measurement probabilities determined by the Born rule depend nontrivially on the quantum state.
Nevertheless, many conditions for the emergence of typical behaviors have been identified, such as the uniqueness of the stationary state of the ensemble-averaged dynamics.
Examples that manifest typical features include outcome statistics, ergodic properties, state purification, and relaxation timescales governed by the Lyapunov spectrum.
Interestingly, some of these quantities also serve as crucial indicators for characterizing novel measurement-induced phase transitions in many-body systems.

In this review article, we give an introduction to monitored quantum systems and quantum trajectories from a theoretical viewpoint, emphasizing their spectra, typical properties, and phases.
Our aim is twofold:
First, we explain the basic formalisms of measurement theory and quantum trajectories in a pedagogical manner\footnote{
Indeed, this article is originally based on a lecture given by the first author at ``Summer Lecture Camp of the Hatano Laboratory" in Aug. 2024, although a significant amount of material has been added to this article.
}.
Accordingly, the content in the first half of the article partially overlaps with the existing literature~\cite{daley2014quantum, wiseman2009quantum, landi2024current, wolf2012quantum, watrous2018theory}.
We also emphasize that this article is not intended to be a comprehensive survey of the field of open quantum systems, for which numerous excellent books and reviews are already available~\cite{breuer2002theory, gardiner2004quantum, rivas2012open, daley2014quantum, wiseman2009quantum, breuer2016colloquium, sieberer2016keldysh, de2017dynamics, ashida2020non, weimer2021simulation, milz2021quantum, chruscinski2022dynamical, landi2022nonequilibrium, harrington2022engineered, landi2024current, sieberer2025universality, mori2023floquet, fisher2023random, albarelli2024pedagogical, fazio2025many}.
Second, which we believe will make this review unique, is the highlighting of recent advances concerning the typical properties of quantum trajectories and their connections to the spectral properties and non-equilibrium phases manifesting in monitored systems. 
Despite recent progress on this topic in the community of mathematical physics, these developments are not widely shared among many physicists studying monitored quantum systems.
We aim to bridge this gap by presenting core ideas in an intuitive and physicist-friendly manner, occasionally at the expense of full mathematical rigor.

The rest of this review is organized as follows.
In Chapter~\ref{sec:quantum-measurement}, we introduce basic concepts of quantum measurement theory. 
After reviewing simple projective measurements, we discuss some fundamental concepts such as indirect measurements, positive operator valued measures (POVMs), CP-instruments, and completely-positive trace-preserving (CPTP) maps.
We also explain how these maps and instruments can be represented in various ways.
In Chapter~\ref{sec:trajectory_master-equation}, we overview the formalism of quantum trajectories.
By taking the continuous-time limit of a repeated measurement protocol, we derive a stochastic equation for a quantum trajectory, whose ensemble average leads to the quantum master equation.
We then discuss physical quantities characterizing quantum trajectories, especially nonlinear observables and the statistics of quantum jumps.
We also briefly explain some related concepts, such as numerical methods for quantum trajectories and the quantum diffusion equation.

In Chapter~\ref{sec:CPTPspectra}, we discuss the spectral properties of CPTP maps and quantum master equations.
We review some important conditions for the steady states of CPTP maps, such as irreducibility and primitivity, and provide a detailed discussion of the rigorous criteria for these properties.
We also explain their connection to the steady-state properties of quantum master equations.
While most of the content in this chapter is devoted to rigorous discussions of steady-state properties, e.g., uniqueness, we also include one section that overviews miscellaneous recent topics beyond the steady-state properties, such as the spectral gap and spectral statistics.

In Chapter~\ref{sec:linear-quantity_purification}, we discuss the typical properties of quantum trajectories, focusing on the ergodicity of linear observables and purification, on the basis of the seminal results by K\"ummerer and Maassen. 
We explain the notions of these properties and their relation to the steady-state properties of the averaged dynamics discussed in Chapter~\ref{sec:CPTPspectra}.
Instead of rigorous formulations, we try to provide physical intuition for these concepts by presenting simple examples.
In Chapter~\ref{sec:nonlear-quantity_Lyapunov-spectrum}, we review recent developments on the typical properties of quantum trajectories, i.e., the ergodicity of nonlinear quantities and Lyapunov exponents.
Key results are summarized in Table~\ref{tab:ergodicity}.
These concepts play a foundational role in understanding measurement-induced phase transitions in Chapter~\ref{sec:mipt}.

\begin{table}
\begin{center}
\begin{tabular}{l|ccc}
\hline \hline
 & Unique SS & Full-rank SS & Purification \\
\hline
Ergodicity of linear observables (Sec.~\ref{sec:ergodicity_linear-observable}) & $\checkmark$ && \\
Ergodicity of nonlinear observables (Sec.~\ref{sec:ergodicity_nonlinear-quantity}) & $\checkmark$ & & $\checkmark$ \\
Convergence of LEs (Sec.~\ref{sec:typical-convergence_Lyapunov-spectrum}) & $\checkmark$ & $\checkmark$ & \\
Nonzero Lyapunov gap (Sec.~\ref{sec:spectral_gap_purification}) & $\checkmark$ & $\checkmark$ & $\checkmark$ \\
\hline \hline
\end{tabular}
\caption{
\label{tab:ergodicity}
Sufficient conditions for a typical trajectory to satisfy the ergodicity of observables linear or nonlinear in density operators and for the convergence of Lyapunov exponents (LEs) and a nonzero Lyapunov gap.
Here, the ergodicity means the equivalence between the long-time average in a single trajectory and the long-time ensemble average over all trajectories.
These conditions involve the uniqueness and full rankness (positive definiteness) of a steady state (SS) of the corresponding CPTP map, which will be discussed in Chapter~\ref{sec:CPTPspectra}, and the purification property of quantum trajectories, which will be discussed in Sec.~\ref{sec:pur}.
}
\end{center}
\end{table}

In Chapter~\ref{sec:mipt}, we overview the measurement-induced phase transitions of quantum trajectories, which have attracted recent attention as a novel type of non-equilibrium phase transition in quantum many-body systems.
While entanglement and purification transitions are well-known examples of measurement-induced phase transitions, we also discuss the relevance of the Lyapunov spectrum of quantum trajectories, which has been uncovered only recently.
In Chapter~\ref{sec:conclusion}, we conclude this review article and state some future prospects.

\section{Basics of quantum measurements}\label{sec:quantum-measurement}
In this section, we discuss some of the basics of quantum measurement theory, starting from the review of projective measurements.
We especially introduce some primary concepts required to understand the following sections, such as indirect measurements, CP-instruments, CPTP maps, and various representations of channels and instruments; an interested reader may refer to, e.g., Refs.~\cite{wiseman2009quantum, wolf2012quantum} for further details.

\subsection{Projective measurement}
Let us first consider a projective measurement of an observable $\hat{A}$, which is a Hermitian operator in a set of linear operators acting on the Hilbert space $\ml{H}$, denoted by $\mbb{B}[\ml{H}]$.
We assume that $\ml{H}$ is finite-dimensional throughout this paper.
Prepare a density matrix $\hat{\rho}$, which is a normalized (i.e., $\mr{Tr}[\hat{\rho}]=1$) positive semidefinite operator in $\mbb{B}[\ml{H}]$.
Then, the projective measurement of $\hat{A}$ probabilistically transforms $\hat{\rho}$ into one of the states corresponding to the projectors $\hat{P}_\eta$ onto the eigenspaces of $\hat{A}$.
That is, assume that $\hat{A}$ is decomposed as
\aln{
\hat{A}=\sum_\eta a_\eta\hat{P}_\eta,
}
where $a_\eta$ are the eigenvalues of $\hat{A}$ and $\hat{P}_\eta\:(=\hat{P}_\eta^\dag)$ are the Hermitian orthogonal projectors satisfying 
\aln{
\hat{P}_\eta\hat{P}_{\eta'}&=\delta_{\eta\eta'}\hat{P}_\eta,\\
\sum_\eta\hat{P}_\eta&=\hat{\mbb{I}}.
}
Then, when we measure $\hat{\rho}$ in the basis of $\{\hat{P}_\eta\}_\eta$, the normalized post-measurement state becomes
\aln{
\hat{\rho}'_\eta=\frac{\hat{P}_\eta\hat{\rho}\hat{P}_\eta}{p_\eta}
}
with the Born probability
\aln{
p_\eta=\Tr[\hat{P}_\eta \hat{\rho} \hat{P}_\eta]=\Tr[\hat{\rho}\hat{P}_\eta].
}
Note that when $a_\eta$ is degenerate, the rank of $\hat{P}_\eta$ becomes equal to the degree of the degeneracy. 
In particular, if $\hat{A}$ is nondegenerate, i.e., $\hat{P}_\eta=\ket{\eta}\bra{\eta}$, $\hat{\rho}_\eta'$ and $p_\eta$ are simply given by $\hat{\rho}_\eta'=\ket{\eta}\bra{\eta}$ and $p_\eta=\braket{\eta|\hat{\rho}|\eta}$, respectively.

\subsection{Indirect measurement}\label{sec:indirect}
While the projective measurement scheme discussed above completely destroys the pre-measurement state, we can instead consider measurements for which the quantum back-action is moderate.
Here, we specifically consider indirect measurements, where we first attach a meter M to the system S, let them interact, and perform the projective measurement on the meter. 
See Fig.~\ref{fig1}(a) for the case with a pure state.

\begin{figure}[!h]
\centering\includegraphics[width=\linewidth]{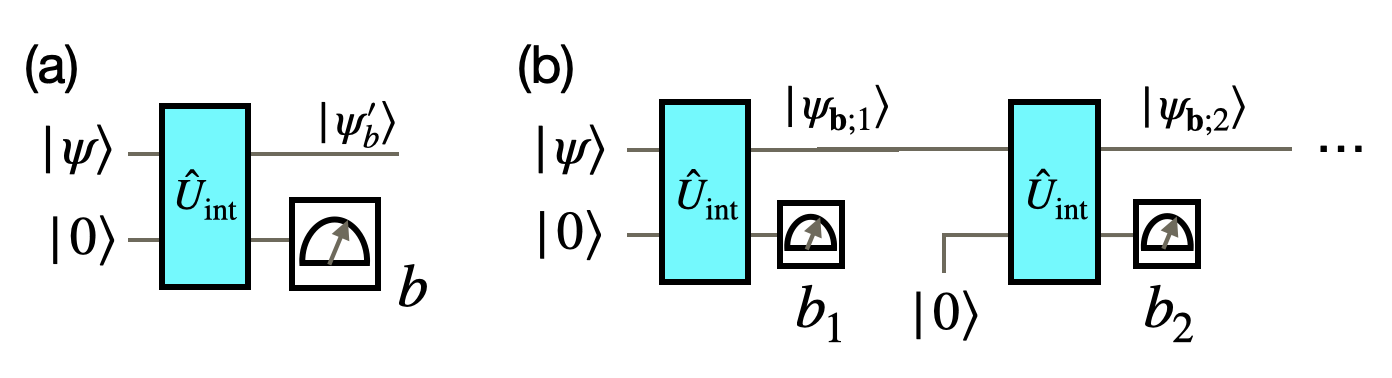}
\caption{
(a) Indirect measurement for the case with a pure state. 
We let the state $\hat{\rho}=\ket{\psi}\bra{\psi}$ of the system and the state $\hat{\sigma}_\mr{M}=\ket{0}\bra{0}$ of the meter interact via $\hat{U}_\mr{int}$ and measure the meter's state. 
If the outcome is $b$, the post-measurement state for the system becomes $\hat{\rho}_b'=\ket{\psi_b'}\bra{\psi_b'}$.
(b) Repeated measurement: we repeat the above process (a) and obtain the quantum trajectory $\hat{\rho}_{\bm{b};n}=\ket{\psi_{\bm{b};n}}\bra{\psi_{\bm{b};n}}$ depending on the measurement outcomes.
}
\label{fig1}
\end{figure}

To be more precise, we consider a composite Hilbert space $\ml{H}=\ml{H}_\mr{S}\otimes \ml{H}_\mr{M}$, where $\ml{H}_\mr{S} \:(\ml{H}_\mr{M})$ denotes the Hilbert space for the system (meter).
The initial state is assumed to be
\aln{\label{indirecteq}
\hat{\rho}\otimes \hat{\sigma}_\mr{M}\in \mbb{B}[\ml{H}],
}
where the system and meter are decoupled. 
The initial state $\hat{\rho}\otimes\hat{\sigma}_\mathrm{M}$ is evolved by a joint unitary operator $\hat{U}_\mr{int}$ as $\hat{U}_\mr{int}(\hat{\rho}\otimes \hat{\sigma}_\mr{M})\hat{U}_\mr{int}^\dag$, which entangles the system and meter.
For simplicity, we assume that the projection operator for the meter is a rank-one operator, 
\aln{
\hat{P}_{\mr{M}b}=\ket{q_b}\bra{q_b},
}
where $\ket{q_b}\in\ml{H}_\mr{M}$ and $b$ denotes the label of the measurement outcome.
Then, the post-measurement state reads
\aln{
\hat{\rho}'_b\otimes\ket{q_b}\bra{q_b}= \frac{1}{p_b}\braket{q_b|\hat{U}_\mr{int}(\hat{\rho}\otimes \hat{\sigma}_\mr{M})\hat{U}_\mr{int}^\dag|q_b}\otimes  \ket{q_b}\bra{q_b}
}
with probability
\aln{
p_b=\Tr_\mr{SM}[(\hat{\mbb{I}}_\mr{S}\otimes\hat{P}_{\mr{M}b})\hat{U}_\mr{int}(\hat{\rho}\otimes \hat{\sigma}_\mr{M})\hat{U}_\mr{int}^\dag]
=\Tr_\mr{S}[\braket{q_b|\hat{U}_\mr{int}(\hat{\rho}\otimes \hat{\sigma}_\mr{M})\hat{U}_\mr{int}^\dag|q_b}],
}
where $\mr{Tr}_\mr{SM}[\cdots]$ and $\mr{Tr}_\mr{S}[\cdots]$ denote the trace of the composite system (S and M) and the system S alone, respectively. 
By writing
\aln{
\hat{\sigma}_\mr{M}=\sum_a\sigma_a\ket{\sigma_a}\bra{\sigma_a},
}
and introducing the measurement operator
\aln{
\hat{M}_{ab}=\sqrt{\sigma_a}\braket{q_b|\hat{U}_\mr{int}|\sigma_a},
}
we have\footnote{
In the following, we remove the subscript S for the trace of the system for simplicity, when no ambiguity would arise.
} 
\aln{
\begin{split}p_b&=\Tr\lrl{\hat{\rho}\sum_a\hat{M}_{ab}^\dag\hat{M}_{ab}},\\
\hat{\rho}'_b&=\frac{1}{p_b}\sum_a\hat{M}_{ab}\hat{\rho}\hat{M}_{ab}^\dag.
\end{split}}
If we take the average over the outcomes $\{b\}$, we have
\aln{\label{eq:kraus}
\hat{\rho}'=\sum_bp_b\hat{\rho}'_b=\sum_{ab}\hat{M}_{ab}\hat{\rho}\hat{M}_{ab}^\dag.
}
Note that $\sum_{ab}\hat{M}_{ab}^\dag\hat{M}_{ab}=\hat{\mbb{I}}_\mr{S}$ holds, which follows from $\hat{U}_\mr{int}^\dag \hat{U}_\mr{int}=\hat{\mbb{I}}_\mr{SM}$ and $\sum_b\ket{q_b}\bra{q_b}=\hat{\mbb{I}}_\mr{M}$.
Then, the mapping of $\hat{\rho}$ in the form of Eq. \eqref{eq:kraus} is called the Kraus representation.
In that context, $\hat{M}_{ab}$ is also called the Kraus operator.

In a special case, we can consider a situation where $\hat{\sigma}_\mr{M}=\ket{\sigma}\bra{\sigma}$ is a pure state. Then, we can omit the label $a$ and find
\aln{\label{krauspure}
\begin{split}
p_b&=\Tr\lrl{\hat{\rho}\hat{M}_{b}^\dag\hat{M}_{b}},\\
\hat{\rho}'_b&=\frac{1}{p_b}\hat{M}_{b}\hat{\rho}\hat{M}_{b}^\dag
\end{split}}
with
\aln{\label{qbusi}
\hat{M}_{b}=\braket{q_b|\hat{U}_\mr{int}|\sigma}.
}

\subsubsection{Positive operator valued measure (POVM)}
Let us define an operator
\aln{
\hat{E}_b=\sum_a\hat{M}_{ab}^\dag\hat{M}_{ab},
}
which satisfies $p_b=\Tr[\hat{E}_b\hat{\rho}]$.
Then, the operators $\{\hat{E}_b\}$ satisfy the following two properties:
\begin{itemize}
\item 
The resolution of identity:
\aln{
\sum_b\hat{E}_b=\hat{\mbb{I}}.
}
\item 
Positive semidefiniteness:
\aln{
\hat{E}_b\succeq 0.
}
\end{itemize}
Here, $\hat{X}\succeq 0$ means $\braket{\phi|\hat{X}|\phi}\geq 0$ for all $\ket{\phi}\in\mathcal{H}$. 
The above positive semidefiniteness obviously follows from $\braket{\phi|\hat{E}_b|\phi}=\sum_a\|\hat{M}_{ab}\ket{\phi}\|^2\geq 0$.

Conversely, if the above two conditions, i.e., the resolution of identity and positive semidefiniteness, are met, the set $\{\hat{E}_b\}$ is called a positive operator valued measure (POVM).
Note that the POVM is more general than the set of projection operators $\{\hat{P}_\eta\}$, which is a special type of POVM.
Indeed, elements of the POVM do not need to be orthogonal to each other $\hat{E}_b\hat{E}_{b'}\neq \hat{E}_b\delta_{bb'}$ in general, in contrast to the set of projection operators.

\subsubsection{Examples}
Let us provide two examples of measurement operators and POVMs.
As a first example, we consider a measurement with an error $\epsilon\:(0\leq \epsilon <1)$.
We set the meter's initial state to $\hat{\sigma}_\mr{M}=\ket{0}\bra{0}$ and the measurement basis to $\ket{q_b}=\ket{0}$ or $\ket{1}$.
The joint unitary operator $\hat{U}_\mr{int}$ tries to copy the basis states $\ket{0},\ket{1}$ of the system onto the meter; the measurement becomes projective if the copy is perfect {($\epsilon=0$)}, while it becomes weaker if there is an error in the copying process.
To account for this process, we can set the joint unitary $\hat{U}_\mr{int}$ such that 
\aln{
\begin{split}\hat{U}_\mr{int}\ket{0}\otimes \ket{0}&=\ket{0}\otimes (\sqrt{1-\epsilon}\ket{0}+\sqrt{\epsilon}\ket{1}),\\
\hat{U}_\mr{int}\ket{1}\otimes \ket{0}&=\ket{1}\otimes (\sqrt{1-\epsilon}\ket{1}+\sqrt{\epsilon}\ket{0}).
\end{split}}
Using Eq.~\eqref{qbusi}, we find
\aln{
\begin{split}\hat{M}_0&=\sqrt{1-\epsilon}\ket{0}\bra{0}+\sqrt{\epsilon}\ketbra{1},\\
\hat{M}_1&=\sqrt{1-\epsilon}\ket{1}\bra{1}+\sqrt{\epsilon}\ketbra{0}.
\end{split}}
The corresponding POVM reads
\aln{
\begin{split}\hat{E}_0&=({1-\epsilon})\ket{0}\bra{0}+{\epsilon}\ketbra{1},\\
\hat{E}_1&=({1-\epsilon})\ket{1}\bra{1}+{\epsilon}\ketbra{0}
\end{split}}
with $\hat{E}_0+\hat{E}_1=\hat{\mbb{I}}_\mr{S}$.
Note that $\hat{E}_0\hat{E}_1\neq 0$ unless $\epsilon=0$ (projective measurement).

Another example is the swapping between the system and meter.
We again set the meter's initial state to $\hat{\sigma}_\mr{M}=\ket{0}\bra{0}$ and the measurement basis to $\ket{q_b}=\ket{0}$ or $\ket{1}$.
The joint unitary operator describing the interaction is taken as
\aln{
\begin{split}\hat{U}_\mr{int}\ket{0}\otimes \ket{0}&=\ket{0}\otimes \ket{0},\\
\hat{U}_\mr{int}\ket{1}\otimes \ket{0}&=\cos\theta \ket{1}\otimes\ket{0}+\sin\theta  \ket{0}\otimes\ket{1}.
\end{split}}
In this case, the measurement operators become
\aln{
\begin{split}\hat{M}_0&=\ket{0}\bra{0}+\cos\theta\ketbra{1},\\
\hat{M}_1&=\sin\theta\ket{0}\bra{1}.
\end{split}}
The POVM reads
\aln{
\begin{split}\hat{E}_0&=\ket{0}\bra{0}+\cos^2\theta\ketbra{1},\\
\hat{E}_1&=\sin^2\theta\ket{1}\bra{1}
\end{split}}
with $\hat{E}_0+\hat{E}_1=\hat{\mbb{I}}_\mr{S}$.
We find that $\hat{E}_0\hat{E}_1\neq 0$ unless $\theta=m\pi/2$ with $m$ being integer.
Physically, when $\sin\theta=\pm1$, this can be regarded as a toy model for the spontaneous emission of an atom.
Namely, an excited atom $\ket{1}$ (system) becomes a ground state $\ket{0}$ with emitting a photon to the vacuum $\ket{0}$ (meter).
Note that a more sophisticated treatment to describe the spontaneous emission is discussed in Sec. \ref{qtex} using continuous-time quantum trajectories.

\subsection{CP-instrument and CPTP map}
Let us next discuss how the measurement process and the change of a quantum state are characterized in a more abstract manner.
For this purpose, we first note that the post-measurement state $\hat{\rho}_b'$ is rewritten as
\aln{\label{change}
\hat{\rho}_b'=\frac{1}{p_b}\ml{E}_b[\hat{\rho}],
}
where $\ml{E}_b:\mbb{B}[\ml{H}_\mr{S}]\ra\mbb{B}[\ml{H}_\mr{S}]$ is given by
\aln{\label{cpik}
\ml{E}_b[\hat{\rho}]=\sum_a\hat{M}_{ab}\hat{\rho}\hat{M}_{ab}^\dag.
}
Here, $\{\ml{E}_b\}$ satisfies the following properties:
\begin{enumerate}[(i)]
\def\labelenumi{(\theenumi)}
\item 
The sum of $\mc{E}_b$,
\aln{
\ml{E}=\sum_b\ml{E}_b,
}
is trace preserving (TP): a linear map $\mc{F}: \mbb{B}[\mc{H}_\mr{S}] \to \mbb{B}[\mc{H}_\mr{S}]$ is called TP if
\aln{
\Tr[\ml{F}[\hat{X}]]=\Tr[\hat{X}]
}
for any $\hat{X} \in \mbb{B}[\mc{H}_\mr{S}]$.

\item 
Every $\mc{E}_b$ is completely positive (CP): a linear map $\mc{F}: \mbb{B}[\mc{H}_\mr{S}] \to \mbb{B}[\mc{H}_\mr{S}]$ is called CP if
\aln{
\label{eq:DefCP}
(\ml{F}\otimes \ml{I}_\mr{A})[\hat{R}]\succeq 0
}
for any auxiliary system A with Hilbert space $\mc{H}_\mr{A}$ and for any positive semidefinite operator $\hat{R} \in \mbb{B}[\mc{H}_\mr{S} \otimes \mc{H}_\mr{A}]$ (i.e., $\hat{R} \succeq 0$).
Here, $\mc{I}_\mr{A}: \mbb{B}[\mc{H}_\mr{A}] \to \mbb{B}[\mc{H}_\mr{A}]$ is the identity map on A.
\end{enumerate}

In general, the set of linear maps $\{\ml{E}_b\}$ satisfying (i) and (ii) above is called a CP-instrument and characterizes measurement processes.
For the Kraus representation in Eq.~\eqref{cpik}, (i) readily follows from the condition $\sum_{ab}\hat{M}_{ab}^\dag \hat{M}_{ab}=\hat{\mbb{I}}_\mr{S}$, and (ii) follows because we can write
\aln{\label{kracp}
(\ml{E}_b\otimes \ml{I}_\mr{A})[\hat{R}]=\sum_a(\hat{M}_{ab}\otimes\hat{\mbb{I}}_\mr{A})\hat{R}(\hat{M}_{ab}\otimes\hat{\mbb{I}}_\mr{A})^\dag=\sum_a \hat{X}_{ab}^\dag \hat{X}_{ab}\succeq 0,
}
where $\hat{X}_{ab}=\sqrt{\hat{R}}(\hat{M}_{ab}\otimes\mbb{I}_\mr{A})^\dag$ is well-defined since  $\hat{R}$ is positive semidefinite.

Let us explain the physical meanings of (i) and (ii). 
Since the change of a state averaged over the measurement outcomes is given by $\ml{E}[\hat{\rho}]$, the TP condition for $\ml{E}$ means that the net probability is kept normalized.
For CP, we first note that it is stronger than the condition for the positivity of a map, which requires that $\ml{E}_b[\hat{\rho}]\succeq 0$ for all $\hat{\rho}\succeq 0$.
The positivity ensures that a state after applying the map is positive semidefinite, given that the initial state is positive semidefinite.
However, CP requires more than that, and we allow attaching any auxiliary systems.

One concrete example of a map that is positive but not CP is the matrix transpose in a certain basis, $\ml{T}[\hat{\rho}]=\hat{\rho}^\mathsf{T}$\footnote{
For qubits, we expand $\hat{\rho}$ in the computational basis states as $\hat{\rho}=\sum_{ij} \rho_{ij}\ket{i}\bra{j}$ and define $\hat{\rho}^\mathsf{T}=\sum_{ij} \rho_{ji}\ket{i}\bra{j}$.
}.
Specifically, let us consider one qubit for S and another qubit for A.
If we apply the map $(\ml{T}\otimes \ml{I}_\mr{A})$ to the Bell state 
\aln{
\hat{R}=\frac{\ket{00}\bra{00}+\ket{00}\bra{11}+\ket{11}\bra{00}+\ket{11}\bra{11}}{2}\succeq 0,
}
we have
\aln{
(\ml{T}\otimes \ml{I}_\mr{A})[\hat{R}]=\frac{\ket{00}\bra{00}+\ket{10}\bra{01}+\ket{01}\bra{10}+\ket{11}\bra{11}}{2},
}
which has a negative eigenvalue and thus is not positive semidefinite.
Therefore, $\ml{T}$ is not CP, while it is clearly positive.
Note that if $\hat{R}$ is taken to be a separable state $\hat{R}=\sum_k q_k\hat{\rho}_{\mr{S},k}\otimes \hat{\rho}_{\mr{A},k}$ with some probability $\{q_k\}$, $(\ml{T}\otimes \ml{I}_\mr{A})[\hat{R}]=\sum_k q_k\hat{\rho}_{\mr{S},k}^\mathsf{T}\otimes \hat{\rho}_{\mr{A},k}$ becomes positive semidefinite.
Therefore, the negative eigenvalues of $(\ml{T}\otimes \ml{I}_\mr{A})[\hat{R}]$ mean that $\hat{R}$ is entangled. 
This is known as the positive partial transpose (PPT) criterion~\cite{peres1996separability, horodecki2001separability} to detect the entanglement of mixed states\footnote{
Note that this criterion is independent of the basis used to define the transposition, since a basis change does not alter the eigenvalues of the partially transposed matrix~\cite{wolf2012quantum}.
}.

Similar to the CP-instrument, we can consider a linear map $\ml{E}$ that is both TP and CP.
Such a map is called a CPTP map and characterizes general quantum channels.
By definition, a quantum channel $\ml{E}=\sum_b\ml{E}_b$ is a CPTP map if the set $\{\ml{E}_b\}$ is a CP-instrument. 
However, some CPTP maps may not be associated with a physicalmeasurement process and can describe more general dynamics in open quantum systems.

\subsection{Representations}
In the previous section, we have seen that the map in Eq.~\eqref{cpik}, which is given in the Kraus representation, satisfies the conditions for a CP-instrument. 
Here, we discuss that the converse is also true: any CP-instrument can be represented using the Kraus operators.
For this purpose, we introduce another representation, called the Choi-Jamiołkowski representation.
We also discuss other types of representations, such as the Stinespring representation and the ``natural" representation. 
See Refs.~\cite{wolf2012quantum, watrous2018theory} for further details.

\subsubsection{Choi-Jamiołkowski representation}
We first consider how CP of a linear map $\ml{F}$ can be characterized.
To this end, we introduce the Choi-Jamiołkowski representation $\hat{F}\in\mbb{B}[\ml{H}_\mr{S}\otimes \ml{H}_\mr{S}]$ of a map $\ml{F}: \mbb{B}[\ml{H}_\mr{S}]\ra \mbb{B}[\ml{H}_\mr{S}]$\footnote{
Two remarks are in order. 
First, in general, we can consider maps $\ml{F}: \mbb{B}[\ml{H}_1]\ra \mbb{B}[\ml{H}_2]$ where $\ml{H}_1$ and $\ml{H}_2$ are different. 
However, for simplicity, we only consider the case where they are the same.
Second, while Eq.~\eqref{choi} is often called the Choi-Jamiołkowski isomorphism, the original formulations by Choi~\cite{choi1975completely} and Jamiołkowski~\cite{jamiolkowski1972linear} are slightly different~\cite{jiang2013channel}.
} defined as
\aln{\label{choi}
\hat{F}=(\ml{F}\otimes \ml{I}_\mr{S})[\ket{\Phi}\bra{\Phi}]=\frac{1}{d}\sum_{ij}\ml{F}[\ket{i}\bra{j}]\otimes\ket{i}\bra{j},
}
where $\ket{\Phi}$ is the maximally entangled state between the original system and the copied system,
\aln{
\label{eq:MaximallyEntangledState}
\ket{\Phi}=\frac{1}{\sqrt{d}}\sum_i\ket{ii}\in \ml{H}_\mr{S}\otimes \ml{H}_\mr{S},
}
with $d=\mr{dim}[\ml{H}_\mr{S}]$.
Note that we can easily show that
\aln{
\Tr_\mr{S}[\hat{A}\ml{F}[\hat{\rho}]]=d\Tr_{\mr{SS}^c}[\hat{F}(\hat{A}\otimes\hat{\rho}^\mathsf{T})]
}
for every $\hat{A}\in \mbb{B}[\ml{H}_\mr{S}]$ and $\hat{\rho}\in \mbb{B}[\ml{H}_\mr{S}]$, where the subscript $\mr{S}$ ($\mr{S}^c$) means that the trace is taken over the original (copied) system.
This means that 
\aln{
\ml{F}[\hat{\rho}]=d\Tr_{\mr{S}^c}[\hat{F}(\hat{\mbb{I}}_\mr{S}\otimes\hat{\rho}^\mathsf{T})],
}
which indicates that $\hat{F}$ and $\ml{F}$ are in one-to-one correspondence.

Notably, CP becomes evident in the Choi-Jamiołkowski representation.
That is,
\aln{\label{positive_CP}
\hat{F}\succeq 0  \Longleftrightarrow \text{$\ml{F}$ is CP.}
}
The fact for $(\Longleftarrow)$ follows from the definition of CP. 
To see the other direction $(\Longrightarrow)$~\cite{wolf2012quantum}, we first notice that $(\ml{F}\otimes\ml{I}_\mr{A})[\hat{R}]\succeq 0$ is ensured if $(\ml{F}\otimes\ml{I}_\mr{A})[\ket{\phi_k}\bra{\phi_k}]\succeq 0$ for every $k$, where $\ket{\phi_k}\in\ml{H}_\mr{S}\otimes\ml{H}_\mr{A}$ are the eigenstates of $\hat{R}$, i.e.,  $\hat{R}=\sum_kr_k\ket{\phi_k}\bra{\phi_k}$ with $r_k\geq 0$.
Then, for each $\ket{\phi_k}$, we can find a map $\hat{V}_k:\ml{H}_\mr{S}\ra\ml{H}_\mr{A}$ such that\footnote{
To see this, let us consider the Schmidt decomposition of $\ket{\phi}$ between S and A, which is given by $\ket{\phi}=\sum_{l=1}^d\beta_l\ket{e_l}\otimes \ket{E_l}$ where $\beta_l\geq 0$ are the Schmidt coefficients. 
Here, we have omitted the subscript $k$ for simplicity. 
Introducing $\hat{v}_1=\sum_{l=1}^d\ket{e_l}\bra{l}$ and $\hat{v}_2=\sum_{l=1}^d\beta_l\ket{E_l}\bra{l}$, we find $\ket{\phi}=\sqrt{d}\hat{v}_1\otimes\hat{v}_2\ket{\Phi}=\sqrt{d}(\hat{\mbb{I}}_\mr{S}\otimes\hat{v}_2)(\hat{v}_1\otimes\hat{\mbb{I}}_\mr{S})\ket{\Phi}$. 
Since $\ket{\Phi}$ is the maximally entangled state, we have $(\hat{v}_1\otimes\hat{\mbb{I}}_\mr{S})\ket{\Phi}=(\hat{\mbb{I}}_\mr{S}\otimes \hat{v}_1^\mathsf{T})\ket{\Phi}$. 
Then, we can take $\hat{V}=\sqrt{d}\hat{v}_2\hat{v}_1^\mathsf{T}$.
} 
\aln{\label{phikisvkphi}
\ket{\phi_k}=(\hat{\mbb{I}}_\mr{S}\otimes\hat{V}_k)\ket{\Phi}.
}
Therefore, 
\aln{
(\ml{F}\otimes\ml{I}_\mr{A})[\ket{\phi_k}\bra{\phi_k}]
=(\hat{\mbb{I}}_\mr{S}\otimes\hat{V}_k)(\ml{F}\otimes \ml{I}_\mr{S})[\ket{\Phi}\bra{\Phi}]
(\hat{\mbb{I}}_\mr{S}\otimes\hat{V}_k)^\dag\succeq 0
}
if $\hat{F}=(\ml{F}\otimes \ml{I}_\mr{S})[\ket{\Phi}\bra{\Phi}]\succeq 0$.

\subsubsection{Kraus representation}
Using the above result, we can further show the equivalence between the Kraus representation and CP map~\cite{wolf2012quantum}:
\aln{
\ml{F} \text{ is represented as } \ml{F}[\hat{\rho}]=\sum_a\hat{M}_a\hat{\rho}\hat{M}_a^\dag \Longleftrightarrow \text{$\ml{F}$ is CP.}
}
The fact for $(\Longrightarrow)$ was already discussed in Eq.~\eqref{kracp}.
To see $(\Longleftarrow)$, we first note that the corresponding Choi-Jamiołkowski representation of the CP map, $\hat{F}$, satisfies $\hat{F}\succeq 0$, as described in Eq.~\eqref{positive_CP}.
By decomposing $\hat{F}$ as $\hat{F}=\sum_{a=1}^rf_a\ket{f_a}\bra{f_a}\:(f_a\geq 0)$ and using $\ket{f_a}=(\hat{m}_a\otimes\hat{\mbb{I}}_\mr{S})\ket{\Phi}$ for some operator $\hat{m}_a$ as in Eq.~\eqref{phikisvkphi}, we have
\aln{\label{F_diagonal}
\hat{F}=\sum_{a=1}^r (\hat{M}_a\otimes\hat{\mbb{I}}_\mr{S})\ket{\Phi}\bra{\Phi}(\hat{M}_a\otimes\hat{\mbb{I}}_\mr{S})^\dag,
}
where $\hat{M}_a=\sqrt{f_a}\hat{m}_a$.
Comparing Eq.~\eqref{F_diagonal} with the definition of $\hat{F}$ in Eq.~\eqref{choi} and recalling that $\hat{F}$ and $\ml{F}$ are in one-to-one correspondence, we can conclude that $\ml{F}$ admits a Kraus representation.
From this construction, we can see that $r\geq \mr{rank}[\hat{F}]$, where $\mr{rank}[\hat{F}]\leq d^2$.

Therefore, each element in a CP-instrument $\{\ml{E}_b\}_b$ can be given in the Kraus representation by Eq.~\eqref{cpik} because of CP of $\ml{E}_b$.
Furthermore, the TP condition for $\sum_b\ml{E}_b$ leads to $\sum_{ab}\hat{M}_{ab}^\dag\hat{M}_{ab}=\hat{\mbb{I}}_\mr{S}$.
Likewise, a CPTP map $\ml{E}$ is represented as 
\aln{\label{eq:KrausRep}
\ml{E}[\hat{\rho}]=\sum_k\hat{M}_k\hat{\rho}\hat{M}_k^\dag,
}
where
\aln{\label{eq:TPKraus}
\sum_{k}\hat{M}_{k}^\dag\hat{M}_{k}=\hat{\mbb{I}}_\mr{S}
}
is satisfied.

\subsubsection{Stinespring representation}
We have another representation of a map, which is relevant for the setup of open quantum systems where the bath is traced out after unitary time evolution.
Let $\ml{F}:\mbb{B}[\ml{H}_\mr{S}]\ra\mbb{B}[\ml{H}_\mr{S}]$ be a CP map.
Then, there exist a Hilbert space $\ml{H}_\mr{E}$ and an operator $\hat{V}:\ml{H}_\mr{S}\ra \ml{H}_\mr{S}\otimes\ml{H}_\mr{E}$ such that
\aln{\label{stine}
\ml{F}[\hat{\rho}]=\Tr_\mr{E}[\hat{V}\hat{\rho}\hat{V}^\dag],
}
which is called the Stinespring representation~\cite{stinespring1955positive}.
This representation can be constructed as follows. 
Since $\ml{F}$ is CP, it admits a Kraus representation $\ml{F}[\hat{\rho}]=\sum_{a=1}^r\hat{M}_a\hat{\rho}\hat{M}_a^\dag$.
If we define
\aln{\label{eq:sitespring_V}
\hat{V}=\sum_{a=1}^r\hat{M}_a\otimes \ket{a},
}
where $\{\ket{a}\}$ is an orthonormal basis of $\ml{H}_\mr{E}$, we find the representation in Eq.~\eqref{stine}.
This construction implies that the Stinespring representation is possible whenever $\mr{dim}[\ml{H}_\mr{E}]\geq \mr{rank}[\hat{F}]$ (since the minimum value of $r$ is $\mr{rank}[\hat{F}]$).
Moreover, if $\ml{F}$ is a CPTP map, we find from Eq.~\eqref{eq:sitespring_V} that $\hat{V}$ becomes an isometry, i.e., $\hat{V}^\dag\hat{V}=\hat{\mbb{I}}_\mr{S}$.

Now, since the isometry $\hat{V}$ can be written as $\hat{V}=\hat{U}(\hat{\mbb{I}}_\mr{S}\otimes \ket{\psi})$, where $\hat{U}\in\mbb{B}[\ml{H}_\mr{S}\otimes\ml{H}_\mr{E}]$ is unitary and $\ket{\psi}\in \ml{H}_\mr{E}$, we find~\cite{wolf2012quantum, barberena2024overview}
\aln{\label{Erhocptp}
\ml{E}[\hat{\rho}]=\mr{Tr}_\mr{E}[\hat{U}(\hat{\rho}\otimes\ket{\psi}\bra{\psi})\hat{U}^\dag]
}
for a CPTP map $\ml{E}$.
This representation, which is slightly different from the original Stinespring representation, offers an intuitive picture for describing open quantum systems: attach an environment $\ket{\psi}\in\ml{H}_\mr{E}$ to the system's state $\hat{\rho}$, let them interact via the unitary $\hat{U}$, and trace out the environmental degree's of freedom.

Similarly, it is known that a CP-instrument is represented as~\cite{wolf2012quantum, sagawa2013second}
\aln{
\ml{E}_b[\hat{\rho}]=\mr{Tr}_\mr{E}[\hat{U}(\hat{\rho}\otimes\ket{\psi}\bra{\psi})\hat{U}^\dag(\hat{\mbb{I}}_\mr{S}\otimes \hat{P}_{E,b})],
}
which corresponds to the decomposition of Eq.~\eqref{Erhocptp} into $\ml{E}_b$.
Here, $\hat{P}_{E,b}$ is the projection operator onto the basis $b$ in the environment. 
This representation reminds us of the setup of the indirect measurement, which we saw in Eq.~\eqref{indirecteq}.

\subsubsection{Natural representation}
Finally, we show a simple representation of a map, which is called, e.g., the natural representation~\cite{watrous2018theory}.
We first consider the vectorization of an operator $\hat{A}\in\mbb{B}[\ml{H}_\mr{S}]\ra \ml{H}_\mr{S}\otimes\ml{H}_\mr{S}$ as 
\aln{
\hat{A}=\sum_{ij}A_{ij}\ket{i}\bra{j}\longmapsto \ket{A}=\sum_{ij}A_{ij}\ket{i}\otimes\ket{j}.
}
In this formulation, we can easily confirm that
\aln{\label{henkan}
\hat{B}\hat{A}\hat{C}\longmapsto (\hat{B}\otimes\hat{C}^\mathsf{T})\ket{A}.
}

Now, a linear map $\ml{F}:\mathbb{B}[\ml{H}_\mr{S}]\ra \mathbb{B}[\ml{H}_\mr{S}]$ from $\hat{\rho}$ to $\hat{\rho}'=\ml{F}[\hat{\rho}]$
can be represented by $\hat{\mc{F}}:\ml{H}_\mr{S}\otimes \ml{H}_\mr{S}\ra\ml{H}_\mr{S}\otimes \ml{H}_\mr{S}$ from $\ket{\rho}$ to $\ket{\rho'}$.
When $\ml{F}$ is a CP map, using the Kraus representation of $\mc{F}$ and Eq.~\eqref{henkan},
we find
\aln{\label{naturalrep}
\hat{\mc{F}}=\sum_a\hat{M}_a\otimes \hat{M}_a^*.
}

We stress that the natural representation is different from the Choi-Jamiołkowski representation. 
While the natural representation is naive, it is not straightforward to judge CP of $\ml{F}$ from the natural representation $\mc{\hat{F}}$, unlike the Choi-Jamiołkowski representation $\hat{F}$ in Eq.~\eqref{choi}.
We also note that Eq.~\eqref{naturalrep} is intuitively understood as an interaction between the ``ket space" and ``bra space.'' 
If we consider a one-dimensional chain of qudits, the total system is given by a ladder composed of the two Hilbert spaces; therefore, this representation is also called the ladder representation~\cite{haga2023quasiparticles}.

\section{Quantum trajectories and quantum master equation} \label{sec:trajectory_master-equation}
In this section, we discuss the quantum dynamics driven by repeated measurements. 
Because of quantum back-action, the time-evolving state depends on the set of random measurement outcomes, which determines the so-called quantum trajectory\footnote{
The term ``quantum trajectory" is often used to describe the trajectory of a quantum stochastic equation under continuous-time measurements~\cite{wiseman2009quantum}. 
However, in this review, we also use the term to describe stochastic time evolution under discrete-time measurements.
}.
By taking the continuous-time limit, one finds a quantum stochastic equation that describes continuous-time quantum trajectories~\cite{ueda1990nonequilibrium, dalibard1992wave, dum1992monte, carmichael2009open} (see Refs.~\cite{pellegrini2008existence, pellegrini2010existence, pellegrini2010markov} for mathematical foundations).
If we average the dynamics over measurement outcomes, we obtain a quantum master equation, especially the Gorini–Kossakowski–Sudarshan–Lindblad (GKSL) equation.
The purpose of this section is to introduce these basic concepts, as well as to review some related topics.

\subsection{Repeated measurements}
As discussed in the previous chapter, a measurement on an initial state $\hat{\rho}_0$ with the outcome $b$ changes the state as in Eq.~\eqref{change}, where $\ml{E}_b$ is given in the Kraus representation by Eq.~\eqref{cpik}.
Averaging $\ml{E}_b[\hat{\rho}_0]$ over all possible measurement outcomes leads to the CPTP map 
\aln{
\hat{\rho}_1
=\sum_b\ml{E}_b[\hat{\rho}_0]=\ml{E}[\hat{\rho}_0].
}

We now consider repeating the above process. 
Assuming that the measurement on the meter is the same at each step, i.e., if all the CP-instruments $\{\ml{E}_{b_s}\}_{b_s=1}^B$ are the same at $s$th measurement with the same set of outcomes given by $b_s=0,\dots, B\in\mbb{N}$ with $s=1,2,\ldots$\footnote{Alternatively, we could consider a  correlated type of measurements. For example, we could consider a feedback operation, where the set $\{\ml{E}_{b_s}\}_{b_s}$ depends on the previous measurement outcomes $b_{s'}$ with $s'<s$. While such feedback operations can cause interesting dynamics, we do not discuss them in this review paper (also see the last paragraph in Chapter~\ref{sec:conclusion}).}, a quantum trajectory is described by an infinite sequence of measurement outcomes 
\begin{align}
 \bm{b}=(b_1,b_2,\cdots).
\label{eq:outcome-sequence_infinite}
\end{align} 
See Fig.~\ref{fig1}(b) for the pure-state case. 
If we assume the Kraus representation given in Eq.~\eqref{krauspure}\footnote{
In the following, we consider the case where the initial state of the meter is pure and thus the summation over $a$ is not necessary.
}, we find that a state after $n$ measurements reads
\aln{\label{qtmixed}
\hat{\rho}_{\bm{b};n}=\frac{\ml{E}_{b_n}\circ\cdots \circ\ml{E}_{b_1}[\hat{\rho}_0]}{p_{\bm{b};n}}=\frac{\hat{\mathsf{M}}_{\bm{b};n}\hat{\rho}_0\hat{\mathsf{M}}_{\bm{b};n}^\dagger}{p_{\bm{b};n}}.
}
Here, $\hat{\mathsf{M}}_{\bm{b};n}$ is the product of Kraus operators 
\begin{align}
  \hat{\mathsf{M}}_{\bm{b};n}=\hat{M}_{b_n}\hat{M}_{b_{n-1}}\cdots\hat{M}_{b_1},
  \label{eq:product_Kraus-operators}
\end{align}
corresponding to the sequence of measurement outcomes up to step $n$,
\begin{align}
\bm{b}_n=(b_1,\cdots,b_n),
\label{eq:outcome-sequence_finite}
\end{align}
and $p_{\bm{b};n}$ is the probability that $\bm{b}_n$ is realized:
\aln{
p_{\bm{b};n}=\Tr[\ml{E}_{b_n}\circ\cdots \circ\ml{E}_{b_1}[\hat{\rho}_0]].
}
In particular, if the initial state is a pure state, $\hat{\rho}_0=\ket{\psi_0}\bra{\psi_0}$, the post-measurement state remains pure and is given by
\aln{\label{qtpure}
\ket{\psi_{\bm{b};n}}=\frac{\hat{\mathsf{M}}_{\bm{b};n}\ket{\psi_0}}{\sqrt{p_{\bm{b};n}}}
}
with 
\aln{
p_{\bm{b};n}=\braket{\psi_0|\hat{\mathsf{M}}_{\bm{b};n}^\dag\hat{\mathsf{M}}_{\bm{b};n}|\psi_0}
= \| \hat{\sf{M}}_{\bm{b};n} | \psi_0 \rangle \|^2.
}
Equations~\eqref{qtmixed} and \eqref{qtpure} show that the time-evolving state is given by a conditional state that depends on the sequence of outcomes $\bm{b}$ under repeated measurements.
We refer to this sequence of conditional states as a quantum trajectory.
Note that the average over all possible measurement outcomes leads to the CPTP map,
\aln{
\hat{\rho}_{n}=\mbb{E}[\hat{\rho}_{\bm{b};n}]=\sum_{\bm{b}_n}p_{\bm{b};n}\hat{\rho}_{\bm{b};n}=\ml{E}^n[\hat{\rho}_0].
}
Here and hereafter, we denote by $\mbb{E}$ the average over all possible sequences of measurement outcomes $\bm{b}$.
We again stress that the quantum trajectory in Eq.~\eqref{qtpure} remains pure under time evolution, whereas the averaged dynamics generally leads to mixed states.

Before ending this section, we briefly comment on our notation, e.g., $
\hat{\rho}_{\bm{b};n}$.
While the state $ \hat{\rho}_{\bm{b};n}$ is essentially determined only by a finite sequence $\bm{b}_n=(b_1,\cdots,b_n)$, we keep the entire $\bm{b}$ in the subscript. 
This is in order to view this state as the $n$th-step state of a quantum trajectory determined from all infinite sequences of measurement outcomes $\bm{b}$. 
This view is important to define the probability measure of quantum trajectories and the average $\mbb{E}$ defined from it. 
Since this measure and average do not depend on the specific time step $n$, it is convenient to describe, e.g., multi-time correlation functions.

\subsection{Continuous-time limit of repeated measurements}
\subsubsection{Continuous-time limit}
We next consider a system that is continuously measured, which can be regarded as the continuous-time limit of the repeated indirect measurements discussed above~\cite{pellegrini2008existence, pellegrini2010existence, pellegrini2010markov, wiseman2009quantum, landi2024current}.
For this purpose, let us assume a simple situation where the state of the meter is given by
\aln{
\hat{\sigma}_\mr{M}=\ket{0}\bra{0},
}
and the interaction $\hat{V}_\mathrm{int}$ between the system and the meter before the measurement acts over an infinitesimal time interval $dt$.
Specifically, the joint unitary evolution is given by
\aln{
\hat{U}_\mr{int}=e^{-i\hat{V}_\mathrm{int}dt}\simeq 
\mbb{\hat{I}}_\mr{SM}+\ml{O}(dt).
}

Let us assume that the projective measurement on the meter is performed in an orthonormal basis $\{\ket{b}\}\:(b=0,\cdots, B)$, which includes the meter's initial state $\ket{0}$.
Then, since the interaction time is small, we expect that the measurement outcome is mostly $ b=0$, whereas the other outcomes $b \geq 1$ occur with a small probability proportional to $dt$.
From this observation, we can scale the measurement operators in Eq.~\eqref{krauspure} for $b\geq 1$ as
\aln{\label{jumpmeaskraus}
\hat{M}_{b\geq 1}=\sqrt{dt}\hat{L}_b.
}
Indeed, we find that the probability becomes proportional to $dt$ with this choice,
\aln{\label{probdt}
p_{b\geq 1}(t)=\Tr[\hat{\rho}(t)\hat{L}_b^\dag\hat{L}_b]dt,
}
where $\hat{\rho}(t)$ is the pre-measurement state at time $t$.
The post-measurement state is given by 
\aln{\label{click}
\hat{\rho}_{b\geq 1}(t+dt)=\frac{\hat{L}_b\hat{\rho}(t)\hat{L}_b^\dag}{\Tr[\hat{\rho}(t)\hat{L}_b^\dag\hat{L}_b]}.
}

Next, we notice that $\sum_{b=0}^B\hat{M}_b^\dag\hat{M}_b=\hat{\mbb{I}}_\mr{S}$.
By requiring this relation to hold up to the order of $dt$, we find $\hat{M}_0=\hat{\mbb{I}}_\mr{S}-i\hat{H}_\mr{S}dt$ in the absence of the measurement, where $\hat{H}_\mr{S}$ is an intrinsic Hamiltonian of the system.
Then, we find that $\hat{M}_0$ should have the form 
\aln{
\hat{M}_0=\hat{\mbb{I}}_\mr{S}-\frac{1}{2}\sum_{b\geq 1}\hat{L}_b^\dag\hat{L}_bdt-i\hat{H}_\mr{S}dt.
}
Up to the order of $dt$, the corresponding probability is given by
\aln{
p_0(t)=1-\sum_{b\geq 1}\Tr[\hat{\rho}(t)\hat{L}_b^\dag\hat{L}_b]dt,
}
and the post-measurement state is 
\aln{\label{noclick}
\hat{\rho}_{b=0}(t+dt)=\hat{\rho}(t)-i[\hat{H}_\mr{S},\hat{\rho}(t)]dt+\hat{\rho}(t)\sum_{b\geq 1}\Tr[\hat{\rho}(t)\hat{L}_b^\dag\hat{L}_b]dt-\frac{1}{2}\sum_{b\geq 1}\{\hat{\rho}(t),\hat{L}_b^\dag\hat{L}_b\}dt.
}

To summarize, after the small duration time $dt$, there are two possibilities. 
One is that the meter does not change its value from $\ket{0}$. 
In this case, the system undergoes a continuous-time evolution as given in Eq.~\eqref{noclick}.
The other is that the meter changes its value to $b\geq 1$. 
In this case, the system suddenly changes its state discontinuously as in Eq.~\eqref{click}, which is called a quantum jump.

\begin{figure}[!h]
\centering\includegraphics[width=\linewidth]{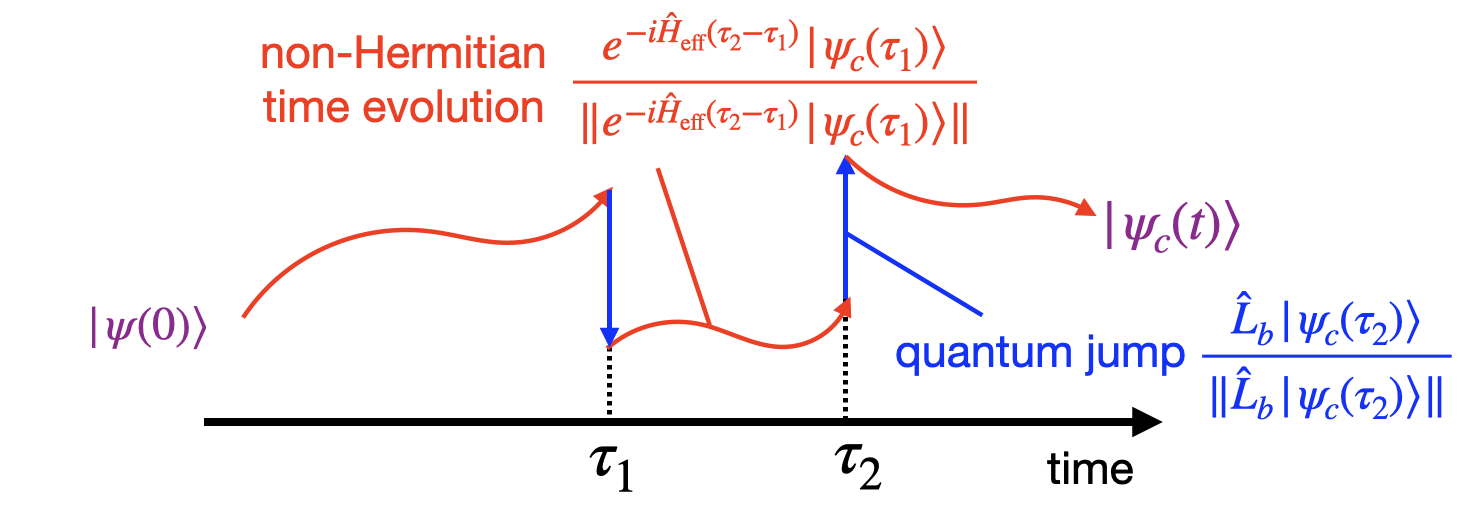}
\caption{
Schematic illustration of a quantum trajectory. 
Starting from $\ket{\psi(0)}$, the state undergoes a non-Hermitian continuous time evolution governed by $\hat{H}_\mathrm{eff}$, unless a quantum jump described by the jump operator $\hat{L}_b$ occurs, which suddenly changes the state. 
Here, the jump times are denoted by $\tau_1$ and $\tau_2$.
}
\label{fig2}
\end{figure}

If we assume that the meter's state is immediately reset to $\ket{0}$ after a quantum jump $b\geq1$, the time evolution of the system consists of discrete quantum jumps interspersed with continuous time evolutions, as illustrated in Fig.~\ref{fig2}.
Importantly, if we postselect a trajectory $\hat{\rho}_\mr{no}(t)$ free from quantum jumps at all times, i.e., $b=0$ for all $t$, the time evolution of the system is given by 
\aln{\label{nonhermdynamics}
\hat{\rho}_\mr{no}(t+dt)=\hat{\rho}_\mr{no}(t)-i\lrs{\hat{H}_\mr{eff}\hat{\rho}_\mr{no}(t)-\hat{\rho}_\mr{no}(t)\hat{H}_\mr{eff}^\dag}dt+\hat{\rho}_\mr{no}(t)\sum_{b\geq 1}\Tr[\hat{\rho}_\mr{no}(t)\hat{L}_b^\dag\hat{L}_b]dt,
}
where 
\aln{
\hat{H}_\mr{eff}=\hat{H}_\mr{S}-\frac{i}{2}\sum_{b\geq 1}\hat{L}_b^\dag\hat{L}_b
}
is the effective non-Hermitian Hamiltonian.
Equation~\eqref{nonhermdynamics} represents a simple non-Hermitian time evolution with the third term on the right-hand side ensuring the normalization of $\hat{\rho}_\mathrm{no}(t)$.
Indeed, the formal solution of Eq.~\eqref{nonhermdynamics} is 
\aln{
\hat{\rho}_\mr{no}(t)=\frac{e^{-i\hat{H}_\mr{eff} t}\hat{\rho}(0)e^{i\hat{H}_\mr{eff}^\dag t}}{\Tr[e^{-i\hat{H}_\mr{eff} t}\hat{\rho}(0)e^{i\hat{H}_\mr{eff}^\dag t}]}.
}
This type of non-Hermitian dynamics has been extensively investigated in recent years~\cite{ashida2020non}, while it requires postselecting a rare trajectory without any jumps, which occurs with an exponentially small probability in time.

\subsubsection{Stochastic equation}
Let us denote the number of jumps for $b\:(\geq 1)$ up to time $t$ by $N_b(t)$.
Then, we can consider the increment of the jump number, $dN_b(t)$, which is 1 only if there is a quantum jump at time $t$\footnote{
More precisely, we consider a quantum jump that occurs during $[t,t+dt]$.
} with type $b$ and is 0 otherwise.
From this definition, we have
\aln{
dN_b(t) dN_{b'}(t)=\delta_{bb'}dN_b(t).
}
Moreover, following the discussion in the previous subsection [especially Eq.~\eqref{probdt}], we obtain
\aln{\label{conditional}
p_b^c(t):=\mr{Prob}_t[dN_b(t)=1|\hat{\rho}_c(t)]=dt\mr{Tr}[\hat{\rho}_c(t)\hat{L}_b^\dag\hat{L}_b]=dt\braket{\hat{L}_b^\dag\hat{L}_b}_{c;t},
}
where $c$ indicates that we consider a quantum trajectory without average\footnote{
Note that the subscript/superscript $c$ will be omitted if it is apparent that we consider quantum trajectories from the context. 
We also sometimes omit the explicit time-dependence ($t$) of the state when it will raise no confusion.
}, $\braket{\cdots}_{c;t}:=\mr{Tr}[{\hat{\rho}_c(t)\cdots}]$, and $\mr{Prob}_t[\cdots|\hat{\rho}_c(t)]$ denotes a conditional probability at time $t$ for a given state $\hat{\rho}_c(t)$.

Using this notation, the possibilities given in Eqs.~\eqref{click} and~\eqref{noclick} are unified as
\aln{\label{stochasticeq}
d\hat{\rho}_c=-i[\hat{H}_\mr{S},\hat{\rho}_c]dt+\hat{\rho}_c\sum_{b\geq 1}\braket{\hat{L}_b^\dag\hat{L}_b}_{c;t}dt-\frac{1}{2}\sum_{b\geq 1}\{\hat{\rho}_c,\hat{L}_b^\dag\hat{L}_b\}dt+\sum_{b\geq 1}\lrs{\frac{\hat{L}_b\hat{\rho}_c\hat{L}_b^\dag}{\braket{\hat{L}_b^\dag\hat{L}_b}_{c;t}}-\hat{\rho}_c}dN_b.
}
We note that for pure states, the corresponding equation reads
\aln{
d\ket{\psi_c}=\lrs{-i\hat{H}_\mr{eff}+\frac{1}{2}\sum_{b\geq 1}\braket{\hat{L}_b^\dag\hat{L}_b}_{c;t}}\ket{\psi_c}dt+\sum_{b\geq 1}\lrs{\frac{\hat{L}_b}{\sqrt{\braket{\hat{L}_b^\dag\hat{L}_b}}_{c;t}}-1}\ket{\psi_c}dN_b.
}
We can see this by noticing, e.g., that $d[\ket{\psi_c}\bra{\psi_c}]=\ket{d\psi_c}\bra{\psi_c}+\ket{\psi_c}\bra{d\psi_c}+\ket{d\psi_c}\bra{d\psi_c}$, $dN_bdt=0$, and $dN_bdN_{b'}=\delta_{bb'}dN_b$.

A single quantum trajectory is characterized by the set of jump times $\{\tau_k\}_{k\in\mbb{N}}$ and the corresponding jump types $\{b_k\}_{k\in\mbb{N}}$, i.e., $\{(\tau_k,b_k)\}_{k\in\mbb{N}}$ such that $dN_{b_k}(\tau_k)=1$.
For pure states, the quantum trajectory after $K$ jumps is represented as
\aln{\label{conttraj}
\ket{\psi_c(t)}\propto e^{-i\hat{H}_\mr{eff}(t-\tau_K)}\hat{L}_{b_K}
e^{-i\hat{H}_\mr{eff}(\tau_K-\tau_{K-1})}\hat{L}_{b_{K-1}}\cdots
e^{-i\hat{H}_\mr{eff}(\tau_2-\tau_{1})}\hat{L}_{b_{1}}e^{-i\hat{H}_\mr{eff}\tau_{1}}\ket{\psi(0)}
}
with $\braket{\psi_c(t)|\psi_c(t)}=1$.

\subsubsection{Averaged dynamics} \label{sec:AveragedDynamics}
If we average Eq.~\eqref{stochasticeq} over the measurement outcomes, we obtain the renowned GKSL equation~\cite{lindblad1976generators, gorini1976completely}.
To be specific, we treat the dynamics of
\aln{
\hat{\rho}(t)=\mbb{E}[\hat{\rho}_c(t)],
}
where $\mbb{E}$ denotes the average over all quantum trajectories, i.e., over $dN_b(t)$ for all $t$.
To proceed, we note that for a function $g$ of $\hat{\rho}_c(t)$~\cite{landi2024current},
\aln{\label{eq:muzukasi}
\mbb{E}[g(\hat{\rho}_c(t))dN_b(t)]&=\mbb{E}_{[0,t)}[\mbb{E}_{t}[g(\hat{\rho}_c(t))dN_b(t)|\hat{\rho}_c(t)]]\nonumber\\
&=\mbb{E}_{[0,t)}[g(\hat{\rho}_c(t))\braket{\hat{L}_b^\dag\hat{L}_b}_{c;t}dt]\nonumber\\
&=\mbb{E}[g(\hat{\rho}_c(t))\braket{\hat{L}_b^\dag\hat{L}_b}_{c;t}]dt,
}
where $\mbb{E}_{[0,t)}$ and $\mbb{E}_t$ denote the average over $dN_b(t')$ with $0\leq t'<t$ and $t'=t$, respectively, and $\mathbb{E}_{t}[\cdots|\hat{\rho}_c(t)]$ denotes the conditional average for a given state $\hat{\rho}_c(t)$. 
In Eq.~\eqref{eq:muzukasi}, we first take the average over $dN_b(t)$ using Eq.~\eqref{conditional} for the given $\hat{\rho}_c(t)$ and then take the average over $\hat{\rho}_c(t)$ (i.e., taking the average over $dN_b(t')$ for $0\leq t'<t$).
We have also used the fact that the average over $dN_b(t')$ for $t< t'$, which is included in the average $\mbb{E}$, does not change the result in obtaining the third line from the second line.
Applying this equality to Eq.~\eqref{stochasticeq}, we find the GKSL equation
\aln{
\frac{d\hat{\rho}(t)}{dt}=\ml{L}[\hat{\rho}(t)]=-i[\hat{H}_\mr{S},\hat{\rho}(t)]+\sum_{b\geq 1}\left(\hat{L}_b\hat{\rho}(t)\hat{L}_b^\dag-\frac{1}{2}\{\hat{\rho}(t),\hat{L}_b^\dag\hat{L}_b\}\right).
}
This GKSL equation leads to the formal solution 
\aln{\label{formalGKSL}
\hat{\rho}(t)=e^{\ml{L}t}[\hat{\rho}(0)],
}
which is a CPTP map.
The map $\ml{E}_t=e^{\ml{L}t}$ satisfies the Markovian condition $\ml{E}_t\circ\ml{E}_s=\ml{E}_{t+s}$ for $t,s\geq 0$.
It is also known that the generator of a Markovian CPTP map has the GKSL form, where $\hat{H}_\mr{S}$ and $\hat{L}_b$ can be time-dependent in general~\cite{lindblad1976generators, rivas2012open}.

\subsubsection{Examples}\label{qtex}
Here, we first discuss an example of the spontaneous emission of a two-level atom.
For simplicity, we consider a toy model where the two-level system is coupled to a single-mode photon\footnote{
This is a simple toy model. 
In reality, we need to consider various modes of the photon, but we neglect them.
}, which serves as a meter.
The interaction Hamiltonian, after the rotating-wave approximation~\cite{shore1993jaynes}, reads
\aln{
\hat{V}_\mr{int}= g(\hat{\sigma}^-\hat{b}^\dag+\hat{\sigma}^+\hat{b}),
}
where $\hat{\sigma}^{+/-}$ denotes the raising/lowering operator for the atom and $\hat{b}$ is the annihilation operator of the photon.
We assume that there is no photon initially, $\ket{0}_\mr{ph}$.

Consider a measurement of the photon number after a short time interval $dt$ with $\hat{U}_\mathrm{int}=e^{-i\hat{V}_\mathrm{int}dt}$.
Then, the measurement operator [cf. Eq.~\eqref{qbusi}] corresponding to photon detection ($b=1$) reads
\aln{\label{eq:Kraus-operator_atom-photon}
\hat{M}_1=_\mr{ph}\braket{1|\hat{U}_\mr{int}|0}_\mr{ph}\simeq{-i} g\hat{\sigma}^-dt={-i}\sqrt{\gamma dt}\hat{\sigma}^-,
}
where $\gamma=g^2dt$.
Comparing Eq.~\eqref{eq:Kraus-operator_atom-photon} with Eq.~\eqref{jumpmeaskraus}, we find
\aln{\label{singlejump}
\hat{L}=\sqrt{\gamma}\hat{\sigma}^-,
}
where we have omitted the subscript $b$ and dropped the phase factor $-i$ since it does not change Eq.~\eqref{stochasticeq}.
The probability of photon detection is given by
\aln{
p=\mr{Tr}[\hat{\rho}_c\hat{L}^\dag\hat{L}]dt=\gamma\braket{\hat{\sigma}^+\hat{\sigma}^-}_cdt.
}
We assume that the photon is immediately reset to $\ket{0}_\mr{ph}$ upon detection, since it is absorbed into the photo-detector.
Then, this continuously monitored process is understood as follows [see Fig.~\ref{fig3}(a)]:
a photon is detected with a rate $\gamma\braket{\hat{\sigma}^+\hat{\sigma}^-}_c$, causing the state of the atom to abruptly jump to its ground state.
If no photon is detected, the state is evolved by the non-Hermitian Hamiltonian $\hat{H}_\mr{eff}=\hat{H}_\mr{S}-\frac{i\gamma}{2}\hat{\sigma}^+\hat{\sigma}^-$.
The average over all quantum trajectories leads to the GKSL equation
\aln{
\frac{d\hat{\rho}}{dt}=-i[\hat{H}_\mr{S},\hat{\rho}]+\gamma\hat{\sigma}^-\hat{\rho}\hat{\sigma}^+-\frac{\gamma}{2}\{\hat{\rho},\hat{\sigma}^+\hat{\sigma}^-\}.
}

\begin{figure}[!h]
\centering\includegraphics[width=\linewidth]{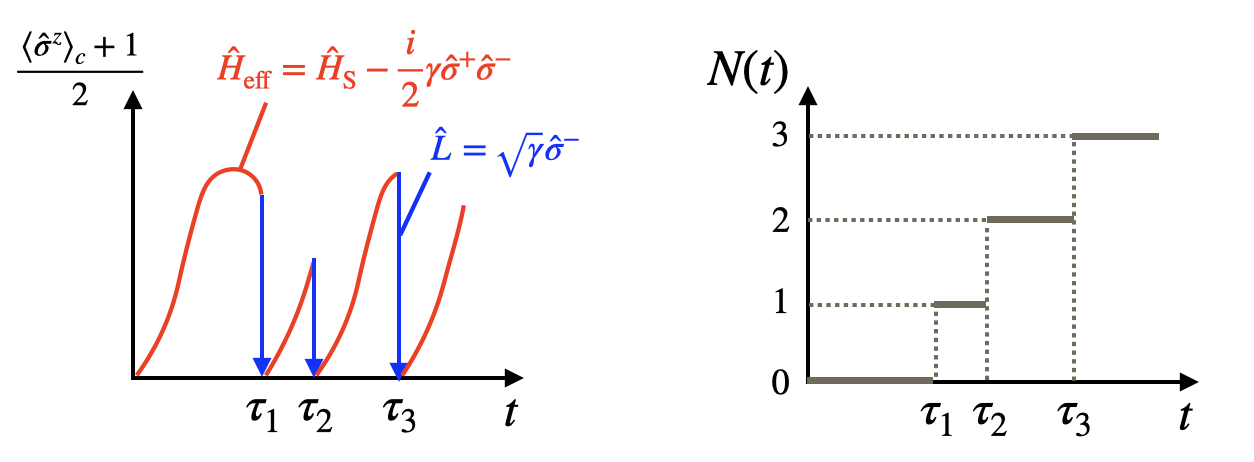}
\caption{
An example of a quantum trajectory for a single atom system with spontaneous emissions. 
(a) If we consider the occupation of the excited state, given as $\frac{\hat{\sigma}^z+1}{2}$, it is described by the continuous time evolution and quantum jumps, which reset the state into the ground state. 
(b) The total number of jumps $N(t)$ until time $t$. It increases by one at times $\tau_1$, $\tau_2$, and $\tau_3$.
}
\label{fig3}
\end{figure}

There are many other types of models that are expected to phenomenologically describe open quantum systems.
For example, if we consider a two-level atom $\hat{H}_\mr{S}=\frac{\omega}{2}\hat{\sigma}^z$ in a finite-temperature bath of photons with an inverse temperature $\beta$, it will exhibit stimulated emission and absorption of photons. 
In this case, the averaged dynamics is often modeled~\cite{wiseman2009quantum} by the GKSL equation with jump operators $\hat{L}_+=\sqrt{\gamma_+}\hat{\sigma}^+$ and 
$\hat{L}_-=\sqrt{\gamma_-}\hat{\sigma}^-$, where the detailed balance condition $\gamma_+/\gamma_-=e^{-\beta\omega}$ is satisfied. 
Under such jump operators, the Gibbs state at temperature $\beta^{-1}$ becomes the stationary state of the GKSL equation.
 
As another example, we can think of many-body systems under continuous measurement.
In particular, let us consider the Bose-Hubbard model realized in a cold atomic system in an optical lattice~\cite{bloch2008many}, whose Hamiltonian is given by
\aln{\label{BH}
\hat{H}_\mr{BH}=-\sum_{l,l'}J_{ll'}(\hat{a}^\dag_l\hat{a}_{l'}+\mr{h.c.})+\frac{U}{2}\sum_l \hat{n}_l(\hat{n}_l-1),
}
where $l$ and $l'$ denote the lattice sites, $\hat{a}_l$ is the annihilation operator of a boson at site $l$, and $\hat{n}_l=\hat{a}_l^\dag\hat{a}_l$ is the number operator at site $l$.
By adding a suitable laser field, atoms will absorb and spontaneously emit photons, causing the incoherent scattering of photons~\cite{daley2014quantum}.
We can determine the positions of atoms by detecting the scattered photons.
If the wavelength of the incident photons is shorter than the lattice spacing, the measurement will be performed with single-site resolution.
The corresponding jump operators are modeled as $\hat{L}_l=\sqrt{\gamma}\hat{n}_l$, i.e., the number operator for each site.
The microscopic deviation of this process was discussed in Refs.~\cite{pichler2010nonequilibrium, daley2014quantum}.
Note that such photon scattering from cold atoms has in fact been experimentally realized~\cite{patil2015measurement, luschen2017signatures, bouganne2020anomalous}, while continuous measurement with single-site resolution has yet to be achieved.

\subsection{Physical quantities characterizing quantum trajectories}
Let us next discuss how to characterize quantum trajectories through physically relevant quantities.
Here we focus on a pure state in the continuous-time case, i.e., $\ket{\psi_c(t)}$ in Eq.~\eqref{conttraj}, although a similar discussion applies to the discrete-time case in Eq.~\eqref{qtpure}.

\subsubsection{Nonlinear observables and postselection cost}
\label{sec:nonlinear-observables}
One may naively consider the expectation value of an observable $\hat{A}$ with respect to $\ket{\psi_c(t)}$, 
\aln{
\braket{\hat{A}}_{c;t}=\braket{\psi_c(t)|\hat{A}|\psi_c(t)},
}
to characterize quantum trajectories.
However, its ensemble average (average over the quantum trajectories) can also be obtained from the GKSL equation as 
\aln{\label{eq:expectation-value}
\mbb{E}[\braket{\hat{A}}_{c;t}]=\braket{\hat{A}}_t
}
with
\aln{\label{eq:expectation-value2}
\braket{\hat{A}}_t=\mr{Tr}[{\hat{\rho}(t)\hat{A}}],
}
where $\hat{\rho}(t)$ is given in Eq.~\eqref{formalGKSL} and we have used $\mathbb{E}[\ket{\psi_c(t)}\bra{\psi_c(t)}]=\hat{\rho}(t)$.
Therefore, to highlight physics unique to quantum trajectories, we should consider nonlinear quantities in $\hat{\rho}_c(t)$, such as $\braket{\hat{A}}_{c;t}^2$, whose ensemble average $\mbb{E}[\braket{\hat{A}}_{c;t}^2]$ cannot be accessed via the GKSL dynamics.

Another interesting nonlinear quantity, which plays an especially important role in many-body systems, is the entanglement entropy of the state 
\aln{
S_{c,X}(t)=-\mr{Tr}[\hat{\rho}_{c,X}(t)\ln \hat{\rho}_{c,X}(t)],
}
where 
\aln{
\hat{\rho}_{c,X}(t) =\mr{Tr}_{\bar{X}}[\ket{\psi_c(t)}\bra{\psi_c(t)}]
}
is the reduced density matrix of a subsystem $X$. 
Here, we have divided the total system into two regions, $X$ and its complement $\bar{X}$.
In fact, we find
\aln{
\mbb{E}[S_{c,X}(t)]\neq -\Tr[\hat{\rho}_X(t)\ln \hat{\rho}_X(t)],
}
where $\hat{\rho}_X(t)=\mbb{E}[\hat{\rho}_{c,X}(t)]=\mr{Tr}_{\bar{X}}[\hat{\rho}(t)]$.
One of the interesting phenomena concerning this quantity is the measurement-induced entanglement phase transition~\cite{skinner2019measurement, li2018quantum, chan2019unitary}, which will be detailed in Chapter~\ref{sec:mipt}.
There, we will see that the long-time behavior of $\mbb{E}[S_{c,X}(t)]$ exhibits an intriguing phase transition as the measurement strength is varied, even when $\hat{\rho}(t)$ becomes a trivial maximally mixed state at long times regardless of the measurement strength.

As an important remark, the evaluation of these nonlinear quantities requires the postselection of quantum trajectories, whose experimental cost is exponentially large. 
For example, precisely estimating $\braket{\hat{A}}_{c;t}$ for a single quantum trajectory needs a large number of measurements of an observable $\hat{A}$ with respect to $\ket{\psi_c(t)}$ since we need to suppress intrinsic quantum fluctuations.
However, for this purpose, we must prepare $\ket{\psi_c(t)}$ multiple times.
Since we cannot clone quantum states~\cite{wootters1982single}, we need to repeat the time evolution from the initial state $\ket{\psi(0)}$ and postselect the states where the jumps occurred according to $\{(\tau_k,b_k)\}_{k=1}^K$.
As one would expect, the probability of obtaining a target trajectory is exponentially small with respect to time. 
If we consider many-body systems, it is also exponentially small with respect to the system size.

To quantify the cost of postselection in a simple setup, let us consider the following model of hardcore bosons on a one-dimensional lattice,
\aln{\label{hcboson}
\hat{H}_\mr{HCB}=-\sum_{l=1}^VJ(\hat{b}^\dag_l\hat{b}_{l+1}+\mr{h.c.})+\frac{U}{2}\sum_{l=1}^V \hat{n}_l\hat{n}_{l+1},
}
subject to position measurements $\hat{L}_l=\sqrt{\gamma}\hat{n}_l$ at each site.
Here, $V$ is the system size, $\hat{b}_l$ is the annihilation operator of a hardcore boson, and $\hat{n}_l=\hat{b}_l^\dag\hat{b}_l$.
The number of bosons, $M=\left<\sum_l\hat{n}_l\right>_{c;t}$, is conserved under the dynamics, which is assumed to scale as $M\propto V$.

In this case, the rate of the jump occurring at one of the sites is given by
\aln{
\sum_l\frac{p_l}{dt}=\sum_l\braket{\hat{L}_l^\dag\hat{L}_l}_{c;t}=\braketL{\gamma\sum_l\hat{n}_l}_{c;t}=\gamma M,
}
which is constant (note that $\hat{n}_l^2=\hat{n}_l$).
In other words, the probability of finding no jumps during the time interval $[0,t]$ is given by $e^{-\gamma M t}$.

Now, we roughly estimate the probability of realizing a quantum trajectory characterized by  $\{(\tau_k,b_k)\}_{k=1}^K$, where we allow some uncertainty $\Delta \tau$ for the jump times $\tau_k$~\cite{yamamoto2023localization}.
We require that $\Delta\tau$ is much smaller than the typical interval of jumps $\tau_{k+1}-\tau_k\sim (\gamma M)^{-1}$, i.e., $\epsilon=\gamma M\Delta\tau\ll 1$.
Given an initial state, the probability that there is no jump during $t\in[0,\tau_1-\Delta\tau)$ and a jump occurs during $t\in[\tau_1-\Delta\tau,\tau_1+\Delta\tau]$ becomes
\aln{
P_{\tau_1}=e^{-\gamma M(\tau_1-\Delta\tau)}-e^{-\gamma M(\tau_1+\Delta\tau)}\simeq (2\epsilon)e^{-\gamma M \tau_1}.
}
We can repeat this process for time intervals $\tau_2-\tau_1, \dots, \tau_K-\tau_{K-1}$, which results in the probability 
\aln{
P_{\tau_1}P_{\tau_2-\tau_1} \cdots P_{\tau_K-\tau_{K-1}}=(2\epsilon)^Ke^{-\gamma M\tau_K}.
}
Finally, multiplying the probability that there is no jump during $(\tau_K,t]$, we find
\aln{
P_t(\{\tau_k\}_{k=1}^K)=(2\epsilon)^Ke^{-\gamma Mt}
}

Moreover, for each jump, we have $V$ different types of jumps. As a rough estimate, we assume that they occur with equal probabilities.
Then, we find that the probability of realizing a quantum trajectory characterized by $\{(\tau_k,b_k)\}_{k=1}^K$ is 
\aln{
P_t(\{(\tau_k,b_k)\}_{k=1}^K)=\lrs{\frac{1}{V}}^K(2\epsilon)^Ke^{-\gamma Mt}\sim \lrs{\frac{V e}{2\epsilon}}^{-\gamma Mt},
}
where we have used $K\sim \gamma M t$ for a typical trajectory.
Therefore, the cost of postselection exponentially increases in time as
\aln{\label{cost_postselection}
\lrs{P_t(\{(\tau_k,b_k)\}_{k=1}^K)}^{-1}\sim  \lrs{\frac{V e}{2\epsilon}}^{\gamma Mt},
}
whose growth rate scales as $\sim V\ln (V/\epsilon)$.

Note that, for the present setup, we can avoid the above postselection and effectively simulate nonlinear quantities by performing suitable experiments in unitary quantum systems~\cite{yamamoto2023localization}. 
This method reduces the cost to $\sim e^{\gamma Mt}$, which is better than Eq.~\eqref{cost_postselection} by the factor $(V/\epsilon)^{\gamma Mt}$.
While this cost is still exponentially large, we can circumvent the factor concerning the jump-time accuracy $\epsilon$.
We also note that the postselection cost is approximately estimated as $e^{\ml{O}(Vt)}$ for discrete-time quantum circuits described by Eq.~\eqref{qtpure}, although there are several attempts to avoid the postselection~\cite{Gullans20PRL, Hoke23, Agrawal24, Kamakari24, Barratt22, Dehghani23, Garratt23, Garratt24, Yamamoto25, Iadecola22, Buchhold22}.

\subsubsection{Statistics of quantum jumps}
Another important quantity, which is unique to stochastic dynamics, is obtained from the counting variable $N_b(t)$ of quantum jumps.
Since $N_b(t)$ is just given by the total number of jumps of type $b\:(\geq 1)$ within $[0,t]$, it is directly observable from a single quantum trajectory without postselection.
To be specific, we can consider~\cite{landi2024current}
\aln{
N_\mu(t)=\sum_b\mu_b N_b(t)
}
and its derivative, which is often called the current:
\aln{
I_\mu(t)=\frac{dN_\mu(t)}{dt}=\sum_b\mu_b \frac{dN_b(t)}{dt}
}
where $\mu=\{\mu_b\}_{b}\in\mbb{R}^B$.

If we take the ensemble average of $I_\mu(t)$, Eq. (\ref{eq:muzukasi}) leads to
\aln{\label{aveInu}
J_\mu(t)=\mbb{E}[I_\mu(t)]=\sum_b\mu_b\frac{\mbb{E}[dN_b(t)]}{dt}=\sum_b\mu_b\braket{\hat{L}_b^\dag\hat{L}_b}_t,
}
where $\braket{\hat{L}_b^\dag\hat{L}_b}_t$ is the expectation value of $\hat{L}_b^\dag\hat{L}_b$ with respect to $\hat{\rho}(t)$ obtained from the GKSL equation in Eq.~\eqref{eq:expectation-value2}.

A more nontrivial question is how the current is correlated in time, which is characterized by 
\aln{
C_\mu(t,t+\tau)=\mbb{E}[I_\mu(t)I_\mu(t+\tau)]-\mbb{E}[I_\mu(t)]\mbb{E}[I_\mu(t+\tau)].
}
We can assume $\tau\geq 0$ since $\mbb{E}[I_\mu(t)I_\mu(t+\tau)]=\mbb{E}[I_\mu(t+\tau)I_\mu(t)]$.
For $\tau>0$, we have
\aln{
&\mbb{E}[I_\mu(t)I_\mu(t+\tau)]\nonumber\\
&=\frac{1}{dt^2}\mbb{E}[dN_\mu(t)dN_\mu(t+\tau)]\nonumber\\
&=\frac{1}{dt^2}\sum_{b,b'}\mu_b\mu_{b'}\:\mr{Prob}[dN_b(t+\tau)=1,dN_{b'}(t)=1]\nonumber\\
&=\frac{1}{dt^2}\sum_{b,b'}\mu_b\mu_{b'}\:\mbb{E}_{[0,t)}[\mr{Prob}_{[t+\tau,t]}[dN_b(t+\tau)=1|dN_{b'}(t)=1,\hat{\rho}_c(t)]\cdot \mr{Prob}_t[dN_{b'}(t)=1|\hat{\rho}_c(t)]],
}
where $\mr{Prob}_{[t+\tau,t]}[dN_b(t+\tau)=1|dN_{b'}(t)=1,\hat{\rho}_c(t)]$ is the conditional probability that a type-$b$ jump occurs at time $t+\tau$ given that a type-$b'$ jump occurs at time $t$ for $\hat{\rho}_c(t)$, and  
\aln{
\mr{Prob}_t[dN_{b'}(t)=1|\hat{\rho}_c(t)]=p_{b'}^c(t)=\mr{Tr}[\hat{L}_{b'}^\dag\hat{L}_{b'}\hat{\rho}_c(t)]dt.
}

Now, $\mr{Prob}_{[t+\tau,t]}[dN_b(t+\tau)=1|dN_{b'}(t)=1,\hat{\rho}_c(t)]$ is obtained as follows.
If we have a state $\hat{\rho}_c(t)$ at time $t$ before the jump, the occurrence of the jump $dN_{b'}(t)=1$ means that the post-jump state is 
\aln{
\hat{\rho}_c'(t)=\frac{\hat{L}_{b'}\hat{\rho}_c(t)\hat{L}_{b'}^\dag}{\braket{\hat{L}_{b'}^\dag\hat{L}_{b'}}_{c;t}}.
}
Then, we do not care about the jumps during $(t,t+\tau)$, so that on average we have
\aln{
\hat{\rho}_c'(t+\tau)=e^{\ml{L}\tau}[\hat{\rho}_c'(t)]
}
at time $t+\tau$ before the jump.
Finally, we consider the probability of the jump, $\mr{Tr}[\hat{L}_{b}^\dag\hat{L}_{b}\hat{\rho}_c'(t+\tau)]dt$. Combining the above steps, we find 
\aln{
\mbb{E}[I_\mu(t)I_\mu(t+\tau)]&=\sum_{bb'}\mu_b\mu_{b'}\mbb{E}_{[0,t)}\lrl{\Tr[\hat{L}_{b}^\dag\hat{L}_{b}e^{\ml{L}\tau}[\hat{L}_{b'}\hat{\rho}_c(t)\hat{L}_{b'}^\dag]]}\nonumber\\
&=\Tr[\ml{G}_\mu\circ e^{\ml{L}\tau}\circ\ml{G}_\mu[\hat{\rho}(t)]],
}
where we have introduced the super-operator
\aln{
\ml{G}_\mu[\hat{X}]=\sum_b\mu_b\hat{L}_b\hat{X}\hat{L}_b^\dag.
}
Note that $J_\mu(t)=\Tr[\ml{G}_\mu[\hat{\rho}(t)]]$.

Finally, for $\tau=0$, we find
\aln{
\mbb{E}[I_\mu(t)^2]=\frac{1}{dt^2}\sum_b\mu_b^2\mbb{E}[dN_b(t)]
=\frac{1}{dt}\sum_b\mu_b^2\braket{\hat{L}_b^\dag\hat{L}_b}_t,
}
which can be interpreted as $\delta(\tau)\sum_b\mu_b^2\braket{\hat{L}_b^\dag\hat{L}_b}_t$.

In conclusion, the correlation function is given by
\aln{
C_\mu(t,t+\tau)=\Tr[\ml{G}_\mu\circ e^{\ml{L}\tau}\circ\ml{G}_\mu[\hat{\rho}(t)]]+\delta(\tau)\sum_b\mu_b^2\braket{\hat{L}_b^\dag\hat{L}_b}_t-J_\mu(t)J_\mu(t+\tau).
}
In particular, if we choose the state $\hat{\rho}$ as a stationary state $\hat{\rho}_\mr{ss}$, which satisfies
\aln{
\ml{L}[\hat{\rho}_\mr{ss}]=0,
}
the correlation function reads 
\aln{
C_\mu^\mr{ss}(\tau)=\Tr[\ml{G}_\mu\circ e^{\ml{L}|\tau|}\circ\ml{G}_\mu[\hat{\rho}_\mr{ss}]]+\delta(\tau)\sum_b\mu_b^2\braket{\hat{L}_b^\dag\hat{L}_b}_\mr{ss}-\Tr[\ml{G}_\mu[\hat{\rho}_\mr{ss}]]^2,
}
where we have also allowed $\tau<0$ and defined $\braket{\hat{A}}_\mr{ss}=\Tr[\hat{A}\hat{\rho}_\mr{ss}]$.

From the above results, we can also obtain the statistics of the counting variable $N_\mu(t)$.
For simplicity, let us assume that $\hat{\rho}(t)$ is a stationary state, although the extension to arbitrary initial states is not difficult. Then, the average of $N_\mu(t)$ is given by
\aln{
\mbb{E}[N_\mu(t)]=\int_0^td\tau \sum_b\mu_b\braket{\hat{L}_b^\dag\hat{L}_b}_\mr{ss}=t\sum_b\mu_b\braket{\hat{L}_b^\dag\hat{L}_b}_\mr{ss}.
}
Next, the variance of $N_\mu(t)$ becomes
\aln{
\mbb{V}[N_\mu(t)]&=\mbb{E}[N_\mu(t)^2]-\mbb{E}[N_\mu(t)]^2\nonumber\\
&=\int_0^td\tau\int_0^td\tau' C_\mu(\tau,\tau').
}
Taking the time derivative, we have
\aln{
D(t)=\frac{d\mbb{V}[N_\mu(t)]}{dt}=2\int_0^t d\tau C_\mu(t,\tau)=2\int_0^t d\tau C_\mu(\tau,t)=2\int_0^t d\tau C_\mu^\mr{ss}(t-\tau)=2\int_0^t d\tau C_\mu^\mr{ss}(\tau)
}
and thus
\aln{\label{noiseD}
D(t)=\sum_b\mu_b^2\braket{\hat{L}_b^\dag\hat{L}_b}_\mr{ss}+2\int_0^t d\tau(\Tr[\ml{G}_\mu\circ e^{\ml{L}|\tau|}\circ\ml{G}_\mu[\hat{\rho}_\mr{ss}]]-\Tr[\ml{G}_\mu[\hat{\rho}_\mr{ss}]]^2).
}
If $t$ is large, we often encounter that $D(t)$ rapidly converges to a constant $D(\infty)$. In that case, we have
\aln{
\mbb{V}[N_\mu(t)]\sim D(\infty)t.
}

We can see that both $\mbb{E}[N_\mu(t)]$ and $\mbb{V}[N_\mu(t)]$ increase linearly with $t$.
Notably, however, the variance $\mbb{V}[N_\mu(t)]$ depends on how the jumps occurring at different times are correlated, as highlighted in the second term on the right-hand side of Eq.~\eqref{noiseD}.
Note that, in general, the higher-order fluctuations of $N_\mu(t)$ can be captured, e.g., by its large deviation~\cite{jakvsic2014entropic, van2015sanov}, which we do not explain here. 
See Refs.~\cite{garrahan2018aspects, jack2020ergodicity, landi2024current, touchette2009large} for a review.

\subsubsection{Examples of the statistics of quantum jumps}
As an example, let us consider the spontaneous emission from a single-atom system, whose jump operator is given by Eq.~\eqref{singlejump} (see Fig.~\ref{fig3}).
We assume that the atom's Hamiltonian is simply given by 
\aln{
\hat{H}_S=\frac{g}{2}\hat{\sigma}^x.
}
Furthermore, we assume that the state is prepared in a stationary state $\hat{\rho}_\mr{ss}$.
Solving $\ml{L}[\hat{\rho}_\mr{ss}]$, we find that the stationary state is given by
\aln{
\hat{\rho}_\mr{ss}=\frac{1}{s^2+2}\pmat{1 & -is\\ is & 1+s^2},
}
where $s=\gamma/g$.

We consider $\mu=1$ and ${I}(t)=\frac{dN(t)}{dt}$. Then, its average reads
\aln{
J^\mr{ss}=\Tr[\ml{G}_\mu[\hat{\rho}_\mr{ss}]]=\gamma\Tr[\hat{\sigma}^+\hat{\sigma}^-\hat{\rho}_\mr{ss}]=\frac{\gamma g^2}{\gamma^2+2g^2}.
}
Next, to gain physical insight into the correlation $C^\mr{ss}(\tau)$, we assume a small $\tau\:(>0)$ regime.
Then, we find
\aln{
C^\mr{ss}(\tau)&\simeq\Tr[\ml{G}_\mu^2[\hat{\rho}_\mr{ss}]]+\tau\Tr[\ml{G}_\mu\ml{L}\ml{G}_\mu[\hat{\rho}_\mr{ss}]]+\frac{1}{2}\tau^2\Tr[\ml{G}_\mu\ml{L}^2\ml{G}_\mu[\hat{\rho}_\mr{ss}]]-(J^\mr{ss})^2\nonumber\\
&=\frac{g^4\gamma^2}{(\gamma^2+2g^2)^2}\lrm{\frac{(\gamma^2+2g^2)\tau^2}{4}-1},
}
where we have omitted $\circ$ between the maps for simplicity.
Importantly, this result shows that $C^\mr{ss}(\tau)<0$ for small $\tau$, indicating the anti-correlation of the two jumps, i.e., spontaneous emissions.
In fact, in the present setup, the detection of spontaneous emission means that the atom is suddenly changed to the ground state $\ket{0}$. 
Since it takes some time for the atom's state to have an overlap with the excited state $\ket{1}$, the second spontaneous emission is suppressed, which explains the anti-correlation of the jumps.

As another example, let us consider a quantum many-body system of hardcore bosons on a one-dimensional lattice, whose Hamiltonian $\hat{H}_\mr{HCB}$ is  given in Eq.~\eqref{hcboson},
with the position measurement $\hat{L}_l=\sqrt{\gamma}\hat{n}_l$ at each site.
Here, we assume that the system size $V$ is even and set the conserved number of particles $M$ to $M=V/2$ (i.e., half-filling).

In this setup, let us first consider the total number of jumps in the whole system, which is given by
\aln{
N_\mr{total}(t)=\sum_{l=1}^V N_l(t)
}
with $\mu_l=1$ for all $l$.
If we consider the maximally mixed state $\hat{\rho}_\mr{ss}\propto \mbb{\hat{I}}$ in the subspace of $\mc{H}$ whose particle number is $M$, using $\hat{n}_l^2=\hat{n}_l$ for the hardcore bosons, we immediately find 
\aln{
\mbb{E}[N_\mr{total}(t)]=t\gamma\sum_{l=1}^V\braket{\hat{n}_l}_\mr{ss}=\frac{\gamma Vt}{2}
}
and
\aln{
\mbb{V}[N_\mr{total}(t)]=\frac{\gamma Vt}{2},
}
where we can confirm that the second term in Eq.~\eqref{noiseD} vanishes\footnote{
To see this explicitly, we use the fact that $\sum_{l=1}^V\hat{n}_l=M$, which is regarded as a c-number. 
Then, 
\aln{
\Tr[\ml{G}_\mu\circ e^{\ml{L}|\tau|}\circ\ml{G}_\mu[\hat{\rho}_\mr{ss}]]&=\gamma
\Tr\lrl{\sum_{l=1}^V\hat{n}_l^2\lrs{ e^{\ml{L}|\tau|}\circ\ml{G}_\mu[\hat{\rho}_\mr{ss}]}}\nonumber\\
&=
\gamma M\Tr\lrl{{ e^{\ml{L}|\tau|}\circ\ml{G}_\mu[\hat{\rho}_\mr{ss}]}}=\gamma M\Tr\lrl{{\ml{G}_\mu[\hat{\rho}_\mr{ss}]}}=\gamma^2M^2\Tr\lrl{{\hat{\rho}_\mr{ss}}}=\gamma^2M^2,
}
and, similarly, we have $\Tr[\ml{G}_\mu[\hat{\rho}_\mr{ss}]]^2=\gamma^2M^2$.
}.
This reflects the fact that the statistics of quantum jumps occurring in the total system is Poissonian, i.e., $\mbb{V}[N_\mr{total}(t)]/\mbb{E}[N_\mr{total}(t)]=1$.

However, a more nontrivial feature appears if we instead consider the quantum jumps occurring in the subsystem, say
\aln{
N_\mr{half}(t)=\sum_{l=1}^{V/2} N_l(t),
}
with $\mu_l=1\:(1\leq l\leq V/2)$ and $\mu_l=0\:(V/2+1\leq l\leq V)$.
In fact, Ref.~\cite{Yamamoto25} numerically found that the variance of $N_\mr{half}(t)$ exhibits an anomalous scaling
\aln{
\mbb{V}[N_\mr{half}(t)]=\frac{\gamma Vt}{4}+\gamma^2 c(\gamma)V^\alpha t,
}
while $\mbb{E}[N_\mr{half}(t)]=\gamma Vt/4$ as expected from particle number conservation.
Here, $c(\gamma)$ is an increasing function of $\gamma$ and $\alpha\simeq 2.7$ is a universal exponent independent of $U$.
This means that for sufficiently small $\gamma$, which should be $o(V^0)$, the variance behaves as $\mbb{V}[N_\mr{half}(t)]\simeq \gamma Vt/4$, corresponding to the Poissonian statistics $\mbb{V}[N_\mr{half}(t)]/\mbb{E}[N_\mr{half}(t)]=1$.
In stark contrast, for $\gamma=\ml{O}(V^0)$, the variance behaves as a super-Poissonian statistics $\mbb{V}[N_\mr{half}(t)]\simeq \gamma^2 c(\gamma)V^\alpha t \gg \mbb{E}[N_\mr{half}(t)]$.
As seen from this example, the statistics of jumps in many-body systems~\cite{ates2012dynamical, buvca2014exactly, vznidarivc2014anomalous, vznidarivc2014large, kewming2022diverging, matsumoto2025dissipative, cech2024space} can be nontrivial.

\subsection{Advantage in numerical simulations of open systems}
\subsubsection{Overview}
So far, we have focused on the physical meaning of quantum trajectories in the context of quantum systems under measurements.
Here, we discuss that quantum trajectories are also useful for the numerical simulations of open quantum systems; this was, in fact, one of the motivations for the early works on this concept~\cite{dalibard1992wave, dum1992monte}.
Our aim is to simulate the GKSL equation
\aln{\label{gksl2}
\frac{d\hat{\rho}}{dt}=-i[\hat{H},\hat{\rho}]+\sum_{b\geq 1}\hat{L}_b\hat{\rho}\hat{L}_b^\dag-\frac{1}{2}\{\hat{\rho},\hat{L}_b^\dag\hat{L}_b\}
}
in an efficient way, where we omit the subscript $\mr{S}$ for simplicity in the following.

If we attempt to directly compute the GKSL dynamics without approximation, we need to store $\sim d^2$ elements of $\hat{\rho}$ in the memory of a computer, where $d=\mr{dim}[\ml{H}]$ is the dimension of the Hilbert space.
Moreover, the computational time for the right-hand side is $\sim d^3$.
If we consider many-body systems, where $d$ grows exponentially with respect to the system size $V$, these scalings become problematic.

In contrast, we have seen that a GKSL equation can be obtained from the ensemble average of stochastic quantum trajectories.
Therefore, it can be simulated by sampling a large number of quantum trajectories that reproduce the GKSL equation on average.
This is called the unraveling of the GKSL equation.

Specifically, given Eq.~\eqref{gksl2}, we can instead simulate the dynamics of pure states,
\aln{\label{SSEqt}
d\ket{\psi_c}=\lrs{-i\hat{H}_\mr{eff}+\frac{1}{2}\sum_{b\geq 1}\braket{\hat{L}_b^\dag\hat{L}_b}_c}\ket{\psi_c}dt+\sum_{b\geq 1}\lrs{\frac{\hat{L}_b}{\sqrt{\braket{\hat{L}_b^\dag\hat{L}_b}_c}}-1}\ket{\psi_c}dN_b
}
and take the ensemble average, $\hat{\rho}=\mbb{E}[\ket{\psi_c}\bra{\psi_c}]$.
For this case, we need to store only $\sim d$ elements of $\ket{\psi_c}$ to the memory.
The computational time is $\sim d^2$ for a single trajectory, and the net cost is $\sim d^2 N_\mr{sample}$, where $N_\mr{sample}$ is the number of samples required to obtain $\hat{\rho}$ with a sufficient accuracy.
Therefore, compared with the direct simulation, the quantum-trajectory method has a memory advantage and, if $N_\mr{sample}\ll d$ (which is often the case), an advantage of computational time\footnote{
In practice, we can sample quantum trajectories by direct parallelization, which provides an additional advantage of computational time.
}.

\subsubsection{Concrete method}
Here, we demonstrate a concrete method to perform the simulation outlined above (see also Refs.~\cite{wiseman2009quantum, landi2024current}).
As a preliminary, we notice the following method to efficiently sample a random variable $X$, which takes $X_b\:(0\leq b\leq B)$ with probability $p_b$.
\begin{enumerate}
\item 
Define
\aln{
Q_b=\sum_{b'=0}^b p_{b'},\quad Q_{-1}=0,
}
where $Q_B=1$ from normalization.

\item 
Sample $R$ randomly from the box distribution in the range $R\in [0,1]$.

\item Find $b$ such that 
\aln{
Q_{b-1}<R\leq Q_b.
}

\item 
Output $X_b$.
\end{enumerate}
Why this method works is visually understood from Fig.~\ref{fig4}.

\begin{figure}[!h]
\centering\includegraphics[width=\linewidth]{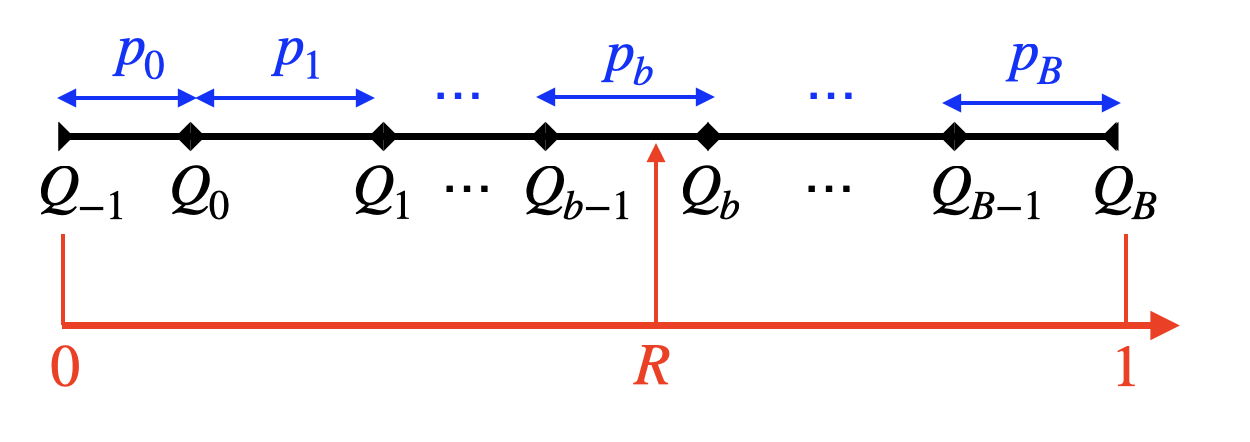}
\caption{
Sampling of a random variable $X$, which takes $X_b\;(0\leq b\leq B)$, with probability $p_b$. 
Considering the cumulative distribution $Q_b$ and choosing $R$ randomly from $R\in [0,1]$, we can sample $X_b$ from the condition $Q_{b-1}<R\leq Q_b$.
}
\label{fig4}
\end{figure}

The continuous-variable version is also understood as follows.
Namely, if we want to sample a random variable $X$, which takes $X_\tau\:(\tau \geq 0)$ with probability distribution $q_\tau$, the following method is efficient.
\begin{enumerate}
\item
Define
\aln{
Q_\tau=\int_0^\tau d\tau' q_{\tau'},
}
where $Q_\infty=1$ from normalization.

\item 
Sample $R$ randomly from the box distribution in the range $R\in [0,1]$.

\item Find $\tau$ such that 
\aln{
Q_\tau =R.
}

\item 
Output $X_\tau$.
\end{enumerate}

Let us now move on to the simulation of Eq.~\eqref{SSEqt}.
A naive method is as follows. 
At each time $t$, we can discretize the dynamics with a small interval $\delta t$.
We have $B+1$ types of $b$, where $p_b(t)=\delta t \braket{\hat{L}_b^\dag\hat{L}_b}_{c;t}$ for $1\leq b\leq B$ and 
$p_0(t)=1-\sum_{b=1}^Bp_b(t)$.
Then, using the above method, we can sample $b$  using a random variable $R(t)$.
According to the sampled $b$, we can update $\ket{\psi_c(t)}$ as 
\aln{
\ket{\psi_c(t+\delta t)}=\frac{\hat{L}_b\ket{\psi_c(t)}}{\|\hat{L}_b\ket{\psi_c(t)}\|}
}
for $1\leq b\leq B$ and
\aln{
\ket{\psi_c(t+\delta t)}=\ket{\psi_c(t)}+\lrs{-i\hat{H}_\mr{eff}+\frac{1}{2}\sum_{b\geq 1}\braket{\hat{L}_b^\dag\hat{L}_b}_{c;t}}\ket{\psi_c(t)}dt
}
for $b=0$. 
However, this method is not efficient, because if $\delta t$ is small, $p_{b\neq 0}(t)$  becomes very small.

\subsubsection{Efficient method}
A more efficient alternative method, which avoids the time discretization, is as follows.
We first determine (I) the jump time and then (II) the type of the jump.

To determine the jump time in (I), we notice that the probability such that no jump occurs during $[t,t+\tau]$ for the state $\ket{\psi_c(t)}$ is given by\footnote{
To see this, we note that $P_0(0)=1$ and 
\aln{
P_0(\tau+\delta \tau)=\lrs{1-\sum_{b\geq 1}\frac{\braket{\psi_c(t)|e^{i\hat{H}_\mr{eff}^\dag\tau}\hat{L}_b^\dag\hat{L}_be^{-i\hat{H}_\mr{eff}\tau}|\psi_c(t)}}{\|e^{-i\hat{H}_\mr{eff}\tau}\ket{\psi_c(t)}\|^2}\delta\tau}P_0(\tau),
}
so that
\aln{
\frac{d\ln P_0(\tau)}{d\tau}=-\sum_{b\geq 1}\frac{\braket{\psi_c(t)|e^{i\hat{H}_\mr{eff}^\dag\tau}\hat{L}_b^\dag\hat{L}_be^{-i\hat{H}_\mr{eff}\tau}|\psi_c(t)}}{\|e^{-i\hat{H}_\mr{eff}\tau}\ket{\psi_c(t)}\|^2}=\frac{d}{d\tau}\ln \|e^{-i\hat{H}_\mr{eff}\tau}\ket{\psi_c(t)}\|^2.
}
}
\aln{
P_{0}(\tau)=\|e^{-i\hat{H}_\mr{eff}\tau}\ket{\psi_c(t)}\|^2={\braket{\psi_c(t)|e^{i\hat{H}_\mr{eff}^\dag\tau}e^{-i\hat{H}_\mr{eff}\tau}|\psi_c(t)}}.
}
Now, consider a probability distribution $q_\tau$ with respect to the jump time $\tau\:(\geq 0)$.
Then, by definition, we have
\aln{
Q_\tau:=\int_0^\tau q_{\tau'}d\tau'=1-P_0(\tau).
}
For this distribution, we can employ the sampling method in the previous subsection.
That is, we determine $\tau$ from the relation
\aln{\label{QPR}
Q_\tau=1-P_0(\tau)=R,
}
where $R$ is sampled uniformly from $[0,1]$.
Once $\tau$ is determined by Eq.~\eqref{QPR}, we can first evolve the state as
\aln{
\ket{\psi_c(t+\tau)}=\frac{e^{-i\hat{H}_\mr{eff}\tau}\ket{\psi_c(t)}}{\|e^{-i\hat{H}_\mr{eff}\tau}\ket{\psi_c(t)}\|}.
}

Next, we let a jump occur. 
The type of the jump in (II) is determined by the conditional probability
\aln{
p_b'=\frac{p_b}{1-p_0}=\frac{\braket{\psi_c(t+\tau)|\hat{L}_b^\dag\hat{L}_b|\psi_c(t+\tau)}}{\sum_{b=1}^B\braket{\psi_c(t+\tau)|\hat{L}_b^\dag\hat{L}_b|\psi_c(t+\tau)}}.
}
To sample $b\:(1\leq b\leq B)$ from this probability distribution, we define $Q_b'=\sum_{b'=1}^bp_{b'}'$ with $Q_0'=0$ and use the above method.
Namely, we generate another random variable $R'$ uniformly chosen from $[0,1]$, and, if $Q_{b-1}'<R'\leq Q_b'$, the jump of type $b$ occurs.
Then, we update the state as
\aln{
\ket{\psi_c(t+\tau)}\ra \frac{\hat{L}_b\ket{\psi_c(t+\tau)}}{\|\hat{L}_b\ket{\psi_c(t+\tau)}\|}.
}
Iterating (I) and (II) produces a quantum trajectory.

\subsection{Quantum diffusion}
\label{sec:Quantum diffusion}
Before ending this section, we mention a different formulation of quantum trajectories in terms of quantum diffusion.
Since there are nice reviews~\cite{wiseman2009quantum, landi2024current} on this topic, we here briefly introduce the basic ideas, along with the governing equation, which will appear later in this review.

We first note that the GKSL equation in Eq.~\eqref{gksl2} is invariant under the following transformations:
\aln{
\begin{split}
\hat{L}_b &\longrightarrow \hat{L}_b'=\hat{L}_b+\alpha_b, \\
\hat{H}_\mr{S} &\longrightarrow \hat{H}_\mr{S}'=\hat{H}_\mr{S}-\frac{i}{2}{\sum_b}(\alpha_b^*\hat{L}_b-\alpha_b\hat{L}_b^\dag),
\end{split}
}
where $\alpha_b=|\alpha_b|e^{i\theta_b}\in\mbb{C}$.
In contrast, the above transformation changes the properties of quantum trajectories.
For example, if we consider the current $I'_\mu(t)$ for quantum trajectories characterized by $\hat{H}_\mr{S}'$ and $\hat{L}_b'$, its average reads
\aln{
J_\mu'(t)=\mathbb{E}[I'_\mu(t)]=J_\mu(t)+\sum_b\mu_b|\alpha_b|\braket{\hat{X}_b}_t+\sum_b\mu_b|\alpha_b|^2\neq J_\mu(t),
}
where
\aln{
\hat{X}_b=e^{-i\theta_b}\hat{L}_b+e^{i\theta_b}\hat{L}_b^\dag
}
and $J_\mu(t)$ is given in Eq.~\eqref{aveInu}.
This fact indicates that there are different unravelings of the GKSL equation, and different unravelings lead to distinct physics at the level of quantum trajectory.

In particular, we are interested in the case where $|\alpha_b|$ is large.
In this case, $J_\mu'(t)$ becomes large, meaning that there are frequent jumps.
It also turns out that the measurement back-action due to each jump becomes small for large $|\alpha_b|$.
Consequently, we can take an appropriate limit where many jumps with small back-action occur.
Specifically, if we count the number of jumps within a small but coarse-grained time, its distribution approximately obeys a Gaussian distribution due to the central limit theorem.

With the above considerations, we obtain the following stochastic equation (see Ref.~\cite{pellegrini2009diffusion} for a rigorous treatment):
\aln{\label{qdfeq}
d\hat{\rho}_c=\lrs{-i[\hat{H}_\mr{S},\hat{\rho}_c]+\sum_b\hat{L}_b\hat{\rho}_c\hat{L}_b^\dag-\frac{1}{2}\{\hat{L}_b^\dag\hat{L}_b,\hat{\rho}_c\}}dt
+\sum_b\lrl{(\hat{L}_b e^{-i\theta_b}\hat{\rho}_c+\mr{h.c.})-\braket{\hat{X}_b}_c\hat{\rho}_c}\cdot dW_b,
}
where $dW_b$ is a Wiener process that satisfies
\aln{
dW_b dW_{b'}=\delta_{bb'}dt, \quad\mbb{E}[dW_b]=0,
}
and ``$\cdot$" before $dW_b$ indicates that we consider the It\^o integral\footnote{Recently, there are proposals to reformulate the dynamics of quantum trajectories in terms of stochastic Hamiltonian (Liouville) dynamics, rather than only using the stochastic master equations~\cite{l18s-9vmh,villanueva2025hamiltonian}.}.
We note that this type of equation is relevant for, e.g., homodyne and heterodyne detections of photons, where we introduce a local oscillator that provides a large number of photons (large $|\alpha|$) to the system.
See, e.g., Ref.~\cite{wiseman2009quantum} for further details.

\section{Spectral properties of CPTP maps and quantum master equations}\label{sec:CPTPspectra}

As discussed in the previous chapters, the dynamics of quantum systems subject to measurements with the outcomes discarded, dissipation and/or decoherence are generally described by CPTP maps.
We are often interested in the infinitely long-time behavior of such systems, which is governed by the steady states of CPTP maps. 
After an overview of the general spectral properties of CPTP maps in Sec.~\ref{sec:GeneralSpecCPTP}, we introduce the two conditions for the existence of a unique steady state, irreducibility and primitivity, in Secs.~\ref{sec:irreducibility} and~\ref{sec:primitivity}, respectively.
In Sec.~\ref{sec:SteadyStateGKSL}, we focus on quantum master equations generated by GKSL superoperators and discuss several conditions known in the literature for the existence of a unique steady state.
Section~\ref{sec:miscellaneous} is devoted to selected topics related to the relaxation dynamics towards steady states and the spectral statistics of random CPTP or GKSL dynamics.

\subsection{General spectral properties of CPTP maps}
\label{sec:GeneralSpecCPTP}

We first focus on the spectral properties of general CPTP maps $\mc{E}: \mbb{B}[\mc{H}] \to \mbb{B}[\mc{H}]$\footnote{
In some literature, $\mc{E}$ is defined as a CPTP map on the set of \emph{trace-class} operators $\hat{T}$ on $\mc{H}$, denoted by $\mbb{T}[\mc{H}]$. 
A trace-class operator $\hat{T}$ is a bounded operator whose trace norm $\| \hat{T} \|_1 := \Tr [\sqrt{\hat{T}^\dagger \hat{T}}]$ is finite. 
Its dual map $\mc{E}^\dagger$ [see Eq.~\eqref{eq:DualMap}] then acts on the set of bounded operators $\hat{X}$ on $\mc{H}$, $\mbb{B}[\mc{H}]$. 
The distinction between $\mbb{T}[\mc{H}]$ and $\mbb{B}[\mc{H}]$ is crucial for ensuring the finiteness of $\Tr[\hat{T} \hat{X}]$ when $\mc{H}$ is infinite-dimensional.
However, when $\mc{H}$ is restricted to be finite-dimensional, there is no need to worry about this distinction. 
This is because any bounded operator $\hat{X} \in \mbb{B}[\mc{H}]$ on $d$-dimensional $\mc{H}$ has a finite trace norm, 
\begin{align*}
\| \hat{X} \|_1 = \sum_{i=1}^d \Lambda_i \leq d \Lambda_1 = d\| \hat{X} \| < \infty.
\end{align*}
Here, $\Lambda_i$ are the singular values of $\hat{X}$ arranged in descending order $\Lambda_1 \geq \cdots \geq \Lambda_d$ and $\| \hat{X} \| := \sup_{\|\ket{\phi}\|=1} \| \hat{X} |\phi \rangle \|$ is the operator norm of $\hat{X}$, which is equal to the largest singular value $\Lambda_1$ of $\hat{X}$. 
This yields $\mbb{B}[\mc{H}] \subseteq \mbb{T}[\mc{H}]$. 
On the other hand, $\mbb{T}[\mc{H}] \subseteq \mbb{B}[\mc{H}]$ by definition, and hence $\mbb{T}[\mc{H}] = \mbb{B}[\mc{H}]$.
}
~\cite{wolf2012quantum, bengtsson2020geometry}.
Let us consider the eigenvalue equation of $\mc{E}$, 
\begin{align} \label{eq:CPTPEigen}
\mc{E}[\hat{X}] = z \hat{X},
\end{align}
where $\hat{X} \in \mbb{B}[\mc{H}]$ is a (right) eigenvector\footnote{
Since $\hat{X}$ is an operator acting on $\mc{H}$, one may prefer to call it an \emph{eigenoperator}.
} of $\mc{E}$ and $z$ is the associated eigenvalue. 
Although the eigenvalue $z$ is in general complex, its distribution is highly restricted due to the complete positivity and the trace-preserving property of the map $\mc{E}$. 
In fact, we can prove that (i) the complex eigenvalues appear in conjugate pairs $z$ and $z^*$, (ii) there exists an eigenvalue $z=1$, and (iii) any eigenvalue lies within the unit disk of the complex plane, $|z| \leq 1$. 
We can also prove the existence of a positive definite eigenvector for the eigenvalue $z=1$, as we will discuss below.

Property (i) can be easily proven.
Since the CPTP map $\mc{E}$ is Hermiticity preserving,
\begin{align}
\mc{E}[\hat{X}^\dagger] = (\mc{E}[\hat{X}])^\dagger,
\end{align}
if $\hat{X}$ satisfies the eigenvalue equation in Eq.~\eqref{eq:CPTPEigen}, we have 
\begin{align}
\mc{E}[\hat{X}^\dagger] = z^* \hat{X}^\dagger.
\end{align}
Thus, if $z$ is an eigenvalue of $\mc{E}$, its complex conjugate $z^*$ must also be an eigenvalue.

In order to prove property (ii), we introduce the dual map $\mc{E}^\dagger: \mbb{B}[\mc{H}] \to \mbb{B}[\mc{H}]$ defined by $\Tr[\hat{X} \mc{E}[\hat{Y}]] = \Tr[\mc{E}^\dagger[\hat{X}] \hat{Y}]$ for all $\hat{X}, \hat{Y} \in \mbb{B}[\mc{H}]$. 
In the Kraus representation, $\mc{E}^\dagger$ is given by
\begin{align} \label{eq:DualMap}
\mc{E}^\dagger[\hat{\rho}] = \sum_b \hat{M}_b^\dagger \hat{\rho} \hat{M}_b.
\end{align}
Compared with the Kraus representation of $\mc{E}$ in Eq.~\eqref{eq:KrausRep}, $\hat{M}_b$ and $\hat{M}_b^\dagger$ are interchanged. 
The dual map $\mc{E}^\dagger$ gives the time evolution of operators in the Heisenberg picture, whereas the CPTP map $\mc{E}$ gives the time evolution of states (density operators) in the Schr\"{o}dinger picture.
One can easily show that the dual map $\mc{E}^\dagger$ has the same eigenvalues as those of $\mc{E}$; in the natural representation, the map $\mc{E}$ and its dual $\mc{E}^\dagger$ can be expressed as
\begin{align} \label{eq:CPTPNaturalRep}
\hat{\mc{E}} = \sum_b \hat{M}_b \otimes \hat{M}_b^*, \quad 
\hat{\mc{E}}^\dagger = \sum_b \hat{M}_b^\dagger \otimes \hat{M}_b^{\sf{T}},
\end{align}
which are related to each other by the transposition of each $\hat{M}_b$, followed by a swap operation $\ket{i} \ket{j} \mapsto \ket{j} \ket{i}$ that is unitary. 
Thus, the eigenvalues of $\mc{E}$ and $\mc{E}^\dagger$ coincide.
Indeed, a right eigenvector of $\mc{E}^\dagger$ is a left eigenvector of $\mc{E}$ and vice versa.
The TP condition for $\mc{E}$ in Eq.~\eqref{eq:TPKraus} implies that the dual map $\mc{E}^\dagger$ is unital, i.e., $\mc{E}^\dagger[\hat{\mbb{I}}] = \hat{\mbb{I}}$. 
Since the dual map $\mc{E}^\dagger$ has an eigenvector $\hat{\mbb{I}}$ with eigenvalue $z=1$, the CPTP map $\mc{E}$ also has the eigenvalue $z=1$. 
This proves property (ii)\footnote{
We note that property (ii) can also be shown by Brouwer's fixed-point theorem, whose detailed statement can be found in Appendix~\ref{app:StationaryState}.
}.

In order to prove property (iii), we first use the inequality for the operator norm 
\begin{align}
\lVert \mc{T}[\hat{X}] \rVert \leq \lVert \mc{T} \rVert \lVert \hat{X} \rVert, \quad
\lVert \mc{T} \rVert := \sup_{\lVert \hat{X} \rVert = 1} \lVert \mc{T}[\hat{X}] \rVert,
\end{align}
which holds for any linear map $\mc{T}: \mbb{B}[\mc{H}] \to \mbb{B}[\mc{H}]$ and any $\hat{X} \in \mbb{B}[\mc{H}]$.
Substituting the eigenvalue equation $\mc{T}[\hat{X}] = z \hat{X}$, we find
\begin{align}
\lVert \mc{T}[\hat{X}] \rVert = |z| \lVert \hat{X} \rVert \leq \lVert \mc{T} \rVert \lVert \hat{X} \rVert
\end{align}
and thus
\begin{align}
|z| \leq \lVert \mc{T} \rVert.
\end{align}
By the Russo-Dye theorem, for which we give a proof in Appendix~\ref{app:RussoDye}, any positive unital map $\mc{T}$ satisfies $\lVert \mc{T} \rVert = 1$ and hence $|z| \leq 1$.
Since the dual map $\mc{E}^\dagger$ is a CP unital map, its eigenvalues satisfy $|z| \leq 1$. 
Hence, the eigenvalues of the CPTP map $\mc{E}$ also satisfy $|z| \leq 1$.

If a CPTP map $\mc{E}$ is diagonalizable, there exist biorthogonal bases $\{ \hat{\mathsf{R}}_a \}_{a=1}^{d^2}$ and $\{ \hat{\mathsf{L}}_a \}_{a=1}^{d^2}$ such that
\begin{align}\label{eigenvalueCPTP}
\mc{E}[\hat{\mathsf{R}}_a] = z_a \hat{\mathsf{R}}_a, \quad
\mc{E}^\dagger[\hat{\mathsf{L}}_a] = z_a^* \hat{\mathsf{L}}_a, \quad
\Tr[\hat{\mathsf{L}}_a^\dagger \hat{\mathsf{R}}_b] \propto \delta_{ab}.
\end{align}
Here, $\hat{\mathsf{R}}_a$ ($\hat{\mathsf{L}}_a$) is a right (left) eigenvector of $\mc{E}$ with eigenvalue $z_a$.
Then, the CPTP map $\mc{E}$ can be diagonalized as 
\begin{align}
\mc{E}[\hat{X}] = \sum_{a=1}^{d^2} z_a \frac{\Tr [\hat{\mathsf{L}}_a^\dagger \hat{X}]}{\Tr [\hat{\mathsf{L}}_a^\dagger \hat{\mathsf{R}}_a]} \hat{\mathsf{R}}_a.
\end{align}

Although a CPTP map $\mc{E}$ is generally not diagonalizable, we can still prove the triviality of Jordan blocks corresponding to the eigenvalues $z$ with unit modulus $|z|=1$.
Let us consider $\mc{E}$ as a $d^2 \times d^2$ matrix $\hat{\mc{E}}$, as given by Eq.~\eqref{eq:CPTPNaturalRep} in the natural representation. 
Then, there exists a $d^2 \times d^2$ invertible matrix $\hat{\mc{X}}$ that brings $\hat{\mc{E}}$ into the Jordan normal form, 
\begin{align} \label{eq:CPTPJordanForm}
\hat{\mc{X}}^{-1} \hat{\mc{E}} \hat{\mc{X}} = \bigoplus_{a=1}^{N_J} J_{D_a}(z_a),
\end{align}
where $\sum_{a=1}^{N_J} D_a = d^2$ and $J_{D_a}(z)$ is a $D_a \times D_a$ upper triangular matrix (Jordan block) of the form
\begin{align}
J_1(z) = z, \ 
J_2(z) = \begin{pmatrix} z & 1 \\ 0 & z \end{pmatrix}, \ 
J_3(z) = \begin{pmatrix} z & 1 & 0 \\ 0 & z & 1 \\ 0 & 0 & z \end{pmatrix}, \ \cdots.
\end{align}
Since $\| \mc{E} \| = 1$ for any CPTP map $\mc{E}$ by the Russo-Dye theorem (see Appendix~\ref{app:RussoDye}), we have $\| \mc{E}[\hat{X}] \| \leq \| \mc{E} \| \| \hat{X} \| = \| \hat{X} \|$ for any $\hat{X} \in \mbb{B}[\mc{H}]$.
The same applies to $\mc{E}^n$, implying $\| \mc{E}^n[\hat{X}] \| \leq \| \hat{X} \|$ for any $n \in \mbb{N}$.
However, $\hat{\mc{X}}^{-1} \hat{\mc{E}}^n \hat{\mc{X}}$ contains the $n$th power of the Jordan blocks $J^n_{D_a}(z_a)$, whose first row reads $[J^n_{D_a}(z_a)]_{1j} = \binom{n}{j-1} z_a^{n-j+1}$ for $j=1,\ldots,D_a$.
These components diverge as $n \to \infty$ when $|z_a|=1$ and $D_a \geq 2$, which contradicts the boundedness of $\mc{E}^n$.
Hence, the Jordan blocks for the eigenvalues $z$ with unit modulus $|z|=1$ must be one-dimensional. 

Property (ii) guarantees the existence of an eigenvector $\hat{X}_0 \in \mbb{B}[\mc{H}]$ corresponding to the eigenvalue $z=1$.
Furthermore, as detailed in Appendix~\ref{app:StationaryState}, the eigenvector $\hat{X}_0$ can be chosen to be a density operator $\hat{\varrho}_0 \succeq 0$. 
Since $\mc{E}[\hat{\varrho}_0]=\hat{\varrho}_0$, it does not change under repeated applications of the CPTP map $\mc{E}$ and thus constitutes a stationary state.
When $\mc{E}$ admits a Kraus decomposition in terms of Hermitian operators $\hat{M}_b^\dagger = \hat{M}_b$, there is an immediate application of this property; since the CPTP map $\mc{E}$ is unital, namely $\mc{E}[\hat{\mbb{I}}] = \hat{\mbb{I}}$, the maximally mixed state $\hat{\varrho}_0 = \hat{\mbb{I}}/d$ becomes a stationary state.
However, the general spectral properties are not enough to ensure the existence of a unique steady state or the ergodicity of physical quantities for quantum trajectories after a sufficiently long time (see Chapters~\ref{sec:linear-quantity_purification} and~\ref{sec:nonlear-quantity_Lyapunov-spectrum} for the ergodicity of quantum trajectories).
In general, the eigenvalue $z=1$ can be degenerate and there can be multiple stationary states. 
Since any linear combination of the stationary states is also a stationary state, there exist infinitely many steady states and the resulting asymptotic dynamics depends on the initial state.
In the following two sections, we discuss the conditions required for the CPTP map $\mc{E}$ to have a nondegenerate eigenvalue $z=1$.

\subsection{Irreducibility of CPTP maps}
\label{sec:irreducibility}

Here we introduce the \emph{irreducibility} condition for a CPTP map $\mc{E}$, which ensures that $\mc{E}$ has a nondegenerate eigenvalue $z=1$.
There are several equivalent definitions of irreducibility~\cite{evans1978spectral, farenick1996irreducible, schrader2000perron, wolf2012quantum, Burgarth2013ergodic, carbone2016open, carbone2016irreducible}, some of which are summarized below. 
When a CPTP map $\mc{E}$ satisfies one of the following equivalent properties, $\mc{E}$ is said to be irreducible\footnote{
In some literature, these conditions are also called Davies-irreducible or ergodic~\cite{carbone2016open, carbone2016irreducible, zhang2024criteria}.
}.
\begin{proposition}[Irreducibility] \label{prop:irreducibility}
Let $\mc{E}: \mbb{B}[\mc{H}] \to \mbb{B}[\mc{H}]$ be a CPTP map. 
The following statements are equivalent.
\begin{enumerate}
\item There exists no orthogonal projection operator $\hat{P} \in \mbb{B}[\mc{H}]$ such that $\hat{P} \notin \{ 0, \hat{\mbb{I}} \}$ and $\mc{E}[\hat{P} \mbb{B}[\mc{H}] \hat{P}] \subseteq \hat{P} \mbb{B}[\mc{H}] \hat{P}$.
\item There exists no orthogonal projection operator $\hat{P} \in \mbb{B}[\mc{H}]$ such that $\hat{P} \notin \{ 0, \hat{\mbb{I}} \}$ and $\mc{E}^\dagger[\hat{P}] \succeq \hat{P}$.
\item For any nonzero positive semidefinite operator $\hat{\rho} \in \mbb{B}[\mc{H}]$ and for any $s > 0$, we have $\exp(s \mc{E})[\hat{\rho}] \succ 0$.
\end{enumerate}
\end{proposition}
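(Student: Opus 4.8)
The plan is to prove all three statements equivalent by passing to their negations and routing everything through a single structural object: a common invariant subspace of the Kraus operators. First I would fix a Kraus representation $\mc{E}[\hat{\rho}]=\sum_b \hat{M}_b\hat{\rho}\hat{M}_b^\dagger$ with $\sum_b\hat{M}_b^\dagger\hat{M}_b=\hat{\mbb{I}}$ as in Eq.~\eqref{eq:TPKraus}, so that the dual map $\mc{E}^\dagger[\hat{X}]=\sum_b\hat{M}_b^\dagger\hat{X}\hat{M}_b$ of Eq.~\eqref{eq:DualMap} is unital. The heart of the argument is a lemma: for an orthogonal projection $\hat{P}$ with complement $\hat{Q}=\hat{\mbb{I}}-\hat{P}$, the three conditions (i) $\mc{E}[\hat{P}\mbb{B}[\mc{H}]\hat{P}]\subseteq\hat{P}\mbb{B}[\mc{H}]\hat{P}$, (ii) $\hat{Q}\hat{M}_b\hat{P}=0$ for every $b$ (equivalently, $\mr{range}(\hat{P})$ is invariant under all $\hat{M}_b$), and (iii) $\mc{E}^\dagger[\hat{P}]\succeq\hat{P}$ are mutually equivalent. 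Since statement 1 asserts that no nontrivial $\hat{P}$ satisfies (i) and statement 2 that no nontrivial $\hat{P}$ satisfies (iii), this lemma immediately yields $1\Leftrightarrow 2$.

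For the lemma I would establish (i)$\Leftrightarrow$(ii) by testing (i) on rank-one elements $\ket{\phi}\bra{\phi}\in\hat{P}\mbb{B}[\mc{H}]\hat{P}$: the range of $\mc{E}[\ket{\phi}\bra{\phi}]=\sum_b\hat{M}_b\ket{\phi}\bra{\phi}\hat{M}_b^\dagger$ is the span of $\{\hat{M}_b\ket{\phi}\}$, which lies in $\mr{range}(\hat{P})$ exactly when each $\hat{M}_b\ket{\phi}$ does, and then extend by linearity since positive operators span $\hat{P}\mbb{B}[\mc{H}]\hat{P}$. For (ii)$\Rightarrow$(iii) I would combine (ii) with the TP condition to get $\hat{P}\mc{E}^\dagger[\hat{P}]\hat{P}=\sum_b\hat{P}\hat{M}_b^\dagger\hat{M}_b\hat{P}=\hat{P}$, and then invoke $0\preceq\mc{E}^\dagger[\hat{P}]\preceq\hat{\mbb{I}}$ (positivity and unitality) together with the elementary fact that $\hat{P}\hat{A}\hat{P}=\hat{P}$ and $\hat{A}\preceq\hat{\mbb{I}}$ force $\hat{A}\hat{P}=\hat{P}$, whence $\mc{E}^\dagger[\hat{P}]=\hat{P}+\hat{Q}\mc{E}^\dagger[\hat{P}]\hat{Q}\succeq\hat{P}$. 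The converse (iii)$\Rightarrow$(ii) reverses this: compressing $\mc{E}^\dagger[\hat{P}]\succeq\hat{P}$ by $\hat{P}$ again gives $\hat{P}\mc{E}^\dagger[\hat{P}]\hat{P}=\hat{P}$, and expanding $\hat{M}_b\hat{P}=\hat{P}\hat{M}_b\hat{P}+\hat{Q}\hat{M}_b\hat{P}$ with the cross terms vanishing leaves $\sum_b(\hat{Q}\hat{M}_b\hat{P})^\dagger(\hat{Q}\hat{M}_b\hat{P})=0$, so each $\hat{Q}\hat{M}_b\hat{P}=0$.

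To bring in statement 3 I would expand $\exp(s\mc{E})[\hat{\rho}]=\sum_{n\geq 0}\frac{s^n}{n!}\mc{E}^n[\hat{\rho}]$, a sum of positive semidefinite operators with strictly positive coefficients for $s>0$. Its kernel is therefore $\bigcap_{n\geq 0}\ker\mc{E}^n[\hat{\rho}]$, independent of $s$, and its range is $\sum_{n\geq 0}\mr{range}(\mc{E}^n[\hat{\rho}])$, which is precisely the smallest $\{\hat{M}_b\}$-invariant subspace $\mc{V}_{\hat{\rho}}$ containing $\mr{range}(\hat{\rho})$. Hence $\exp(s\mc{E})[\hat{\rho}]\succ 0$ iff $\mc{V}_{\hat{\rho}}=\mc{H}$. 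Negating statement 3 then means some nonzero $\hat{\rho}\succeq 0$ has proper $\mc{V}_{\hat{\rho}}$; the projection onto $\mc{V}_{\hat{\rho}}$ is a nontrivial $\hat{P}$ obeying (ii), and conversely any nontrivial (ii)-subspace hosts such a $\hat{\rho}$ (take a rank-one state supported in it). This gives $1\Leftrightarrow 3$ and closes the equivalence.

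I expect the main obstacle to be the operator-inequality step (ii)$\Leftrightarrow$(iii), where one must carefully balance $\mc{E}^\dagger[\hat{P}]\succeq\hat{P}$ against $\mc{E}^\dagger[\hat{P}]\preceq\hat{\mbb{I}}$ and justify the passage from a compressed equality to a genuine operator equality via positivity; the remaining work is routine bookkeeping with ranges, kernels, and linearity.
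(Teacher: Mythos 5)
Your proof is correct, but it takes a genuinely different route from the paper's. The paper proves the equivalences abstractly, without ever invoking a Kraus representation: its $1\Leftrightarrow 2$ runs through trace identities such as $\Tr[\hat{P}\hat{\rho}\hat{P}] \leq \Tr[\hat{P}\hat{\rho}\hat{P}\,\mc{E}^\dagger[\hat{P}]] \leq \Tr[\hat{P}\hat{\rho}\hat{P}]$ combined with small lemmas on positive semidefinite operators, and its $1\Leftrightarrow 3$ rests on a rank-growth argument showing $\ker((\mc{I}+\mc{E})[\hat{\rho}]) \subsetneq \ker(\hat{\rho})$ unless $\hat{\rho}$ is already full rank, whence $(\mc{I}+\mc{E})^{d-1}[\hat{\rho}] \succ 0$. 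You instead fix a Kraus decomposition and route all three statements through the single condition $(\hat{\mbb{I}}-\hat{P})\hat{M}_b\hat{P}=0$, i.e.\ a common invariant subspace of the Kraus operators; your identification of $\mathrm{range}(\exp(s\mc{E})[\hat{\rho}])$ with the smallest $\{\hat{M}_b\}$-invariant subspace containing $\mathrm{range}(\hat{\rho})$ then makes $1\Leftrightarrow 3$ immediate. Each approach buys something: the paper's argument is representation-independent and its $(\mc{I}+\mc{E})^{d-1}$ bound carries quantitative content (positive definiteness is reached after at most $d-1$ iterations), while your version is more concrete, the operator-inequality steps (your (ii)$\Leftrightarrow$(iii)) are clean applications of $\hat{P}\hat{A}\hat{P}=\hat{P}$, $0\preceq\hat{A}\preceq\hat{\mbb{I}} \Rightarrow \hat{A}\hat{P}=\hat{P}$, and you essentially obtain the invariant-subspace characterization of Theorem~\ref{thm:IrreducibilityByKraus} as a byproduct rather than as a separate result. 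All individual steps you sketch (the range of a positive sum of positive semidefinite operators being the span of the ranges, the vanishing of $\sum_b(\hat{Q}\hat{M}_b\hat{P})^\dagger(\hat{Q}\hat{M}_b\hat{P})$, the finite-dimensional stabilization of the invariant subspace) check out.
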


\begin{proof}
See Appendix~\ref{app:ProofIrreducibility}.
\end{proof}

The irreducibility of the CPTP map $\mc{E}$ can also be characterized by its Kraus representation in Eq.~\eqref{eq:KrausRep}. 
Let $\mbb{K} \subseteq \mbb{B}[\mc{H}]$ be the complex linear span of all monomials of the Kraus operators $\hat{M}_b$,
\begin{align}
\mbb{K} := \textrm{span} \left( \{ \hat{\mbb{I}} \} \cup \bigcup_{n=1}^\infty \{ \hat{M}_{b_n} \cdots \hat{M}_{b_2} \hat{M}_{b_1} \} \right).
\end{align}
In other words, $\mbb{K}$ is the algebra generated by $\hat{M}_b$ and $\hat{\mbb{I}}$; here an algebra means a subset of $\mbb{B}[\mc{H}]$ closed under scalar multiplication, addition, and multiplication. 
We then have the following results~\cite{farenick1996irreducible, schrader2000perron, jaksic2014entropic, carbone2016open}.
\begin{theorem}[Irreducibility in Kraus representation] \label{thm:IrreducibilityByKraus}
Let $\mc{E}: \mbb{B}[\mc{H}] \to \mbb{B}[\mc{H}]$ be a CPTP map with the Kraus representation in Eq.~\eqref{eq:KrausRep}. 
The following statements are equivalent.
\begin{enumerate}
\item $\mc{E}$ is irreducible.
\item For any nonzero $\ket{\psi} \in \mc{H}$, $\mbb{K} \ket{\psi} = \mc{H}$.
\item $\mbb{K} = \mbb{B}[\mc{H}]$.
\end{enumerate}
\end{theorem}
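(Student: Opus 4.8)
The plan is to prove the two equivalences $(1)\Leftrightarrow(2)$ and $(2)\Leftrightarrow(3)$, with the organizing observation that all three statements are reformulations of a single geometric condition: the absence of a nontrivial subspace $W$ (with $0\neq W\neq\mc{H}$) that is simultaneously invariant under every Kraus operator, i.e.\ $\hat{M}_b W\subseteq W$ for all $b$. Note first that such a $W$ is invariant under all $\hat{M}_b$ exactly when it is invariant under the whole algebra $\mbb{K}$, since $\mbb{K}$ is generated by the $\hat{M}_b$ together with $\hat{\mbb{I}}$; this bookkeeping fact will let me pass freely between the generators and $\mbb{K}$.

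For $(1)\Leftrightarrow(2)$ I would start from the characterization of irreducibility in Proposition~\ref{prop:irreducibility}(1), and let $\hat{P}$ be the orthogonal projection onto $W$. The heart of this step is to show that the operator-space invariance $\mc{E}[\hat{P}\mbb{B}[\mc{H}]\hat{P}]\subseteq\hat{P}\mbb{B}[\mc{H}]\hat{P}$ is equivalent to the subspace invariance $\hat{M}_b W\subseteq W$ for all $b$, i.e.\ $(\hat{\mbb{I}}-\hat{P})\hat{M}_b\hat{P}=0$. The easy direction is direct: if $\hat{M}_b\hat{P}=\hat{P}\hat{M}_b\hat{P}$, then $\mc{E}[\hat{P}\hat{Y}\hat{P}]=\sum_b(\hat{M}_b\hat{P})\hat{Y}(\hat{M}_b\hat{P})^\dagger$ is manifestly supported on $W$. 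The key (less trivial) direction uses positivity: taking $\hat{X}=\hat{P}$ and compressing to the complementary block gives $(\hat{\mbb{I}}-\hat{P})\mc{E}[\hat{P}](\hat{\mbb{I}}-\hat{P})=\sum_b A_b A_b^\dagger$ with $A_b=(\hat{\mbb{I}}-\hat{P})\hat{M}_b\hat{P}$; since each $A_b A_b^\dagger\succeq 0$ and the left-hand side vanishes, every summand must vanish, forcing $A_b=0$. Thus a nontrivial invariant projection exists iff a nontrivial common invariant subspace exists. Combining this with the elementary algebra fact that $\mbb{K}\ket{\psi}\subseteq W$ for any $\ket{\psi}\in W$, while $\mbb{K}\ket{\psi}$ is itself an invariant subspace containing $\ket{\psi}$, yields the chain: no nontrivial invariant $W$ exists iff $\mbb{K}\ket{\psi}=\mc{H}$ for every nonzero $\ket{\psi}$, which is precisely $(1)\Leftrightarrow(2)$.

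Finally, for $(2)\Leftrightarrow(3)$ I would invoke Burnside's theorem. The implication $(3)\Rightarrow(2)$ is immediate, since $\mbb{B}[\mc{H}]$ acts transitively on nonzero vectors, so $\mbb{B}[\mc{H}]\ket{\psi}=\mc{H}$. For $(2)\Rightarrow(3)$, statement $(2)$ says that $\mbb{K}$, a unital subalgebra of the full matrix algebra on the finite-dimensional $\mc{H}$, has no nontrivial invariant subspace, and Burnside's theorem then forces $\mbb{K}=\mbb{B}[\mc{H}]$. I expect this to be the one genuinely nontrivial ingredient and hence the main obstacle: everything else reduces to the positivity computation above and elementary linear algebra, whereas Burnside's theorem must either be cited or supplied with its short proof (for instance via the double commutant theorem, noting that irreducibility makes the commutant of $\mbb{K}$ consist only of scalars by Schur's lemma).
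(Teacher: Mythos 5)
Your proof is correct, but your route for $(1)\Leftrightarrow(2)$ differs from the paper's. The paper starts from characterization (3) of Proposition~\ref{prop:irreducibility} and computes the matrix element $\langle\phi|e^{s\mc{E}}[|\psi\rangle\langle\psi|]|\phi\rangle$ as a power series whose terms are $|\langle\phi|\hat{M}_{b_n}\cdots\hat{M}_{b_1}|\psi\rangle|^2$, so that strict positivity of $e^{s\mc{E}}[|\psi\rangle\langle\psi|]$ is read off directly as $\mbb{K}\ket{\psi}=\mc{H}$; the whole equivalence is a single computation. You instead start from characterization (1) of Proposition~\ref{prop:irreducibility} and show that invariance of the corner $\hat{P}\mbb{B}[\mc{H}]\hat{P}$ under $\mc{E}$ is equivalent to the range of $\hat{P}$ being a common invariant subspace of all the Kraus operators, via the block-compression identity $(\hat{\mbb{I}}-\hat{P})\mc{E}[\hat{P}](\hat{\mbb{I}}-\hat{P})=\sum_b A_bA_b^\dagger$ with $A_b=(\hat{\mbb{I}}-\hat{P})\hat{M}_b\hat{P}$, and then identify $\mbb{K}\ket{\psi}$ as the smallest invariant subspace containing $\ket{\psi}$. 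This is sound (the paper itself uses the same positivity trick in the proof of $(5)\Leftrightarrow(4)$ of Proposition~\ref{prop:IrreducibleMarkovCPTP}), and it has the merit of making the common invariant subspace the explicit unifying object behind all three statements, at the cost of being slightly longer than the series computation. The $(2)\Leftrightarrow(3)$ step via Burnside's theorem is identical to the paper's. One caveat on your closing parenthetical: deriving Burnside from the double commutant theorem plus Schur's lemma is delicate here, because $\mbb{K}$ need not be a self-adjoint algebra (the Kraus operators need not form a self-adjoint set), and the von Neumann bicommutant theorem in its usual form requires a $*$-algebra; Schur's lemma does give a scalar commutant for any irreducible algebra over $\mbb{C}$, but the step from $\mbb{K}''=\mbb{B}[\mc{H}]$ back to $\mbb{K}=\mbb{B}[\mc{H}]$ is exactly what fails without self-adjointness. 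The paper avoids this by giving a direct proof of Burnside's theorem in Appendix~\ref{app:BurnsideTheorem}, which is the safer citation.
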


\begin{proof}
The proof for (1) $\Leftrightarrow$ (2) follows Refs.~\cite{schrader2000perron, jaksic2014entropic}.

(1) $\Leftrightarrow$ (2): 
Let $\ket{\psi}, \ket{\phi} \in \mc{H}$ be nonzero and $s>0$. 
Expanding $e^{s\mc{E}}$ in powers of $\mc{E}$, we have
\begin{align}
\langle \phi | e^{s\mc{E}}[|\psi \rangle \langle \psi|] | \phi \rangle = |\langle \phi | \psi \rangle|^2 + \sum_{n=1}^\infty \frac{s^n}{n!} \sum_{b_1, b_2, \ldots, b_n} | \langle \phi | \hat{M}_{b_n} \cdots \hat{M}_{b_2} \hat{M}_{b_1} | \psi \rangle |^2.
\end{align}
This vanishes if and only if $\ket{\phi}$ is orthogonal to $\mbb{K} \ket{\psi}$.
Equivalently, this becomes nonzero for all $\ket{\phi}\in\mathcal{H}$ if and only if $\mbb{K} \ket{\psi} = \mc{H}$.
Since $\mc{E}$ is irreducible if and only if $\langle \phi | e^{s\mc{E}}[|\psi \rangle \langle \psi|] | \phi \rangle > 0$ for any nonzero $\ket{\psi}, \ket{\phi} \in \mc{H}$, according to (3) in Proposition~\ref{prop:irreducibility}, this proves the claim.

(2) $\Leftrightarrow$ (3): 
Since we consider finite-dimensional $\mc{H}$, (2) equivalently means that the only subspaces of $\mc{H}$ invariant under the action of $\mbb{K}$ are $\{ 0 \}$ and $\mc{H}$.
If $\mbb{K}$ satisfies the latter property, the algebra $\mbb{K}$ is said to be irreducible (not to be confused with the irreducibility of $\mc{E}$ we just introduced). 
By Burnside's theorem on matrix algebras (see Appendix~\ref{app:BurnsideTheorem}), the algebra $\mbb{K} \subseteq \mbb{B}[\mc{H}]$ is irreducible if and only if $\mbb{K} = \mbb{B}[\mc{H}]$.
Thus, (2) is equivalent to $\mbb{K} = \mbb{B}[\mc{H}]$.
\end{proof}

We can relate the irreducibility of a CPTP map $\mc{E}$ to its spectral properties~\cite{evans1978spectral, wolf2012quantum}.

\begin{theorem}[Irreducibility from spectral properties]
\label{thm:IrreducibilityBySpectrum}
Let $\mc{E}: \mbb{B}[\mc{H}] \to \mbb{B}[\mc{H}]$ be a CPTP map.
The following statements are equivalent.
\begin{enumerate}
\item $\mc{E}$ is irreducible.
\item $\mc{E}$ has a nondegenerate eigenvalue $1$ and the corresponding left and right eigenvectors are positive definite.
\end{enumerate}
\end{theorem}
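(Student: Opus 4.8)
The plan is to prove the two implications separately, relying on the equivalent characterizations in Proposition~\ref{prop:irreducibility} and on the general spectral facts established above: that $\mc{E}$ admits a positive semidefinite fixed point $\hat\varrho_0\succeq 0$ with $\Tr\hat\varrho_0=1$, that the Jordan blocks attached to eigenvalues of unit modulus are trivial, and that $\mc{E}$ and $\mc{E}^\dagger$ share the same eigenvalues with the same multiplicities while $\mc{E}^\dagger$ is unital, $\mc{E}^\dagger[\hat{\mbb{I}}]=\hat{\mbb{I}}$.

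For $(1)\Rightarrow(2)$, I would first upgrade $\hat\varrho_0$ to a positive definite operator. Since $\mc{E}[\hat\varrho_0]=\hat\varrho_0$ gives $\exp(s\mc{E})[\hat\varrho_0]=e^{s}\hat\varrho_0$, property~(3) of Proposition~\ref{prop:irreducibility} forces $e^{s}\hat\varrho_0\succ 0$, hence $\hat\varrho_0\succ 0$; this is the required positive definite right eigenvector. Next I would prove nondegeneracy by a Perron--Frobenius-type argument. Suppose there were a Hermitian fixed point not proportional to $\hat\varrho_0$; subtracting its trace component yields a nonzero traceless Hermitian fixed point $\hat Y$. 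Being traceless, $\hat Y$ has a strictly positive eigenvalue, so $\lambda_{\min}(\hat\varrho_0-t\hat Y)\to-\infty$ and there is a finite smallest $t_c>0$ at which $\hat\sigma:=\hat\varrho_0-t_c\hat Y\succeq 0$ acquires a nontrivial kernel. Since $\Tr\hat\sigma=1$, $\hat\sigma$ is a nonzero positive semidefinite fixed point that is not positive definite, so $\exp(s\mc{E})[\hat\sigma]=e^{s}\hat\sigma\not\succ 0$ contradicts property~(3). Thus the space of Hermitian---and hence, by splitting into Hermitian and anti-Hermitian parts, of all---fixed points is one-dimensional; combined with the triviality of the Jordan block at $z=1$ this makes the eigenvalue $1$ nondegenerate. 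Finally, unitality gives the left eigenvector $\hat{\mbb{I}}$, and since nondegeneracy transfers to $\mc{E}^\dagger$, this left eigenvector is (up to scale) $\hat{\mbb{I}}\succ 0$, which is positive definite.

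For $(2)\Rightarrow(1)$, I would argue by contraposition using property~(2) of Proposition~\ref{prop:irreducibility}: if $\mc{E}$ is not irreducible there is an orthogonal projection $\hat P\notin\{0,\hat{\mbb{I}}\}$ with $\mc{E}^\dagger[\hat P]\succeq \hat P$. Pairing with the positive definite fixed state $\hat\varrho_0\succ 0$ and using $\mc{E}[\hat\varrho_0]=\hat\varrho_0$ gives $\Tr[\hat\varrho_0(\mc{E}^\dagger[\hat P]-\hat P)]=\Tr[(\mc{E}[\hat\varrho_0]-\hat\varrho_0)\hat P]=0$; since $\hat\varrho_0\succ 0$ and $\mc{E}^\dagger[\hat P]-\hat P\succeq 0$, this forces $\mc{E}^\dagger[\hat P]=\hat P$. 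Hence $\hat P$ is a left eigenvector of $\mc{E}$ with eigenvalue $1$. A nontrivial projection has a nonzero kernel and so cannot be a scalar multiple of the positive definite left eigenvector postulated in (2), contradicting the nondegeneracy of the eigenvalue $1$. Therefore $\mc{E}$ must be irreducible.

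The main obstacle is the nondegeneracy step in $(1)\Rightarrow(2)$: one must engineer, from two independent fixed points, a fixed state that sits exactly on the boundary of the positive cone (nonzero, positive semidefinite, rank-deficient) so that property~(3) can be invoked. Keeping track of tracelessness to guarantee $\hat\sigma\neq 0$, and invoking the earlier triviality of Jordan blocks to pass from geometric to algebraic multiplicity, are the delicate points; the rest reduces to the faithfulness of $\hat\varrho_0\succ 0$ as a state and to the coincidence of the spectra of $\mc{E}$ and $\mc{E}^\dagger$.
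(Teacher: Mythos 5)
Your proof is correct, and while the forward direction follows the paper's strategy in a streamlined form, the converse direction takes a genuinely different route. For $(1)\Rightarrow(2)$ the paper runs the boundary-of-the-positive-cone interpolation twice (once for a second positive definite fixed state, once for a Hermitian fixed point with a negative eigenvalue); you collapse both cases into one by first projecting any independent Hermitian fixed point onto its traceless part $\hat Y$ and then sliding $\hat\varrho_0 - t\hat Y$ until the smallest eigenvalue first hits zero. This is the same geometric idea but cleaner, and your explicit appeal to the triviality of the Jordan block at $z=1$ to pass from geometric to algebraic nondegeneracy is a point the paper leaves implicit. For $(2)\Rightarrow(1)$ the paper works with characterization (1) of Proposition~\ref{prop:irreducibility} and reruns the fixed-point existence argument inside the invariant corner $\hat P\,\mbb{B}[\mc{H}]\,\hat P$ to manufacture a second, rank-deficient \emph{right} eigenvector; you instead work with characterization (2), pair the subharmonic projection $\hat P \preceq \mc{E}^\dagger[\hat P]$ against the faithful fixed state $\hat\varrho_0\succ 0$ to force $\mc{E}^\dagger[\hat P]=\hat P$, and thereby exhibit a second \emph{left} eigenvector that cannot be proportional to the positive definite one. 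Your version is more economical (no second invocation of the fixed-point construction) and makes essential use of the faithfulness of $\hat\varrho_0$, whereas the paper's version only needs the existence of \emph{some} stationary state in the corner; both are valid given the equivalences already established in Proposition~\ref{prop:irreducibility}.
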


\begin{proof}
The proof is inspired by Refs.~\cite{yoshida2024uniqueness, wolf2012quantum}.

(1) $\Rightarrow$ (2): 
According to property (ii) discussed in Sec.~\ref{sec:GeneralSpecCPTP}, any CPTP map has an eigenvalue $1$. 
Let $\hat{X} \in \mbb{B}[\mc{H}]$ be any right eigenvector of $\mc{E}$ with eigenvalue $1$.
Since $\hat{X}^\dagger$ is also a right eigenvector with eigenvalue $1$, we can assume that $\hat{X}$ is Hermitian without loss of generality\footnote{
By the argument leading to property (i) in Sec.~\ref{sec:GeneralSpecCPTP}, if $\hat{X}$ is an eigenvector corresponding to a real eigenvalue, $\hat{X}^\dagger$ is also an eigenvector with the same eigenvalue.
If we decompose the eigenvector $\hat{X}$ as $\hat{X} = \hat{X}_1 + i\hat{X}_2$ with $\hat{X}_1, \hat{X}_2$ being Hermitian, both $\hat{X}_1$ and $\hat{X}_2$ become eigenvectors. 
Thus, the eigenspace for any real eigenvalue is spanned by Hermitian operators in $\mbb{B}[\mc{H}]$.
}.
As detailed in Appendix~\ref{app:StationaryState}, any CPTP map $\mc{E}$ for finite-dimensional $\mc{H}$ has at least one stationary state $\hat{\varrho}_0 \in \mbb{B}[\mc{H}]$, which is a density operator satisfying $\mc{E}[\hat{\varrho}_0] = \hat{\varrho}_0$.
It remains to show its uniqueness and positive definiteness.

Suppose that $\mathcal{E}$ is irreducible in the sense of (3) in Proposition~\ref{prop:irreducibility}. 
Then, a stationary state $\hat{\varrho}_0$ satisfies $e^{s\mc{E}}[\hat{\varrho}_0] = e^{s} \hat{\varrho}_0 \succ 0$ for any $s>0$, which implies that the stationary state $\hat{\varrho}_0$ must be positive definite.
Assume that there exist two linearly independent stationary states $\hat{\varrho}_0 \succ 0$ and $\hat{\varrho}_0' \succ 0$.
Then, a Hermitian operator $\hat{X}(r) = (1-r) \hat{\varrho}_0 -r \hat{\varrho}_0'$ defined for $0 \leq r \leq 1$ is an eigenvector of $\mc{E}$ with eigenvalue $1$.
Since $\hat{X}(0)$ is positive definite while $\hat{X}(1)$ is negative definite, there exists $r=r^*$ at which $\hat{X}(r)$ has the lowest eigenvalue $0$ and thus becomes singular and positive semidefinite. 
Since $\Tr[\hat{X}(r^*)]>0$, one can construct a singular density operator $\hat{\varrho}^* = \hat{X}(r^*)/\Tr[\hat{X}(r^*)]$, which contradicts the fact that any stationary state must be positive definite.
Thus, the positive definite eigenvector of the eigenvalue $1$ must be unique.

Let $\hat{\varrho}_0 \succ 0$ be such a stationary state.
Assume that there exists a Hermitian operator $\hat{X}_0 \in \mbb{B}[\mc{H}]$ that has at least one negative eigenvalue and satisfies $\mc{E}[\hat{X}_0] = \hat{X}_0$. 
Then, a Hermitian operator $\hat{Y}(r) = (1-r) \hat{\varrho}_0 + r\hat{X}_0$ defined for $0 \leq r \leq 1$ is also an eigenvector of $\mc{E}$ with eigenvalue $1$.
Since $\hat{Y}(0)$ is positive definite while $\hat{Y}(1)$ has at least one negative eigenvalue, there exists $r=r^*$ at which $\hat{Y}(r)$ has the lowest eigenvalue $0$ and thus becomes singular and positive semidefinite.
This again contradicts the fact that any stationary state must be positive definite\footnote{
There is an alternative proof: for a sufficiently small real number $\epsilon>0$, $\hat{X}'_0(\epsilon) = \hat{\varrho}_0 + \epsilon \hat{X}_0$ becomes a positive definite eigenvector of eigenvalue 1 that is linearly independent of $\hat{\varrho}_0$, but this contradicts the uniqueness of the positive definite eigenvector as proven above. 
We thank Hironobu Yoshida for pointing this out.
}.
Therefore, the irreducible CPTP map $\mc{E}$ has a unique positive-definite right eigenvector with eigenvalue $1$.

The left eigenvector of $\mc{E}$ with eigenvalue $1$ is the right eigenvector of its dual map $\mc{E}^\dagger$ with the same eigenvalue. 
Since $\mc{E}^\dagger$ is unital, $\hat{\mbb{I}}$ is precisely such an eigenvector, which is obviously positive definite.
Since $\mc{E}$ and $\mc{E}^\dagger$ have the same spectrum, the uniqueness of the eigenvector is ensured.

(2) $\Rightarrow$ (1): 
We proceed by contradiction.
Assume that $\mc{E}$ has a nondegenerate eigenvalue $1$ and the corresponding eigenvector $\hat{\varrho}_0$ is positive definite.
If $\mc{E}$ is not irreducible in the sense of (1) in Proposition~\ref{prop:irreducibility}, there exists a nontrivial orthogonal projection operator $\hat{P}$ such that $\mc{E}[\hat{P} \mbb{B}[\mc{H}] \hat{P}] \subseteq \hat{P} \mbb{B}[\mc{H}] \hat{P}$. 
If we restrict the space of density operators to $\hat{P} \mbb{B}[\mc{H}] \hat{P}$, we can apply the arguments in Appendix~\ref{app:StationaryState} to show the existence of a density operator $\hat{\varrho}_1 \in \hat{P} \mbb{B}[\mc{H}] \hat{P}$ such that $\mc{E}[\hat{\varrho}_1] = \hat{\varrho}_1$.
Since $\hat{P} \notin \{0 , \hat{\mbb{I}} \}$, $\hat{\varrho}_1$ must have at least one eigenvalue $0$ and cannot be positive definite.
Thus, we have two linearly independent eigenvectors $\hat{\varrho}_0$ and $\hat{\varrho}_1$, which contradict the assumption.
\end{proof}

As the simplest example, we consider a two-level system and a CPTP map generated by the two Kraus operators 
\begin{align} \label{eq:IrreducibleCPTPEx1}
\hat{M}_1 = \ket{0} \bra{1}, \quad
\hat{M}_2 = \ket{1} \bra{0}.
\end{align}
The corresponding CPTP map $\mc{E}$ is irreducible according to Theorem~\ref{thm:IrreducibilityByKraus} since $\{ \hat{M}_1, \hat{M}_2, \hat{M}_1 \hat{M}_2, \hat{M}_2 \hat{M}_1 \}$ spans the entire space of $2 \times 2$ matrices.
In the natural representation [see Eq.~\eqref{eq:CPTPNaturalRep}], $\mc{E}$ can be expressed as
\begin{align}
\hat{\mc{E}} = \begin{pmatrix} 0 & 0 & 0 & 1 \\ 0 & 0 & 0 & 0 \\ 0 & 0 & 0 & 0 \\ 1 
& 0 & 0 & 0\end{pmatrix},
\end{align}
which can be easily diagonalized to find the eigenvalues $z= 0, 0, \pm 1$. 
In particular, the eigenvectors corresponding to $z=\pm1$ are given by $\hat{X}_\pm = \ket{0} \bra{0} \pm \ket{1} \bra{1}$.
This confirms that the eigenvalue $1$ is nondegenerate and the corresponding eigenvector is positive definite, as stated by Theorem~\ref{thm:IrreducibilityBySpectrum}.

As seen from this example, the irreducibility of $\mc{E}$ generally does not rule out the existence of other eigenvalues with unit modulus. 
Indeed, the irreducible map $\mc{E}$ can have a nondegenerate peripheral spectrum, which lies on the unit circle $|z|=1$ in the complex plane. 
In order to see this, let us consider a $d$-level system and denote its basis states by $\{\ket{i} \}_{i=0}^{d-1}$. 
We then introduce a CPTP map generated by the Kraus operators
\begin{align} \label{eq:KrausIrreducibleEx}
\hat{M}_j = \begin{cases} \ket{j-1}\bra{j} & (1 \leq j \leq d-1) \\ \ket{d-1}\bra{0} & (j=d) \end{cases}.
\end{align}
Since an arbitrary basis state $\ket{i}$ can be transformed into any basis state $\ket{i'}$ by applying a string of Kraus operators $\hat{M}_{b_\ell} \cdots \hat{M}_{b_2} \hat{M}_{b_1}$ with length $\ell \leq d-1$, the corresponding CPTP map $\mc{E}$ is irreducible according to (2) of Theorem~\ref{thm:IrreducibilityByKraus}. 
We can easily diagonalize $\mc{E}$ to find that it has nondegenerate peripheral eigenvalues $\gamma_k = e^{2\pi i k/d}$ ($k=0,\ldots,d-1$) while the other eigenvalues are all zero. 
The eigenvector corresponding to $\gamma_k$ is given by $\hat{X}_k = \sum_j e^{2\pi i jk/d} \ket{j} \bra{j}$, and thus $\hat{X}_0$ is positive definite.

In Fig.~\ref{fig:CPTPEigenvalues}(a), we illustrate the complex eigenvalues of a modified CPTP map for $d=6$, 
\begin{align}
& \hat{M}_1 = \sqrt{1-p-q} \ket{0} \bra{1}, \quad 
\hat{M}_2 = \sqrt{1-p-q-r} \ket{1} \bra{2}, \quad
\hat{M}_3 = \sqrt{1-p-r} \ket{2} \bra{3}, \quad \nonumber \\
& \hat{M}_4 = \sqrt{1-p-r} \ket{3} \bra{4}, \quad
\hat{M}_5 = \sqrt{1-p} \ket{4} \bra{5}, \quad
\hat{M}_6 = \sqrt{1-p} \ket{5} \bra{0}, \quad \nonumber \\
& \hat{M}_7 = \sqrt{p} (\ket{0} \bra{1} + \ket{1} \bra{2} + \ket{2} \bra{3} +\ket{3} \bra{4} + \ket{4} \bra{5} + \ket{5} \bra{0}), \nonumber \\
& \hat{M}_8 = \sqrt{q} (\ket{0} \bra{1} + \ket{1} \bra{2}), \quad 
\hat{M}_9 = \sqrt{r} (\ket{1} \bra{2} +\ket{2} \bra{3} + \ket{3} \bra{4}),
\end{align}
with $p=q=r=0.3$. 
This map reduces to the CPTP map given by Eq.~\eqref{eq:KrausIrreducibleEx} for $p=q=r=0$, while finite $p,q,r$ can partially lift the degeneracy of the eigenvalue $z=0$ without altering the irreducibility of the map and thereby its peripheral spectrum.

\begin{figure}[tb]
\centering\includegraphics[width=\linewidth]{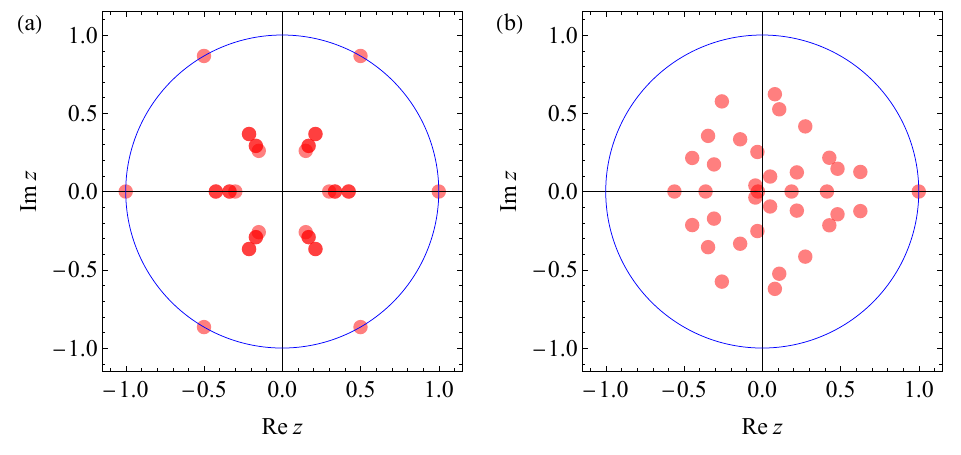}
\caption{(a) When a CPTP map $\mc{E}$ is irreducible, it has a nondegenerate peripheral spectrum $z \in \{ e^{2\pi i k/m} \}_{k=0}^{m-1}$ ($m=6$ in this example). 
(b) When a CPTP map $\mc{E}$ is primitive, $z=1$ is the only eigenvalue with unit modulus.}
\label{fig:CPTPEigenvalues}
\end{figure}

The above observation can be made rigorous through the following theorem~\cite{evans1978spectral, fagnola2009irreducible, wolf2012quantum}.

\begin{theorem}[Peripheral spectrum of irreducible CPTP maps]
\label{thm:CPTPPeripheral}
Let $\mc{E}: \mbb{B}[\mc{H}] \to \mbb{B}[\mc{H}]$ be an irreducible CPTP map. 
Denote by $S = \textrm{spec}(\mc{E}) \cap \exp(i\mbb{R})$ the peripheral spectrum of $\mc{E}$. 
Then, the following statements hold. 
\begin{enumerate}
\item There is an integer $1 \leq m \leq d^2$ such that $S = \{ e^{2\pi ik/m}\}_{k=0}^{m-1}$.
\item All eigenvalues in $S$ are nondegenerate.
\item There is a unitary operator $\hat{U}$ such that $\hat{U}^m=\mbb{\hat{I}}$ and $\mc{E}^\dagger [\hat{U}^k] = e^{2\pi ik/m} \hat{U}^k$.
\item $\hat{U}$ has the spectral decomposition $\hat{U} = \sum_{k=0}^{m-1} e^{2\pi ik/m} \hat{P}_k$ where the spectral projections $\hat{P}_k$ satisfy $\mc{E}^\dagger (\hat{P}_{k+1}) = \hat{P}_k$ (indices are taken modulo $m$).
\end{enumerate}
\end{theorem}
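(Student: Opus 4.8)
The plan is to work entirely with the dual map $\mc{E}^\dagger$, which by Sec.~\ref{sec:GeneralSpecCPTP} is a CP and unital map sharing its spectrum with $\mc{E}$, so that $S = \mathrm{spec}(\mc{E}^\dagger)\cap\exp(i\mbb{R})$. The central mechanism is the Kadison--Schwarz inequality $\mc{E}^\dagger[\hat{X}^\dagger\hat{X}] \succeq \mc{E}^\dagger[\hat{X}]^\dagger\mc{E}^\dagger[\hat{X}]$, valid for any CP unital map. First I would fix a peripheral eigenvalue $\lambda$ ($|\lambda|=1$) with eigenvector $\hat{U}$, i.e.\ $\mc{E}^\dagger[\hat{U}] = \lambda\hat{U}$, and show that $\hat{U}$ can be normalized to a \emph{unitary}. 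Applying Kadison--Schwarz gives $\mc{E}^\dagger[\hat{U}^\dagger\hat{U}] \succeq |\lambda|^2 \hat{U}^\dagger\hat{U} = \hat{U}^\dagger\hat{U}$. Pairing this inequality against the faithful stationary state $\hat{\varrho}_0 \succ 0$ guaranteed by irreducibility (Theorem~\ref{thm:IrreducibilityBySpectrum}) and using $\Tr[\hat{\varrho}_0\,\mc{E}^\dagger[\hat{Y}]] = \Tr[\mc{E}[\hat{\varrho}_0]\,\hat{Y}] = \Tr[\hat{\varrho}_0\hat{Y}]$ forces $\Tr[\hat{\varrho}_0(\mc{E}^\dagger[\hat{U}^\dagger\hat{U}] - \hat{U}^\dagger\hat{U})] = 0$; since the integrand is positive semidefinite and $\hat{\varrho}_0$ is faithful, $\hat{U}^\dagger\hat{U}$ is a fixed point of $\mc{E}^\dagger$. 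As eigenvalue $1$ is nondegenerate with eigenvector $\hat{\mbb{I}}$, we get $\hat{U}^\dagger\hat{U} \propto \hat{\mbb{I}}$, and running the same argument on $\hat{U}^\dagger$ (an eigenvector for $\bar\lambda$) gives $\hat{U}\hat{U}^\dagger \propto \hat{\mbb{I}}$, so after rescaling $\hat{U}$ is unitary.

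Next I would extract the group structure. The equality case just established means each unitary peripheral eigenvector lies in the \emph{multiplicative domain} of $\mc{E}^\dagger$, on which $\mc{E}^\dagger$ acts multiplicatively: $\mc{E}^\dagger[\hat{U}_\lambda\hat{U}_\mu] = \mc{E}^\dagger[\hat{U}_\lambda]\mc{E}^\dagger[\hat{U}_\mu] = \lambda\mu\,\hat{U}_\lambda\hat{U}_\mu$. Hence $\hat{U}_\lambda\hat{U}_\mu$ is a nonzero (unitary) eigenvector for $\lambda\mu$, so $\lambda\mu \in S$; together with $\bar\lambda \in S$ (the eigenvalue of $\hat{U}_\lambda^\dagger$) and $1 \in S$, this shows $S$ is a finite subgroup of the circle group, hence cyclic, $S = \{e^{2\pi i k/m}\}_{k=0}^{m-1}$ with $m \leq d^2$, giving (1). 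For nondegeneracy (2), if $\hat{U},\hat{V}$ are two unitary eigenvectors for the same $\lambda$, multiplicativity gives $\mc{E}^\dagger[\hat{U}^\dagger\hat{V}] = \bar\lambda\lambda\,\hat{U}^\dagger\hat{V} = \hat{U}^\dagger\hat{V}$, so $\hat{U}^\dagger\hat{V} \propto \hat{\mbb{I}}$ and the eigenspace is one-dimensional.

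To get (3), I would take $\hat{U}$ to be the unitary eigenvector for the primitive root $\omega = e^{2\pi i/m}$; multiplicativity gives $\mc{E}^\dagger[\hat{U}^k] = \omega^k\hat{U}^k$, so $\hat{U}^k$ realizes the eigenvector for $e^{2\pi i k/m}$, while $\hat{U}^m$ is a fixed point, hence $\propto \hat{\mbb{I}}$; a phase rescaling $\hat{U} \mapsto \zeta\hat{U}$ then arranges $\hat{U}^m = \hat{\mbb{I}}$. Finally, for (4) I would use the spectral projections $\hat{P}_k = \tfrac{1}{m}\sum_{l=0}^{m-1} e^{-2\pi i k l/m}\hat{U}^l$ onto the eigenspaces of $\hat{U}$; applying $\mc{E}^\dagger$ term by term and using $\mc{E}^\dagger[\hat{U}^l] = e^{2\pi i l/m}\hat{U}^l$ shifts the label by one, yielding $\mc{E}^\dagger[\hat{P}_{k+1}] = \hat{P}_k$.

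The hard part is the first paragraph: the combination of the Kadison--Schwarz inequality with the faithful invariant state to upgrade every peripheral eigenvector to a unitary, and then the invocation of Choi's multiplicative-domain theorem (the statement that equality in Kadison--Schwarz forces $\mc{E}^\dagger$ to be a $*$-homomorphism on that domain), which is the engine behind the group law and the nondegeneracy claims. If the multiplicative-domain result is not taken as given, it would itself require a short lemma; everything downstream is then elementary finite-group and linear-algebra bookkeeping.
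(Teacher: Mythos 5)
Your proposal is correct and follows essentially the same route as the paper: Kadison--Schwarz applied to a peripheral eigenvector, equality forced by pairing against the faithful stationary state, multiplicativity of $\mc{E}^\dagger$ on the resulting domain to obtain the group structure of $S$, unitarity and nondegeneracy from the uniqueness of the fixed point $\hat{\mbb{I}}$, and the Fourier-transformed spectral projections for (4). The only difference is presentational: where you invoke Choi's multiplicative-domain theorem as a citable black box, the paper proves that lemma inline (the claim that $\hat{D}(\hat{X},\hat{X})=\hat{D}(\hat{X}^\dagger,\hat{X}^\dagger)=0$ implies $\hat{D}(\hat{X},\hat{A})=0$ for all $\hat{A}$, via the $|z|\to\infty$ limit argument), which is exactly the short lemma you flag as needed.
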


\begin{proof}
The proof is based on Refs.~\cite{fagnola2009irreducible, wolf2012quantum}.

Suppose that $e^{i\theta}$ ($\theta \in \mbb{R}$) is an eigenvalue of $\mc{E}$.
Let $\hat{X} \in \mbb{B}[\mc{H}]$ be nonzero and satisfy $\mc{E}^\dagger[\hat{X}] = e^{i\theta} \hat{X}$. 
Since $\mc{E}^\dagger$ is Hermiticity preserving, we also have $\mc{E}^\dagger[\hat{X}^\dagger] = e^{-i\theta} \hat{X}^\dagger$.
Since $\mc{E}^\dagger$ is a CP unital map, it satisfies the Kadison-Schwarz inequality $\mc{E}^\dagger[\hat{X}^\dagger \hat{X}] \succeq \mc{E}^\dagger[\hat{X}^\dagger] \mc{E}^\dagger[\hat{X}]$, as shown in Appendix~\ref{app:SchwarzMap}.
Since $\mc{E}$ is irreducible, there exists a unique positive-definite density operator $\hat{\varrho}_0 \in \mbb{B}[\mc{H}]$ such that $\mc{E}[\hat{\varrho}_0] = \hat{\varrho}_0$.
Then, we find
\begin{align}
0 
\leq \Tr [\hat{\varrho}_0 (\mc{E}^\dagger[\hat{X}^\dagger \hat{X}] - \mc{E}^\dagger[\hat{X}^\dagger] \mc{E}^\dagger[\hat{X}])]
= \Tr [\mc{E}[\hat{\varrho}_0] \hat{X}^\dagger \hat{X} -\hat{\varrho}_0 e^{-i\theta} \hat{X}^\dagger e^{i\theta} \hat{X}]
= 0
\end{align}
Since $\hat{\varrho}_0 \succ 0$, this implies $\mc{E}^\dagger[\hat{X}^\dagger \hat{X}] = \mc{E}^\dagger[\hat{X}^\dagger] \mc{E}^\dagger[\hat{X}]$ and similarly $\mc{E}^\dagger[\hat{X} \hat{X}^\dagger] = \mc{E}^\dagger[\hat{X}] \mc{E}^\dagger[\hat{X}^\dagger]$.

Let us define $\hat{D}(\hat{X}_1, \hat{X}_2) := \mc{E}^\dagger[\hat{X}_1^\dagger \hat{X}_2] - \mc{E}^\dagger[\hat{X}_1^\dagger] \mc{E}^\dagger[\hat{X}_2]$ for $\hat{X}_1, \hat{X}_2 \in \mbb{B}[\mc{H}]$.
We now claim that $\hat{D}(\hat{X}, \hat{X}) = \hat{D}(\hat{X}^\dagger, \hat{X}^\dagger)=0$ implies $\hat{D}(\hat{X}, \hat{A}) = \hat{D}(\hat{X}^\dagger, \hat{A}) = 0$ for any $\hat{A} \in \mbb{B}[\mc{H}]$.
To prove this, let $z \in \mbb{C}$ and $\hat{Y} = z \hat{X} + \hat{A}$. Using the Kadison-Schwarz inequality $\hat{D}(\hat{Y}, \hat{Y}) \succeq 0$, we have
\begin{align}
0 
&\preceq \hat{D}(\hat{Y}, \hat{Y}) 
= |z|^2 \hat{D}(\hat{X}, \hat{X}) + z\hat{D}(\hat{A}, \hat{X}) + z^* \hat{D}(\hat{X}, \hat{A}) +\hat{D}(\hat{A}, \hat{A}) \nonumber \\
&= z \hat{D}(\hat{A}, \hat{X}) + [z \hat{D}(\hat{A}, \hat{X})]^\dagger + \hat{D}(\hat{A}, \hat{A}).
\end{align}
This must hold for any $z \in \mbb{C}$ [note $\hat{D}(\hat{A}, \hat{A}) \succeq 0$].
Considering the limit $|z| \to \infty$ with appropriate phases\footnote{
When $z \to +\infty  \, (-\infty)$, $\hat{D}(\hat{Y}, \hat{Y}) \succeq 0$ holds if and only if $\hat{D}(\hat{A}, \hat{X}) + \hat{D}(\hat{X}, \hat{A})^\dagger$ does not have negative (positive) eigenvalues.
This implies $\hat{D}(\hat{A}, \hat{X}) = -\hat{D}(\hat{A}, \hat{X})^\dagger$.
When $z \to  +i \infty \, (-i \infty)$, $\hat{D}(\hat{Y}, \hat{Y}) \succeq 0$ holds if and only if $i[\hat{D}(\hat{A}, \hat{X}) - \hat{D}(\hat{A}, \hat{X})^\dagger]$ does not have negative (positive) eigenvalues.
This implies $\hat{D}(\hat{A}, \hat{X}) = \hat{D}(\hat{A}, \hat{X})^\dagger$.
We then find $\hat{D}(\hat{A}, \hat{X}) = -\hat{D}(\hat{A}, \hat{X})^\dagger = -\hat{D}(\hat{A}, \hat{X})$ and thus $\hat{D}(\hat{X}, \hat{A}) = \hat{D}(\hat{A}, \hat{X})^\dagger =0$.
},
we can show that this inequality holds if and only if $\hat{D}(\hat{X}, \hat{A})=0$.
We can similarly show $\hat{D}(\hat{X}^\dagger, \hat{A}) = 0$.
This proves the claim.

Therefore, $\mc{E}^\dagger[\hat{X} \hat{A}] = \mc{E}^\dagger[\hat{X}] \mc{E}^\dagger[\hat{A}]$ and $\mc{E}^\dagger[\hat{X}^\dagger \hat{A}] = \mc{E}^\dagger[\hat{X}^\dagger] \mc{E}^\dagger[\hat{A}]$ hold for any $\hat{A} \in \mbb{B}[\mc{H}]$.
We then have
\begin{align}
\mc{E}^\dagger[\hat{X}^2] = (\mc{E}^\dagger[\hat{X}])^2 = e^{2i\theta} \hat{X}^2, \ \mc{E}^\dagger[\hat{X}^3] = (\mc{E}^\dagger[\hat{X}])^3 = e^{3i\theta} \hat{X}^3, \ \ldots,
\end{align}
and inductively $\mc{E}^\dagger[\hat{X}^k] = e^{ik\theta} \hat{X}^k$ for $k \in \mbb{N}$.
Thus, $e^{ik\theta}$ are eigenvalues of $\mc{E}^\dagger$ and hence of $\mc{E}$.
Since there are at most $d^2$ elements in $S$, there exists an integer $1 \leq m \leq d^2$ such that $\theta = 2\pi/m$.
This proves (1).

Let us consider the property of an eigenvector $\hat{X}$ satisfying $\mc{E}^\dagger[\hat{X}] = e^{i\theta} \hat{X}$.
Since $\mc{E}$ is irreducible, $\hat{\mbb{I}}$ is the only eigenvector of $\mc{E}^\dagger$ with eigenvalue $1$.
On the other hand, the above argument gives $\mc{E}^\dagger[\hat{X}^\dagger \hat{X}] = \mc{E}^\dagger[\hat{X}^\dagger] \mc{E}^\dagger[\hat{X}] = \hat{X}^\dagger \hat{X}$.
Thus, $\hat{X}^\dagger \hat{X}$ must be a scalar multiple of $\hat{\mbb{I}}$, i.e., $\hat{X}^\dagger \hat{X} = c \hat{\mbb{I}}$ for some $c \neq 0$.
We can then choose the eigenvector $\hat{X}$ to be a unitary operator by rescaling $\hat{U}=\hat{X}/\sqrt{c}$.
It satisfies $\hat{U}^m = \hat{\mbb{I}}$ and $\mc{E}^\dagger[\hat{U}^k] = e^{ik\theta} \hat{U}^k = e^{2\pi ik/m} \hat{U}^k$, which proves (3).

If there are two unitaries $\hat{U}_1$ and $\hat{U}_2$ such that $\mc{E}^\dagger[\hat{U}_1] = e^{i\theta} \hat{U}_1$ and $\mc{E}^\dagger[\hat{U}_2] = e^{i\theta} \hat{U}_2$, then $\mc{E}^\dagger[\hat{U}_2^\dagger \hat{U}_1] = \mc{E}^\dagger[\hat{U}_2^\dagger] \mc{E}^\dagger[\hat{U}_1] = \hat{U}_2^\dagger \hat{U}_1$. 
Since $\mc{E}$ is irreducible, $\hat{U}_2^\dagger \hat{U}_1 = c' \hat{\mbb{I}}$ for some $c' \neq 0$.
This means that $\hat{U}_1 = c' \hat{U}_2$, or equivalently, $\hat{U}_1$ and $\hat{U}_2$ are linearly dependent. 
Thus, the eigenvalue $e^{i\theta} \in S$ is nondegenerate, and (2) is proved.

Since $\hat{U}$ is unitary and satisfies $\hat{U}^m = \hat{\mbb{I}}$, it has the spectral decomposition $\hat{U} = \sum_{k=0}^{m-1} e^{2\pi ik/m} \hat{P}_k$, where $\hat{P}_k$ is an orthogonal projection operator onto the eigenspace of $\hat{U}$ with eigenvalue $e^{2\pi ik/m}$.
Since $\hat{U}^n = \sum_{k=0}^{m-1} e^{2\pi ikn/m} \hat{P}_k$, its Fourier transform is $\hat{P}_k = (1/m) \sum_{n=0}^{m-1} e^{-2\pi ikn/m} \hat{U}^n$.
Then, we find
\begin{align}
\mc{E}^\dagger[\hat{P}_k] 
= \frac{1}{m} \sum_{n=0}^{m-1} e^{-2\pi ikn/m} \mc{E}^\dagger [\hat{U}^n] 
= \frac{1}{m} \sum_{n=0}^{m-1} e^{-2\pi i(k-1)n/m} \hat{U}^n 
= \hat{P}_{k-1}.
\end{align}
This proves (4).
\end{proof}

An immediate consequence of the irreducibility of the CPTP map $\mc{E}$ is the existence of a unique steady state in the sense of the time average~\cite{wolf2012quantum}.
\begin{theorem}[Irreducibility from steady state]
Let $\mc{E}: \mbb{B}[\mc{H}] \to \mbb{B}[\mc{H}]$ be a CPTP map. 
The following statements are equivalent.
\begin{enumerate}
\item $\mc{E}$ is irreducible.
\item There exists a unique steady state $\hat{\rho}_\mathrm{ss} \succ 0$ such that for any density operator $\hat{\rho} \in \mbb{B}[\mc{H}]$ we have
\begin{align} \label{eq:UniqueSSAverage}
\lim_{N \to \infty} \frac{1}{N} \sum_{n=0}^{N-1} \mc{E}^n[\hat{\rho}] = \hat{\rho}_\mathrm{ss}.
\end{align}
\end{enumerate}
\end{theorem}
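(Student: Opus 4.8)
The plan is to reduce the statement to the spectral structure already established in Sec.~\ref{sec:GeneralSpecCPTP} together with Theorems~\ref{thm:IrreducibilityBySpectrum} and~\ref{thm:CPTPPeripheral}, and then to analyze the Ces\`aro (time) average through the Jordan decomposition of $\hat{\mc{E}}$. The central observation is that the super-operator $\mc{P}:=\lim_{N\to\infty}\frac{1}{N}\sum_{n=0}^{N-1}\mc{E}^n$ exists for \emph{every} CPTP map and equals the spectral projection onto the eigenvalue $z=1$; both facts rest on the triviality of the peripheral Jordan blocks proven in Sec.~\ref{sec:GeneralSpecCPTP}. The theorem then amounts to the assertion that this projection has rank one with positive-definite image.

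For the direction (1)$\Rightarrow$(2), I would assume $\mc{E}$ irreducible, so that by Theorem~\ref{thm:IrreducibilityBySpectrum} the eigenvalue $z=1$ is nondegenerate, its right eigenvector can be normalized to a density operator $\hat{\rho}_\mr{ss}\succ0$, and its left eigenvector is $\hat{\mbb{I}}\succ0$. Working in the natural representation, I would bring $\hat{\mc{E}}$ to the Jordan form of Eq.~\eqref{eq:CPTPJordanForm} and evaluate $\frac{1}{N}\sum_{n=0}^{N-1}\hat{\mc{E}}^n$ block by block: a block with $|z_a|<1$ has norm bounded by a polynomial in $n$ times $|z_a|^n$ and so averages to zero; a peripheral block with $|z_a|=1$ is one-dimensional, contributing $\frac{1}{N}\sum_{n=0}^{N-1}z_a^n=\frac{1}{N}\frac{z_a^N-1}{z_a-1}\to0$ when $z_a=e^{2\pi ik/m}\neq1$ and $1$ when $z_a=1$. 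Hence $\mc{P}$ is the rank-one projection onto the $z=1$ eigenspace, which by the biorthogonal normalization in Eq.~\eqref{eigenvalueCPTP} acts as $\mc{P}[\hat{X}]=\Tr[\hat{X}]\,\hat{\rho}_\mr{ss}$. Applying this to any density operator $\hat{\rho}$ with $\Tr[\hat{\rho}]=1$ yields Eq.~\eqref{eq:UniqueSSAverage}, and uniqueness follows from the nondegeneracy of $z=1$.

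For (2)$\Rightarrow$(1), I would use that $\mc{P}$, which exists for every CPTP map and whose image is the $z=1$ eigenspace, satisfies $\mc{P}[\hat{\rho}]=\hat{\rho}_\mr{ss}$ for all density operators by hypothesis. For any two density operators $\hat{\rho}_1,\hat{\rho}_2$ this gives $\mc{P}[\hat{\rho}_1-\hat{\rho}_2]=0$, and since such differences span all traceless Hermitian operators (and hence, after taking real and imaginary parts, all traceless operators), $\mc{P}$ annihilates every traceless operator while fixing $\hat{\rho}_\mr{ss}$. Therefore the image of $\mc{P}$, i.e. the $z=1$ eigenspace, is one-dimensional and spanned by $\hat{\rho}_\mr{ss}$, so $z=1$ is nondegenerate. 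Its right eigenvector $\hat{\rho}_\mr{ss}\succ0$ is positive definite by assumption, and since $\mc{E}^\dagger$ is unital the unique left eigenvector is $\hat{\mbb{I}}\succ0$; Theorem~\ref{thm:IrreducibilityBySpectrum} then delivers irreducibility.

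The main obstacle is justifying the existence of the Ces\`aro limit and, in particular, killing the peripheral eigenvalues $e^{2\pi ik/m}$ with $k\neq0$: one must invoke the triviality of the peripheral Jordan blocks, since otherwise a term such as $\binom{n}{1}z^{n-1}$ would survive the averaging. This is exactly the point at which the boundedness $\|\mc{E}^n\|=1$ furnished by the Russo--Dye theorem in Sec.~\ref{sec:GeneralSpecCPTP} is indispensable. Once this structural input is secured, the remaining work is elementary geometric-series estimates together with the identification of the rank-one projection through the biorthogonal eigenbasis of Eq.~\eqref{eigenvalueCPTP}.
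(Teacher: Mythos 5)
Your proof is correct, and its overall architecture coincides with the paper's: both directions are ultimately reduced to Theorem~\ref{thm:IrreducibilityBySpectrum}. The difference lies in how the Ces\`aro limit is handled. The paper's proof is two lines: it invokes Appendix~\ref{app:StationaryState}, where the existence of $\lim_{N\to\infty}\frac{1}{N}\sum_{n=0}^{N-1}\mc{E}^n[\hat{\rho}]$ and the fact that it is a stationary density operator are obtained from the contraction bound $\|\mc{E}^N[\hat{\rho}]-\hat{\rho}\|/N\to 0$ together with compactness and convexity of the set of density operators — no spectral decomposition is needed there. You instead identify the Ces\`aro limit explicitly as the rank-one spectral projection $\mc{P}[\hat{X}]=\Tr[\hat{X}]\,\hat{\rho}_\mr{ss}$ via the Jordan form of Eq.~\eqref{eq:CPTPJordanForm} and the triviality of the peripheral Jordan blocks. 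Your route is heavier but buys more: it makes the (2)$\Rightarrow$(1) direction genuinely transparent, since showing that $\mc{P}$ annihilates all traceless operators immediately forces the $z=1$ eigenspace to be one-dimensional, a step the paper's ``immediately implies'' glosses over (uniqueness of the \emph{density-operator} fixed point does not by itself give nondegeneracy of the eigenvalue $1$ without rerunning part of the argument inside Theorem~\ref{thm:IrreducibilityBySpectrum}). Your observation that differences of density operators span the traceless Hermitian operators closes that gap cleanly. Both proofs are valid; yours is the more self-contained of the two.
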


\begin{proof}
As detailed in Appendix~\ref{app:StationaryState}, we can show that the left-hand side of Eq.~\eqref{eq:UniqueSSAverage} is a stationary state of $\mc{E}$.
Then Theorem~\ref{thm:IrreducibilityBySpectrum} immediately implies the equivalence between (1) and (2).
\end{proof}

This means that any density matrix $\hat{\rho}$ converges to a unique state $\hat{\rho}_\mathrm{ss}$, which is the unique eigenvector of $\mc{E}$ with eigenvalue $z=1$, upon time averaging over a sufficiently long time.
We note that the uniqueness of the long-time limit $\lim_{n \to \infty} \mc{E}^n[\hat{\rho}]$ for any $\hat{\rho} \in \mbb{B}[\mc{H}]$ implies the uniqueness of the long-time average $\lim_{N \to \infty} (1/N) \sum_{n=1}^N \mc{E}^n [\hat{\rho}]$, but the converse does not hold in general.
In fact, the uniqueness of $\lim_{n \to \infty} \mc{E}^n[\hat{\rho}]$ requires a more stringent condition on $\mc{E}$ than irreducibility, as we will discuss in the next section.

\subsection{Primitivity of CPTP maps}
\label{sec:primitivity}

Irreducibility ensures that a CPTP map has a nondegenerate eigenvalue $z=1$, but it does not ensure that $z=1$ is the only eigenvalue of unit modulus.
The latter is instead ensured by the \emph{primitivity} condition for $\mc{E}$, which is stated in the following equivalent ways~\cite{evans1978spectral, sanz2010quantum, wolf2012quantum}.

\begin{proposition}[Primitivity]
\label{prop:primitivity}
Let $\mc{E}: \mbb{B}[\mc{H}] \to \mbb{B}[\mc{H}]$ be a CPTP map. 
The following statements are equivalent.
\begin{enumerate}
\item There exists an $n \in \mbb{N}$ such that for any nonzero $\hat{\rho} \succeq 0$ in $\mbb{B}[\mc{H}]$ we have $\mc{E}^n[\hat{\rho}] {\succ} 0$.
\item $\mc{E}^k$ is irreducible for every $k \in \mbb{N}$.
\item $\mc{E}$ has a nondegenerate eigenvalue $1$, which is the only eigenvalue of unit modulus, and the corresponding eigenvector is positive definite.
\item There exists a unique steady state $\hat{\rho}_\mathrm{ss} \succ 0$ such that for any density matrix $\hat{\rho} \in \mbb{B}[\mc{H}]$ we have
\begin{align}
\lim_{n \to \infty} \mc{E}^n[\hat{\rho}] = \hat{\rho}_\mathrm{ss}.
\end{align}
\end{enumerate}
\end{proposition}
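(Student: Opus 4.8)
The plan is to prove the four-way equivalence by establishing the cycle (1) $\Rightarrow$ (3) $\Rightarrow$ (4) $\Rightarrow$ (1), which already yields (1) $\Leftrightarrow$ (3) $\Leftrightarrow$ (4), and then to fold in (2) via the separate equivalence (2) $\Leftrightarrow$ (3). Throughout, I would lean on the spectral criterion for irreducibility (Theorem~\ref{thm:IrreducibilityBySpectrum}), the structure of the peripheral spectrum of an irreducible map (Theorem~\ref{thm:CPTPPeripheral}), and the general spectral facts of Sec.~\ref{sec:GeneralSpecCPTP}, in particular that the Jordan blocks attached to unit-modulus eigenvalues are trivial.

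For (1) $\Rightarrow$ (3), I would first show that (1) already forces irreducibility: expanding $\exp(s\mc{E}) = \sum_{k\geq 0}(s^k/k!)\,\mc{E}^k$ and using that every $\mc{E}^k$ is completely positive (hence maps $\hat{\rho}\succeq 0$ to $\succeq 0$), the single strictly positive term at $k=n$ makes $\exp(s\mc{E})[\hat{\rho}]\succ 0$ for all nonzero $\hat{\rho}\succeq 0$ and all $s>0$, which is exactly criterion (3) of Proposition~\ref{prop:irreducibility}. By Theorem~\ref{thm:IrreducibilityBySpectrum}, the eigenvalue $1$ is then nondegenerate with positive-definite eigenvectors and there is a unique stationary state $\hat{\rho}_\mathrm{ss}\succ 0$. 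It remains to exclude any other unit-modulus eigenvalue. Were such an eigenvalue to exist, Theorem~\ref{thm:CPTPPeripheral} would give $m\geq 2$ together with orthogonal projections $\hat{P}_k$ obeying $\mc{E}^\dagger[\hat{P}_{k+1}]=\hat{P}_k$; taking traces against a state $\hat{\rho}$ supported on $\mathrm{ran}(\hat{P}_{k_0})$ shows inductively that $\mc{E}^n[\hat{\rho}]$ is supported on $\mathrm{ran}(\hat{P}_{k_0+n\,\mathrm{mod}\,m})$, a proper subspace, so $\mc{E}^n[\hat{\rho}]$ could never be positive definite, contradicting (1). Hence $m=1$ and (3) holds.

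For (3) $\Rightarrow$ (4), I would pass to the Jordan form of the natural representation. Since $z=1$ is the only unit-modulus eigenvalue, is nondegenerate, and carries a trivial Jordan block (Sec.~\ref{sec:GeneralSpecCPTP}), every remaining block has $|z_a|<1$ and its $n$-th power decays, so $\mc{E}^n$ converges in operator norm to the spectral projection $\mc{P}_1$ onto the eigenvalue-$1$ eigenspace. Writing $\mc{P}_1[\hat{X}] = (\Tr[\hat{X}]/\Tr[\hat{\rho}_\mathrm{ss}])\,\hat{\rho}_\mathrm{ss}$, where I use that the left eigenvector for eigenvalue $1$ is $\hat{\mbb{I}}$ because $\mc{E}^\dagger$ is unital, I obtain $\lim_{n\to\infty}\mc{E}^n[\hat{\rho}]=\hat{\rho}_\mathrm{ss}$ for every density matrix. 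For (4) $\Rightarrow$ (1), since density matrices span $\mbb{B}[\mc{H}]$ and $\mc{E}^n$ is linear on a finite-dimensional space, pointwise convergence on density matrices upgrades to operator-norm convergence $\mc{E}^n\to\mc{Q}$; then $\|\mc{E}^n[\hat{\rho}]-\hat{\rho}_\mathrm{ss}\|\leq \|\mc{E}^n-\mc{Q}\|$ uniformly over density matrices, so for large $n$ the image lies within the smallest eigenvalue of $\hat{\rho}_\mathrm{ss}\succ 0$, forcing $\mc{E}^n[\hat{\rho}]\succ 0$, and rescaling handles general nonzero $\hat{\rho}\succeq 0$.

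Finally, for (2) $\Leftrightarrow$ (3): the direction (3) $\Rightarrow$ (2) follows because the eigenvalues of $\mc{E}^k$ are $\{z_a^k\}$, so $1$ remains the unique unit-modulus eigenvalue, stays nondegenerate, and keeps the same positive-definite eigenvectors $\hat{\rho}_\mathrm{ss}$ and $\hat{\mbb{I}}$, whence Theorem~\ref{thm:IrreducibilityBySpectrum} makes each $\mc{E}^k$ irreducible. Conversely, (2) at $k=1$ gives irreducibility and hence, by Theorem~\ref{thm:CPTPPeripheral}, a peripheral spectrum $\{e^{2\pi ik/m}\}$; if $m\geq 2$ then $\mc{E}^m$ would have eigenvalue $1$ with multiplicity $m$, so $\mc{E}^m$ could not be irreducible by Theorem~\ref{thm:IrreducibilityBySpectrum}, contradicting (2) and forcing $m=1$, i.e., (3). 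The main obstacle I anticipate is the periodicity obstruction, namely cleanly ruling out $m\geq 2$ in the (1) $\Rightarrow$ (3) step, where the cyclic projection structure of Theorem~\ref{thm:CPTPPeripheral} must be exploited carefully to contradict eventual strict positivity.
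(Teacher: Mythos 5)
Your proof is correct, and while it reuses the same key tools as the paper (Proposition~\ref{prop:irreducibility}, Theorems~\ref{thm:IrreducibilityBySpectrum} and \ref{thm:CPTPPeripheral}, the Jordan-form argument for convergence), it organizes the equivalence differently and replaces one step with a genuinely distinct argument. The paper proves the single chain (1)~$\Rightarrow$~(2)~$\Rightarrow$~(3)~$\Rightarrow$~(4)~$\Rightarrow$~(1); its (1)~$\Rightarrow$~(2) proceeds by contradiction, extracting a non-full-rank stationary state of $\mc{E}^k$ from an invariant projection and playing the relations $\mc{E}^{km}[\hat{\varrho}_0]=\hat{\varrho}_0$ and $\mc{E}^{nl}[\hat{\rho}]\succ 0$ against each other. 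You instead go directly (1)~$\Rightarrow$~(3): after noting that (1) implies irreducibility via the exponential expansion, you rule out a nontrivial peripheral spectrum by showing that the cyclic projections $\hat{P}_k$ of Theorem~\ref{thm:CPTPPeripheral} force $\mc{E}^n[\hat{\rho}]$ to remain supported on a proper subspace (since $\Tr[\hat{P}_j\,\mc{E}[\hat{\rho}]]=\Tr[\hat{P}_{j-1}\hat{\rho}]$ propagates the support cyclically), which contradicts eventual strict positivity. This is arguably cleaner than the paper's double-contradiction and makes the periodicity obstruction explicit; the cost is that you then need the extra direction (3)~$\Rightarrow$~(2) to close the four-way equivalence, which you handle correctly by observing that no eigenvalue other than $1$ can satisfy $z_a^k=1$ when $1$ is the sole unit-modulus eigenvalue, so each $\mc{E}^k$ inherits a nondegenerate eigenvalue $1$ with the same positive-definite eigenvector. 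Your (2)~$\Rightarrow$~(3) via eigenvalue multiplicity of $\mc{E}^m$ is exactly the alternative the paper records in a footnote, and your (3)~$\Rightarrow$~(4) and (4)~$\Rightarrow$~(1) match the paper's arguments, the latter phrased directly rather than by contradiction. One small point worth making explicit in your support-propagation argument is that each $\hat{P}_k$ is nonzero and hence a \emph{proper} projection for $m\geq 2$ (which follows from $\mc{E}^\dagger[\hat{P}_{k+1}]=\hat{P}_k$ and $\sum_k\hat{P}_k=\hat{\mbb{I}}$), but this is immediate from Theorem~\ref{thm:CPTPPeripheral}.
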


\begin{proof}
The proof is based on Refs.~\cite{evans1978spectral, wolf2012quantum}.

(1) $\Rightarrow$ (2): 
We proceed by contradiction. 
Suppose that $\mc{E}^k$ is not irreducible for some $k \in \mbb{N}$. 
Then, there exists a nontrivial orthogonal projection operator $\hat{P} \in \mbb{B}[\mc{H}]$ such that $\mc{E}^k[\hat{P} \mbb{B}[\mc{H}] \hat{P}] \subseteq \hat{P} \mbb{B}[\mc{H}] \hat{P}$. 
Following the arguments in Appendix~\ref{app:StationaryState}, there exists a density operator $\hat{\varrho}_0 \in \hat{P} \mbb{B}[\mc{H}] \hat{P}$ such that $\mc{E}^k[\hat{\varrho}_0] = \hat{\varrho}_0$. 
Since $\hat{P} \notin \{ 0, \hat{\mbb{I}} \}$, $\hat{\varrho}_0$ must have at least one eigenvalue $0$ and cannot be positive definite.
Obviously, $\mc{E}^{km}[\hat{\varrho}_0] = \hat{\varrho}_0$ for any $m \in \mbb{N}$.
On the other hand, by assumption (1), there exists $n \in \mbb{N}$ such that $\mc{E}^n[\hat{\rho}] \succ 0$ for any $\hat{\rho} \succeq 0$ in $\mbb{B}[\mc{H}]$.
This further implies $\mc{E}^{nl}[\hat{\rho}] \succ 0$ for any $l \in \mbb{N}$. 
Since the relations $\mc{E}^{km}[\hat{\varrho}_0] = \hat{\varrho}_0$ and $\mc{E}^{nl}[\hat{\rho}] \succ 0$ hold for arbitrary integers $m$ and $l$, we can specifically choose $m=n$ and $l=k$.
With this choice, setting $\hat{\rho} = \hat{\varrho}_0$, we obtain $\mc{E}^{nk}[\hat{\varrho}_0] = \hat{\varrho}_0$ from the former and $\mc{E}^{nk}[\hat{\varrho}_0] \succ 0$ from the latter. 
This contradicts the fact that $\hat{\varrho}_0$ is not positive definite.

(2) $\Rightarrow$ (3): 
We proceed by contradiction.
Suppose that $\mc{E}$ has a nontrivial peripheral spectrum $S = \{ e^{2\pi ik/m} \}_{k=0}^{m-1}$ with some $m \geq 2$.
Since $\mc{E}^k$ is irreducible for every $k \in \mbb{N}$, $\mc{E}$ itself is irreducible, and by (4) of Theorem~\ref{thm:CPTPPeripheral} we have nontrivial orthogonal projection operators $\hat{P}_k$ such that $\mc{E}^\dagger[\hat{P}_{k+1}] = \hat{P}_k$.
This implies $(\mc{E}^\dagger)^m [\hat{P}_k] = \hat{P}_k$ and thus $\mc{E}^m$ is not irreducible according to (2) of Proposition~\ref{prop:irreducibility}, which leads to a contradiction\footnote{
There is an alternative proof: denoting by $\hat{X}_k$ right eigenvectors corresponding to the eigenvalues $\lambda_k = e^{2\pi ik/m}$, we have $\mc{E}^m[\hat{X}_k] = \lambda_k^m \hat{X}_k = \hat{X}_k$, but this contradicts the irreducibility of $\mc{E}^m$ as it must have a nondegenerate eigenvalue 1 according to Theorem~\ref{thm:IrreducibilityBySpectrum}.
We thank Hironobu Yoshida for pointing this out.
}.
Thus, $S = \{ 1 \}$ and the corresponding eigenvector is positive definite according to Theorem~\ref{thm:IrreducibilityBySpectrum}.

(3) $\Rightarrow$ (4): 
Consider a $d^2 \times d^2$ matrix representation $\hat{\mc{E}}$ of $\mc{E}$ and its Jordan normal form in Eq.~\eqref{eq:CPTPJordanForm}.
The $n$th power of the Jordan block $J^n_{d_a}(z_a)$ vanishes as $n \to \infty$ for all eigenvalues satisfying $|z_a| < 1$. 
Since the peripheral spectrum of $\mc{E}$ contains only $1$ and is nondegenerate, the limit $\lim_{n \to \infty}\mc{E}^n = \mc{E}_\infty$ exists, and $\mc{E}_\infty$ becomes the projection operator onto the eigenspace of $\mc{E}$ with eigenvalue $z_a=1$. 
This means $\mc{E}_\infty[\hat{\rho}] = \hat{\rho}_\mathrm{ss}$ for any density operator $\hat{\rho} \in \mbb{B}[\mc{H}]$, where $\hat{\rho}_\mathrm{ss}$ is the eigenvector for $z_a=1$, which is positive definite.

(4) $\Rightarrow$ (1):
We first note that any operator $\hat{X} \in \mbb{B}[\mc{H}]$ can be expanded as $\hat{X} = c_1 \hat{\rho}_1 -c_2 \hat{\rho}_2 +ic_3 \hat{\rho}_3 -ic_4 \hat{\rho}_4$ with four density operators $\hat{\rho}_j \in \mbb{B}[\mc{H}]$ and $c_j \geq 0$.
Then, (4) implies that the limit $\lim_{n \to \infty} \mc{E}^n[\hat{X}] = (c_1-c_2+ic_3-ic_4) \hat{\rho}_\mathrm{ss} = \Tr[\hat{X}] \hat{\rho}_\mathrm{ss} = \mc{E}_\infty[\hat{X}]$ exists, i.e., the sequence of CPTP maps $\{ \mc{E}^n \}_n$ converges pointwise to $\mc{E}_\infty$.
Since $\mc{H}$ is finite-dimensional, the sequence $\{ \mc{E}^n \}_n$ actually converges uniformly to $\mc{E}_\infty$, and thus $\lim_{n \to \infty} \| \mc{E}_\infty -\mc{E}^n \| = 0$\footnote{
Let $\{ \hat{e}_{ij} \}$ be $d \times d$ matrices whose entries are all 0 except the $(i,j)$th entry, which is 1.
If $\{\mc{E}^n\}_n$ converges pointwise to $\mc{E}_\infty$, we have $\lim_{n \to \infty} \| (\mc{E}_\infty -\mc{E}^n)[\hat{e}_{ij}] \| = 0$ for any $\hat{e}_{ij}$.
Since any $\hat{X} \in \mbb{B}[\mc{H}]$ can be expanded as $\hat{X} = \sum_{i,j} c_{ij} \hat{e}_{ij}$ with $c_{ij} \in \mbb{C}$, we have $\| (\mc{E}_\infty -\mc{E}^n)[\hat{X}] \| \leq \sum_{i,j} |c_{ij}| \| (\mc{E}_\infty -\mc{E}_n)[\hat{e}_{ij}] \|$.
We then introduce $\epsilon_n = \max_{i,j} \{ \| (\mc{E}_\infty-\mc{E}^n)[\hat{e}_{ij}] \| \}$ to write $\| (\mc{E}_\infty -\mc{E}^n)[\hat{X}] \| \leq \epsilon_n \sum_{i,j} |c_{ij}|$.
If we further suppose $\| \hat{X} \| = 1$, there exists a finite $C>0$ that bounds the $L_1$-norm such that $\sum_{i,j} |c_{ij}| \leq C$, yielding $\| (\mc{E}_\infty -\mc{E}^n)[\hat{X}] \| \leq C \epsilon_n$.
Finally, we find $\| \mc{E}_\infty -\mc{E}^n \| = \sup_{\| \hat{X} \|=1} \| (\mc{E}_\infty-\mc{E}^n)[\hat{X}] \| \leq C \epsilon_n$.
Since $\epsilon_n \to 0$ as $n \to \infty$ when $\mbb{B}[\mc{H}]$ is finite-dimensional, we have proved $\lim_{n \to \infty} \| \mc{E}_\infty - \mc{E}^n \| = 0$.
}.
We then proceed by contradiction. 
Suppose that for any $n \in \mbb{N}$ there exists a density operator $\hat{\rho} \in \mbb{B}[\mc{H}]$ such that $\mc{E}^n[\hat{\rho}]$ is not positive definite.
Let $\ket{\psi} \in \mc{H}$ be an eigenvector of $\mc{E}^n[\hat{\rho}]$ with eigenvalue 0.
Since $\hat{\rho}_\mathrm{ss} \succ 0$, we denote its smallest eigenvalue by $\lambda_\textrm{min}(\hat{\rho}_\mathrm{ss}) > 0$.
Then, we have
\begin{align}
0 
&< \lambda_\textrm{min}(\hat{\rho}_\mathrm{ss}) 
\leq \langle \psi | \hat{\rho}_\mathrm{ss} | \psi \rangle
= \langle \psi | (\hat{\rho}_\mathrm{ss} - \mc{E}^n[\hat{\rho}]) | \psi \rangle 
\leq \| \hat{\rho}_\mathrm{ss} - \mc{E}^n[\hat{\rho}] \| \nonumber \\
&= \| (\mc{E}_\infty - \mc{E}^n)[\hat{\rho}] \|
\leq \| \mc{E}_\infty -\mc{E}^n \| \| \hat{\rho} \| \leq \| \mc{E}_\infty - \mc{E}^n \|.
\end{align}
The convergence of the operator norm $\lim_{n \to \infty} \| \mc{E}_\infty-\mc{E}^n \| = 0$ means that for any $\epsilon > 0$ there exists an $N \in \mbb{N}$ such that $\| \mc{E}_\infty-\mc{E}^n \| < \epsilon$ for all $n \geq N$\footnote{See Refs.~\cite{wolf2012quantum, szehr2015spectral} for more detailed discussions on the convergence of $\| \mc{E}_\infty - \mc{E}^n\|$.}.
Thus, setting $\epsilon = \lambda_\textrm{min}(\hat{\rho}_\mathrm{ss})$ leads to a contradiction for $n \geq N$.
\end{proof}

As stated in (4) of Proposition~\ref{prop:primitivity}, the primitivity of a CPTP map $\mc{E}$ ensures that there exists a unique steady state in the sense of the long-time limit $\hat{\rho}_\mathrm{ss} = \lim_{n \to \infty} \mc{E}^n[\hat{\rho}]$, which is nothing but the eigenvector of $\mc{E}$ with eigenvalue $z=1$.
This can be contrasted with the case of general irreducible CPTP maps, for which a unique steady state only exists in the sense of the long-time average as given in Eq.~\eqref{eq:UniqueSSAverage}.

The primitivity of a CPTP map $\mc{E}$ can also be characterized by its Kraus representation, as given in Eq.~\eqref{eq:KrausRep}.
Let $\mbb{K}_m \subseteq \mbb{B}[\mc{H}]$ be the complex linear span of all degree-$m$ monomials of the Kraus operators $\hat{M}_b$, 
\begin{align}
\mbb{K}_m := \textrm{span} \{ \hat{M}_{b_m} \cdots \hat{M}_{b_2} \hat{M}_{b_1} \}.
\end{align}
We have the following criterion based on the properties of $\mbb{K}_m$~\cite{sanz2010quantum, wolf2012quantum}.

\begin{theorem}[Primitivity in Kraus representation]
\label{thm:PrimitivityByKraus}
Let $\mc{E}: \mbb{B}[\mc{H}] \to \mbb{B}[\mc{H}]$ be a CPTP map with the Kraus representation in Eq.~\eqref{eq:KrausRep}.
The following statements are equivalent.
\begin{enumerate}
\item $\mc{E}$ is primitive.
\item There exists an $n \in \mbb{N}$ such that $\mbb{K}_m \ket{\psi} = \mc{H}$ for any nonzero $\ket{\psi} \in \mc{H}$ and all $m \geq n$.
\item There exists a $q \in \mbb{N}$ such that $\mbb{K}_m = \mbb{B}[\mc{H}]$ for all $m \geq q$.
\end{enumerate}
\end{theorem}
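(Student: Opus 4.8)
The plan is to establish the cycle (1) $\Rightarrow$ (3) $\Rightarrow$ (2) $\Rightarrow$ (1), leaning throughout on the characterizations of primitivity already proved in Proposition~\ref{prop:primitivity} rather than re-deriving them. The single bookkeeping fact linking the Kraus picture to positivity is that the degree-$m$ monomials $\hat{N}_k = \hat{M}_{b_m}\cdots\hat{M}_{b_1}$ are exactly the Kraus operators of $\mc{E}^m$, so that for any $\ket{\psi},\ket{\phi}\in\mc{H}$,
\begin{align}
\langle \phi | \mc{E}^m[\ketbra{\psi}] | \phi \rangle = \sum_k |\langle \phi | \hat{N}_k | \psi \rangle|^2 .
\end{align}
Hence $\mc{E}^m[\ketbra{\psi}] \succ 0$ if and only if no nonzero $\ket{\phi}$ is orthogonal to every $\hat{N}_k\ket{\psi}$, i.e.\ if and only if $\mbb{K}_m\ket{\psi} = \mc{H}$. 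This identity is the analogue, at fixed degree $m$, of the expansion used in the proof of Theorem~\ref{thm:IrreducibilityByKraus}.

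The crux is (1) $\Rightarrow$ (3), where the rank-one positivity statement must be promoted to ``$\mbb{K}_m$ is the whole operator space,'' uniformly for all large $m$. I would avoid a direct Burnside / numerical-semigroup argument---which only controls degrees in an arithmetic progression and would still have to exclude periodicity---and instead pass to the Choi--Jamio{\l}kowski matrix $\hat{F}_m = (\mc{E}^m\otimes\mc{I}_\mr{S})[\ketbra{\Phi}]$ of Eq.~\eqref{choi}. Under vectorization the range of $\hat{F}_m$ is precisely the image of the span $\mbb{K}_m$ of the Kraus operators of $\mc{E}^m$, so $\mbb{K}_m = \mbb{B}[\mc{H}] \Leftrightarrow \mr{rank}\,\hat{F}_m = d^2 \Leftrightarrow \hat{F}_m \succ 0$. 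Since (1) means $\mc{E}$ is primitive, statement (4) of Proposition~\ref{prop:primitivity} gives $\mc{E}^m \to \mc{E}_\infty$ with $\mc{E}_\infty[\hat{X}] = \Tr[\hat{X}]\hat{\rho}_\mr{ss}$, whence $\hat{F}_m \to \hat{F}_\infty = \tfrac{1}{d}\,\hat{\rho}_\mr{ss}\otimes\hat{\mbb{I}}_\mr{S}$. Because the primitive steady state satisfies $\hat{\rho}_\mr{ss}\succ 0$, the limit $\hat{F}_\infty$ is positive definite, and by continuity of eigenvalues in finite dimension there is a $q$ with $\hat{F}_m\succ 0$ for all $m\geq q$; this is (3).

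The remaining implications are short. For (3) $\Rightarrow$ (2) it suffices to note that $\mbb{K}_m = \mbb{B}[\mc{H}]$ forces $\mbb{K}_m\ket{\psi} = \mbb{B}[\mc{H}]\ket{\psi} = \mc{H}$ for every nonzero $\ket{\psi}$, so the same $q$ serves as the $n$ in (2). For (2) $\Rightarrow$ (1), take the index $m=n$ supplied by (2): the displayed identity turns $\mbb{K}_n\ket{\psi}=\mc{H}$ into $\mc{E}^n[\ketbra{\psi}]\succ 0$ for every $\ket{\psi}$, and a spectral decomposition of an arbitrary nonzero $\hat{\rho}\succeq 0$ then yields $\mc{E}^n[\hat{\rho}]\succ 0$, which is exactly statement (1) of Proposition~\ref{prop:primitivity} and hence primitivity.

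I expect the genuine obstacle to be precisely the uniform-in-$m$ upgrade in (1) $\Rightarrow$ (3): the positivity condition is intrinsically a statement about how $\mbb{K}_m$ acts on individual vectors, and turning it into full-rankness of $\mbb{K}_m$ as a subspace for \emph{all} large $m$ is where a periodicity obstruction could in principle survive (it is exactly this that distinguishes primitivity from mere irreducibility). The Choi-matrix convergence is what disarms it, converting the spectral content of primitivity---a rank-one limiting channel with a full-rank steady state---into eventual positive-definiteness of $\hat{F}_m$; the quantitative points to verify carefully are that $\hat{\rho}_\mr{ss}\succ 0$ and that the limit $\hat{F}_\infty$ is genuinely of full rank $d^2$.
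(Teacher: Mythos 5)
Your proof is correct and follows essentially the same route as the paper's: the same sum-of-squares identity linking $\langle\phi|\mc{E}^m[\ketbra{\psi}]|\phi\rangle$ to the degree-$m$ monomials, and the same Choi-matrix argument ($\hat{F}_m \succ 0 \Leftrightarrow \mbb{K}_m = \mbb{B}[\mc{H}]$, combined with $\hat{F}_m \to \tfrac{1}{d}\hat{\rho}_\mathrm{ss}\otimes\hat{\mbb{I}} \succ 0$) for the hard implication. The only differences are organizational: you close the cycle $(1)\Rightarrow(3)\Rightarrow(2)\Rightarrow(1)$, which lets the eigenvalue-continuity step absorb the paper's small induction propagating $\mbb{K}_n\ket{\psi}=\mc{H}$ to all $m\geq n$, whereas the paper proves $(1)\Leftrightarrow(2)$ directly and establishes the eventual positive-definiteness of the Choi matrix by an explicit contradiction with $\lambda_{\mathrm{min}}(\hat{\rho}_\mathrm{ss})$ rather than by continuity of eigenvalues.
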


\begin{proof}
The proof is based on Refs.~\cite{wolf2012quantum, sanz2010quantum}.

(1) $\Leftrightarrow$ (2):
Let $\ket{\psi}, \ket{\phi} \in \mc{H}$ be nonzero. 
The primitivity of $\mc{E}$ in the sense of (1) in Proposition~\ref{prop:primitivity} implies that there exists an $n \in \mbb{N}$ such that
\begin{align}
\langle \phi | \mc{E}^n [|\psi \rangle \langle \psi|] | \phi \rangle = \sum_{b_1, b_2, \ldots, b_n} | \langle \phi | \hat{M}_{b_n} \cdots \hat{M}_{b_2} \hat{M}_{b_1} | \psi \rangle |^2 > 0.
\end{align}
This holds true if and only if $\mbb{K}_n \ket{\psi} = \mc{H}$ for all nonzero $\ket{\psi} \in \mc{H}$. 
Then, let us consider $\hat{M}_{b_{n+1}} \hat{M}_{b_n} \cdots \hat{M}_{b_1} \ket{\psi}$.
Since at least one of $\hat{M}_{b_1} \ket{\psi}$ is nonzero, denoting it by $\ket{\psi'}$, we find that $\mbb{K}_{n+1} \ket{\psi}$ contains $\mbb{K}_n \ket{\psi'}$ as a subspace.
Since $\mbb{K}_n \ket{\psi'} = \mc{H}$, we have $\mbb{K}_{n+1} \ket{\psi} = \mc{H}$ and, by induction, $\mbb{K}_m \ket{\psi} = \mc{H}$ for all $m \geq n$.

(1) $\Rightarrow$ (3):
Define $\hat{\mathsf{M}}_{\bm{b};m} := \hat{M}_{b_m} \cdots \hat{M}_{b_2} \hat{M}_{b_1}$ with a collection of indices $\bm{b}_m=( b_1, b_2, \cdots, b_m )$.
Using the linear one-to-one correspondence between $\hat{\mathsf{M}}_{\bm{b};m} \in \mbb{B}[\mc{H}]$ and $\ket{\Phi_{\bm{b};m}} := (\hat{\mathsf{M}}_{\bm{b};m} \otimes \hat{\mbb{I}}) \ket{\Phi} \in \mc{H} \otimes \mc{H}$\footnote{
Consider a map $f: \mbb{B}[\mc{H}] \to \mc{H} \otimes \mc{H}, \, \hat{X} \mapsto (\hat{X} \otimes \hat{\mbb{I}}) \ket{\Phi} =: \ket{\Phi_X}$.
It is obviously a linear map.
Since $\ket{\Phi_X} = (1/\sqrt{d}) \sum_{i=1}^d (\hat{X} \ket{i}) \otimes \ket{i} = (1/\sqrt{d}) \sum_{i,j} X_{ij} \ket{i} \otimes \ket{j}$ with $X_{ij} = \langle i | \hat{X} | j \rangle$, $\ket{\Phi_X} = 0$ implies $X_{ij}=0$ and thus $\hat{X}=0$. 
This means that $f^{-1}(0)=0$ and hence $f$ is injective.
For every $\ket{\Psi} = \sum_{i,j} \psi_{ij} \ket{i} \otimes \ket{j} \in \mc{H} \otimes \mc{H}$, we can find $\hat{X} \in \mbb{B}[\mc{H}]$ such that $f(\hat{X}) = \ket{\Psi}$ by choosing $X_{ij} = \sqrt{d} \psi_{ij}$. 
This means that $f$ is surjective and therefore there is one-to-one correspondence between $\hat{X}$ and $\ket{\Phi_X}$.
}
, where $\ket{\Phi}$ is the maximally entangled state defined in Eq.~\eqref{eq:MaximallyEntangledState}, we find that $\mbb{K}_m = \mbb{B}[\mc{H}]$ and $\textrm{span} \{ \ket{\Phi_{\bm{b};m}} \} = \mc{H} \otimes \mc{H}$ are equivalent.
Furthermore, using the one-to-one correspondence between $\mc{E}^m: \mbb{B}[\mc{H}] \to \mbb{B}[\mc{H}]$ and its Choi-Jamiołkowski representation $\hat{E}^m \in \mbb{B}[\mc{H} \otimes \mc{H}]$ [see Eq.~\eqref{choi}], 
\begin{align}
\hat{E}^m 
:= (\mc{E}^m \otimes \mc{I})[| \Phi \rangle \langle \Phi |] 
= \sum_{\bm{b}_m} (\hat{\mathsf{M}}_{\bm{b};m} \otimes \hat{\mbb{I}}) | \Phi \rangle \langle \Phi | (\hat{\mathsf{M}}_{\bm{b};m}^\dagger \otimes \hat{\mbb{I}}) 
= \sum_{\bm{b}_m} | \Phi_{\bm{b};m} \rangle \langle \Phi_{\bm{b};m} |,
\end{align}
we find that $\textrm{span} \{ \ket{\Phi_{\bm{b};m}} \} = \mc{H} \otimes \mc{H}$ and $\hat{E}^m \succ 0$ are equivalent.

We then proceed by contradiction. 
Suppose that $\mbb{K}_m \neq \mbb{B}[\mc{H}]$ and thus $\hat{E}^m$ is not positive definite for all $m \in \mbb{N}$.
Let $\ket{\Psi_m} \in \mc{H} \otimes \mc{H}$ be an eigenstate of $\hat{E}^m$ with eigenvalue $0$.
Since $\mc{E}$ is primitive, the limit $\mc{E}_\infty = \lim_{n \to \infty} \mc{E}^n$ exists so that $(\mc{E}_\infty \otimes \mc{I})[| \Phi \rangle \langle \Phi |] = \hat{\rho}_\mathrm{ss} \otimes (\hat{\mbb{I}}/d)$\footnote{
Since $| \Phi \rangle \langle \Phi | = (1/d) \sum_{ij} |i \rangle \langle j | \otimes |i \rangle \langle j |$, we have $(\mc{E}_\infty \otimes \mc{I})[| \Phi \rangle \langle \Phi |] 
= (1/d) \sum_{ij} \mc{E}_\infty[| i \rangle \langle j |] \otimes | i \rangle \langle j |
= (1/d) \sum_{ij} \Tr[| i \rangle \langle j |] \hat{\rho}_\mathrm{ss} \otimes | i \rangle \langle j | 
= (1/d) \sum_{ij} \delta_{ij} \hat{\rho}_\mathrm{ss} \otimes | i \rangle \langle j |
= (1/d) \hat{\rho}_\mathrm{ss} \otimes \hat{\mbb{I}}$.
} with a positive definite density operator $\hat{\rho}_\mathrm{ss} \in \mbb{B}[\mc{H}]$.
Denoting the smallest eigenvalue of $\hat{\rho}_\mathrm{ss}$ by $\lambda_\textrm{min}(\hat{\rho}_\mathrm{ss})$, we find
\begin{align}
0 
&< \lambda_\textrm{min}(\hat{\rho}_\mathrm{ss})
\leq \langle \Psi_m | \hat{\rho}_\mathrm{ss} \otimes \mbb{I} | \Psi_m \rangle 
= \langle \Psi_m | (\hat{\rho}_\mathrm{ss} \otimes \hat{\mbb{I}} - d \hat{E}^m) | \Psi_m \rangle \nonumber \\
&\leq d \| (\mc{E}_\infty \otimes \mc{I} - \mc{E}^m \otimes \mc{I}) [| \Phi \rangle \langle \Phi |] \|
\leq d \| \mc{E}_\infty \otimes \mc{I} - \mc{E}^m \otimes \mc{I} \|.
\end{align}
Since $\lim_{m \to \infty} \| \mc{E}_\infty \otimes \mc{I} - \mc{E}^m \otimes \mc{I} \| = 0$, this inequality leads to a contradiction for a sufficiently large but finite $m$ [see the proof of (4) $\Rightarrow$ (1) in Proposition~\ref{prop:primitivity}].

(3) $\Rightarrow$ (2): This is obvious.
For any nonzero $\ket{\psi}, \ket{\phi} \in \mc{H}$, we have $\hat{X} = |\phi \rangle \langle \psi | / \langle \psi | \psi \rangle \in \mbb{K}_n = \mbb{B}[\mc{H}]$ so that $\ket{\phi} = \hat{X} \ket{\psi}$.
Thus, $\mbb{K}_n \ket{\psi} = \mc{H}$ for any nonzero $\ket{\psi} \in \mc{H}$.
\end{proof}

Let us revisit the CPTP map given by Eq.~\eqref{eq:IrreducibleCPTPEx1} for a two-level system.
We find
\begin{align}
\mbb{K}_1 &= \mbb{K}_3 = \cdots = \mbb{K}_{2n-1} = \textrm{span} \{ \hat{M}_1, \hat{M}_2 \}, \\
\mbb{K}_2 &= \mbb{K}_4 = \cdots = \mbb{K}_{2n} = \textrm{span} \{ \hat{M}_1 \hat{M}_2, \hat{M}_2 \hat{M}_1 \}.
\end{align}
Since none of these subspaces satisfy $\mbb{K}_m \ket{\psi} = \mc{H}$ for an arbitrary $\ket{\psi} \in \mc{H}$\footnote{
Let $\ket{\psi} = (a, b)^T \in \mc{H}$ with $(a,b) \neq (0,0)$.
For $\mbb{K}_{2n-1} \ket{\psi} = \mc{H}$ to be satisfied, $\hat{M}_1 \ket{\psi}$ and $\hat{M}_2 \ket{\psi}$ must be linearly independent.
However, this does not hold when $ab = 0$.
The same result applies to $\mbb{K}_{2n}$.
}, this CPTP map is irreducible but not primitive.
Consequently, the map possesses multiple eigenvalues of unit modulus, specifically $z=\pm1$.

We next consider a CPTP map $\mc{E}$ generated by the following set of Kraus operators
\begin{align}
\hat{M}_1 = \frac{1}{\sqrt{2}} \ket{0} \bra{1}, \quad
\hat{M}_2 = \frac{1}{\sqrt{2}} \ket{1} \bra{0}, \quad
\hat{M}_3 = \frac{1}{\sqrt{2}} (\ket{0} \bra{0} -\ket{1} \bra{1}).
\end{align}
Since their linear combinations generate the set of Pauli matrices $\hat{\sigma}_x = \sqrt{2}(\hat{M}_1+\hat{M}_2)$, $\hat{\sigma}_y = -i\sqrt{2}(\hat{M}_1-\hat{M}_2)$, and $\hat{\sigma}_z = \sqrt{2} \hat{M}_3$, we find $\mbb{K}_1 = \textrm{span} \{ \hat{\sigma}_x, \hat{\sigma}_y, \hat{\sigma}_z \}$, which fulfills (2) of Theorem~\ref{thm:PrimitivityByKraus}.
If we consider strings of length $m=2$, we find $\mathbb{K}_2 = \textrm{span} \{ \hat{\mbb{I}}_2, \hat{\sigma}_x, \hat{\sigma}_y, \hat{\sigma}_z \}$, which now fulfills both (2) and (3) of Theorem~\ref{thm:PrimitivityByKraus}.
Thus, the corresponding CPTP map $\mc{E}$ is primitive.
Indeed, we find that the eigenvalues of $\mc{E}$ are $z = 1, 0, -1/2, -1/2$, and the eigenvector for $z=1$ is $\hat{X} = \ket{0} \bra{0} + \ket{1} \bra{1}$, which is positive definite.

As another example of a primitive CPTP map, we consider CPTP maps generated by random matrices.
In Fig.~\ref{fig:CPTPEigenvalues}(b), we illustrate the complex eigenvalues of a CPTP map generated through random Kraus operators~\cite{bruzda2009random, sa2020spectral},
\begin{align}
\hat{M}_0 = \sqrt{1-p} \hat{U}, \quad
\hat{M}_1 = \sqrt{p} \hat{V}_{11}, \quad
\hat{M}_2 = \sqrt{p} \hat{V}_{21}.
\end{align}
Here, $\hat{U}$ is a $d \times d$ random unitary matrix and $\hat{V}_{ij}$ are $d \times d$ matrices that are blocks of a $2d \times 2d$ random unitary matrix $\hat{V} = \begin{pmatrix} \hat{V}_{11} & \hat{V}_{12} \\ \hat{V}_{21} & \hat{V}_{22} \end{pmatrix}$.
We then set $p=0.9$ and $d=6$.
Strings of such random Kraus operators, $\hat{M}_{b_\ell} \cdots \hat{M}_{b_2} \hat{M}_{b_1}$, span $\mbb{B}[\mc{H}]$ for a sufficiently large but fixed $\ell$, rendering the corresponding map primitive. 
As a result, the map has a nondegenerate eigenvalue $z=1$.

\subsection{Steady states of quantum master equations}
\label{sec:SteadyStateGKSL}

While we have considered discrete CPTP maps so far, the uniqueness of steady states is equally relevant for quantum master equations described by the GKSL equation.
As discussed in Sec.~\ref{sec:AveragedDynamics}, a linear map $\mc{L}: \mbb{B}[\mc{H}] \to \mbb{B}[\mc{H}]$ of the form 
\begin{align} \label{eq:GKSLSuperop}
\mc{L}[\hat{X}] = -i[\hat{H}, \hat{X}] + \sum_b \left( \hat{L}_b \hat{X} \hat{L}_b^\dagger - \frac{1}{2} \{ \hat{X}, \hat{L}_b^\dagger \hat{L}_b \} \right)
\end{align}
with $\hat{H}=\hat{H}^\dagger \in \mbb{B}[\mc{H}]$ and $\hat{L}_b \in \mbb{B}[\mc{H}]$ generates a family of CPTP maps $\mc{E}_t = e^{\mc{L} t}$ parametrized by $t \geq 0$ that satisfy the Markovian condition $\mc{E}_t \circ \mc{E}_s = \mc{E}_{t+s}$ for $t,s \geq 0$.
It is often convenient to express the GKSL superoperator $\mc{L}$ in one of the following forms:
\begin{align}
\label{eq:GKSLwithK}
\mc{L}[\hat{X}] 
&= \sum_b \hat{L}_b \hat{X} \hat{L}_b^\dagger -\hat{K} \hat{X} -\hat{X} \hat{K}^\dagger \\
\label{kmat}
&= -i[\hat{H}', \hat{X}] + \sum_{k,l=1}^{d^2-1} \frac{C_{k,l}}{2} \left( [\hat{F}_l, \hat{X} \hat{F}_k^\dagger] + [\hat{F}_l \hat{X}, \hat{F}_k^\dagger] \right).
\end{align}
Here, $\hat{K} \in \mbb{B}[\mc{H}]$ satisfies $\hat{K} + \hat{K}^\dagger = \sum_b \hat{L}_b^\dagger \hat{L}_b$; $\hat{H}'$ is defined as $\hat{H}' = \hat{H} +\frac{i}{2d} \sum_b ( \Tr[\hat{L}_b^\dagger] \hat{L}_b -\Tr[\hat{L}_b] \hat{L}_b^\dagger)$; $C$ is a $(d^2-1) \times (d^2-1)$ positive semidefinite matrix called the Kossakowski matrix; and $\{ \hat{F}_k \}_{k=1}^{d^2-1}$ is a complete orthonormal basis for the space of traceless operators in $\mbb{B}[\mc{H}]$.
The equivalence between Eqs.~\eqref{eq:GKSLSuperop} and~\eqref{eq:GKSLwithK} is readily established by setting
\begin{align}
\hat{K} = i\hat{H} + \frac{1}{2} \sum_b \hat{L}_b^\dagger \hat{L}_b=i\hat{H}_\mr{eff}.
\end{align}
Equation~\eqref{kmat} is also equivalent to Eq.~\eqref{eq:GKSLSuperop}.
We observe that Eq.~\eqref{eq:GKSLSuperop} is invariant under the transformation $\hat{L}_b \to \hat{L}_b + c_b \hat{\mbb{I}}$ and $\hat{H} \to \hat{H} -\frac{i}{2} \sum_b (c_b^* \hat{L}_b -c_b \hat{L}_b^\dagger)$.
This allows us to make $\hat{L}_b$ traceless by choosing $c_b = -\Tr[\hat{L}_b]/d$, expand them as $\hat{L}_b = \sum_{k=1}^{d^2-1} B_{b,k} \hat{F}_k$, and set $C = B^\dagger B$.
We note that Eqs.~\eqref{eq:GKSLSuperop} and~\eqref{eq:GKSLwithK} were originally derived by Lindblad in Ref.~\cite{lindblad1976generators} while Eq.~\eqref{kmat} was derived by Gorini, Kossakowski, and Sudarshan in Ref.~\cite{gorini1976completely}.

Here, we consider the spectral properties of Markovian CPTP maps $\mc{E}_t$ generated by a GKSL superoperator $\mc{L}$.
Since $\mc{E}_t = e^{\mc{L}t}$ constitutes a family of CPTP maps, we can relate the spectral properties of $\mc{E}_t$ to those of $\mc{L}$.
Specifically, if we denote by $\{ \lambda_a \}_{a=1}^{d^2}$ the eigenvalues of $\mc{L}$, it follows that (i') the complex eigenvalues appear in conjugate pairs $\lambda_a$ and $\lambda_a^*$, (ii') there exists an eigenvalue $\lambda_a =0$\footnote{
Since $\mc{E}_t = e^{\mc{L}t}$ is a CPTP map, $\mc{L}$ must have an eigenvalue $\lambda$ satisfying $e^{\lambda t}=1$.
Since this holds for any $t>0$, such a $\lambda$ must be $0$.
}, and (iii') any eigenvalue has a non-positive real part $\textrm{Re}[\lambda_a] \leq 0$.
In particular, the existence of a stationary state $\hat{\varrho}_0 \in \mbb{B}[\mc{H}]$, i.e., a state satisfying $\mc{E}_t[\hat{\varrho}_0] = \hat{\varrho}_0$, implies the existence of a nontrivial kernel for $\mc{L}$ such that $\mc{L}[\hat{\varrho}_0] = 0$.
For example, when the operators $\hat{L}_b$ are Hermitian, the maximally mixed state $\hat{\varrho}_0 = \hat{\mbb{I}}/d$ is a stationary state since $\mc{L}[\hat{\mbb{I}}]=0$.
However, the general spectral properties of $\mc{L}$ alone do not ensure the existence of a unique stationary state, and we need additional conditions analogous to the irreducibility or primitivity of CPTP maps discussed in the previous sections.

To proceed, we first argue that for the continuous family $\mc{E}_t = e^{\mc{L}t}$ generated by $\mathcal{L}$, irreducibility is equivalent to primitivity~\cite{wolf2012quantum}.
Let us suppose that $\mc{E}_{t_0}$ is irreducible for some $t_0>0$.
Then, according to Theorem~\ref{thm:IrreducibilityBySpectrum}, $\mc{E}_{t_0}$ has a nondegenerate eigenvalue $1$ and the corresponding eigenvector is positive definite.
Furthermore, according to Theorem~\ref{thm:CPTPPeripheral}, $\mc{E}_{t_0}$ has a peripheral spectrum $S = \{ e^{2\pi ik/m}  \}_{k=0}^{m-1}$ with some $m \in \mbb{N}$.
Now suppose $m>1$.
For any irrational number $\alpha>1$, $\mc{E}_{\alpha t_0}$ also has a nondegenerate eigenvalue $1$ with the same eigenvector and thus is irreducible.
On the other hand, $\mc{E}_{\alpha t_0}$ has eigenvalues of unit modulus $e^{2\pi i\alpha k/m}$.
Since $\alpha$ is irrational, these are not roots of unity for $k \neq 0$, which contradicts the structure of the peripheral spectrum required by Theorem~\ref{thm:CPTPPeripheral}.
Therefore, we must have $m=1$, and thus $\mc{E}_{t_0}$ is primitive.
This implies that $\lambda=0$ is the only eigenvalue of $\mc{L}$ with a vanishing real part; consequently, $\mc{E}_t$ is primitive for all $t>0$.

We are now ready to characterize the irreducibility of Markovian CPTP maps~\cite{wolf2012quantum}.

\begin{proposition}[Irreducibility of Markovian CPTP maps]
\label{prop:IrreducibleMarkovCPTP}
Let $\mc{E}_t = e^{\mc{L}t}: \mbb{B}[\mc{H}] \to \mbb{B}[\mc{H}]$ with $t \geq 0$ be a family of CPTP maps generated by a GKSL superoperator $\mathcal{L}$ as defined in Eq.~\eqref{eq:GKSLwithK}.
Then the following statements are equivalent.
\begin{enumerate}
\item $\mc{E}_t$ is primitive for all $t>0$.
\item Any density operator $\hat{\varrho}_0 \in \mbb{B}[\mc{H}]$ that satisfies $\mc{L}[\hat{\varrho}_0]=0$ is positive definite.
\item There exists a unique steady state $\hat{\rho}_\mathrm{ss} \succ 0$ such that for any density operator $\hat{\rho} \in \mbb{B}[\mc{H}]$ we have $\lim_{t \to \infty} \mc{E}_t [\hat{\rho}] = \hat{\rho}_\mathrm{ss}$.
\item There exists no orthogonal projection operator $\hat{P} \in \mbb{B}[\mc{H}]$ such that $\hat{P} \notin \{ 0, \hat{\mbb{I}} \}$ and $\mc{L}[\hat{P}\mbb{B}[\mc{H}]\hat{P}] \subseteq \hat{P} \mbb{B}[\mc{H}] \hat{P}$.
\item There exists no orthogonal projection operator $\hat{P} \in \mbb{B}[\mc{H}]$ such that $\hat{P} \notin \{ 0, \hat{\mbb{I}} \}$ and $(\hat{\mbb{I}}-\hat{P}) \hat{L}_b \hat{P} = (\hat{\mbb{I}}-\hat{P}) \hat{K} \hat{P} = 0$.
\end{enumerate}
\end{proposition}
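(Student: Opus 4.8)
The plan is to treat statement (1) as the hub and connect everything to primitivity of the family $\mc{E}_t$, exploiting the already-established fact that for a GKSL-generated family irreducibility and primitivity coincide. The clean implication is (1) $\Rightarrow$ (4): if some nontrivial projection $\hat{P}$ satisfied $\mc{L}[\hat{P}\mbb{B}[\mc{H}]\hat{P}]\subseteq\hat{P}\mbb{B}[\mc{H}]\hat{P}$, then the subspace $\hat{P}\mbb{B}[\mc{H}]\hat{P}$ would be invariant under every power $\mc{L}^n$ and hence under $\mc{E}_t=e^{\mc{L}t}=\sum_n (t^n/n!)\mc{L}^n$; this makes each $\mc{E}_t$ reducible in the sense of (1) of Proposition~\ref{prop:irreducibility}, contradicting primitivity, which implies irreducibility. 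For (1) $\Leftrightarrow$ (3) I would invoke the Jordan analysis of $\mc{L}$: primitivity forces $\lambda=0$ to be the unique eigenvalue on the imaginary axis and nondegenerate, so $e^{\mc{L}t}$ converges as $t\to\infty$ to the spectral projection onto the one-dimensional kernel, i.e. $\hat{\rho}\mapsto\hat{\rho}_\mathrm{ss}$ with $\hat{\rho}_\mathrm{ss}\succ0$; conversely, existence of the uniform long-time limit is exactly the continuous-time analogue of (4) of Proposition~\ref{prop:primitivity} and returns primitivity.

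The computational core is (4) $\Leftrightarrow$ (5). Writing $\hat{Q}=\hat{\mbb{I}}-\hat{P}$ and $\mc{L}[\hat{X}]=\sum_b\hat{L}_b\hat{X}\hat{L}_b^\dagger-\hat{K}\hat{X}-\hat{X}\hat{K}^\dagger$, I would decompose $\mc{L}[\hat{X}]$ into its four corner blocks for $\hat{X}=\hat{P}\hat{X}\hat{P}$. The $\hat{Q}\,\cdot\,\hat{Q}$ block collapses to $\sum_b(\hat{Q}\hat{L}_b\hat{P})\hat{X}(\hat{Q}\hat{L}_b\hat{P})^\dagger$ because the $\hat{K}$-terms carry a factor $\hat{Q}\hat{P}=0$; evaluating at $\hat{X}=\hat{P}\succeq0$ and using positivity of a sum of positive semidefinite terms forces each $(\hat{\mbb{I}}-\hat{P})\hat{L}_b\hat{P}=0$. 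Feeding this back, the $\hat{Q}\,\cdot\,\hat{P}$ block reduces to $-(\hat{Q}\hat{K}\hat{P})\hat{X}$, whose vanishing for all $\hat{X}=\hat{P}\hat{X}\hat{P}$ is equivalent to $(\hat{\mbb{I}}-\hat{P})\hat{K}\hat{P}=0$; the $\hat{P}\,\cdot\,\hat{Q}$ block is its adjoint, and the $\hat{P}\,\cdot\,\hat{P}$ block automatically lands in $\hat{P}\mbb{B}[\mc{H}]\hat{P}$. Thus the two conditions in (5) hold if and only if all off-diagonal blocks vanish, which is precisely $\mc{L}[\hat{P}\mbb{B}[\mc{H}]\hat{P}]\subseteq\hat{P}\mbb{B}[\mc{H}]\hat{P}$, namely (4).

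To close the loop I would prove (4) $\Rightarrow$ (2) $\Rightarrow$ (1). For (4) $\Rightarrow$ (2) in contrapositive form, take a stationary $\hat{\varrho}_0$ with $\mc{L}[\hat{\varrho}_0]=0$ that is not positive definite and let $\hat{P}$ be its support projection, so $\hat{P}\neq0,\hat{\mbb{I}}$ and $\hat{\varrho}_0\hat{Q}=0$. Sandwiching $\mc{L}[\hat{\varrho}_0]=0$ by $\hat{Q}(\cdot)\hat{Q}$ gives $\sum_b(\hat{Q}\hat{L}_b\hat{\varrho}_0^{1/2})(\hat{Q}\hat{L}_b\hat{\varrho}_0^{1/2})^\dagger=0$, hence $\hat{Q}\hat{L}_b\hat{P}=0$; sandwiching by $\hat{Q}(\cdot)\hat{P}$ and using this leaves $\hat{Q}\hat{K}\hat{\varrho}_0=0$, hence $\hat{Q}\hat{K}\hat{P}=0$ since $\hat{\varrho}_0$ is full rank on $\hat{P}\mc{H}$. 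These are exactly (5), so (4) fails. For (2) $\Rightarrow$ (1), the convex-combination argument used in the proof of Theorem~\ref{thm:IrreducibilityBySpectrum} shows that (2) forces the stationary state to be unique and positive definite, so $\ker\mc{L}$ is one-dimensional; choosing $t_0$ with $e^{\lambda t_0}\neq1$ for every nonzero eigenvalue $\lambda$ of $\mc{L}$ makes $z=1$ a nondegenerate eigenvalue of $\mc{E}_{t_0}$ with positive-definite eigenvector, whence $\mc{E}_{t_0}$ is irreducible by Theorem~\ref{thm:IrreducibilityBySpectrum} and $\mc{E}_t$ is primitive for all $t$ by the irreducibility–primitivity equivalence. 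The main obstacle is precisely this passage between the generator $\mc{L}$ and the family $\mc{E}_t$: I must rule out spurious unit-modulus eigenvalues of $\mc{E}_{t_0}$ arising from purely imaginary eigenvalues of $\mc{L}$, which the generic choice of $t_0$ handles, and verify that the support projection of a stationary state genuinely reduces $\mc{L}$ itself rather than merely the associated $\mc{E}_t$.
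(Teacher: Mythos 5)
Your proposal is correct and follows essentially the same route as the paper: the $(4)\Leftrightarrow(5)$ block computation (reducing to $\sum_b \hat{Y}_b\hat{Y}_b^\dagger=0\Rightarrow\hat{Y}_b=0$ with $\hat{Y}_b=(\hat{\mbb{I}}-\hat{P})\hat{L}_b\hat{P}$), the exponential-series argument linking invariance under $\mc{L}$ to invariance under $\mc{E}_t$, the appeal to Proposition~\ref{prop:primitivity}(4) for $(1)\Leftrightarrow(3)$, and the convex-combination uniqueness argument for $(2)\Rightarrow(1)$ all match the paper's proof. The only reorganization is that you prove $(4)\Rightarrow(2)$ directly via the support projection of a singular stationary state (whereas the paper gets $(1)\Rightarrow(2)$ from the preceding irreducibility--primitivity discussion), but this uses the same sandwiching computation and closes the implication cycle just as well.
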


\begin{proof}
(1) $\Rightarrow$ (2) follows directly from the previous discussion.
To prove (2) $\Rightarrow$ (1), note that if every density operator $\hat{\varrho}_0 \in \mbb{B}[\mc{H}]$ satisfying $\mc{L}[\hat{\varrho}_0]=0$ is positive definite, it must also be unique; see the proof of Theorem~\ref{thm:IrreducibilityBySpectrum}.
Thus, $\mc{E}_t$ is irreducible and therefore primitive for all $t \geq 0$.
(1) $\Leftrightarrow$ (3) follows from (4) of Proposition~\ref{prop:primitivity} for primitive CPTP maps.

(1) $\Leftrightarrow$ (4): 
Suppose that $\mc{E}_t$ is not primitive. 
Then $\mc{E}_t$ is not irreducible, and there exists a nontrivial orthogonal projection operator $\hat{P} \notin \{ 0, \hat{\mbb{I}} \}$ such that for any $\hat{X} \in \hat{P} \mbb{B}[\mc{H}] \hat{P}$ we have $\mc{E}_t [\hat{X}] \in \hat{P} \mbb{B}[\mc{H}] \hat{P}$. 
Since $\mc{L}[\hat{X}] = \lim_{t \to 0^+} (\mc{E}_t[\hat{X}] -\hat{X})/t$, we also have $\mc{L}[\hat{X}] \in \hat{P} \mbb{B}[\mc{H}] \hat{P}$. 
Conversely, suppose that there exists a nontrivial $\hat{P}$ such that $\mc{L}[\hat{X}] \in \hat{P} \mbb{B}[\mc{H}] \hat{P}$ for any $\hat{X} \in \hat{P} \mbb{B}[\mc{H}] \hat{P}$. 
Then, we have $\mc{E}_t[\hat{X}] \in \hat{P} \mbb{B}[\mc{H}] \hat{P}$ since $\mc{E}_t[\hat{X}] = \hat{X} + \sum_{n=1}^\infty (t^n/n!) \mc{L}^n[\hat{X}]$.
This means that $\mc{E}_t$ is not irreducible and, therefore, is not primitive.

(5) $\Leftrightarrow$ (4): 
Suppose that there exists a nontrivial $\hat{P} \notin \{ 0, \hat{\mathbb{I}} \}$ such that $(\hat{\mbb{I}}-\hat{P}) \hat{L}_b \hat{P} = (\hat{\mbb{I}}-\hat{P}) \hat{K} \hat{P} =0$.
Then, we have $\hat{L}_b \hat{P} = \hat{P} \hat{L}_b \hat{P}$ and $\hat{K} \hat{P} = \hat{P} \hat{K} \hat{P}$.
This implies that for any $\hat{X} \in \mbb{B}[\mc{H}]$ we have $\mc{L}[\hat{P} \hat{X} \hat{P}] = \hat{P} \mc{L}[\hat{P} \hat{X} \hat{P}] \hat{P}$ and thus $\mc{L}[\hat{P} \mbb{B}[\mc{H}] \hat{P}] \subseteq \hat{P} \mbb{B}[\mc{H}] \hat{P}$. 
For the converse, suppose that there exists a nontrivial $\hat{P}$ such that $\mc{L}[\hat{P} \mbb{B}[\mc{H}] \hat{P}] \subseteq \hat{P} \mbb{B}[\mc{H}] \hat{P}$. 
Then, $\mc{L}[\hat{P}] \in \hat{P} \mbb{B}[\mc{H}] \hat{P}$ and thus
\begin{align} \label{eq:TriangularL}
(\hat{\mbb{I}} -\hat{P}) \mc{L}[\hat{P}] = \sum_b (\hat{\mbb{I}}-\hat{P}) \hat{L}_b \hat{P} \hat{L}_b^\dagger -(\hat{\mbb{I}} -\hat{P}) \hat{K} \hat{P} = 0.
\end{align}
Multiplying this equation by $\hat{\mbb{I}} -\hat{P}$ from the right yields $\sum_b \hat{Y}_b \hat{Y}_b^\dagger = 0$ with $\hat{Y}_b = (\hat{\mbb{I}} -\hat{P}) \hat{L}_b \hat{P}$.
Since $\hat{Y}_b \hat{Y}_b^\dagger \succeq 0$, this implies $\hat{Y}_b \hat{Y}_b^\dagger = 0$ and hence $\hat{Y}_b = 0$\footnote{
If $\hat{X} \hat{X}^\dagger = 0$ for $\hat{X} \in \mbb{B}[\mc{H}]$, then $\hat{X}=0$. 
\textit{Proof.} Since $\hat{X} \hat{X}^\dagger = 0$, $\Tr [\hat{X} \hat{X}^\dagger] = 0$.
Expanding in an orthonormal basis $\{ |i\rangle \}_{i=1}^d$, we have $\Tr[\hat{X} \hat{X}^\dagger] = \sum_{i,j} |\langle i | \hat{X} | j \rangle|^2$ = 0, which gives $\langle i | \hat{X} | j \rangle = 0$ and thus $\hat{X}=0$.
}.
Substituting $(\hat{\mbb{I}}-\hat{P}) \hat{L}_b \hat{P} =0$ back into Eq.~\eqref{eq:TriangularL} yields $(\hat{\mbb{I}} -\hat{P}) \hat{K} \hat{P}=0$.
\end{proof}

Early discussions on the irreducibility of Markovian CPTP maps can be found in Refs.~\cite{davies1970quantum, evans1977irreducible, frigerio1977quantum, frigerio1978stationary, spohn1980kinetic}.
Condition (5) in Proposition~\ref{prop:IrreducibleMarkovCPTP} was proven in Ref.~\cite{fagnola2002subharmonic}, covering even the cases of infinite-dimensional $\mc{H}$\footnote{The irreducibility of Markovian CPTP maps for infinite-dimensional Hilbert spaces has been summarized in Ref.~\cite{fagnola2025irreducibility}.}.
For finite-dimensional $\mc{H}$, condition (5) immediately yields the following algebraic condition for $\{ \hat{L}_b, \hat{K} \}$~\cite{wolf2012quantum}, whose applications have recently been discussed in Refs.~\cite{yoshida2024uniqueness, zhang2024criteria}.

\begin{theorem}[Wolf~\cite{wolf2012quantum}, Yoshida~\cite{yoshida2024uniqueness}, Zhang-Barthel~\cite{zhang2024criteria}]
\label{thm:YoshidaTheorem}
Let $\mc{E}_t = e^{\mc{L}t} : \mbb{B}[\mc{H}] \to \mbb{B}[\mc{H}]$ be a Markovian CPTP map generated by a GKSL superoperator $\mc{L}$ as defined in Eq.~\eqref{eq:GKSLwithK}.
Then $\mc{E}_t$ is irreducible if and only if the algebra generated by $\{ \hat{L}_b, \hat{K} \}$ coincides with $\mbb{B}[\mc{H}]$.
\end{theorem}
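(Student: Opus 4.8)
The plan is to deduce the statement from condition (5) of Proposition~\ref{prop:IrreducibleMarkovCPTP} together with Burnside's theorem on matrix algebras (Appendix~\ref{app:BurnsideTheorem}), in close analogy with the proof of (2) $\Leftrightarrow$ (3) in Theorem~\ref{thm:IrreducibilityByKraus}. Recall that, by that proposition, $\mc{E}_t$ \emph{fails} to be irreducible precisely when there exists a nontrivial orthogonal projection $\hat{P} \notin \{0,\hat{\mbb{I}}\}$ with $(\hat{\mbb{I}}-\hat{P})\hat{L}_b\hat{P} = (\hat{\mbb{I}}-\hat{P})\hat{K}\hat{P} = 0$ for all $b$. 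My first step is therefore to reinterpret this algebraic condition geometrically, since all the analytic content (the equivalence of primitivity with condition (5)) has already been established.

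Next I would show that such projections are in one-to-one correspondence with nontrivial common invariant subspaces of the generators. Writing $\mc{V} = \mathrm{ran}(\hat{P})$, the identity $\hat{P}\ket{v} = \ket{v}$ for $\ket{v}\in\mc{V}$ turns $(\hat{\mbb{I}}-\hat{P})\hat{L}_b\hat{P} = 0$ into $\hat{L}_b\ket{v} \in \ker(\hat{\mbb{I}}-\hat{P}) = \mc{V}$, i.e. $\hat{L}_b\mc{V}\subseteq\mc{V}$, and likewise $\hat{K}\mc{V}\subseteq\mc{V}$; conversely, the orthogonal projection onto any subspace invariant under all $\hat{L}_b$ and $\hat{K}$ satisfies the two equalities. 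Hence a nontrivial $\hat{P}$ of the required type exists if and only if $\{\hat{L}_b, \hat{K}\}$ share a common nontrivial invariant subspace of $\mc{H}$.

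Then I would pass from the generators to the algebra $\mc{A}$ they generate (taken unital, to match the convention used for $\mbb{K}$ in Theorem~\ref{thm:IrreducibilityByKraus}). A subspace is invariant under $\mc{A}$ if and only if it is invariant under each generator, since invariance is preserved under sums, scalar multiples, and products; thus $\{\hat{L}_b, \hat{K}\}$ admit a common nontrivial invariant subspace exactly when $\mc{A}$ acts reducibly on $\mc{H}$. By Burnside's theorem, an algebra $\mc{A}\subseteq\mbb{B}[\mc{H}]$ acts irreducibly on the finite-dimensional complex space $\mc{H}$ if and only if $\mc{A} = \mbb{B}[\mc{H}]$. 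Chaining the equivalences gives the claim: $\mc{E}_t$ is irreducible $\Leftrightarrow$ no nontrivial projection of condition (5) exists $\Leftrightarrow$ $\{\hat{L}_b,\hat{K}\}$ have no common nontrivial invariant subspace $\Leftrightarrow$ $\mc{A}$ acts irreducibly $\Leftrightarrow$ $\mc{A} = \mbb{B}[\mc{H}]$.

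The bulk of the real work is already absorbed into Proposition~\ref{prop:IrreducibleMarkovCPTP}, so the only genuine obstacle is the faithful translation of condition (5) into invariant-subspace language and the careful invocation of Burnside. In particular I must check that invariance under the two families $\{\hat{L}_b\}$ and $\{\hat{K}\}$ is genuinely equivalent to invariance under the full generated algebra, and confirm that the unital convention for $\mc{A}$ does not affect the outcome: adjoining $\hat{\mbb{I}}$ creates no new invariant-subspace constraints, so reducibility of $\mc{A}$ and the existence of a common invariant subspace of the generators coincide either way. No genuinely hard estimate is needed here; the subtlety is purely bookkeeping.
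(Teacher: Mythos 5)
Your proposal is correct and follows essentially the same route as the paper's proof: both reduce the statement to condition (5) of Proposition~\ref{prop:IrreducibleMarkovCPTP}, reinterpret the absence of a nontrivial projection $\hat{P}$ with $(\hat{\mbb{I}}-\hat{P})\hat{L}_b\hat{P}=(\hat{\mbb{I}}-\hat{P})\hat{K}\hat{P}=0$ as the absence of a common nontrivial invariant subspace of $\{\hat{L}_b,\hat{K}\}$ (equivalently, irreducibility of the generated algebra), and then invoke Burnside's theorem. Your extra remark about the unital convention being harmless is a minor bookkeeping point that the paper simply omits.
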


\begin{proof}
Suppose that $\mc{E}_t$ is irreducible in the sense of (5) in Proposition~\ref{prop:IrreducibleMarkovCPTP}; that is, there exists no nontrivial orthogonal projection operator $\hat{P} \notin \{ 0, \hat{\mathbb{I}} \}$ such that $(\hat{\mbb{I}}-\hat{P}) \hat{L}_b \hat{P} = (\hat{\mbb{I}}-\hat{P}) \hat{K} \hat{P} =0$.
This equivalently means that $\hat{P} \mc{H}$ is a subspace of $\mc{H}$ invariant under the actions of $\hat{L}_b$ and $\hat{K}$, which can only be $\{0\} $ or $\mc{H}$ itself.
We now consider the algebra $\mbb{A}$ (a subset of $\mbb{B}[\mc{H}]$ closed under scalar multiplication, addition, and multiplication) generated by $\{ \hat{L}_b, \hat{K} \}$.
Then, the only subspaces of $\mc{H}$ invariant under the action of $\mbb{A}$ are $\{0\}$ and $\mc{H}$.
This means that $\mbb{A}$ is irreducible, and by Burnside's theorem on matrix algebras (see Appendix~\ref{app:BurnsideTheorem}), this is equivalent to $\mbb{A} = \mbb{B}[\mc{H}]$.
\end{proof}

Reference~\cite{jaksic2014entropic} derived a similar sufficient condition, which states that $\mc{E}_t$ is irreducible if the co GKSL superoperator $\mc{L}$ is written with an irreducible CP map $\mc{T}$ in the form of $\mc{L}[\hat{X}] = -i[\hat{H}, \hat{X}] -\frac{1}{2} \{ \hat{X}, \mc{T}^\dagger [\hat{\mbb{I}}] \} +\mc{T}[\hat{X}]$.
As the CP map $\mc{T}$ admits the Kraus representation $\mc{T}[\hat{X}] = \sum_b \hat{L}_b \hat{X} \hat{L}_b^\dagger$, this amounts to the requirement that the algebra generated by $\{ \hat{L}_b \}$ coincides with $\mbb{B}[\mc{H}]$, which satisfies the condition required by Theorem~\ref{thm:YoshidaTheorem}.

We can also derive other known sufficient conditions for the existence of a unique steady state.

\begin{theorem}[Wolf~\cite{wolf2012quantum}]
\label{thm:Kossakowski}
Let $\mc{E}_t = e^{\mc{L}t} : \mbb{B}[\mc{H}] \to \mbb{B}[\mc{H}]$ be a Markovian CPTP map generated by a GKSL superoperator $\mc{L}$ as defined in Eq.~\eqref{kmat}.
Then $\mc{E}_t$ is irreducible if the Kossakowski matrix $C$ satisfies
\begin{align}
\textrm{rank}(C) > d^2-d.
\end{align}
\end{theorem}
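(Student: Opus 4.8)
The plan is to connect the rank hypothesis to the irreducibility criterion of Proposition~\ref{prop:IrreducibleMarkovCPTP}, in particular its condition (5), and then to rule out any nontrivial invariant subspace by a dimension count on the span of the jump operators.

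First I would put the generator into a standard Lindblad form. Diagonalizing the positive semidefinite Kossakowski matrix $C = B^\dagger B$ as in the passage preceding Eq.~\eqref{kmat} produces jump operators $\hat{L}_b = \sum_{k=1}^{d^2-1} B_{b,k} \hat{F}_k$ that are traceless, since $\{\hat{F}_k\}$ is a basis of traceless operators. Because the $\hat{F}_k$ are linearly independent, the dimension of $\textrm{span}\{\hat{L}_b\}$ equals $\textrm{rank}(B) = \textrm{rank}(C) =: r$, and this span is contained in the $(d^2-1)$-dimensional space of traceless operators in $\mbb{B}[\mc{H}]$.

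Next I would argue by contradiction, assuming $\mc{E}_t$ is not irreducible. By condition (5) of Proposition~\ref{prop:IrreducibleMarkovCPTP}, there is a nontrivial orthogonal projection $\hat{P} \notin \{0, \hat{\mbb{I}}\}$ with $(\hat{\mbb{I}}-\hat{P})\hat{L}_b\hat{P} = 0$ for every $b$; equivalently, every $\hat{L}_b$ leaves $\hat{P}\mc{H}$ invariant and so is block upper triangular with respect to $\mc{H} = \hat{P}\mc{H} \oplus (\hat{\mbb{I}}-\hat{P})\mc{H}$. Writing $p = \textrm{rank}(\hat{P})$ with $1 \le p \le d-1$, the operators whose lower-left block vanishes form a subspace of dimension $d^2 - p(d-p)$, and intersecting with the traceless operators (on which the trace is a nonzero functional, since the identity lies in the block-upper-triangular subspace) removes one further dimension, giving $d^2 - p(d-p) - 1$. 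Since $p(d-p) \ge d-1$ for all $1 \le p \le d-1$, with equality at $p=1$ and $p=d-1$, every such subspace has dimension at most $d^2-d$. Hence $r = \dim \textrm{span}\{\hat{L}_b\} \le d^2 - d$, contradicting the hypothesis $\textrm{rank}(C) > d^2 - d$. Therefore no such $\hat{P}$ exists and $\mc{E}_t$ is irreducible.

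The main thing to get right is the bookkeeping in the dimension count, and in particular the extra $-1$: it is available precisely because the Lindblad normalization makes the $\hat{L}_b$ traceless, so their span misses the identity that the block-upper-triangular space necessarily contains. A secondary point worth stating carefully is that $\dim\textrm{span}\{\hat{L}_b\} = \textrm{rank}(C)$ (and not merely bounded by the number of jump terms), and that only the invariance condition on the $\hat{L}_b$ is needed—the companion condition on $\hat{K}$ in (5) of Proposition~\ref{prop:IrreducibleMarkovCPTP} plays no role. One could equivalently phrase the conclusion through Burnside's theorem (Appendix~\ref{app:BurnsideTheorem}): the bound shows $\{\hat{L}_b\}$ admits no common nontrivial invariant subspace, so the algebra it generates equals $\mbb{B}[\mc{H}]$, which yields irreducibility via Theorem~\ref{thm:YoshidaTheorem}.
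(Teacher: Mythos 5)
Your proof is correct and follows essentially the same route as the paper: contraposition via condition (5) of Proposition~\ref{prop:IrreducibleMarkovCPTP}, the block upper-triangular form of the $\hat{L}_b$, and the dimension count $d^2 - 1 - p(d-p) \leq d^2 - d$ on the traceless block-upper-triangular subspace, including the crucial extra $-1$ from tracelessness. The identification $\dim\mathrm{span}\{\hat{L}_b\} = \mathrm{rank}(B) = \mathrm{rank}(C)$ and the observation that the condition on $\hat{K}$ is not needed match the paper's argument as well.
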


\begin{proof}
We proceed by contraposition. 
Suppose that $\mc{E}_t$ is not irreducible. 
Then, according to (5) of Proposition~\ref{prop:IrreducibleMarkovCPTP}, there exists a nontrivial orthogonal projection operator $\hat{P} \notin \{ 0, \hat{\mbb{I}} \}$ such that $(\hat{\mbb{I}}-\hat{P}) \hat{L}_b \hat{P} = (\hat{\mbb{I}}-\hat{P}) \hat{K} \hat{P} = 0$.
This implies that $\hat{L}_b$ and $\hat{K}$ assume a block upper-triangular form in an appropriate basis\footnote{
Let $r = \textrm{rank}(\hat{P})$.
We can choose a basis such that $\hat{P}$ has a block-diagonal form $\hat{P} = \begin{pmatrix} \mbb{I}_r & 0 \\ 0 & 0 \end{pmatrix}$ with $\mbb{I}_r$ being the $r \times r$ identity matrix.
In this basis, $\hat{L}_b$ must have a block upper-triangular form $\hat{L}_b = \begin{pmatrix} L_{AA} & L_{AB} \\ 0 & L_{BB} \end{pmatrix}$ to satisfy $(\hat{\mbb{I}}-\hat{P}) \hat{L}_b \hat{P} = 0$.
}.
The subspace of traceless operators that are block upper-triangular with respect to a projection of rank $r$ has dimension $d^2-1 - r(d-r)$.
Since $1 \leq r \leq d-1$, the minimum number of zero entries in the lower-left block is $d-1$, which implies that the dimension of this subspace is at most $d^2-1 - (d-1) = d^2-d$.
Consequently, the operators $\{ \hat{L}_b \}$ span a space of dimension at most $d^2-d$, which implies $\textrm{rank}(C) \leq d^2-d$.
\end{proof}

In Ref.~\cite{spohn1976approach}, Spohn proved that $\mc{E}_t$ has a unique steady state if $\dim (\textrm{ker}(C)) < d/2$. 
Since the latter condition is equivalent to $\textrm{rank}(C) > d^2 -d/2-1$, Theorem~\ref{thm:Kossakowski} gives a more general condition for the irreducibility of $\mc{E}_t$.

Another condition for the existence of a unique steady state of $\mc{E}_t$ was first derived by Spohn~\cite{spohn1977algebraic} and subsequently extended to the cases of infinite-dimensional $\mc{H}$ by Frigerio~\cite{frigerio1978stationary}.
To state Spohn's theorem for finite-dimensional $\mc{H}$, we need to introduce several notions regarding sets of bounded operators.
A set $\mbb{A} \subseteq \mbb{B}[\mc{H}]$ is said to be self-adjoint if for every $\hat{X} \in \mbb{A}$, $\hat{X}^\dagger$ is also an element of $\mbb{A}$.
The set $\mbb{A}' \subseteq \mbb{B}[\mc{H}]$ denotes the commutant of $\mbb{A}$, which is defined by
\begin{align}
\mbb{A}' := \{ \hat{X} \in \mbb{B}[\mc{H}] : [\hat{X}, \hat{A}]=0 \ \textrm{for any} \ \hat{A} \in \mbb{A} \}.
\end{align}
We are now ready to state Spohn's theorem.

\begin{theorem}[Spohn~\cite{spohn1977algebraic}]
\label{thm:SpohnTheorem}
Let $\mc{E}_t = e^{\mc{L}t} : \mbb{B}[\mc{H}] \to \mbb{B}[\mc{H}]$ be a Markovian CPTP map generated by a GKSL superoperator as defined in Eq.~\eqref{eq:GKSLSuperop}. 
Then $\mc{E}_t$ is irreducible if the complex linear span of $\{ \hat{L}_b \}$ is a self-adjoint set and $\{ \hat{L}_b \}' = \mbb{C} \hat{\mbb{I}}$. 
\end{theorem}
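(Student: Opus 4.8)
The plan is to reduce Spohn's criterion to the algebraic characterization of irreducibility already established in Theorem~\ref{thm:YoshidaTheorem}, which states that $\mc{E}_t$ is irreducible if and only if the algebra generated by $\{\hat{L}_b, \hat{K}\}$ coincides with $\mbb{B}[\mc{H}]$. In fact, I would prove the stronger statement that the algebra $\mbb{A}$ generated by $\{\hat{L}_b\}$ and $\hat{\mbb{I}}$ alone already equals $\mbb{B}[\mc{H}]$. Since $\hat{K} \in \mbb{B}[\mc{H}]$, adjoining it to the generators cannot enlarge the algebra beyond $\mbb{B}[\mc{H}]$, so the algebra generated by $\{\hat{L}_b, \hat{K}\}$ is also $\mbb{B}[\mc{H}]$ and the hypothesis of Theorem~\ref{thm:YoshidaTheorem} is met.

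First I would observe that the self-adjointness of the complex linear span of $\{\hat{L}_b\}$ makes $\mbb{A}$ a $*$-algebra: since each $\hat{L}_b^\dagger$ lies in the span of $\{\hat{L}_c\}_c$, it is contained in $\mbb{A}$, and hence $\mbb{A}$ is closed under taking adjoints. Next I would invoke Burnside's theorem (Appendix~\ref{app:BurnsideTheorem}), according to which $\mbb{A} = \mbb{B}[\mc{H}]$ is equivalent to $\mbb{A}$ being irreducible as an algebra, i.e., to the only subspaces of $\mc{H}$ invariant under $\mbb{A}$ being $\{0\}$ and $\mc{H}$.

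The heart of the argument is then to show that the commutant condition $\{\hat{L}_b\}' = \mbb{C}\hat{\mbb{I}}$ forces this algebraic irreducibility. Suppose $W \subseteq \mc{H}$ is a nonzero subspace invariant under $\mbb{A}$, equivalently invariant under every $\hat{L}_b$. Because $\mbb{A}$ is a $*$-algebra, $W$ is also invariant under every $\hat{L}_b^\dagger$, which in turn implies that the orthogonal complement $W^\perp$ is invariant under every $\hat{L}_b$. With both $W$ and $W^\perp$ invariant, the orthogonal projection $\hat{P}_W$ onto $W$ commutes with each $\hat{L}_b$, so $\hat{P}_W \in \{\hat{L}_b\}' = \mbb{C}\hat{\mbb{I}}$. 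Since a projection that is a scalar multiple of $\hat{\mbb{I}}$ must be $0$ or $\hat{\mbb{I}}$, $W$ is either $\{0\}$ or $\mc{H}$. Thus $\mbb{A}$ has no nontrivial invariant subspace, so $\mbb{A} = \mbb{B}[\mc{H}]$ by Burnside's theorem, and irreducibility of $\mc{E}_t$ follows from Theorem~\ref{thm:YoshidaTheorem}.

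The main obstacle---really the only nontrivial point---is the passage from invariance of $W$ to invariance of $W^\perp$, which is exactly where the self-adjointness hypothesis enters and cannot be dropped. Without it, an invariant subspace $W$ need not yield a commuting projection (only the block-triangular structure of condition (5) in Proposition~\ref{prop:IrreducibleMarkovCPTP}), and the commutant condition alone would be insufficient to conclude $\mbb{A} = \mbb{B}[\mc{H}]$.
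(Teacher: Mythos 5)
Your proof is correct and follows essentially the same route as the paper: reduce to Theorem~\ref{thm:YoshidaTheorem} by showing the algebra generated by $\{\hat{L}_b\}$ is already all of $\mbb{B}[\mc{H}]$, via Burnside's theorem together with the self-adjointness and commutant hypotheses. The only cosmetic difference is that you unfold the invariant-subspace/commuting-projection argument inline, whereas the paper packages exactly that step as Schur's lemma (Appendix~\ref{app:SchurLemma}).
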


\begin{proof}
If an algebra $\mbb{A} \subseteq \mbb{B}[\mc{H}]$ is self-adjoint, then by Schur's lemma (see Appendix~\ref{app:SchurLemma}), $\mbb{A}' = \mbb{C} \hat{\mbb{I}}$ if and only if $\mbb{A}$ is irreducible.
By Burnside's theorem on matrix algebras (see Appendix~\ref{app:BurnsideTheorem}), $\mbb{A}$ is irreducible if and only if $\mbb{A} = \mbb{B}[\mc{H}]$.
Now, let $\mbb{A}$ be the algebra generated by $\{ \hat{L}_b \}$.
Since the complex linear span of $\{ \hat{L}_b \}$ is a self-adjoint set, $\mbb{A}$ is also a self-adjoint set\footnote{
The complex linear span of $\{ \hat{L}_b \}$ is a self-adjoint set if for every $\hat{X} = \sum_b x_b \hat{L}_b$ with $x_b \in \mbb{C}$, there exist $y_b \in \mbb{C}$ such that $\hat{X}^\dagger = \sum_b y_b \hat{L}_b$.
Thus, there exist $y_{bb'} \in \mbb{C}$ such that $\hat{L}_b^\dagger = \sum_{b'} y_{bb'} \hat{L}_b$ for every $\hat{L}_b$.
Any element $\hat{A} \in \mbb{A}$ can be written as $\hat{A} = \sum_{b_1,\cdots,b_n} a_{b_1,\cdots,b_n} \hat{L}_{b_1} \cdots \hat{L}_{b_n}$.
Then, we have $\hat{A}^\dagger = \sum_{b_1,\cdots,b_n} a_{b_1,\cdots,b_n}^* \hat{L}_{b_n}^\dagger \cdots \hat{L}_{b_1}^\dagger = \sum_{b_1,\cdots,b_n} \sum_{b'_1,\cdots,b'_n} a_{b_1,\cdots,b_n}^* y_{b_1 b'_1} \cdots y_{b_n b'_n} \hat{L}_{b'_n} \cdots \hat{L}_{b'_1} \in \mbb{A}$.
Hence, $\mbb{A}$ is a self-adjoint set.
}.
Furthermore, since $\{ \hat{L}_b \}' = \mbb{C} \hat{\mbb{I}}$, we have $\mbb{A}' = \mbb{C} \hat{\mbb{I}}$, and thus $\mbb{A} = \mbb{B}[\mc{H}]$. 
Then the algebra generated by $\{ \hat{L}_b, \hat{K} \}$ coincides with the whole space of $\mbb{B}[\mc{H}]$ since it obviously contains $\mbb{A}$ as a subalgebra.
By Theorem~\ref{thm:YoshidaTheorem}, this implies that $\mc{E}_t$ is irreducible.
\end{proof}

Spohn's theorem provides a sufficient condition for Theorem~\ref{thm:YoshidaTheorem} to hold.
As implied by its proof, this theorem can be generalized by replacing $\{ \hat{L}_b \}$ with $\{ \hat{L}_b, \hat{K} \}$~\cite{zhang2024criteria}.
This offers a useful guiding principle for finding open many-body quantum systems with a unique steady state. 
For instance, consider a nonintegrable (or ``chaotic'') Hamiltonian, such as a mixed-field Ising chain, 
\begin{align}
\hat{H} = -J \sum_{l=1}^V \hat{\sigma}^z_l \hat{\sigma}^z_{l+1} - h_x \sum_{l=1}^V \hat{\sigma}^x_l -h_z \sum_{l=1}^V \hat{\sigma}^z_l,
\end{align}
whose only local conserved quantity is energy~\cite{banuls2011strong, chiba2024proof}.
In such cases, adding a self-adjoint set of jump operators $\{ \hat{L}_b \}$ with $[\hat{H},\hat{L}_b]\neq0$ is expected to yield $\{ \hat{L}_b, \hat{K} \}' = \mbb{C} \hat{\mbb{I}}$ in general.
Of course, this approach is far from rigorous, and one may need to employ other criteria, such as Theorem~\ref{thm:YoshidaTheorem}, to rigorously establish the uniqueness of a steady state (see also Ref.~\cite{seltmann2025uniqueness}).

Frigerio derived a similar sufficient condition for the uniqueness of steady states~\cite{frigerio1977quantum}.
While this condition extends to the cases of infinite-dimensional $\mc{H}$, it relies on a nontrivial assumption regarding the existence of a full-rank stationary state.
Here, we provide its finite-dimensional version.

\begin{theorem}[Frigerio~\cite{frigerio1977quantum}]
\label{thm:FrigerioTheorem}
Let $\mc{E}_t = e^{\mc{L}t}: \mbb{B}[\mc{H}] \to \mbb{B}[\mc{H}]$ be a Markovian CPTP map generated by a GKSL superoperator as defined in Eq.~\eqref{eq:GKSLSuperop}. 
Then $\mc{E}_t$ is irreducible if $\{ \hat{L}_b, \hat{L}_b^\dagger, \hat{H} \}' = \mbb{C} \hat{\mbb{I}}$ and there exists a positive definite density operator $\hat{\varrho}_0 \in \mbb{B}[\mc{H}]$ such that $\mc{L}[\hat{\varrho}_0]=0$.
\end{theorem}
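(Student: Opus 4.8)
\section*{Proof proposal}

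The plan is to argue by contraposition through the projection criterion, namely condition (5) of Proposition~\ref{prop:IrreducibleMarkovCPTP}. I assume $\mc{E}_t$ is \emph{not} irreducible and produce a nontrivial orthogonal projection $\hat{P} \notin \{0, \hat{\mbb{I}}\}$ satisfying $(\hat{\mbb{I}}-\hat{P})\hat{L}_b\hat{P} = (\hat{\mbb{I}}-\hat{P})\hat{K}\hat{P} = 0$, where $\hat{K} = i\hat{H} + \frac{1}{2}\sum_b \hat{L}_b^\dagger \hat{L}_b$ as in Eq.~\eqref{eq:GKSLwithK}. Writing $\hat{Q} = \hat{\mbb{I}} - \hat{P}$, these relations say that $\hat{L}_b$ and $\hat{K}$ are block upper-triangular with respect to $\mc{H} = \hat{P}\mc{H} \oplus \hat{Q}\mc{H}$, i.e. $\hat{Q}\hat{L}_b\hat{P} = 0$ and $\hat{Q}\hat{K}\hat{P} = 0$. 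The whole point is to upgrade this triangular structure to a \emph{block-diagonal} one — equivalently, to show that $\hat{P}$ commutes with every $\hat{L}_b$, $\hat{L}_b^\dagger$, and $\hat{H}$ — which by the hypothesis $\{\hat{L}_b, \hat{L}_b^\dagger, \hat{H}\}' = \mbb{C}\hat{\mbb{I}}$ forces $\hat{P} \in \{0, \hat{\mbb{I}}\}$ and yields the contradiction. The faithful stationary state $\hat{\varrho}_0 \succ 0$ is precisely the tool that promotes ``subharmonic'' to ``harmonic.''

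The key step is to compress the stationarity equation to the complementary block. From $\mc{L}[\hat{\varrho}_0] = 0$ in the form of Eq.~\eqref{eq:GKSLwithK} one has $\sum_b \hat{L}_b\hat{\varrho}_0\hat{L}_b^\dagger = \hat{K}\hat{\varrho}_0 + \hat{\varrho}_0\hat{K}^\dagger$. I would sandwich this identity between $\hat{Q}$ on both sides. Setting $B_b = \hat{P}\hat{L}_b\hat{Q}$, $D_b = \hat{Q}\hat{L}_b\hat{Q}$, and $\rho_{22} = \hat{Q}\hat{\varrho}_0\hat{Q}$, the triangular vanishing $\hat{Q}\hat{L}_b\hat{P} = 0$ collapses the left-hand side to $\sum_b D_b\rho_{22}D_b^\dagger$ and the right-hand side to $\hat{Q}(\hat{K}+\hat{K}^\dagger)\hat{Q}$ compressed against $\rho_{22}$. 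Taking the trace over $\hat{Q}\mc{H}$ and using $\hat{K}+\hat{K}^\dagger = \sum_b \hat{L}_b^\dagger\hat{L}_b$ together with $\hat{Q}\hat{L}_b^\dagger\hat{L}_b\hat{Q} = B_b^\dagger B_b + D_b^\dagger D_b$ (the cross terms drop out because $\hat{P}\hat{Q}=0$), the $D_b^\dagger D_b$ contributions cancel and I am left with the clean identity $\sum_b \Tr[B_b^\dagger B_b\,\rho_{22}] = 0$.

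From here the conclusion is immediate. Since $\hat{\varrho}_0 \succ 0$, its compression $\rho_{22} = \hat{Q}\hat{\varrho}_0\hat{Q}$ is positive definite on $\hat{Q}\mc{H}$, while each $B_b^\dagger B_b \succeq 0$; a sum of nonnegative traces against a positive-definite weight vanishes only if every $B_b^\dagger B_b = 0$, hence $B_b = \hat{P}\hat{L}_b\hat{Q} = 0$. Combined with $\hat{Q}\hat{L}_b\hat{P} = 0$ this gives $[\hat{P}, \hat{L}_b] = 0$ and, by adjoining, $[\hat{P}, \hat{L}_b^\dagger] = 0$; consequently $\hat{P}$ commutes with $\sum_b \hat{L}_b^\dagger\hat{L}_b$, and feeding this back into $\hat{Q}\hat{K}\hat{P} = 0$ isolates $\hat{Q}\hat{H}\hat{P} = 0$, whence $[\hat{P}, \hat{H}] = 0$ by Hermiticity. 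Thus $\hat{P} \in \{\hat{L}_b, \hat{L}_b^\dagger, \hat{H}\}' = \mbb{C}\hat{\mbb{I}}$, contradicting $\hat{P} \notin \{0, \hat{\mbb{I}}\}$.

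I expect the main obstacle to be the compression step itself: one must use the \emph{complement} block $\hat{Q}\hat{\varrho}_0\hat{Q}$ of stationarity (not the $\hat{P}$-block), and track carefully how the triangular vanishing $\hat{Q}\hat{L}_b\hat{P} = 0$ lets the off-diagonal piece $B_b$ survive inside $\hat{K}+\hat{K}^\dagger$ while dropping out of $\sum_b D_b\rho_{22}D_b^\dagger$; this asymmetry is exactly what produces the single surviving term $\Tr[B_b^\dagger B_b\,\rho_{22}]$. The role of full rank is essential precisely here — without $\hat{\varrho}_0 \succ 0$ the compression $\rho_{22}$ could be singular and would fail to force $B_b = 0$, which is why the theorem must assume a \emph{faithful} stationary state rather than merely a stationary one. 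One could alternatively route the argument through Theorem~\ref{thm:YoshidaTheorem}, showing that the algebra generated by $\{\hat{L}_b, \hat{K}\}$ is all of $\mbb{B}[\mc{H}]$, but the projection argument above seems the most direct.
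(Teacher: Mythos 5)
Your proposal is correct, and it reaches the conclusion by a genuinely different route from the paper's proof. You argue by contraposition in the Schr\"odinger picture: the failure of irreducibility hands you a block-upper-triangular structure for $\{\hat{L}_b, \hat{K}\}$ via condition (5) of Proposition~\ref{prop:IrreducibleMarkovCPTP}, and a single trace of the $\hat{Q}$-compressed stationarity equation $\sum_b \hat{L}_b \hat{\varrho}_0 \hat{L}_b^\dagger = \hat{K}\hat{\varrho}_0 + \hat{\varrho}_0 \hat{K}^\dagger$ cancels the diagonal blocks and isolates $\sum_b \Tr[B_b^\dagger B_b\, \rho_{22}] = 0$, which the faithfulness of $\hat{\varrho}_0$ converts into $B_b = 0$ and hence full block-diagonality of $\hat{L}_b$, $\hat{L}_b^\dagger$, and (after feeding back into $\hat{Q}\hat{K}\hat{P}=0$) of $\hat{H}$. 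I checked the bookkeeping — the identity $\hat{Q}\hat{L}_b^\dagger\hat{L}_b\hat{Q} = B_b^\dagger B_b + D_b^\dagger D_b$, the cancellation of the $D_b^\dagger D_b \rho_{22}$ traces against the left-hand side, and the positive definiteness of $\hat{Q}\hat{\varrho}_0\hat{Q}$ on $\hat{Q}\mc{H}$ — and it all goes through. The paper instead works in the Heisenberg picture: it proves the stronger intermediate statement $\ker(\mc{L}^\dagger) = \{\hat{L}_b, \hat{L}_b^\dagger, \hat{H}\}'$, using the Kadison--Schwarz inequality for the CP unital map $\mc{E}_t^\dagger$ together with the dissipation identity $\sum_b [\hat{A},\hat{L}_b]^\dagger[\hat{A},\hat{L}_b] = \mc{L}^\dagger[\hat{A}^\dagger\hat{A}] - \hat{A}^\dagger\mc{L}^\dagger[\hat{A}] - \mc{L}^\dagger[\hat{A}^\dagger]\hat{A}$, and then concludes via condition (2) of the same proposition. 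The trade-off: the paper's route characterizes the entire fixed-point algebra of the Heisenberg dynamics (which it reuses in the closing remark of Sec.~\ref{sec:SteadyStateGKSL} about degenerate zero eigenvalues of $\mc{L}^\dagger$), whereas your route is more elementary — no Schwarz inequality needed, only block decompositions and a trace — and establishes exactly the projection statement required. Both proofs deploy the faithful stationary state in the same essential role, as the weight that forces a nonnegative quantity to vanish identically, and your closing observation that dropping $\hat{\varrho}_0 \succ 0$ would let $\rho_{22}$ be singular correctly identifies why the theorem cannot dispense with that hypothesis.
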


\begin{proof}
The proof is based on Ref.~\cite{wolf2012quantum}. 
We first show that if there exists a positive definite density operator $\hat{\varrho}_0 \succ 0$ such that $\mc{L}[\hat{\varrho}_0]=0$, we have $\{ \hat{L}_b, \hat{L}_b^\dagger, \hat{H} \}' = \ker(\mc{L}^\dagger) := \{ \hat{A} \in \mbb{B}[\mc{H}] : \mc{L}^\dagger[\hat{A}] = 0 \}$, where $\mc{L}^\dagger: \mbb{B}[\mc{H}] \to \mbb{B}[\mc{H}]$ is the dual map of $\mc{L}$ governing the Heisenberg evolution.
It is easy to show that $\{ \hat{L}_b, \hat{L}_b^\dagger, \hat{H} \}' \subseteq \ker(\mc{L}^\dagger)$; for any $\hat{X} \in \{ \hat{L}_b, \hat{L}_b^\dagger, \hat{H} \}'$, we have 
\begin{align}
\mc{L}^\dagger[\hat{X}] = i[\hat{H}, \hat{X}] + \sum_b \left( \hat{L}_b^\dagger \hat{X} \hat{L}_b -\frac{1}{2} \{ \hat{X}, \hat{L}_b^\dagger \hat{L}_b \} \right) = \sum_b \left( \hat{X} \hat{L}_b^\dagger \hat{L}_b -\hat{X} \hat{L}_b^\dagger \hat{L}_b \right) = 0.
\end{align}
To prove the reverse inclusion $\ker(\mc{L}^\dagger) \subseteq \{ \hat{L}_b, \hat{L}_b^\dagger, \hat{H} \}'$, note that $\mc{E}_t^\dagger = e^{\mc{L}^\dagger t}$ is a CP unital map and thus satisfies the Kadison-Schwarz inequality $\mc{E}_t^\dagger[\hat{A}^\dagger \hat{A}] \succeq \mc{E}_t^\dagger[\hat{A}^\dagger] \mc{E}_t^\dagger[\hat{A}]$ for $\hat{A} \in \mbb{B}[\mc{H}]$ (see Appendix~\ref{app:SchwarzMap}).
Let $\hat{A} \in \ker(\mc{L}^\dagger)$. 
Using $\mc{E}_t^\dagger[\hat{A}] = \hat{A}$, $\mc{E}_t^\dagger[\hat{A}^\dagger] = \hat{A}^\dagger$, and $\mc{E}_t[\hat{\varrho}_0] = \hat{\varrho}_0$, we find
\begin{align}
0 \leq \Tr[(\mc{E}_t^\dagger[\hat{A}^\dagger \hat{A}] - \mc{E}_t^\dagger[\hat{A}^\dagger] \mc{E}_t^\dagger[\hat{A}]) \hat{\varrho}_0] = \Tr[\hat{A}^\dagger \hat{A} \mc{E}_t[\hat{\varrho}_0] - \hat{A}^\dagger \hat{A} \hat{\varrho}_0] = 0.
\end{align}
Since $\hat{\varrho}_0 \succ 0$, we must have the equality $\mc{E}_t^\dagger[\hat{A}^\dagger \hat{A}] = \mc{E}_t^\dagger[\hat{A}^\dagger] \mc{E}_t^\dagger[\hat{A}] = \hat{A}^\dagger \hat{A}$ and thus $\hat{A}^\dagger \hat{A} \in \ker(\mc{L}^\dagger)$.
We then find
\begin{align}
\sum_b [\hat{A}, \hat{L}_b]^\dagger [\hat{A}, \hat{L}_b] 
&= \sum_b \left( \hat{L}_b^\dagger \hat{A}^\dagger \hat{A} \hat{L}_b +\hat{A}^\dagger \hat{L}_b^\dagger \hat{L}_b \hat{A} - \hat{L}_b^\dagger \hat{A}^\dagger \hat{L}_b \hat{A} - \hat{A}^\dagger \hat{L}_b^\dagger \hat{A} \hat{L}_b \right) \nonumber \\
&= \mc{L}^\dagger[\hat{A}^\dagger \hat{A}] -\hat{A}^\dagger \mc{L}^\dagger[\hat{A}] -\mc{L}^\dagger[\hat{A}^\dagger] \hat{A} 
= 0.
\end{align}
Since the left-hand side is a sum of positive semidefinite operators, we must have $[\hat{A}, \hat{L}_b] = 0$. 
Similarly, we can also find $[\hat{A}^\dagger, \hat{L}_b] = 0$ and thus $[\hat{L}_b^\dagger, \hat{A}]=0$.
Substituting these into $\mc{L}^\dagger[\hat{A}]=0$ yields $[\hat{H}, \hat{A}] = 0$. 
This proves $\ker(\mc{L}^\dagger) \subseteq \{ \hat{L}_b^\dagger, \hat{L}_b, \hat{H} \}'$ and hence $\ker(\mc{L}^\dagger) = \{ \hat{L}_b, \hat{L}_b^\dagger, \hat{H} \}'$.

Now, if $\{ \hat{L}_b, \hat{L}_b^\dagger, \hat{H} \}' = \mbb{C} \hat{\mbb{I}}$, we have $\ker(\mc{L}^\dagger) = \mbb{C} \hat{\mbb{I}}$, implying that $\hat{\mbb{I}}$ is the only eigenvector of $\mc{L}^\dagger$ with eigenvalue $0$.
This implies that $\hat{\varrho}_0$ is the only eigenvector of $\mc{L}$ with eigenvalue $0$.
Since $\hat{\varrho}_0 \succ 0$, $\mc{E}_t$ is irreducible according to (2) of Proposition~\ref{prop:IrreducibleMarkovCPTP}.
\end{proof}

We provide several examples of irreducible Markovian CPTP maps in two-level systems.
The first example is defined by a GKSL superoperator $\mc{L}$ in the form of Eq.~\eqref{eq:GKSLSuperop} with~\cite{yoshida2024uniqueness}
\begin{align}
\hat{H} = 0, \quad 
\hat{L}_1 = | 0 \rangle \langle 1 |, \quad 
\hat{L}_2 = | 1 \rangle \langle 0 |.
\end{align}
Since $\{ \hat{L}_1, \hat{L}_2 \}$ is a self-adjoint set and $\{ \hat{L}_1, \hat{L}_2 \}' = \mathbb{C} \hat{\mbb{I}}$, the Markovian CPTP map $e^{\mc{L}t}$ is irreducible according to Spohn's theorem (Theorem~\ref{thm:SpohnTheorem}).
Indeed, the eigenvalues of $\mc{L}$ are $\lambda = 0$, $-1$, $-1$, $-2$ and the eigenvector corresponding to $\lambda = 0$ is $\hat{X} = | 0 \rangle \langle 0 | + | 1 \rangle \langle 1 |$, which is positive definite.

The second example is taken from Ref.~\cite{zhang2024criteria}:
\begin{align}
\hat{H} = 0, \quad 
\hat{L}_1 = | 0 \rangle \langle 0 | + | 0 \rangle \langle 1 | + | 1 \rangle \langle 1 |.
\end{align}
Here, $\{ \hat{L}_1 \}$ is not self-adjoint, and thus Spohn's theorem does not apply.
On the other hand, we have $2\hat{K} = \hat{L}_1^\dagger \hat{L}_1 = | 0 \rangle \langle 0 | + | 0 \rangle \langle 1 | + | 1 \rangle \langle 0 | +2 | 1 \rangle \langle 1 |$ and the algebra generated by $\{ \hat{L}_1, \hat{K} \}$ coincides with $\mbb{B}[\mc{H}]$. 
This is evident from the relations $| 0 \rangle \langle 1 | = \hat{L}_1^2 -\hat{L}_1$ and $| 1 \rangle \langle0 | = 2\hat{L}_1 \hat{K} -\hat{L}_1 -\hat{L}_1^2$, which allow us to generate all basis oparators.
Thus, the corresponding Markovian CPTP map $e^{\mc{L}t}$ is irreducible according to Theorem~\ref{thm:YoshidaTheorem}.
We find that the eigenvalues of $\mc{L}$ are $\lambda = 0$, $-1/2$, $(-3 \pm i \sqrt{15})/4$ and the eigenvector corresponding to $\lambda=0$ is $\hat{X} = 2 | 0 \rangle \langle 0 | - | 0 \rangle \langle 1 | - | 1 \rangle \langle 0 | + | 1 \rangle \langle 1 |$, which is positive definite.

The third example is also from Ref.~\cite{zhang2024criteria}:
\begin{align}
\hat{H} = (| 0 \rangle \langle 1 | + | 1 \rangle \langle 0 |)/2, \quad 
\hat{L}_1 = | 0 \rangle \langle 1 |.
\end{align}
As before, $\{ \hat{L}_1 \}$ is not self-adjoint, so Spohn's theorem does not apply.
On the other hand, we have $2\hat{K} = 2i\hat{H} +\hat{L}_1^\dagger \hat{L}_1 = i| 0 \rangle \langle 1 | + i| 1 \rangle \langle 0 | + | 1 \rangle \langle 1 |$ and the algebra generated by $\{ \hat{L}_1, \hat{K} \}$ coincides with $\mbb{B}[\mc{H}]$. 
This can be verified by observing that $| 0 \rangle \langle 1 | = \hat{L}_1$ and $| 1 \rangle \langle 0 | = -4i \hat{K}^2 -2 \hat{L}_1 \hat{K}$.
The spectrum of $\mc{L}$ is the same as in the second example and the eigenvector corresponding to $\lambda = 0$ is $\hat{X} = 2 | 0 \rangle \langle 0 | +i | 0 \rangle \langle 1 | -i | 1 \rangle \langle 0 | + | 1 \rangle \langle 1 |$, which is positive definite.

Various other examples of irreducible and reducible Markovian CPTP maps, including many-body systems, have been discussed in Refs.~\cite{yoshida2024uniqueness, zhang2024criteria}. 
In particular, Ref.~\cite{zhang2024criteria} pointed out that some previous literature incorrectly used Frigerio's theorem (Theorem~\ref{thm:FrigerioTheorem}) to predict the presence of a unique steady state, by relying solely on the condition $\{ \hat{L}_b, \hat{L}_b^\dagger, \hat{H} \}' = \mathbb{C} \hat{\mathbb{I}}$ while overlooking the assumption of the existence of a positive definite stationary state.
As discussed in the proof of Theorem~\ref{thm:FrigerioTheorem}, if $\{ \hat{L}_b, \hat{L}_b^\dagger, \hat{H} \}'$ is nontrivial (i.e., $\mathbb{C} \hat{\mbb{I}} \subset \{ \hat{L}_b, \hat{L}_b^\dagger, \hat{H} \}'$), then $\mc{L}^\dagger$ must have degenerate zero eigenvalues, implying the existence of multiple steady states.
However, the converse is not true; the condition $\{ \hat{L}_b, \hat{L}_b^\dagger, \hat{H} \}' = \mbb{C} \hat{\mbb{I}}$ alone does not imply the uniqueness of a steady state.

Before closing this section, we mention some further progress on the uniqueness of steady states for the GKSL equation in recent years.
That is, the discussion on  uniqueness has been extended to \textit{time-dependent} GKSL equations~\cite{menczel2019limit,di2024asymptotic,yoshida2026theory,wolff2026contractivity}, where the Hamiltonian $\hat{H}_t$ and jump operators $\hat{L}_{b,t}$ can explicitly depend on $t$.
For example, Ref.~\cite{yoshida2026theory} demonstrated the necessary and sufficient condition for the time-quasiperiodic GKSL equation with Hermitian jump operators ($\hat{L}_{b,t}=\hat{L}_{b,t}^\dag$).
In this case, the uniqueness is equivalent to $\{\hat{\tilde{L}}_{b,t}\}'=\mathbb{C}\hat{\mathbb{I}}$ for all $t$, where $\hat{\tilde{L}}_{b,t}=U_t^\dag\hat{L}_{b,t}U_t$ is the jump operator at the interaction picture ($U_t=\mathrm{T}e^{-i\int_0^t\hat{H}_\tau d\tau}$ with $\mathrm{T}$ denoting time-ordering).
Alternatively, if the Liouvillian is analytic in time, the criterion is also equivalent to the following algebraic one:
let $\mathbb{A}_t$ be an algebra generated by
\aln{
\{\hat{\mathbb{I}},\:\hat{L}_{b,t},\mathrm{ad} _t[\hat{L}_{b,t}],\:\mathrm{ad}_t^2[\hat{L}_{b,t}],\:\cdots \}.
}
If and only if $\mathbb{A}_t=\mathbb{B}[\mathcal{H}]$ for some single time $t$, the steady state is unique, which is given by the maximally mixed state  because we assume that the jump operators are Hermitian.
Here, $\mathrm{ad}_t[\hat{A}_t]=i[\hat{H}_t,\hat{A}_t]+\partial_t\hat{A}_t$ is the adjoint operation.

\subsection{Miscellaneous topics on the spectra beyond the steady state}
\label{sec:miscellaneous}

In the previous sections, we have detailed the conditions under which CPTP maps and quantum master equations (i.e., the GKSL equations) have unique stationary states. 
In this section, we discuss other spectral properties, such as the spectral gap and statistics, which are less understood than the uniqueness property of a stationary state both physically and mathematically.
While the content in this section is basically not used in the next chapters, we here try to introduce a brief (biased) overview of some relevant topics so that readers can grasp some recent developments in the field. 

We first introduce the general structure of eigenvalues and eigenvectors for CPTP maps and GKSL generators.
For simplicity, we assume that full diagonalization is possible, unless stated otherwise.
As defined in Sec.~\ref{sec:GeneralSpecCPTP}, the $a$th right and left eigenvectors of a CPTP map $\ml{E}$ satisfy
\aln{
\ml{E}[\hat{\mathsf{R}}_a]&=z_a\hat{ \mathsf{R}}_a,\\
\ml{E}^\dag[\hat{\mathsf{L}}_a]&=z_a^*\hat{\mathsf{L}}_a.
}
Here, $\hat{\mathsf{R}}_a$ and $\hat{\mathsf{L}}_a$ are right and left eigenvectors, respectively, which can be taken to satisfy the biorthogonality condition
\aln{
\Tr[\hat{\mathsf{L}}_a^\dag \hat{\mathsf{R}}_b]\propto\delta_{ab}.
}
The eigenvalues $z_a\;(a=1,\ldots, d^2)$ with $d=\mathrm{dim}[\ml{H}]$ are arranged as
\aln{
1=z_1\geq |z_2|\geq\cdots \geq|z_{d^2}|.
}
Similarly, we can consider the eigenvalues of a GKSL superoperator as
\aln{
\ml{L}[\hat{r}_a]&=\lambda_a\hat{ r}_a,\\
\ml{L}^\dag[\hat{l}_a]&=\lambda_a^*\hat{l}_a,
}
where $\hat{{r}}_a$ and $\hat{{l}}_a$ are right and left eigenvectors, respectively, satisfying
\aln{
\Tr[\hat{l}_a^\dag \hat{r}_b]\propto\delta_{ab}.
}
The eigenvalues $\lambda_a\;(a=1,\ldots, d^2)$ are arranged as
\aln{
0=\lambda_1\geq \mr{Re}[\lambda_2]\geq \cdots \geq \mr{Re}[\lambda_{d^2}].
}
For the CPTP dynamics generated by the time-independent GKSL equation, $\ml{E}=e^{\ml{L}t}$, we have $\hat{\mathsf{R}}_a=\hat{r}_a$, $\hat{\mathsf{L}}_a=\hat{l}_a$, and $z_a=e^{\lambda_a t}$.

Using these eigenvalues and eigenvectors, we can evaluate the state after repeated applications of the CPTP map as 
\aln{\label{sddcptp}
\ml{E}^n[\hat{\rho}_0]=\sum_{a=1}^{d^2}C_az_a^n\hat{\mathsf{R}}_a,
}
where
\aln{\label{cacptp}
C_a=\frac{\Tr[\hat{\mathsf{L}}_a^\dag\hat{\rho}_0]}{\Tr[\hat{\mathsf{L}}_a^\dag\hat{\mathsf{R}}_a]}
}
is the overlap between the $a$th eigenvector and the initial state $\hat{\rho}_0$.
Similarly, for the continuous GKSL dynamics, we have
\aln{
e^{\ml{L}t}[\hat{\rho}_0]=\sum_{a=1}^{d^2}c_ae^{\lambda_at}\hat{r}_a
}
with 
\aln{\label{cadef}
c_a=\frac{\Tr[\hat{l}_a^\dag\hat{\rho}_0]}{\Tr[\hat{l}_a^\dag\hat{r}_a]}.
}

\subsubsection{Spectral properties near stationary states and relaxation timescales}
We first give an overview of the relevant topics concerning the spectral gap and relaxation.
Let us assume that the CPTP map is primitive, i.e., the eigenvalue of unit modulus is unique.
Then, Eq.~\eqref{sddcptp} can be rewritten as
\aln{
\ml{E}^n[\hat{\rho}_0]=\hat{\rho}_\mr{ss}+\sum_{a=2}^{d^2}C_az_a^n\hat{\mathsf{R}}_a,
}
where $\hat{\rho}_\mr{ss}=\hat{\mathsf{R}}_1/\mathrm{Tr}(\hat{\mathsf{R}}_1)$ is the unique stationary state of $\ml{E}$ and we have used $C_1=1/\Tr(\hat{\sf{R}}_1)$ because $\hat{\mathsf{L}}_1\propto \hat{\mathbb{I}}$ in Eq.~\eqref{cacptp}.
Since each term in the sum on the right-hand side decays exponentially as $z_a^n$ $(z_a<1)$ as $n$ increases, the asymptotic decay rate of the dynamics is governed by the longest-lived mode.
This rate is given by the (logarithm of) spectral gap
\aln{
\Delta_{\mr{asy}}=-\ln |z_2|
}
as long as $C_2\neq 0$.
This quantity means that the state approaches the stationary state with an exponential decay $e^{-\Delta_\mr{asy}t}$\footnote{
Note that the decay may be accompanied by an oscillation whose frequency is given by $\mr{arg}[z_2]$.
} for asymptotically large $t$ with finite $d^2$ and $C_2\neq 0$.

Similarly, for the GKSL case, we have the expansion
\aln{\label{expgksl}
e^{\ml{L}t}[\hat{\rho}_0]=\hat{\rho}_\mr{ss}+\sum_{a=2}^{d^2}c_ae^{\lambda_at}\hat{r}_a
}
with $\hat{\rho}_\mr{ss}=\hat{r}_1/\mathrm{Tr}(\hat{r}_1)$.
Here, we have used $c_1=1/\Tr(\hat{r}_1)$ because $\hat{l}_1\propto \hat{\mathbb{I}}$ in Eq.~\eqref{cadef}.
From this expression, we can define 
\aln{
\Delta_\mr{asy}=-\mr{Re}[\lambda_2]
}
as long as $c_2\neq 0$, which is called the Liouvillian gap.

One might naively expect that the inverse of $\Delta_\mr{asy}$, $\tau_2=\Delta_\mr{asy}^{-1}$, provides the relaxation timescale $\tau_\mr{relax}$ of the dynamics\footnote{
While the definition of relaxation time may not be unique, we can take it to be, e.g., the mixing time, which is defined as the minimal time $\tau$ such that $d(\tau)<\epsilon$. 
Here, $d(t)=\max_{\hat{\rho}(0)}\|\hat{\rho}(t)-\hat{\rho}_\mr{ss}\|_1/2$ and $\epsilon$ is a small but constant cutoff. 
Note that $\|\hat{A}\|_1=\mr{Tr}[\sqrt{\hat{A}^\dag\hat{A}}]$.
}.
This is indeed true in many situations, and relaxation times are often evaluated from $\Delta_\mr{asy}$~\cite{vznidarivc2015relaxation}.
More generally, the timescale at which the contribution of the $a$th mode becomes negligible may naively be associated with $\tau_a=\Delta_a^{-1}$, where
\aln{\label{deltacptp}
\Delta_a=-\ln |z_a|
}
for the CPTP case and
\aln{\label{deltagksl}
\Delta_a=-\mr{Re}[\lambda_a]
}
for the GKSL case. Note that $\Delta_\mr{asy}=\Delta_2$.

In the following, we introduce several topics concerning the spectral gap and relaxation dynamics. 
We assume the above relation between the eigenvalues and the inverse of the relaxation timescales for the first three topics and then discuss the caveats for this identification at the end of this section.

\vskip\baselineskip
\noindent\textbf{Mpemba effect}

If we choose an initial state with $c_2=\cdots=c_{a_*-1}= 0$ but $c_{a_*}\neq 0$ for some $a_*\geq 3$, the relaxation time will become
\aln{
\tau_\mr{relax}\sim \tau_{a_*}<\tau_2
}
instead of $\tau_\mr{relax}\sim \tau_2$, where $\Delta_2<\Delta_{a_*}$ is assumed.
This means that the relaxation becomes faster for such initial states than for those with $c_2\neq 0$.
This fact has recently been applied to analyze the ``Mpemba effect," which originally represents the counter-intuitive observation that initially hotter water can freeze faster than colder water~\cite{mpemba1969cool}.
The effect has gathered renewed attention in the last decade from the community of non-equilibrium statistical mechanics~\cite{bechhoefer2021fresh, lu2017nonequilibrium, klich2019mpemba, kumar2020exponentially, ares2023entanglement, chatterjee2023quantum, chatterjee2024multiple, liu2024symmetry, yamashika2024entanglement, yamashika2025quenching, joshi2024observing, rylands2024microscopic, aharony2024inverse}; in this context, the (classical or quantum) Mpemba effect often indicates a phenomenon that a state initially far from the stationary state relaxes faster than one initially close to it.
In Markovian open quantum systems described by the GKSL equation, a general mechanism to engineer the quantum Mpemba effect has been proposed~\cite{carollo2021exponentially}.
Specifically, let us consider a state $\hat{\rho}$ with $c_2\neq 0$ and define $\hat{\rho}_0=\hat{U}\hat{\rho}\hat{U}^\dag$ for some unitary matrix $\hat{U}$.
If $\hat{U}$ is chosen such that $c_2=\cdots=c_{a_*-1}= 0$ and some distance between $e^{\ml{L}t}[\hat{\rho}_0]$ and $\hat{\rho}_\mr{ss}$ is larger than the distance between $e^{\ml{L}t}[\hat{\rho}]$ and $\hat{\rho}_\mr{ss}$ for short times, the quantum Mpemba effect occurs.
This scenario has been exemplified in certain situations~\cite{carollo2021exponentially, moroder2024thermodynamics}.

\vskip\baselineskip
\noindent\textbf{Metastability and emergent decoherence-free subspace}

Let us consider a situation where $\Delta_2,\cdots, \Delta_{a_*-1}\ll \Delta_{a_*}$ for some $a_*\geq 3$.
In this case, for a timescale $t$ with $\tau_{a_*}\ll t\ll \tau_{a_*-1}$, the evolving state effectively stays in a metastable manifold, which is spanned by the eigenmodes $\hat{r}_1,\cdots, \hat{r}_{a_*-1}$~\cite{macieszczak2016towards}.
An important situation is where the dissipation term $\ml{L}_\mr{d}[\hat{\rho}]=\sum_b(\hat{L}_b\hat{\rho}\hat{L}_b^\dag-\frac{1}{2}\{\hat{\rho},\hat{L}_b^\dag\hat{L}_b\})$ is dominant compared to the unitary term $\ml{L}_H[\hat{\rho}]=-i[\hat{H},\hat{\rho}]$.
If we neglect the unitary part, we can define a stationary (stable) manifold as the subspace spanned by the full set of zero modes of $\ml{L}_\mr{d}$. 
Specifically, we assume that there are $a_*-1$ zero modes of $\ml{L}_\mr{d}$ created by the existence of a decoherence-free subspace (DFS)~\cite{lidar1998decoherence, beige2000quantum} and denote the corresponding projection operator by $\hat{P}_\mr{DFS}$; see around Eq.~\eqref{DFS} for a more detailed discussion.
These zero modes are separated from the longest-living decaying modes by a gap $\Delta_{\mr{d},a_*}$. 
Here, $\Delta_{\mr{d},a_*}$ is defined from the eigenvalues of $\ml{L}_\mr{d}$ and is expected to be of the order of $\sim \|\hat{L}_b\|^2$. 

When a small unitary contribution is present, many zero modes of $\ml{L}_\mr{d}$ cease to be exact zero modes of $\ml{L}$.
Nevertheless, when $J\ll \Delta_{\mr{d},a_*}$ with $J$ being the characteristic strength of the Hamiltonian\footnote{\label{Foot1}
There is a subtle point in the choice of $J$. 
For few-level systems, $J$ can be chosen as $J=\|\hat{H}\|$~\cite{gong2020error}. 
However, for many-body systems, this choice is inappropriate since $\|\hat{H}\|\ll \Delta_{\mr{d},a_*}$ does not hold for a large system size $V$ because $\|\hat{H}\|\propto V$. 
In such cases, it would be natural to take $J$ as the microscopic coefficients of the Hamiltonian, e.g., hopping amplitude or chemical potential, although it is not easy to justify this choice rigorously.
}, we expect that $0\simeq \Delta_{a_*-1}\ll \Delta_{a_*}\simeq\Delta_{\mr{d},a_*} $ is satisfied, where $\simeq$ allows some correction of the order of $J$.
Then, there exists a timescale $t$ with $\tau_{a_*}\ll t\ll \tau_{a_*-1}$ where the dynamics is effectively constrained within the emergent DFS.
That is, the dynamics is well approximated by the unitary dynamics generated by $\hat{P}_\mr{DFS}\hat{H}\hat{P}_\mr{DFS}$, if it is non-vanishing.
This phenomenon, reminiscent of the quantum Zeno effect~\cite{misra1977zeno, beige2000quantum, facchi2002quantum, zanardi2014coherent}, has been rigorously justified using the Schrieffer-Wolf transformation~\cite{schrieffer1966relation} for few-level systems under certain assumptions~\cite{gong2020error}.
If we assume that the discussion holds for many-body systems\footref{Foot1}, the emergent DFS can be utilized to engineer~\cite{harrington2022engineered} exotic many-body dynamics, such as kinetically constrained quantum dynamics~\cite{stannigel2014constrained, maity2024kinetically}.

\vskip\baselineskip
\noindent\textbf{Dissipative phase transition}

The spectral analysis can be relevant for understanding dissipative phase transitions (DPTs) in open quantum many-body systems with system size $V$~\cite{fazio2025many}.
Let us consider the GKSL dynamics parameterized by $g$.
One of the definitions of the DPT is that, in the thermodynamic limit $V\ra\infty$, an order parameter at the stationary state or its derivatives exhibit a singularity when $g$ is varied.
For example, Ref.~\cite{minganti2018spectral} discussed a second-order transition with $\mathbb{Z}_2$ symmetry breaking\footnote{
Here, we consider a weak $\mathbb{Z}_2$ symmetry~\cite{buvca2012note, albert2014symmetries, Lieu2020symmetry} satisfying $\hat{G}\ml{{L}}[\hat{G}\hat{\rho}\hat{G}]\hat{G}=\ml{L}[\hat{\rho}]$.
}.
In the symmetry unbroken phase ($g<g_c$), the unique stationary state is invariant under a $\mathbb{Z}_2$ operation $\hat{G}$ such that $\hat{G}^2 = \hat{\mbb{I}}$ and $\hat{G}\hat{\rho}_\mr{ss}\hat{G}=\hat{\rho}_\mr{ss}$.
In the symmetry broken phase ($g>g_c$), there are two independent stationary states $\hat{\rho}_\mr{ss}^\pm$ satisfying $\hat{\rho}_\mr{ss}^\mp=\hat{G}\hat{\rho}_\mr{ss}^\pm\hat{G}$ in the thermodynamic limit $V\rightarrow\infty$.
However, if we consider finite-size systems, $\hat{\rho}_\mr{ss}^\pm$ cannot be the true stationary states.
For $g<g_c$, the eigenmodes $\hat{r}_1=\hat{\rho}_\mr{ss}$ and $\hat{r}_2$ are separated by a gap that does not vanish for $V\ra\infty$.
In contrast, for $g>g_c$, the eigenmodes read $\hat{r}_1=\hat{\rho}_\mr{ss}\propto\hat{\rho}_\mr{ss}^++\hat{\rho}_\mr{ss}^-$ and $\hat{r}_2\propto \hat{\rho}_\mr{ss}^+-\hat{\rho}_\mr{ss}^-$\footnote{
Note that $\Tr[{\hat{r}_a}]=0\;(a\geq 2)$ in general due to the trace-preserving property of the dynamics. 
We also note that $\hat{r}_1$ and $\hat{r}_2$ actually respect the $\mbb{Z}_2$ symmetry, as $\hat{G}\hat{r}_1\hat{G}=\hat{r}_1$ and $\hat{G}\hat{r}_2\hat{G}=-\hat{r}_2$.
}.
In this case, $\lambda_2$ is nonzero for finite-size systems but vanishes ($\lambda_2\ra0$) in the thermodynamic limit.
This means that the symmetry broken modes $\hat{\rho}_\mr{ss}^\pm$ or their linear combinations can appear as metastable states depending on the initial state [$\sim (1+c_2)\hat{\rho}_\mr{ss}^++(1-c_2)\hat{\rho}_\mr{ss}^-$ from Eq.~\eqref{expgksl}].
The lifetime of the metastable state is expected to be $\Delta_\mr{asy}^{-1}$ and grows with $V$.
This picture is consistent with the case in the thermodynamic limit, if we take the limit $\lim_{t\ra\infty}\lim_{V\ra\infty}$\footnote{
If we reverse the order of the two limits, $\lim_{V\ra\infty}\lim_{t\ra\infty}$, we have a symmetry unbroken unique stationary state even for $g>g_c$.
}.

Another interesting possibility of the DPT is the dissipative version of time crystals~\cite{gong2018discrete, iemini2018boundary, gambetta2019discrete, kessler2021observation, kongkhambut2022observation}, which exhibits a persistent oscillation of an order parameter in the thermodynamic limit with the breaking of time-translation symmetry~\cite{wilczek2012quantum, sacha2017time, zaletel2023colloquium}.
For finite systems, the time-crystalline phase has eigenvalues $\lambda_2,\lambda_3,\ldots$ whose real parts vanish but imaginary parts remain finite for $V\ra\infty$.
The corresponding eigenmodes contribute to the persistent oscillation, consistent with the behavior in the thermodynamic limit.
Note that we can have a similar discussion for the CPTP map.
It was found in Ref.~\cite{gong2018discrete} that a periodically driven dissipative Dicke model has a discrete time-crystalline phase~\cite{zaletel2023colloquium}; 
the corresponding CPTP map has an eigenvalue $z_2$, which converges to $-1$ in the thermodynamic limit and leads to the period doubling behavior.

\vskip\baselineskip
\noindent\textbf{Discrepancy between the inverse gap and relaxation timescale}

So far, we have assumed that the relaxation timescale $\tau_\mr{relax}$ of systems governed by the GKSL equation is identified as $\tau_2$, the inverse of the asymptotic decay rate $\Delta_\mr{asy}=-\mr{Re}[\lambda_2]$ as long as $c_2\neq 0$ in Eq.~\eqref{expgksl}.
However, it has recently been widely recognized\footnote{
The distinction between $\tau_\mr{relax}$ and $\Delta_\mr{asy}^{-1}$ was known before in both classical~\cite{levin2017markov} and quantum~\cite{kastoryano2012cutoff, kastoryano2013rapid} systems.
} that this identification is not always true~\cite{haga2021liouvillian, mori2020resolving, gong2022bounds, lee2023anomalously}.
This is because the coefficient $c_a$, say $c_2$, can be exponentially large with respect to the system size $V$, due to the exponential smallness of $\Tr[\hat{l}_2^\dag\hat{r}_2]$ in Eq.~\eqref{cadef}\footnote{
Here, we assume the normalization $\Tr[\hat{r}_a^\dag\hat{r}_a]=\Tr[\hat{l}_a^\dag\hat{l}_a]=1$.
}.
In this case, the amplitude for $a=2$, $c_2e^{-\Delta_\mr{asy}t}$, does not become small until $t\sim \ml{O}(V)$, even when $\Delta_\mr{asy}=\ml{O}(V^0)$.
Namely, $\tau_\mr{relax}=\ml{O}(V)$ diverges even for the gapped spectrum.

One of the simplest models for this to occur is a one-dimensional single-particle model with asymmetric dissipative hopping under the open boundary condition~\cite{haga2021liouvillian}.
Here, the jump operators are assumed to be $\hat{L}_{Ri}=\sqrt{\gamma_R}\hat{b}_{i+1}^\dag\hat{b}_i\:(1\leq i\leq V-1)$ and $\hat{L}_{Li}=\sqrt{\gamma_L}\hat{b}_{i-1}^\dag\hat{b}_i\:(2\leq i\leq V)$~\cite{temme2012stochastic}, where $\hat{b}_i$ is the annihilation operator of the particle at site $i$, and there is no Hamiltonian.
We can easily show that the Liouvillian gap is given by $\Delta_\mr{asy}=(\sqrt{\gamma_R}-\sqrt{\gamma_L})^2$.
Now, let us start from an initial state where a single particle is placed at the left end and the rest is empty. 
If we focus on the number of the particle at the right end, $\hat{n}_V=\hat{b}^\dag_V\hat{b}_V$, as an observable, we can rigorously show that $\braket{\hat{n}_V}_t$ is exponentially suppressed as $\braket{\hat{n}_V}_t=e^{-\mathcal{O}(V)}$ for times $t=\ml{O}(V^0)$, and that the relaxation timescale should be of the order of $V$ or larger~\cite{hamazaki2022speed, sawada2024role}.
Namely, the relaxation time diverges with respect to $V$, despite the finite Liouvillian gap.
In this case, the left and right eigenmodes for $a=2$ are exponentially localized at the opposite edges, meaning that their overlap becomes exponentially small as $\mr{Tr}[\hat{l}_2^\dag\hat{r}_2]=e^{-\ml{O}(V)}$.
Therefore, the exponentially large coefficient $c_2$ physically comes from the localization of the eigenmodes at the edges of the system, which is known as the Liouvillian skin effect~\cite{haga2021liouvillian}\footnote{
More generally, the localization of eigenmodes of non-Hermitian systems is called the non-Hermitian skin effect~\cite{gong2018topological, song2019non, okuma2020topological, zhang2020correspondence}.
}.

While we have considered systems described by the GKSL equation, a similar discrepancy between the spectrum and the timescale also appears in other contexts~\cite{mori2021metastability, bensa2021fastest, bensa2022two}.
The general message is that the Liouvillian spectrum is not always reliable to evaluate the relaxation timescale.
Instead, it has been proposed~\cite{vznidarivc2022solvable, rakovszky2024defining, vznidarivc2023phantom} that the timescale is rather characterized by the pseudo-spectrum~\cite{trefethen2020spectra}, i.e., a set of spectra of slightly perturbed matrices.
As a related fact, let us consider the periodic boundary condition for the above asymmetric hopping model (i.e., the ends in the case with the open boundary condition are perturbed).
Then, the Liouvillian gap vanishes with respect to $L$, where such a sudden change of the spectrum due to the change of the boundary condition is a consequence of the skin effect~\cite{gong2018topological}.
This gapless feature of the new spectrum is consistent with the divergent relaxation timescale.

\subsubsection{Universality of spectral statistics and random matrices}
In the previous subsection, we considered spectral properties near the stationary state, such as the spectral gap, and their relation to relaxation timescales.
In this subsection, we instead consider the spectral statistics of the entire spectrum and provide an overview of some selected topics, especially focusing on the relevance of spectral statistics to random matrix theory.

\vskip\baselineskip
\noindent\textbf{Entire shape of the spectrum}

Let us first consider the entire spectral shape of ``typical'' GKSL generators.
Before that, we explain the spectral structure of typical non-Hermitian matrices.
According to non-Hermitian random matrix theory, the so-called circular law~\cite{ginibre1965statistical, girko1985circular, bai1997circular} holds for an ensemble of matrices whose entries are independent and identically distributed (i.i.d.) with zero mean and finite variance~\cite{tao2010random, tao2012topics}.
Namely, the typical contour of the spectrum on the complex plane asymptotically becomes a circle centered at the origin, and the spectral density becomes uniform within the circle as the size of the matrix is increased.
In Fig.~\ref{fig:LemonPlot}~(a), we show the complex spectrum of a $d^2 \times d^2$ complex Ginibre matrix $G$\footnote{
A complex Ginibre matrix is a non-Hermitian random matrix whose entries are i.i.d. complex Gaussian variables.
}, normalized by $1/d$ such that the spectrum is concentrated within the unit circle.
The circular law demonstrates one of the famous examples of the universality of non-Hermitian random matrix theory, i.e., the detailed form of the i.i.d. distribution does not matter for its appearance.

\begin{figure}[tb]
\centering\includegraphics[width=\linewidth]{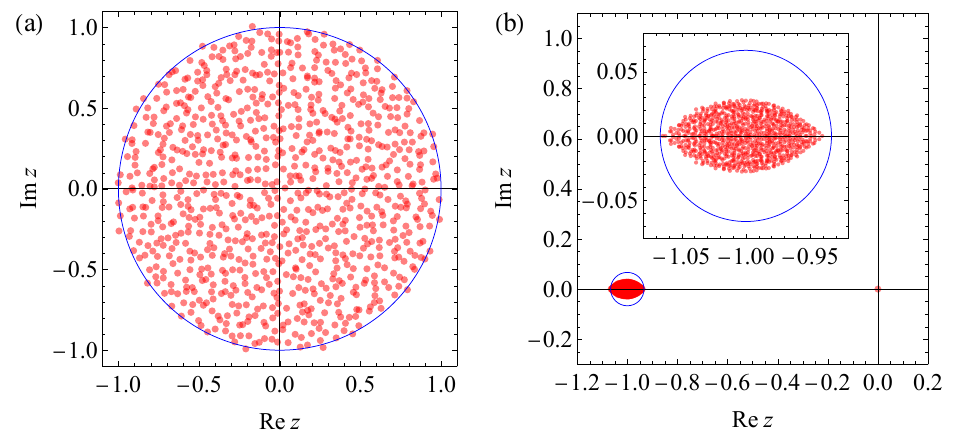}
\caption{
(a) Complex spectrum of a $d^2 \times d^2$ Ginibre matrix $G/d$ for $d=30$. 
(b) Complex spectrum of the GKSL operator $\mc{L}$ generated by a $(d^2-1) \times (d^2-1)$ random Kossakowski matrix for $d=30$. 
The blue circle in (b) has a radius of $2/d$.}
\label{fig:LemonPlot}
\end{figure}

In contrast, if we consider a random GKSL generator $\mathcal{L}$, additional structures lead to a spectrum distinct from the circular law\footnote{
See, e.g., Ref.~\cite{lancien2024limiting} for results on random channels, where the Kraus operators are randomly sampled.
}, even though $\mathcal{L}$ is represented as a non-Hermitian matrix.
To see this, let us consider the GKSL generator in the form of Eq.~\eqref{kmat}.
While it is not trivial how to sample random GKSL generators, Ref.~\cite{denisov2019universal} considered sampling the Kossakowski matrix $C$ randomly.
Since the Kossakowski matrix is positive semidefinite, one natural way of sampling is to sample a Ginibre matrix ${G}$ and define $C=d GG^\dag/\mathrm{Tr}[GG^\dag]$\footnote{
The coefficient of $GG^\dag$ only controls the scale of the spectrum and is not important for the lemon-shape contour, as long as $\hat{H}'=0$.
}.
As shown in Fig.~\ref{fig:LemonPlot}~(b), if there is no unitary part $(\hat{H}'=0)$, the spectral shape becomes a lemon-like contour inside a circle with a radius $\sim 2/d$ centered at $(-1,0)$\footnote{
This means that the lemon-like contour is gapped from the origin, which corresponds to the stationary-state eigenvalue.
}. 
Inside the lemon-like contour, the spectral density is non-uniform.
A similar lemon-like contour is obtained for other types of sampling of $C$ in the limit of large system sizes, demonstrating the universality of the spectral shape to some extent\footnote{
It has been reported that the lemon-type contour appears in a more physical setting, such as the Lindbladian dynamics of the Sachdev-Ye-Kitaev model~\cite{kulkarni2022lindbladian}.
}.
Reference~\cite{denisov2019universal} also considered the case where $\hat{H}'\neq 0$; in this case, the lemon shape is deformed and the contour approaches an ellipse as the strength of $\hat{H}'$ is increased.

However, if we consider additional constraints on the structure of the GKSL equation, it has been found that the universality of the above spectral shape no longer holds.
References~\cite{wang2020hierarchy, sommer2021many, li2022random} assumed that $\hat{F}_i$ are restricted to few-body operators and sampled random $C$ appropriately without the unitary part.
In that case, the spectrum is decoupled into distinct clusters whose decay timescales are different, rather than forming the universal lemon-like contour.
Several studies~\cite{haga2023quasiparticles, hartmann2024fate} considered how the decomposed spectral clusters change due to the unitary effect. 
For example, Ref.~\cite{haga2023quasiparticles} investigated physical models (e.g., hardcore bosons) with dissipation and without randomness, finding the decoupled clusters of the GKSL eigenvalues if the dissipation (assumed to be local dephasing) is strong.
When the dissipation is weakened compared with the unitary dynamics, they found that the decoupled clusters touch one another, manifesting a transition of the spectral shape.
It has been shown that this transition physically alters the dynamics of coherence in the system.

\vskip\baselineskip
\noindent\textbf{Universality of spectral statistics}

As discussed above, the entire spectral shape does not show strong universality, i.e., it significantly depends on the structure of many-body systems.
In contrast, certain spectral statistics exhibit the universality predicted by random matrix theory if the system considered is sufficiently ``complicated.''
To explain this in more detail, let us first review some well-known results for isolated quantum systems and their relation to Hermitian random matrix theory.
We especially focus on the eigenvalue-spacing distribution $P_H(s)$, which is the distribution of the properly normalized difference between neighboring energy eigenvalues of the Hamiltonian, $s_\alpha\propto E_{\alpha+1}-E_\alpha$, in the middle of the spectrum.
If we take a Gaussian random matrix as a Hamiltonian, $P_H(s)$ becomes close to the Wigner-Dyson distribution, $P_H(s)\simeq a_\beta s^\beta e^{-b_\beta s^2}$, where $\beta$ takes one of the three values $\beta=1,2,$ or $4$ depending on the time-reversal symmetry~\cite{dyson1962threefold, haake1991quantum} of the matrices, and $a_\beta,\,b_\beta$ are constants.
This is in contrast with the Poisson-type spacing distribution $P_H(s)=e^{-s}$ obtained from random sequences.
Importantly, sufficiently complicated Hamiltonians, which may have structures such as few-body or local interactions and may not even be random, can have $P_H(s)$ described by random matrix theory with the corresponding time-reversal symmetry.
This has been confirmed in, e.g., the excitation spectrum of nuclei~\cite{wigner1951statistical}, semiclassical systems whose classical limit is chaotic~\cite{bohigas1984characterization}, and non-integrable many-body systems without classical counterparts~\cite{santos2010onset}\footnote{\label{foot_chaos}
Semiclassical systems whose classical limits are chaotic are called quantum chaotic systems. 
Non-integrable many-body systems without classical counterparts, whose spectral statistics obey the random matrix theory, are often called quantum many-body chaotic systems.
}.
Instead, if the Hamiltonian is integrable, $P_H(s)$ can display the Poisson statistics~\cite{berry1977level, santos2010onset}.
Note that a similar strong universality appears in other spectral statistics~\cite{beenakker1997random, guhr1998random, haake1991quantum} and eigenstate statistics~\cite{brody1981random, d2016quantum}.
While the universality has been confirmed mainly numerically, it has also been analytically demonstrated for several physical models~\cite{muller2004semiclassical, muller2005periodic, bertini2018exact}.

Recently, there have appeared various studies aiming to extend the correspondence between physical systems and random matrix theory to dissipative quantum dynamics~\cite{jaiswal2019universality, hamazaki2019non, akemann2019universal, hamazaki2020universality, sa2020complex, huang2020anderson, tzortzakakis2020non, mudute2020non, luo2021universality, sa2021integrable, rubio2022integrability, prasad2022dissipative, garcia2022symmetry, ghosh2022spectral, ray2024ergodic, gupta2024quantum, akemann2025two, pawar2025comparative}.
That is, certain statistics in the middle of the spectrum relevant for sufficiently complicated dynamics have been found to be described by non-Hermitian random matrix theory.
This correspondence was first discussed in Ref.~\cite{grobe1988quantum}, which argued that the eigenvalue-spacing statistics $P(s)$ for the CPTP map describing the periodic kicked top with damping are described by the universal distribution of non-Hermitian random matrices (the Poisson distribution) when the dynamics is chaotic (integrable) in the classical limit.
Here, we note that the eigenvalue spacing $s_a$ is defined through the minimal distance of eigenvalues on the complex plane (e.g., $s_a\propto\min_{b}|z_a-z_b|$ for the case of the CPTP map), since the eigenvalues are no longer real.
Later, similar universal eigenvalue-spacing statistics for non-Hermitian random matrices were found to appear in non-integrable many-body systems described by non-Hermitian Hamiltonians~\cite{markum1999non, hamazaki2019non} and GKSL generators~\cite{akemann2019universal}.
It has also been found that symmetry plays an important role in random matrix theory for open quantum systems\footnote{
While there are 38 symmetry classes for general non-Hermitian matrices~\cite{kawabata2019symmetry} (in contrast with 10 symmetry classes for Hermitian matrices~\cite{altland1997nonstandard}), a smaller number of symmetry classes exist for GKSL generators and CPTP maps due to additional constraints~\cite{lieu2020tenfold, sa2023symmetry, kawabata2023symmetry, nakagawa2025topology}.
}.
For example, Ref.~\cite{hamazaki2020universality} claimed that, among 38 non-Hermitian symmetry classes~\cite{kawabata2019symmetry}, the eigenvalue-spacing distributions $P(s)$ take only three different universality classes depending on the transposition symmetries (e.g., the symmetry given as $H=H^\mathsf{T}$), instead of the time-reversal symmetry (e.g., the symmetry described as $H=H^*$)\footnote{
Reference~\cite{hamazaki2020universality} numerically verified three distinct universal classes of $P(s)$ (see also Ref.~\cite{jaiswal2019universality}) with arguments using degenerate perturbation theory. 
Recently, the existence of the three distinct classes has been studied analytically~\cite{akemann2025complex, kulkarni2025non, forrester2025dualities} for different eigenvalue statistics.
}.

The universality of non-Hermitian random matrix theory has also been discussed beyond eigenvalue-spacing distributions $P(s)$.
Various statistics, such as the complex-spacing ratio~\cite{jaiswal2019universality, sa2020complex}, spectral rigidity~\cite{huang2020spectral}, dissipative spectral form factor~\cite{li2021spectral}, singular-value statistics~\cite{kawabata2023singular, roccati2024diagnosing, nandy2025probing}, and eigenstate statistics~\cite{hamazaki2022lindbladian, cipolloni2023entanglement, ghosh2023eigenvector, cipolloni2024non, singha2025unveiling, almeida2025universality, ferrari2025chaos} have been found to exhibit universality.
These facts motivate researchers to characterize the dissipative version of quantum chaos\footref{foot_chaos} through the universal statistics of non-Hermitian random matrix theory.
However, recent studies on dissipative systems with well-defined classical counterparts\footnote{
Interestingly, the models considered in Ref.~\cite{grobe1988quantum} have also been revisited~\cite{xhwm-jpdf} and it has been found that the correspondence principle fails.
} have found a caveat to this correspondence: the universal random-matrix statistics in the middle of the spectrum can appear even when the corresponding classical systems do not have a chaotic attractor at long times~\cite{villasenor2024breakdown,xhwm-jpdf}.
Indeed, since eigenmodes in the middle of the spectrum decay more rapidly than the longest-lived mode for the CPTP or GKSL dynamics, i.e., $\Delta_a\gg  \Delta_\mathrm{asy}$ with $\Delta_a$ defined in Eqs.~\eqref{deltacptp} and~\eqref{deltagksl}, the universal statistics are rather expected to be related to transient dynamics~\cite{hamazaki2022lindbladian, ferrari2023steady, mondal2025transient}.

\section{Typical properties of quantum trajectories: ergodicity and purification}
\label{sec:linear-quantity_purification}
Keeping in mind the properties of the averaged dynamics, i.e., the CPTP maps or quantum master equations discussed in the previous chapter, we here discuss the properties of quantum trajectories.
Since quantum trajectories are determined stochastically, we will seek properties common to \textit{typical} quantum trajectories, instead of all possible trajectories.

While analyzing the typical properties of quantum trajectories is complicated because we should treat probability measures determined by the Born probability rules, various developments have been made during the last decade~\cite{attal2015central, benoist2019invariant, carollo2019unraveling, bernard2021can, benoist2021invariant, benoist2023limit, tindall2023generality, girotti2023concentration, benoist2024quantum} by many researchers, including the community of mathematical physics.
In this chapter, we especially discuss two pioneering works by K\"ummerer and Maassen, i.e., the ergodicity of linear observables~\cite{kummerer2004pathwise, kummerer2005quantum} and the purification~\cite{maassen2006purification} of quantum trajectories.
Our aim in this chapter is to provide an overview of their results, which were formulated in a mathematical language, in a physicist-friendly (and less rigorous) way.
In Chapter~\ref{sec:nonlear-quantity_Lyapunov-spectrum}, we will present more sophisticated recent results, such as the ergodicity of nonlinear quantities and the convergence of the Lyapunov exponents for typical quantum trajectories.

In the following, we basically discuss quantum trajectories for discrete times on the basis of Eq.~\eqref{qtmixed}, unless stated otherwise.

\subsection{Ergodicity of quantum trajectories and linear quantities}
\label{sec:ergodicity_linear-observable}
While the meaning of ergodicity depends on the context, the ergodicity of quantum trajectories usually refers to the equivalence between the long-time average of a single quantum trajectory and the long-time ensemble average over all trajectories.
For the simplest case, it is roughly formulated as
\aln{\label{linerg1}
\overline{\hat{\rho}_{\bm{b};n}}= \hat{\rho}_\mr{ss}
=\overline{\mathbb{E}\left[\hat{\rho}_{\bm{b};n}\right]}
}
for almost all (i.e., typical) trajectories characterized by measurement outcomes $\bm{b}$, where
\aln{\label{eq:time-average}
\overline{f_n}=\lim_{N\ra \infty}\frac{1}{N}\sum_{n=0}^{N-1}f_n
}
is the long-time average of $f_n$ and $\hat{\rho}_\mr{ss}$ is the $\bm{b}$-independent stationary state of the CPTP map $\ml{E}$.
Equation~\eqref{linerg1} indicates the ergodicity of linear observables, i.e., 
\aln{\label{linerg2}
\overline{\Tr[\hat{\rho}_{\bm{b};n}\hat{A}]}=\Tr[\hat{\rho}_\mr{ss}\hat{A}]
=\overline{\mathbb{E}\left[\Tr(\hat{\rho}_{\bm{b};n}\hat{A})\right]}}
for almost all trajectories.
Equation~\eqref{linerg1} (and its continuous-time version) was first proven in Ref.~\cite{kummerer2004pathwise} under some assumptions.
Its quantum-diffusion-equation version [cf. Eq.~\eqref{qdfeq}] has also been discussed in Ref.~\cite{benoist2021invariant}.

While the above formulation is for linear quantities in $\hat{\rho}_{\bm{b};n}$, the existence of an invariant measure, which can be used to prove the ergodicity of nonlinear quantities in $\hat{\rho}_{\bm{b};n}$, has also been derived in Ref.~\cite{benoist2019invariant} under suitable assumptions (see Secs.~\ref{sec:invariant-measure_ergodicity_outcomes} and~\ref{sec:ergodicity_nonlinear-quantity}).
Moreover, the ergodicity of jump statistics [e.g., $N_\mu(t)$ in the continuous-time case], instead of functions of $\hat{\rho}_{\bm{b};n}$, has also been proven in Refs.~\cite{cresser2001ergodicity, kummerer2003ergodic}.
Furthermore, beyond ergodicity, fluctuation properties, such as the central limit theorem around the average, have been discussed in Refs.~\cite{attal2015central, benoist2023limit, benoist2025quantum}.

\subsubsection{Result by K\"ummerer and  Maassen}\label{KMerg}
Now, let us present the statement of the ergodicity of linear quantities in $\hat{\rho}_{\bm{b};n}$ more formally.
We consider quantum trajectories, which are characterized by a sequence of measurement outcomes $\bm{b}=(b_1,b_2,\ldots)$ for a time-independent CP-instrument $\{\ml{E}_b\}$.
The corresponding CPTP map is given by $\ml{E}=\sum_b\ml{E}_b$.
The state at time step $n$ is given by Eq.~\eqref{qtmixed}, where the initial state is set to $\hat{\rho}_0$.
Then, the following statements hold~\cite{kummerer2005quantum}.
\begin{enumerate}
\item 
Under the above setup, 
\aln{
\hat{\rho}_{\bm{b}}^\mr{\infty}:=\overline{\hat{\rho}_{\bm{b};n}}
}
exists for any initial state $\hat{\rho}_0$ almost surely\footnote{\label{foot:TypicalVSAlmostAll}
A sequence of random variables $X_1,X_2,\cdots$ is said to converge to $X$ almost surely if
\aln{\label{almostsure}
\mr{Prob}\lrl{\lim_{n\ra\infty}X_n=X}=1.
}
The almost-sure convergence is a stronger condition than the convergence in probability, which states that 
\aln{
\lim_{n\ra\infty}\mr{Prob}\lrl{|X_n-X|\leq\epsilon}=1
}
for all $\epsilon>0$.
As an example~\cite{tao2012topics} that highlights these two notions, let us pick a real number $r$ uniformly from $[0,1]$.
We define $S_n\:(n\geq 1)$ as the interval where the decimal expansion of $r$ starts with the digits of $n$ (e.g., $S_{418}=[0.418,0.419)$). 
Now, $X_n$ is defined as 
\aln{
X_n=
\left\{
\begin{array}{ll}
1 & (r\in S_n)\\
0 & (r\notin S_n)
\end{array}
\right..
}
Then, $X_n$ converges to $0$ in probability.
However, since we have infinitely many $n$ such that $X_n=1$ for every $r$, $X_n$ does not show almost-sure convergence to $0$.
As mentioned in the introduction, we often use the term ``typical'' or ``almost all'' to discuss the behaviors of quantum trajectories, and they basically mean almost-sure convergence.
We sometimes explicitly indicate it to stress the exact meaning.
} with respect to the probability measure of quantum trajectories.

\item 
$\hat{\rho}_{\bm{b}}^\mr{\infty}$ is a random variable that satisfies
\aln{
\ml{E}[\hat{\rho}_{\bm{b}}^\mr{\infty}]&=\hat{\rho}_{\bm{b}}^\mr{\infty}
}
and
\aln{
\mbb{E}[\hat{\rho}_{\bm{b}}^\mr{\infty}]=\overline{\ml{E}^n}[\hat{\rho}_0].
}

\item
In particular, if $\ml{E}$ has a unique stationary state $\hat{\rho}_\mr{ss}$, ergodicity holds, i.e.,
\aln{
\hat{\rho}_{\bm{b}}^\mr{\infty}=\hat{\rho}_\mr{ss}={\mathbb{E}\left[\hat{\rho}_{\bm{b}}^\mr{\infty}\right]}
\label{eq:ergodicity_rho}
}
for any initial state $\hat{\rho}_0$ almost surely with respect to the probability measure of quantum trajectories.
\end{enumerate}

Let us explain intuitive meanings of these results, as schematically shown in Fig.~\ref{fig5}. 
Statement~(1) means that while $\Tr[\hat{\rho}_{\bm{b};n}\hat{A}]$ temporally fluctuates, its long-time average converges for almost all trajectories.
Statement~(3) means that if there is a unique stationary state $\hat{\rho}_\mr{ss}$ for the CPTP map $\ml{E}$, for which sufficient conditions have already been detailed in Chapter~\ref{sec:CPTPspectra}, the long-time average $\hat{\rho}_{\bm{b}}^\mr{\infty}$ is given by $\hat{\rho}_\mr{ss}$; i.e., the ergodicity in Eq.~\eqref{eq:ergodicity_rho} [or Eq.~\eqref{linerg2}] holds; see Fig.~\ref{fig5}(a). 
Note that non-decaying oscillatory eigenmodes corresponding to nontrivial peripheral spectrum with $|z_a|=1$ and $z_a\neq1$ [see Eq.~\eqref{eigenvalueCPTP}], which may exist in CPTP maps that are not primitive, vanish in the long-time average on the right-hand side of Eq.~\eqref{eq:ergodicity_rho}.
If the CPTP map is primitive, such modes are absent, and thus Eq.~\eqref{eq:ergodicity_rho} becomes 
\begin{align}
\hat{\rho}_{\bm{b}}^\mr{\infty}=\hat{\rho}_\mathrm{ss}=\lim_{n\rightarrow\infty}\mathbb{E}\left[\hat{\rho}_{\bm{b};n}\right].
\end{align} 
Statement~(2) is related to the situation where $\ml{E}$ has multiple stationary states and ergodicity breaks down, as shown in Fig.~\ref{fig5}(b).
Instead, the long-time average of each quantum trajectory is probabilistically determined and becomes a fixed point of the CPTP map.
If we further average these long-time averages over all measurement outcomes, we obtain a stationary state of the CPTP map that depends on the initial state $\hat{\rho}_0$.

\begin{figure}[!h]
\centering\includegraphics[width=\linewidth]{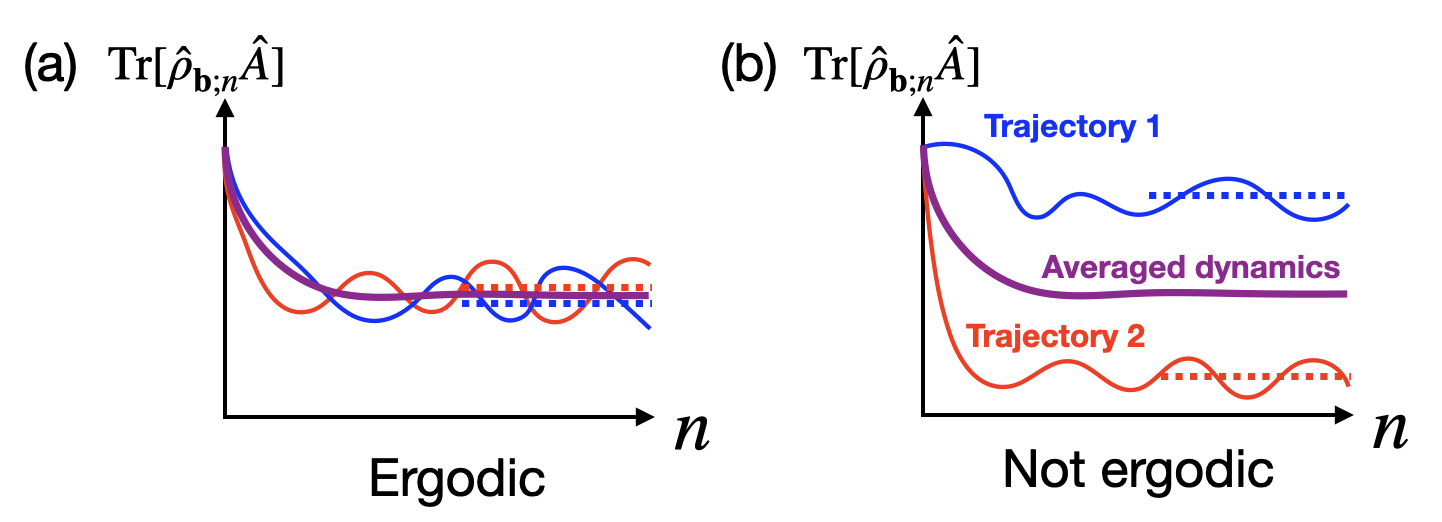}
\caption{
Schematic figures representing the (non-)ergodicity of quantum trajectories for linear observables. 
The expectation value of an observable $\hat{A}$ for each quantum trajectory $\hat{\rho}_{\bm{b};n}$ (blue and red) temporally fluctuates. 
However, the long-time average of $\hat{\rho}_{\bm{b};n}$ exists for almost all trajectories. 
(a) For ergodic cases, the long-time average for almost all trajectories coincides with the long-time ensemble average. 
(b) If ergodicity is broken, the long-time average becomes a random variable that is, in general, different from the ensemble-averaged one.
}
\label{fig5}
\end{figure}

The proof of the ergodicity explained above is given in Appendix~\ref{app:erg}.
There, we basically follow Ref.~\cite{kummerer2005quantum} but try to illustrate it in a physicist-friendly way, at the cost of mathematical rigor.
The mathematical trick to show almost-sure convergence is to employ the martingale convergence theorem~\cite{hall2014martingale, roldan2023martingales}.

While the above result is for the discrete-time case, a similar result holds for quantum trajectories under continuous-time measurement~\cite{kummerer2004pathwise}.
In this case, the ergodicity of quantum trajectories requires a unique  stationary state of the GKSL generator $\ml{L}$.

\subsubsection{Examples}
\label{exampleerg}
We present several examples to provide an intuitive understanding of ergodicity and its breakdown.
The first two examples are simple toy examples, and the third example is the ergodicity breaking of quantum diffusion recently demonstrated in Ref.~\cite{schmolke2024measurement}.

\textbf{Example 1.} Let us consider a four-level system, where the levels are denoted by $\ket{0},\ket{1},\ket{2},\ket{3}$.
We assume that the measurement operators are given by
\aln{\label{example1}
\begin{split}
\hat{M}_0&=\ket{0}\bra{0},\\
\hat{M}_1&=\ket{1}\bra{1},\\
\hat{M}_2&=\ket{1}\bra{3}+\ket{0}\bra{2}.
\end{split}}
We can easily confirm that $\sum_b\hat{M}_b^\dag\hat{M}_b=\hat{\mbb{I}}$ and that $\ml{E}$ has multiple stationary states, namely
\aln{
\ml{E}[c\ket{0}\bra{0}+(1-c)\ket{1}\bra{1}]=c\ket{0}\bra{0}+(1-c)\ket{1}\bra{1}
}
for arbitrary $c\in[0,1]$.
Indeed, the corresponding CPTP map is not irreducible (see Theorem~\ref{thm:IrreducibilityByKraus}), as the algebra $\mathbb{K}$ generated by the Kraus operators does not include, e.g., $\ketbra{2}$ or $\ketbra{3}$ and thus $\mathbb{K}\neq\mathbb{B}[\mathcal{H}]$.

\begin{figure}[!h]
\centering\includegraphics[width=\linewidth]{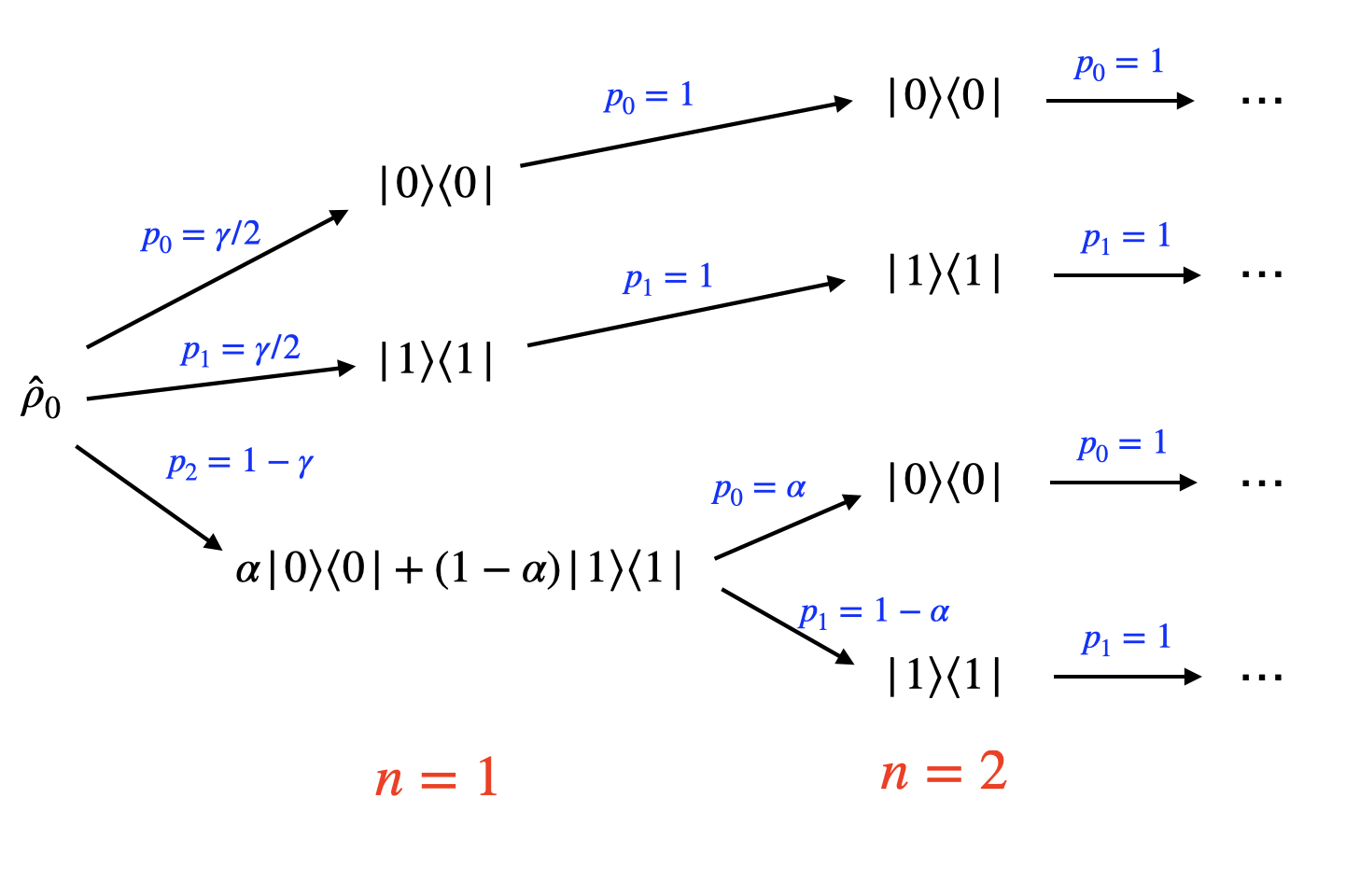}
\caption{
Transitions of the state $\hat{\rho}_0$ for \textbf{Example 1}.
The state becomes $\ketbra{0}$ and $\ketbra{1}$ with probabilities $\gamma/2+(1-\gamma)\alpha$ and $\gamma/2+(1-\gamma)(1-\alpha)$, respectively.
This demonstrates that ergodicity is broken, while purification occurs.
}
\label{fig6}
\end{figure}

Let us consider quantum trajectories starting from an initial state
\aln{\label{inifour}
\hat{\rho}_0=\gamma\ket{+}\bra{+}+(1-\gamma)(\alpha\ket{2}\bra{2}+(1-\alpha)\ket{3}\bra{3}),
}
where $\gamma,\alpha\in [0,1]$ and $\ket{+}=\frac{\ket{0}+\ket{1}}{\sqrt{2}}$.
The transitions and their probabilities are illustrated in Fig.~\ref{fig6}.
After the first measurement, $\hat{\rho}_0$ is mapped to $\ketbra{0}\: (b_1=0), \ketbra{1} \:(b_1=1),$ or $\alpha\ketbra{0}+(1-\alpha)\ketbra{1}\:(b_1=2)$ with probabilities $\gamma/2$, $\gamma/2$, and $1-\gamma$, respectively.
The states $\ketbra{0}$ and $\ketbra{1}$ remain unchanged after subsequent measurements.
On the other hand, the state $\alpha\ketbra{0}+(1-\alpha)\ketbra{1}$ becomes $\ketbra{0}\:(b_2=0)$ or $\ketbra{1}\:(b_2=1)$ with probabilities $\alpha$ and $1-\alpha$, respectively, after the second measurement.

Therefore, we find that for $n \geq 2$,
\aln{
\hat{\rho}_{\bm{b};n}=\hat{\rho}_{\bm{b}}^\infty=
\left\{
\begin{array}{ll}
\ketbra{0} &\quad \mr{Prob}:\frac{\gamma}{2}+(1-\gamma)\alpha, \\
\ketbra{1} &\quad \mr{Prob}:\frac{\gamma}{2}+(1-\gamma)(1-\alpha).
\end{array}
\right.
}
We also find that
\aln{
\av{\ml{E}^n}[\hat{\rho}_0]=\lim_{n\ra\infty}\ml{E}^n[\hat{\rho}_0]
=\lrs{\frac{\gamma}{2}+(1-\gamma)\alpha}\ketbra{0}+\lrs{\frac{\gamma}{2}+(1-\gamma)(1-\alpha)}{\ketbra{1},
}}
which means that ergodicity breaks down due to the existence of multiple stationary states of $\ml{E}$.
In contrast, we can confirm the statement (2) above, i.e., $\ml{E}[\hat{\rho}_{\bm{b}}^\infty]=\hat{\rho}_{\bm{b}}^\infty$ and 
$\mbb{E}[\hat{\rho}_{\bm{b}}^\infty]=\av{\ml{E}^n}[\hat{\rho}_0]$.

Finally, we note that the purification of the quantum state, which will be detailed in Sec.~\ref{sec:pur}, occurs in this case.
That is, while $\hat{\rho}_0$ is initially a mixed state, $\hat{\rho}_{\bm{b}}^\infty$ converges to a pure state.
\newline

\textbf{Example 2.}
We next consider a four-level system with the measurement operators 
\aln{\label{example2}
\begin{split}
\hat{M}_0&=\frac{1}{\sqrt{2}}\ket{0}\bra{0},\\
\hat{M}_1&=\frac{1}{\sqrt{2}}\ket{1}\bra{1},\\
\hat{M}_2&=\frac{1}{\sqrt{2}}(\ket{1}\bra{0}+\ket{0}\bra{1}),\\
\hat{M}_3&=\ket{1}\bra{3}+\ket{0}\bra{2},
\end{split}}
which satisfy $\sum_b\hat{M}_b^\dag\hat{M}_b=\hat{\mbb{I}}$.
In this case, the CPTP map $\ml{E}$ has a unique stationary state,
\aln{
\hat{\rho}_\mr{ss}=\frac{1}{2}(\ketbra{0}+\ketbra{1}).
}
Note that the corresponding CPTP map is not irreducible in this case.
Indeed, while the stationary state is unique, it is not positive definite.

Let us again start from the initial state given in Eq.~\eqref{inifour}.
We first observe that in the long run, measurement outcomes $b=0$ or $b=1$ are obtained, after which the state always becomes either $\ketbra{0}$ or $\ketbra{1}$.
If the state is $\ketbra{0}$ ($\ketbra{1}$), a subsequent measurement leads to the state $\ketbra{0}$ ($\ketbra{1}$) with $b=0$ ($b=1$) or the state
$\ketbra{1}$ ($\ketbra{0}$) with $b=2$.
These possibilities occur with the same probability, as schematically shown in Fig.~\ref{fig7}.

\begin{figure}[!h]
\centering\includegraphics[width=\linewidth]{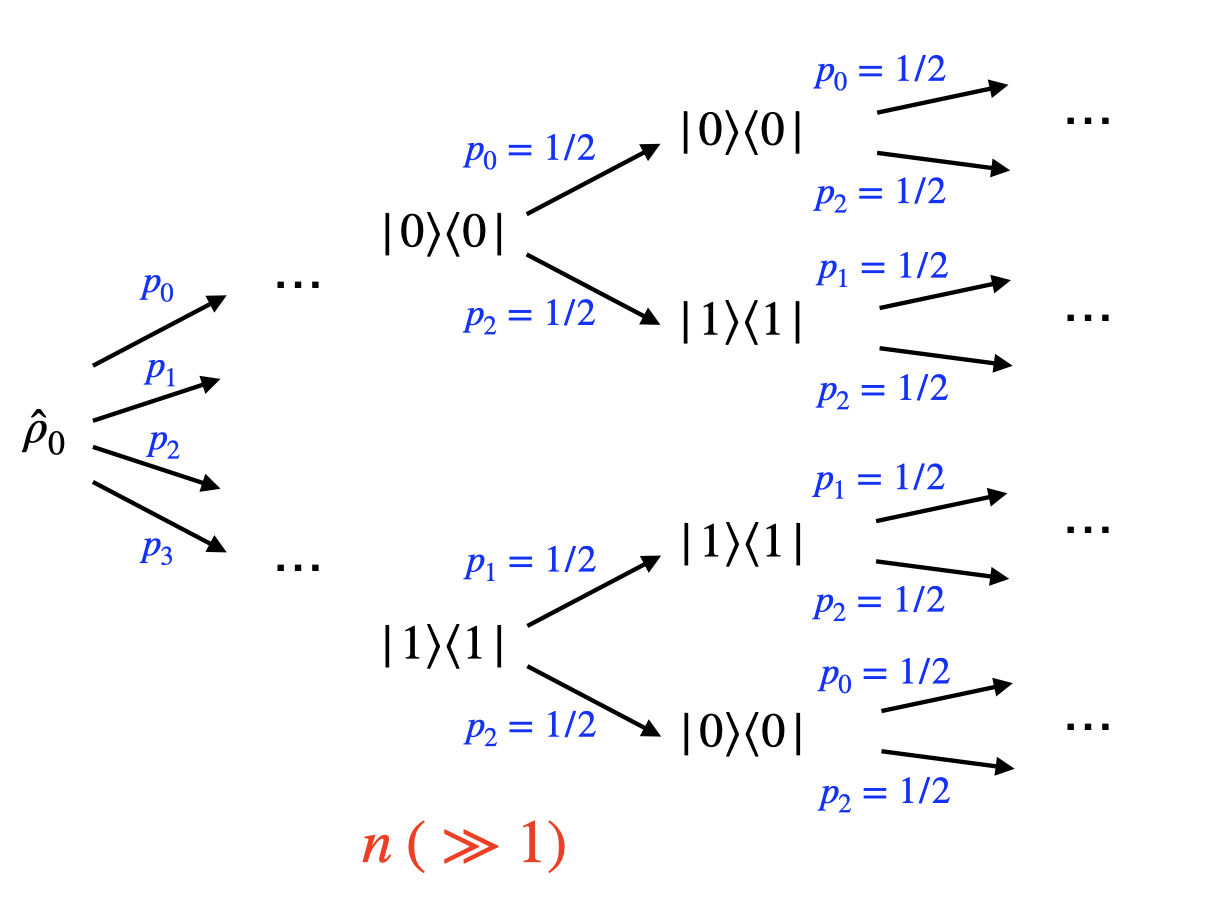}
\caption{
Transitions of the state $\hat{\rho}_0$ for \textbf{Example 2}.
After long times, typical trajectories become pure and they fluctuate between $\ketbra{0}$ and $\ketbra{1}$ with equal probabilities.
In this case, purification and ergodicity for almost all trajectories hold true.
}
\label{fig7}
\end{figure}

Consequently, after sufficiently long times, we find that $\ketbra{0}$ and $\ketbra{1}$ appear with the same frequency for typical trajectories. 
Therefore, the long-time average becomes
\aln{
\hat{\rho}_{\bm{b}}^\infty=\frac{1}{2}\lrs{\ketbra{0}+\ketbra{1}}=\hat{\rho}_\mr{ss},
}
meaning that ergodicity holds true.
We can also see that purification occurs for typical trajectories, since $\ketbra{0}$ and $\ketbra{1}$ are pure states.

We note that the long-time average of rare trajectories can have different properties.
For example, if we find $b=0$ for all measurement outcomes, then $\hat{\rho}_{\bm{b}}^\infty=\ketbra{0}\neq \hat{\rho}_\mr{ss}$.
As another example, if we find $b=2$ for all outcomes, we find 
$\hat{\rho}_{\bm{b}}^\infty=\ketbra{+}\neq \hat{\rho}_\mr{ss}$.
\newline

\textbf{Example 3.}
The final example in this subsection is the ergodicity breakdown for continuous-time quantum trajectories discussed in Ref.~\cite{schmolke2024measurement}.
We especially focus on the diffusion-type stochastic equation in Eq.~\eqref{qdfeq} for the case of an initial pure state, with $\hat{L}=\hat{L}^\dag$ (only one jump type) and $\theta=0$ for simplicity.
Then, it turns out that the corresponding stochastic equation for the pure state is given by
\aln{
d\ket{\psi_c}=\lrs{-i\hat{H}-\frac{\hat{L}^2}{2}+\hat{L}\braket{\hat{L}}_c-\frac{\braket{\hat{L}}_c^2}{2}}\ket{\psi_c}dt+\lrs{\hat{L}-\braket{\hat{L}}_c}\ket{\psi_c}dW.
}

Now, consider the eigenvalue equation for $\hat{L}$,
\aln{
\hat{L}\ket{q_{k,m}}=q_k\ket{q_{k,m}},
}
where $m$ is the label for possible degeneracy.
We then define
\aln{\label{DFS}
\ml{H}_\mr{DFS}^{(k)}=\mr{Span}\{\ket{q_{k,m}}\}_m.
}
If $\hat{H}\ket{\psi}\in \ml{H}_\mr{DFS}^{(k)}$ for all $\ket{\psi}\in\ml{H}_\mr{DFS}^{(k)}$,
$\ml{H}_\mr{DFS}^{(k)}$ is called a decoherence-free subspace (DFS)~\cite{lidar1998decoherence, beige2000quantum}.
In this case, for $\ket{\psi_c(t)}\in \ml{H}_\mr{DFS}^{(k)}$, we find $(\hat{L}-\braket{\hat{L}}_{c;t})\ket{\psi_c(t)}=0$ and thus
\aln{
d\ket{\psi_c(t)}=-i\hat{H}\ket{\psi_c(t)}dt.
}
Namely, the effect of dissipation vanishes and the state essentially evolves via unitary dynamics within $\ml{H}_\mr{DFS}^{(k)}$.

In this situation, the GKSL equation obtained by averaging over measurement outcomes,
\aln{
\frac{d\hat{\rho}}{dt}=-i[\hat{H},\hat{\rho}]+\hat{L}\hat{\rho}\hat{L}-\frac{1}{2}\{\hat{L}^2,\hat{\rho}\},
}
has multiple stationary states.
To see this, let us consider the Hamiltonian projected onto the DFS, $\hat{H}_\mr{DFS}^{(k)}=\hat{P}_\mr{DFS}^{(k)}\hat{H}\hat{P}_\mr{DFS}^{(k)}$, where $\hat{P}_\mr{DFS}^{(k)}$ is the projection operator onto $\ml{H}_\mr{DFS}^{(k)}$.
Then, it is easy to confirm that $\ket{E_\mr{DFS}^{(k)}}\bra{E_\mr{DFS}^{(k)}}$ is a stationary state of the GKSL equation, where $\ket{E_\mr{DFS}^{(k)}}\in \ml{H}_\mr{DFS}^{(k)}$ is an eigenstate of $\hat{H}_\mr{DFS}^{(k)}$.
In contrast, we also have a trivial stationary state $\frac{\hat{\mbb{I}}}{\mr{dim}[\ml{H}]}$, which acts on the entire Hilbert space and is thus different from $\ket{E_\mr{DFS}^{(k)}}\bra{E_\mr{DFS}^{(k)}}$.

The existence of multiple stationary states indicates that the ergodicity of quantum trajectories breaks down for this system, as we will explicitly demonstrate below.
To analyze the time evolution of $\ket{\psi_c(t)}$, we first decompose it as~\cite{schmolke2024measurement}
\aln{
\ket{\psi_c(t)}=\sum_{\ket{q_m}\in\ml{H}_\mr{DFS}^{(k)}}c_m(t)\ket{q_m}+
\sum_{\ket{p_l}\in\ml{H}\backslash\ml{H}_\mr{DFS}^{(k)}}d_l(t)\ket{p_l}
}
Here, since we will consider a fixed DFS in the following, we simply omit the subscript $k$ for quantities such as $\ket{q_m}$.

If we define 
\aln{\label{linearct}
|c(t)|^2=\braket{\psi_c(t)|\lrs{\sum_{\ket{q_m}\in\ml{H}_\mr{DFS}^{(k)}}\ket{q_m}\bra{q_m}}|\psi_c(t)}=\sum_{\ket{q_m}\in\ml{H}_\mr{DFS}^{(k)}}|c_m(t)|^2=\sum_{\ket{q_m}\in\ml{H}_\mr{DFS}^{(k)}}|\braket{q_m|\psi_c(t)}|^2,
}
we find
\aln{
d(|c(t)|^2)&=\sum_{\ket{q_m}\in\ml{H}_\mr{DFS}^{(k)}}\braket{q_m|\lrs{\ket{d\psi_c(t)}\bra{\psi_c(t)}+\ket{\psi_c(t)}\bra{d\psi_c(t)}+\ket{d\psi_c(t)}\bra{d\psi_c(t)}}|q_m}\nonumber\\
&=\lrs{\sum_{\ket{q_m}\in\ml{H}_\mr{DFS}^{(k)}} c_m^*(t)\braket{q_m|\hat{L}-\braket{\hat{L}}_{c;t}|\psi_c(t)}+\mr{c.c.}}dW\nonumber\\
&=\lrl{2|c(t)|^2\lrs{q(1-|c(t)|^2)-\sum_{\ket{p_l},\ket{p_{l'}}\in\ml{H}\backslash\ml{H}_\mr{DFS}^{(k)}}d_{l'}^*(t)d_l(t)\braket{p_{l'}|\hat{L}|p_l}}}dW,
}
where we have used $dtdW=0$, $dW^2=dt$, and the fact that $\braket{q_m|\hat{H}|p_l}=0$ since $\hat{H}\ket{q_m}\in\ml{H}_\mr{DFS}^{(k)}$.
Note that we write $q_k$ simply as $q$, i.e., $\hat{L}\ket{q_m}=q\ket{q_m}$.

Now, we ask what the stationary distribution for $|c(t)|^2$ is, starting from an initial state
\aln{
\ket{\psi_0}=\sum_{\ket{q_m}\in\ml{H}_\mr{DFS}^{(k)}}c_m(0)\ket{q_m}+
\sum_{\ket{p_l}\in\ml{H}\backslash\ml{H}_\mr{DFS}^{(k)}}d_l(0)\ket{p_l}.
}
For this purpose, we first notice that
\aln{
\mbb{E}\lrl{d|c(t)|^2}\propto \mbb{E}[dW]=0, 
}
from which we can conclude that
\aln{\label{initialcondE}
\mbb{E}\lrl{|c(t)|^2}=\mbb{E}\lrl{|c(0)|^2}=|c(0)|^2.
}
Next, we notice that there are two solutions for $d|c(t)|^2=0$: $|c(t)|^2=1$ and $|c(t)|^2=0$, where we have used the normalization condition
\aln{
|c(t)|^2+\sum_{\ket{p_l}\in\ml{H}\backslash\ml{H}_\mr{DFS}^{(k)}}|d_l(t)|^2=1.
}
If we take generic $\hat{H}$ and $\hat{L}$ within our setting, we expect that there are no other solutions for $d|c(t)|^2=0$, so we assume this in the following.

Then, the stationary probability distribution for $|c(t)|^2$ with respect to the ensemble of quantum trajectories\footnote{
We first notice that $|c(t)|^2$ converges to a stationary value because of the martingale convergence theorem~\cite{hall2014martingale} since $\mathbb{E}[d|c(t)|^2]=0$ and $\mathbb{E}[|c(t)|^2]$ is bounded. 
Then, by the assumption, $|c(t)|^2$ takes a stationary value 0 or 1; i.e., the stationary distribution is given by a sum of $\delta(|c(t)|^2-1)$ and $\delta(|c(t)|^2)$. 
Their weights are determined from the condition in Eq.~\eqref{initialcondE}.
} should take the form
\aln{
(1-|c(0)|^2)\delta(|c(t)|^2)+ |c(0)|^2\delta(|c(t)|^2-1),
}
i.e., $|c(t)|^2$ becomes 0 and 1 with probabilities $1-|c(0)|^2$ and $|c(0)|^2$, respectively.
Now, let us assume $0<|c(0)|<1$. 
In this case, since
\aln{
\av{|c(t)|^2}=(1 \:\:\text{or}\:\:0)
\neq |c(0)|^2 = \av{\mbb{E}\lrl{|c(t)|^2}},
}
we find that the ergodicity of quantum trajectories actually breaks down\footnote{
Note that $|c(t)|^2$ is a linear observable with respect to $\ket{\psi_c(t)}\bra{\psi_c(t)}$, which is understood from Eq.~\eqref{linearct}.
}.

Intuitively, trajectories with $|c(t)|^2\ra 0$ are noisy trajectories due to $dW$, since they are out of the DFS. 
In contrast, trajectories with $|c(t)|^2\ra 1$ are noiseless trajectories, which may exhibit coherent dynamics, since they are fully in $\ml{H}_\mr{DFS}$. 
These trajectories are schematically shown in Fig.~\ref{fig8}.

\begin{figure}[!h]
\centering\includegraphics[width=\linewidth]{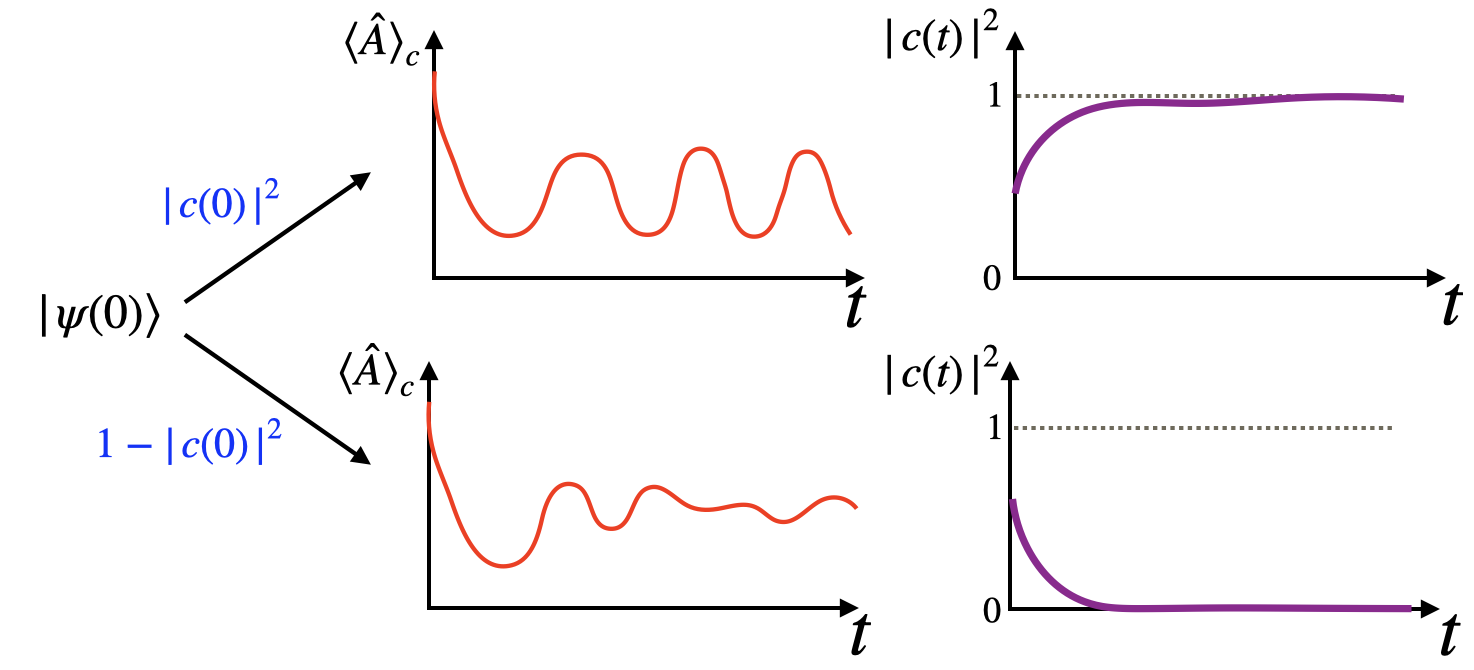}
\caption{
Schematic illustration of the evolution of a quantum trajectory in \textbf{Example 3} (see Fig.~1 of Ref.~\cite{schmolke2024measurement} for actual time evolution). 
With a probability $|c(0)|^2\:(1-|c(0)|^2)$, the long-time trajectory exhibits coherent (incoherent) dynamics, where $|c(t)|^2$ approaches one (zero).
}
\label{fig8}
\end{figure}

In Ref.~\cite{schmolke2024measurement}, the authors considered concrete models showing such breakdown of ergodicity.
For example, they considered a one-dimensional XY model with a magnetic field and on-site dephasing.
For certain system sizes, DFSs appear, in which a local spin coherently synchronizes (or anti-synchronizes) with other local spins~\cite{schmolke2022noise}.
If we start from an initial state with $0<|c(0)|^2<1$, one actually finds that some trajectories (anti-)synchronize because they fall into the DFS, while there also exist trajectories that exhibit only noisy dynamics.
This also demonstrates the breakdown of ergodicity in this system.
Here, we note that the DFS is one example of subspaces that arise if the averaged CPTP dynamics admits several stationary states: a space spanned by eigenmodes of one stationary state corresponding to nonzero eigenvalues becomes a subspace from which states cannot escape under the dynamics \cite{baumgartner2012structures}.
While the above example focuses on the relation between the ergodicity breaking and the DFS, more general discussions about the behaviors of quantum trajectories have been given in the case where there are various such subspaces, not restricted to the DFS.
For instance, the probability and time for a quantum trajectory to converge to one of the subspaces have been discussed in Ref.~\cite{benoist2024exponentially}.
The asymptotic behaviors of quantum trajectories, e.g., complete and incomplete localizations in the subspaces, have also been thoroughly explored in Ref.~\cite{schmolke2025asymptotic}.

\subsection{Purification of quantum trajectories}\label{sec:pur}
As another important property of quantum trajectories, we next discuss their purification.
Let us take an initially mixed state $\hat{\rho}_0$.
If we perform a projective measurement with a rank-one operator, we immediately obtain a pure state, $\hat{P}_\eta\hat{\rho}_0\hat{P}_\eta{\propto}\ket{\eta}\bra{\eta}$.
While indirect measurement does not immediately turn an initial mixed state into a pure state, the purity $\Tr[\hat{\rho}^2]$ of the state $\hat{\rho}$ is expected to be non-decreasing after the measurement. 
This can be mathematically understood from an inequality by Nielsen~\cite{nielsen2001characterizing}\footnote{
While this inequality is intuitive, the proof in Ref.~\cite{nielsen2001characterizing} is rather complicated, as the majorization technique is used. 
Here, instead of a general proof, we give a simple proof available only for $k=2$. 
Using the Cauchy-Schwarz inequality twice, we have
\aln{
\Tr[\hat{\rho}^2]&=\sum_{b,b'}\Tr[\hat{M}_b^\dag\hat{M}_b\hat{\rho}\hat{M}_{b'}^\dag\hat{M}_{b'}\hat{\rho}]\leq \sum_{b,b'}|\Tr[\sqrt{\hat{\rho}}\hat{M}_b^\dag\hat{M}_b\sqrt{\hat{\rho}}\sqrt{\hat{\rho}}\hat{M}_{b'}^\dag\hat{M}_{b'}\sqrt{\hat{\rho}}]|
\nonumber\\
&\leq \sum_{b,b'}\sqrt{\Tr[\hat{M}_b^\dag\hat{M}_b\hat{\rho}\hat{M}_{b}^\dag\hat{M}_{b}\hat{\rho}]\Tr[\hat{M}_{b'}^\dag\hat{M}_{b'}\hat{\rho}\hat{M}_{b'}^\dag\hat{M}_{b'}\hat{\rho}]}\nonumber\\
&=\lrs{\sum_b\sqrt{p_b}\sqrt{\frac{\Tr[\hat{M}_b^\dag\hat{M}_b\hat{\rho}\hat{M}_{b}^\dag\hat{M}_{b}\hat{\rho}]}{p_b}}}^2 
\leq \sum_b\frac{\Tr[\hat{M}_b^\dag\hat{M}_b\hat{\rho}\hat{M}_{b}^\dag\hat{M}_{b}\hat{\rho}]}{p_b}=\sum_b p_b\Tr[(\hat{\rho}_b')^2].
}
},
\aln{
\sum_bp_b\Tr[(\hat{\rho}_b')^k]\geq \Tr[(\hat{\rho})^k],
}
where $k\in \mbb{N}$ and $p_b,\hat{\rho}_b'$ are given in Eq.~\eqref{krauspure}.
This means that the average purity $(k=2)$ of post-measurement states is greater than or equal to the original purity.

Therefore, we expect that repeating the measurement will eventually lead to purification; i.e.,
\aln{\label{eq:purification}
\lim_{n\ra\infty}\hat{\rho}_{\bm{b};n}\;\;\text{becomes a pure state,}
}
or it is rephrased as
\aln{
\lim_{n\ra\infty}\Tr\lrl{\lrs{\hat{\rho}_{\bm{b};n}}^2}=1.
}
Such a purification property has recently attracted much attention in the context of measurement-induced phase transitions~\cite{gullans2020dynamical}.

We are especially interested in whether quantum trajectories become purified almost surely after sufficiently long times, focusing on finite-dimensional systems.
One of the equivalent conditions for the almost sure purification of the trajectories is the one discussed in Refs.~\cite{maassen2006purification, benoist2019invariant}, which is stated as follows:
a quantum trajectory $\hat{\rho}_{\bm{b};n}$ purifies almost surely if and only if there exists no orthogonal projector $\hat{\mathcal{Q}}$ with $\mr{rank}[\hat{\mc{Q}}] \geq 2$ such that for any $n$ and all $\bm{b}_n$\footnote{
While $\bm{b}$ represents a sequence of measurement outcomes in most of this review, the meaning of $\bm{b}$ will be extended in Chapter \ref{sec:mipt}.
There, $\bm{b}$ involves not only discrete measurement outcomes but also continuous variables, such as random unitaries chosen from a classical probability measure.
In this case, ``all $\bm{b}_n$" should be rephrased as ``almost all $\bm{b}_n$".
While Ref.~\cite{maassen2006purification} only considers discrete $\bm{b}$, Ref.~\cite{benoist2019invariant} also considers continuous $\bm{b}$.
}, there is a constant $\zeta_{\bm{b}_n}$ satisfying 
\begin{align}
\hat{\mathcal{Q}}\hat{\mathsf{M}}_{\bm{b};n}^\dagger\hat{\mathsf{M}}_{\bm{b};n}\hat{\mathcal{Q}}=\zeta_{\bm{b}_n}\hat{\mathcal{Q}}.
\label{eq:Benoist-purification}
\end{align}

To understand one direction of the equivalence, let us assume that a projector $\hat{\mc{Q}}$ with $\mathrm{rank}[\hat{\mc{Q}}]\geq 2$ exists such that Eq.~\eqref{eq:Benoist-purification} holds true for any $n$ and all $\bm{b}_n$.
Then, if we consider an initial state $\hat{\rho}_0=\hat{\mc{Q}}/\mr{Tr}[\hat{\mc{Q}}]$, the realizable state at step $n$ becomes $\hat{\rho}_{\bm{b};n}={\hat{\mathsf{M}}_{\bm{b};n}\hat{\mc{Q}}\hat{\mathsf{M}}_{\bm{b};n}^\dag}/\mr{Tr}[\hat{\mathsf{M}}_{\bm{b};n}\hat{\mc{Q}}\hat{\mathsf{M}}_{\bm{b};n}^\dag]$, where $\bm{b}$ is restricted to outcomes for which $\zeta_{\bm{b}_n}\neq 0$.
Then, the purity at time-step $n$ reads
\aln{
\Tr\lrl{\lrs{\hat{\rho}_{\bm{b};n}}^2}=\frac{\mr{Tr}[\hat{\mathsf{M}}_{\bm{b};n}\hat{\mc{Q}}\hat{\mathsf{M}}_{\bm{b};n}^\dag\hat{\mathsf{M}}_{\bm{b};n}\hat{\mc{Q}}\hat{\mathsf{M}}_{\bm{b};n}^\dag]}{\mr{Tr}[\hat{\mathsf{M}}_{\bm{b};n}\hat{\mc{Q}}\hat{\mathsf{M}}_{\bm{b};n}^\dag]^2}=\frac{1}{\Tr[\mc{\hat{Q}}]}<1,
}
where we have used Eq.~\eqref{eq:Benoist-purification}.
Therefore, purification indeed fails to occur in this case.

While we skip the complete proof of the other direction of the equivalence, we discuss the one-step and two-step versions of the condition in detail in Sec.~\ref{purMK}. 
The discussions in that section correspond to sufficient conditions under which purification occurs for almost all quantum trajectories.
The proof for the one-step version is given in Appendix~\ref{app:pur}.
We then provide concrete examples regarding the presence or absence of purification in Sec.~\ref{purex}.
 
We also note that the condition in Eq.~\eqref{eq:Benoist-purification} is related to the matrix rank of $\hat{\mathsf{M}}_{\bm{b};n}$, while we do not explain the mathematical details here. 
Instead, in Sec.~\ref{sec:rank-M_purification}, we explain the relation between the purification of $\hat{\rho}_{\bm{b};n}$ and the rank of $\hat{\mathsf{M}}_{\bm{b};n}$ in a physicist-friendly manner.

\subsubsection{Sufficient condition of purification}\label{purMK}
Here, we discuss some sufficient conditions for the purification of typical quantum trajectories on the basis of Ref.~\cite{maassen2006purification} by Maassen and K\"ummerer.
By considering the $n=1$ case of Eq.~\eqref{eq:Benoist-purification}, one finds that a quantum trajectory $\hat{\rho}_{\bm{b};n}$ purifies almost surely with respect to the probability measure of quantum trajectories, \textit{unless} the following situation occurs:
\begin{itemize}    
\item[($\star$)]
There exists an orthogonal projector $\hat{\ml{Q}}$ with $\mr{rank}[\hat{\ml{Q}}]\geq 2$ such that for all $b$, there is a constant $\zeta_b \geq 0$ satisfying
\aln{\label{purjoken}
\hat{\ml{Q}}\hat{M}_b^\dag\hat{M}_b\hat{\ml{Q}}=\zeta_b\hat{\ml{Q}}.
}
\end{itemize}
In other words, if we consider the space projected by $\hat{\ml{Q}}$, $\hat{M}_b^\dag\hat{M}_b$ acts as the identity (times $\zeta_b$). 
The proof is given in Appendix~\ref{app:pur}.

As a corollary of this result, we can say the following:
if $\mr{dim}[\ml{H}]=2$, a quantum trajectory $\hat{\rho}_{\bm{b};n}$ purifies almost surely, unless $\hat{M}_b$ is proportional to a unitary operator for all $b$.
We note that ($\star$) above is a necessary but not sufficient condition for the breakdown of purification (see the discussion below)\footnote{
Note that the statements that ``($\star$) is a necessary condition for the breakdown of purification for almost all trajectories'' and ``purification for almost all trajectories holds unless ($\star$) occurs'' are equivalent (contrapositives of each other).
}.

While the above result is for the discrete-time case, a similar property can be discussed for quantum trajectories under continuous-time measurement~\cite{barchielli2003asymptotic, benoist2021invariant}.
For example, for the case of the quantum jump process in Eq.~\eqref{stochasticeq}, the purification of quantum trajectories occurs almost surely unless a condition similar to ($\star$) above, where $\hat{M}_b$ is replaced by $\hat{L}_b$, holds.

As a stronger version of the above statement, we can consider the two-step version ($n=2$) of Eq.~\eqref{eq:Benoist-purification} and obtain the following consequence:
a quantum trajectory $\hat{\rho}_{\bm{b};n}$ purifies almost surely with respect to the probability measure of quantum trajectories, \textit{unless} the following situation occurs:
\begin{itemize}    
\item[($\star\star$)]
There exists an orthogonal projector $\hat{\ml{Q}}$ with $\mr{rank}[\hat{\ml{Q}}]\geq 2$ such that for all $b,b'$, there is a constant $\zeta_{b,b'}\geq0$ satisfying
\aln{\label{purjoken2}
\hat{\ml{Q}}\hat{M}_{b'}^\dag\hat{M}_b^\dag\hat{M}_b\hat{M}_{b'}\hat{\ml{Q}}=\zeta_{b,b'}\hat{\ml{Q}}.
}
\end{itemize}
Again, this is a necessary but not sufficient condition for the breakdown of purification.

\subsubsection{Examples}\label{purex}
Let us discuss some examples. 
In Sec.~\ref{exampleerg}, we already saw that \textbf{Examples 1} and \textbf{2} show purification for almost all trajectories, i.e., the time-evolved state approaches $\ketbra{0}$ or $\ketbra{1}$.
For these examples, we find that $(\star\star)$ does not hold, whereas $(\star)$ holds.

Let us especially revisit \textbf{Example 1}, which is given in Eq.~\eqref{example1}.
In this case, while purification occurs, we can find that $(\star)$ holds.
Indeed, if we take $\hat{\mc{Q}}=\ketbra{2}+\ketbra{3}$ with $\mr{rank}[\mc{\hat{Q}}]=2$, Eq.~\eqref{purjoken} holds with $\zeta_0=\zeta_1=0$ and $\zeta_2=1$.
This fact illustrates that $(\star)$ is not a sufficient condition for the breakdown of purification.
However, we also find that $(\star\star)$ is not satisfied in this case;
for instance, if we consider $b=1$ and $b'=2$, $\hat{\mc{Q}}=\ketbra{2}+\ketbra{3}$ no longer satisfies Eq.~\eqref{purjoken2}.
We can then conclude that no projector $\hat{\mc{Q}}$ with $\mr{rank}[\hat{\mc{Q}}]\geq 2$ exists satisfying $(\star \star)$, and therefore purification for typical quantum trajectories is justified.
A similar discussion holds for \textbf{Example 2}.

Next, we introduce two additional examples where purification does not occur.

\textbf{Example 4.}
Let us again consider a four-level system, where each level is denoted by $\ket{0},\ket{1},\ket{2},$ and $\ket{3}$.
We assume that the measurement operators are given by
\aln{
\begin{split}
\hat{M}_0&=\frac{1}{\sqrt{2}}(\ket{0}\bra{0}-\ketbra{1}),\\
\hat{M}_1&=\frac{1}{\sqrt{2}}(\ket{0}\bra{0}+\ket{1}\bra{1}),\\
\hat{M}_2&=\ket{1}\bra{3}+\ket{0}\bra{2}.
\end{split}}
We can easily confirm that $\sum_b\hat{M}_b^\dag\hat{M}_b=\hat{\mbb{I}}$ and that $\ml{E}$ has multiple stationary states, namely
\aln{
\ml{E}[c\ket{0}\bra{0}+(1-c)\ket{1}\bra{1}]=c\ket{0}\bra{0}+(1-c)\ket{1}\bra{1}
}
for arbitrary $c\in[0,1]$.

Let us consider quantum trajectories starting from an initial state
\aln{
\hat{\rho}_0=\gamma\ket{+}\bra{+}+(1-\gamma)(\alpha\ket{2}\bra{2}+(1-\alpha)\ket{3}\bra{3}),
}
where $\gamma,\alpha\in [0,1]$ and $\ket{\pm}=\frac{\ket{0}\pm\ket{1}}{\sqrt{2}}$.
The transition paths are illustrated in Fig.~\ref{fig9}.
After the first step, the state becomes $\ketbra{-}$, $\ketbra{+}$, and ${\alpha\ketbra{0}+(1-\alpha)\ketbra{1}}$ for the outcomes $b_1=0,1,$ and $2$, respectively.
The corresponding probabilities read $p_0=p_1=\gamma/2$ and $p_2=1-\gamma$.
After that, we will measure either $b=0$ or $b=1$ with equal probabilities at each time step.
For the case with $b_1=0$ or $1$, the measurement with $b=1$ keeps the state invariant, while $b=0$ changes $\ketbra{\pm}$ into $\ketbra{\mp}$.
In this case, $\ketbra{\pm}$ and $\ketbra{\mp}$ appear with the same frequency in a quantum trajectory in the long run.
In contrast, for the case with $b_1=2$, the state is kept invariant irrespective of subsequent measurement outcomes $b=0$ or $1$.
Therefore, we find that purification for almost all trajectories does not hold when $\alpha\neq 0,1$ and $\gamma\neq 1$, i.e., the state for the case with $b_1=2$ remains mixed.
Note that the absence of purification is understood from the fact that there exists a projection, e.g., $\ml{Q}=\ketbra{0}+\ketbra{1}$, with which Eq.~\eqref{eq:Benoist-purification} holds, where $\zeta_{\bm{b}_n}=1/2^{n}$ when $b_l\neq 2\;(1\leq l \leq n)$ and $\zeta_{\bm{b}_n}=0$ otherwise.

Moreover, we find that 
\aln{
\hat{\rho}_{\bm{b}}^\infty=
\left\{
\begin{array}{ll}
\frac{1}{2}(\ketbra{0}+\ketbra{1}) &\quad \mr{Prob}:\gamma, \\
{\alpha\ketbra{0}+(1-\alpha)\ketbra{1}} &\quad \mr{Prob}:1-\gamma.
\end{array}
\right.
}
and
\aln{
\av{\ml{E}^n}[\hat{\rho}_0]=\lim_{n\ra\infty}\ml{E}^n[\hat{\rho}_0]
=\lrs{\frac{\gamma}{2}+(1-\gamma)\alpha}\ketbra{0}+\lrs{\frac{\gamma}{2}+(1-\gamma)(1-\alpha)}{\ketbra{1},
\label{eq:ergodicity-breaking_example5}
}}
which means that ergodicity also breaks down due to the existence of multiple stationary states for $\ml{E}$.
In contrast, we can confirm the statement (2) regarding ergodicity, i.e., $\ml{E}[\hat{\rho}_{\bm{b}}^\infty]=\hat{\rho}_{\bm{b}}^\infty$ and 
$\mbb{E}[\hat{\rho}_{\bm{b}}^\infty]=\av{\ml{E}^n}[\hat{\rho}_0]$. 
We note that there is only trivial peripheral spectrum in the CPTP map considered here, which is the reason why the long-time average can be removed in Eq.~\eqref{eq:ergodicity-breaking_example5}.

\begin{figure}[!h]
\centering\includegraphics[width=14cm]{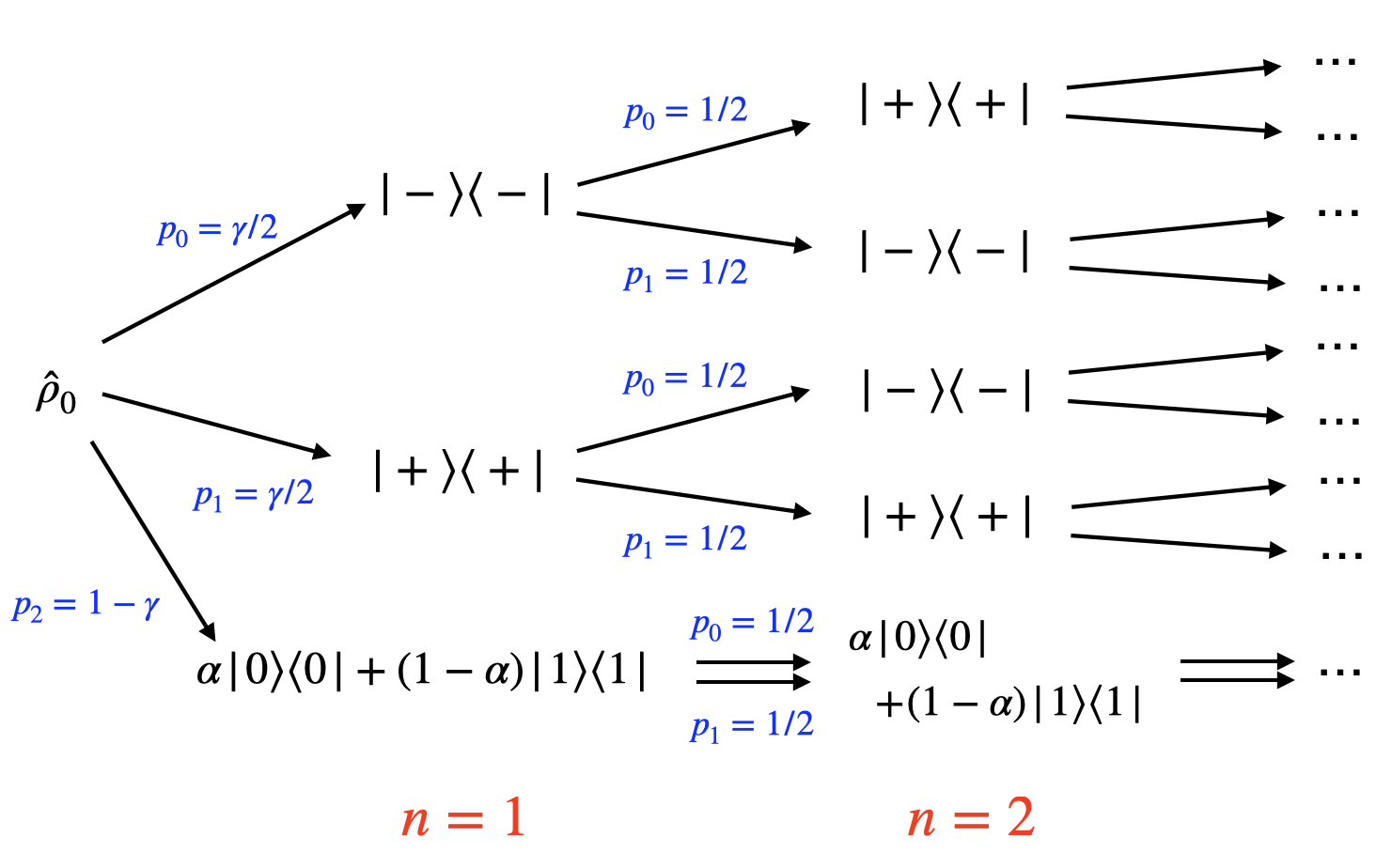}
\caption{
Transitions of the state $\hat{\rho}_0$ for \textbf{Example 4}.
Trajectories transition between $\ketbra{+}$ and $\ketbra{-}$ with equal probabilities for $b_1=0$ and $1$. 
In contrast, the state becomes invariant from $\alpha\ketbra{0}+(1-\alpha)\ketbra{1}$ after subsequent measurements for $b_1=2$.
Consequently, ergodicity and purification for almost all trajectories do not hold true.
}
\label{fig9}
\end{figure}

\textbf{Example 5.}
As the next example, let us consider the four-level system whose measurement operators read
\aln{
\begin{split}
\hat{M}_0&=\frac{1}{\sqrt{2}}(\ket{0}\bra{0}-\ketbra{1}),\\
\hat{M}_1&=\frac{1}{\sqrt{2}}(\ket{0}\bra{1}+\ket{1}\bra{0}),\\
\hat{M}_2&=\ket{1}\bra{3}+\ket{0}\bra{2}.
\end{split}}
In this case, we find that there is only one stationary state,
\aln{
\hat{\rho}_\mr{ss}=\frac{1}{2}(\ketbra{0}+\ketbra{1}).
}

\begin{figure}[!h]
\centering\includegraphics[width=12cm]{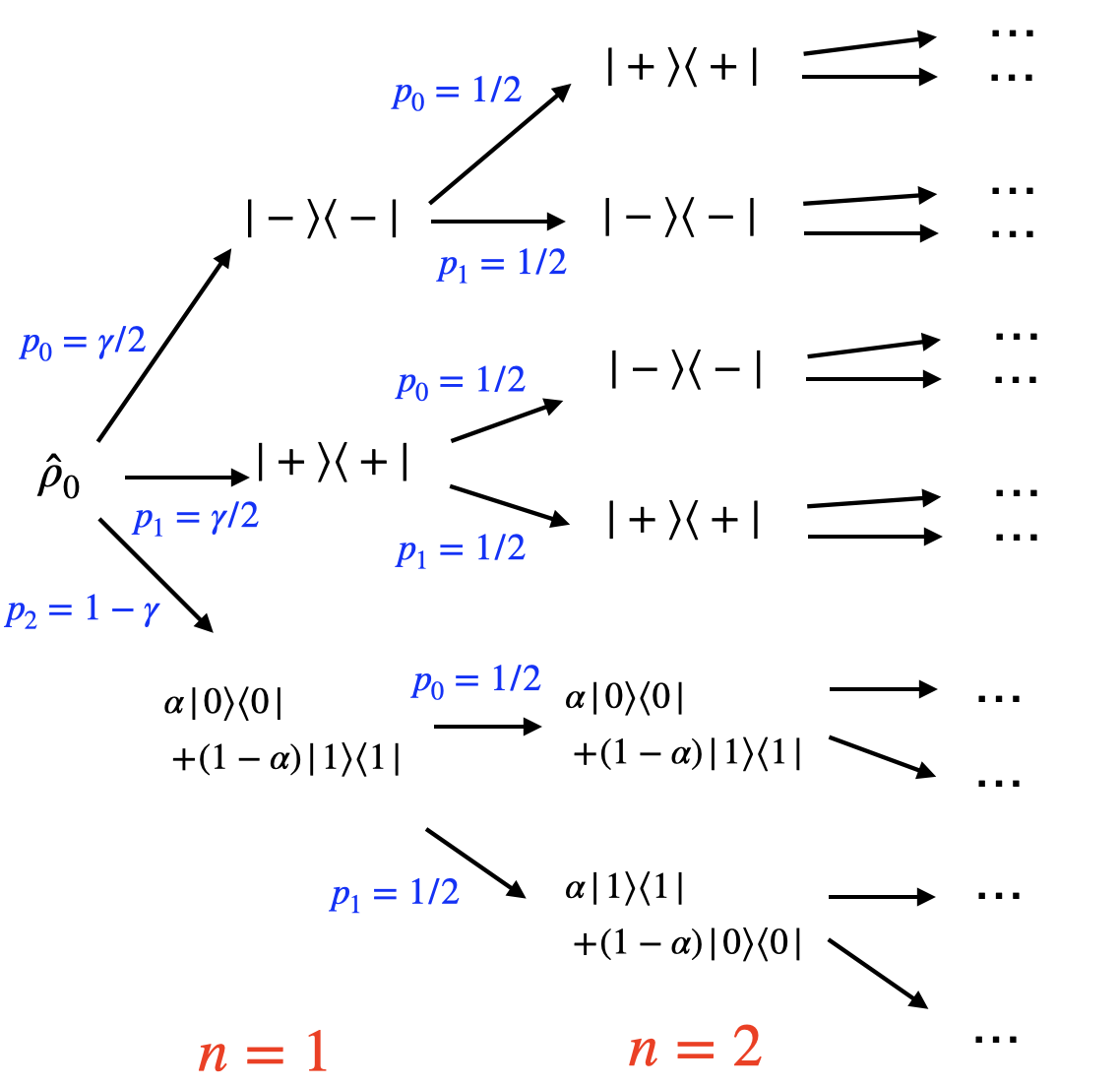}
\caption{
Transitions of the state $\hat{\rho}_0$ for \textbf{Example 5}.
Trajectories transition between $\ketbra{+}$ and $\ketbra{-}$ with equal probabilities for $b_1=0$ and $1$. 
Likewise, they transition between $\alpha\ketbra{0}+(1-\alpha)\ketbra{1}$ and $\alpha\ketbra{1}+(1-\alpha)\ketbra{0}$ with equal probabilities for $b_1=2$. 
Then, ergodicity holds true for almost all quantum trajectories, while purification for $b_1=2$ breaks down.
}
\label{fig10}
\end{figure}

We again start from the same initial state
\aln{
\hat{\rho}_0=\gamma\ket{+}\bra{+}+(1-\gamma)(\alpha\ket{2}\bra{2}+(1-\alpha)\ket{3}\bra{3}).
}
Then, after the first measurement, the state becomes $\ketbra{-}$, $\ketbra{+}$, or ${\alpha\ketbra{0}+(1-\alpha)\ketbra{1}}$ for the outcomes $b_1=0,1,$ and $2$, respectively.
The corresponding probabilities read $p_0=p_1=\gamma/2$ and $p_2=1-\gamma$.
After that, we will measure either $b=0$ or $b=1$ with equal probabilities at each time step.
For the case with $b_1=0$ or $1$, the measurement with $b=1$ keeps the state invariant, while $b=0$ changes $\ketbra{\pm}$ into $\ketbra{\mp}$.
In this case, $\ketbra{\pm}$ and $\ketbra{\mp}$ appear with the same frequency in a quantum trajectory in the long run.
In contrast, for the case with $b_1=2$, the measurement with $b=0$ keeps the state invariant, while $b=1$ changes $\alpha\ketbra{0}+(1-\alpha)\ketbra{1}$ into $\alpha\ketbra{1}+(1-\alpha)\ketbra{0}$ and vice versa.
For this case, $\alpha\ketbra{0}+(1-\alpha)\ketbra{1}$ and $\alpha\ketbra{1}+(1-\alpha)\ketbra{0}$ appear with the same frequency in the long run.
Therefore, for both cases, we have
\aln{
\hat{\rho}_{\bm{b}}^\infty=\frac{1}{2}(\ketbra{0}+\ketbra{1})=\hat{\rho}_\mr{ss}.
}

This means that, while ergodicity holds true, purification does not hold true for almost all trajectories when $\gamma\neq 1$ and $\alpha\neq 0,1$; i.e., the state remains mixed for $b_1=2$ (see Fig.~\ref{fig10}). 
Again, the absence of purification is consistent with the existence of the projection $\ml{Q}=\ketbra{0}+\ketbra{1}$, with which Eq.~\eqref{eq:Benoist-purification} holds.

As seen from \textbf{Examples 1.-5.}, the ergodicity and purification of quantum trajectories are independent concepts in general.
However, we note that purification is often assumed to show the existence of the invariant measure of quantum trajectories~\cite{benoist2019invariant}, which is related to the ergodicity property of nonlinear quantities, as will be detailed in Sec.~\ref{sec:ergodicity_nonlinear-quantity}.

\subsubsection{Relation to the product of Kraus operators}
\label{sec:rank-M_purification}
Here, we discuss the relation between the purification of $\hat{\rho}_{\bm{b};n}$ and the matrix rank of $\hat{\mathsf{M}}_{\bm{b};n}$ on the basis of Ref.~\cite{benoist2019invariant}. 
We try to explain the relation in a physicist-friendly manner. 
See Ref.~\cite{benoist2019invariant} for a rigorous treatment. 
We consider the following operator depending on the sequence of measurement outcomes $\bm{b}$ and perform its singular-value decomposition,
\begin{align}
    \hat{Y}_{\bm{b};n}=\frac{\hat{\mathsf{M}}_{\bm{b};n}}
    {\sqrt{\mathrm{Tr}\left(\hat{\mathsf{M}}_{\bm{b};n}^\dagger\hat{\mathsf{M}}_{\bm{b};n}\right)}}=\sum_i\Lambda_{i,\bm{b};n}   \ket{\Psi_{i,\bm{b};n}}\bra{\Phi_{i,\bm{b};n}},
\end{align}
where $\{\Lambda_{i,\bm{b};n}\}_i$ are the singular values of $\hat{Y}_{\bm{b};n}$ arrayed as $\Lambda_{i,\bm{b};n}\geq\Lambda_{i+1,\bm{b};n}$. 
The states $\ket{\Psi_{i,\bm{b};n}}$ and $\ket{\Phi_{i,\bm{b};n}}$ are eigenstates of $\hat{Y}_{\bm{b};n}\hat{Y}_{\bm{b};n}^\dagger$ and $\hat{Y}_{\bm{b};n}^\dagger\hat{Y}_{\bm{b};n}$, respectively, corresponding to the $i$th eigenvalue $\Lambda_{i,\bm{b};n}^2$. 
These states are orthonormalized as $\langle\Psi_{i,\bm{b};n}|\Psi_{j,\bm{b};n}\rangle=\langle\Phi_{i,\bm{b};n}|\Phi_{j,\bm{b};n}\rangle=\delta_{ij}$. 

We can easily understand that, if $\lim_{n\rightarrow\infty}\mathrm{rank}\left(\hat{Y}_{\bm{b};n}\right)=1$, the purification of $\hat{\rho}_{\bm{b};n}$ occurs, since $\mathrm{rank}\left(\hat{\rho}_{\bm{b};n}\right)=\mathrm{rank}\left(\hat{Y}_{\bm{b};n}\hat{\rho}_0\hat{Y}_{\bm{b};n}^\dagger\right)\leq\mathrm{rank}\left(\hat{Y}_{\bm{b};n}\right)$\footnote{
Note that $\mr{rank}(\hat{A}\hat{B})\leq \mr{rank}(\hat{A})$ and $\mr{rank}(\hat{A}\hat{B})\leq \mr{rank}(\hat{B})$ are always satisfied.
}. 
In particular, if $\hat{Y}_{\bm{b};n}$ typically exhibits $\lim_{n\rightarrow\infty}\mathrm{rank}\left(\hat{Y}_{\bm{b};n}\right)=1$, it is obvious that Eq.~\eqref{eq:purification} is satisfied in typical trajectories. 
To see the converse, i.e., that the typical purification of $\hat{\rho}_{\bm{b};n}$ indicates $\lim_{n\rightarrow\infty}\mathrm{rank}\left(\hat{Y}_{\bm{b};n}\right)=1$ for typical $\hat{Y}_{\bm{b};n}$, we consider the limiting behavior of the matrix 
\begin{align}
    \hat{Z}_{\bm{b};n}=\hat{Y}_{\bm{b};n}^\dagger\hat{Y}_{\bm{b};n}=\frac{\hat{\mathsf{M}}_{\bm{b};n}^\dagger\hat{\mathsf{M}}_{\bm{b};n}}
    {\mathrm{Tr}\left(\hat{\mathsf{M}}_{\bm{b};n}^\dagger\hat{\mathsf{M}}_{\bm{b};n}\right)}
    =\sum_i\Lambda_{i,\bm{b};n}^2\ket{\Phi_{i,\bm{b};n}}\bra{\Phi_{i,\bm{b};n}}.
    \label{eq:Zn}
\end{align}
If we take the initial state as $\hat{\rho}_0=\hat{\mbb{I}}/d$, we can show that the function $\hat{Z}_{\bm{b};n}$ is a martingale~\cite{hall2014martingale}. 
That is, the expectation value of $\hat{Z}_{\bm{b};n+1}$ under the condition that the measurement outcomes from $1$ to $n$ steps are $\bm{b}_n=(b_1,b_2,\ldots,b_n)$ becomes
\begin{align}
    \mathbb{E}_{\mathbb{I}/d}\left(\hat{Z}_{\bm{b};n+1} \middle| \bm{b}_n\right)
    &=\sum_{b_{n+1}}\hat{Z}_{\bm{b};n+1}
    \mathrm{Tr}\left(\hat{M}_{b_{n+1}}
    \hat{\rho}_{\bm{b};n}\hat{M}_{b_{n+1}}^\dagger\right)\nonumber\\
    &=\sum_{b_{n+1}}
    \frac{\hat{\mathsf{M}}_{\bm{b};n}^\dagger\hat{M}_{b_{n+1}}^\dagger\hat{M}_{b_{n+1}}\hat{\mathsf{M}}_{\bm{b};n}}
    {\mathrm{Tr}\left(\hat{\mathsf{M}}_{\bm{b};n+1}^\dagger\hat{\mathsf{M}}_{\bm{b};n+1}\right)}
    \frac{\mathrm{Tr}\left(\hat{\mathsf{M}}_{\bm{b};n+1}\frac{\hat{\mathbb{I}}}{d}
    \hat{\mathsf{M}}_{\bm{b};n+1}^\dagger\right)}
    {\mathrm{Tr}\left(\hat{\mathsf{M}}_{\bm{b};n}\frac{\hat{\mathbb{I}}}{d}\hat{\mathsf{M}}_{\bm{b};n}^\dagger\right)}
    \nonumber\\&=\hat{Z}_{\bm{b};n},
    \label{eq:martingale_Zn}
\end{align}
where $\sum_{b_{n+1}}\hat{M}_{b_{n+1}}^\dagger\hat{M}_{b_{n+1}}=\hat{\mbb{I}}$ is used. 
Since $\hat{Z}_{\bm{b};n}$ is a positive semidefinite matrix satisfying $\mathrm{Tr}\left(\hat{Z}_{\bm{b};n}\right)=1$, the matrix elements of $\hat{Z}_{\bm{b};n}$ satisfy\footnote{
For a positive semidefinite matrix $\hat{A}$, $|A_{ij}|= |\braket{i|\sqrt{\hat{A}}\sqrt{\hat{A}}|j}|\leq \sqrt{A_{ii}A_{jj}}$ is satisfied. 
Combining this with  $A_{ii}\leq \sum_{i}A_{ii}=1$ if $\Tr(\hat{A})=1$, we obtain $|A_{ij}|\leq 1$.
}
\begin{align}
    \left|\left[\hat{Z}_{\bm{b};n}\right]_{ij}\right|\leq1.
    \label{eq:bound_Zn}
\end{align}
Using Eqs.~\eqref{eq:martingale_Zn} and~\eqref{eq:bound_Zn}, we can apply the martingale convergence theorem~\cite{hall2014martingale} to each element of $\hat{Z}_{\bm{b};n}$. 
Then, we find that the limit
\begin{align}
    \lim_{n\rightarrow\infty}\hat{Z}_{\bm{b};n}=\hat{Z}_{\bm{b}}
    \label{eq:limit_Zn}
\end{align}
exists almost surely with respect to $\bm{b}$. 
Equation~\eqref{eq:limit_Zn} means that $\Lambda_{i,\bm{b};n}$ and $\ket{\Phi_{i,\bm{b};n}}$ asymptotically become independent of $n$, and thus $\hat{Y}_{\bm{b};n}$ can be written as
\begin{align}
    \hat{Y}_{\bm{b};n}\simeq\sum_i\Lambda_{i,\bm{b}}\ket{\Psi_{i,\bm{b};n}}\bra{\Phi_{i,\bm{b}}}
    \label{eq:limit_Y}
\end{align}
in the long-time regime, where $\Lambda_{i,\bm{b}}=\lim_{n\rightarrow\infty}\Lambda_{i,\bm{b};n}$ and $\bra{\Phi_{i,\bm{b}}}=\lim_{n\rightarrow\infty}\bra{\Phi_{i,\bm{b};n}}$\footnote{
When $\{\Lambda_{i,\bm{b}}\}_i$ are degenerate, there is some ambiguity on how to determine $\{\ket{\Phi_{i,\bm{b}}}\}_i$. 
However, such ambiguity has no effect on the discussions below.
}. 
Therefore, if $\lim_{n\rightarrow\infty}\mathrm{rank}\left(\hat{Y}_{\bm{b};n}\right)=\tilde{d}\geq2$ and thus $\Lambda_{\tilde{d},\bm{b}}\neq0$ are satisfied in a set $\mathfrak{B} \subseteq \{ \bm{b} \} $ of measurement outcomes, the initially maximally mixed state $\hat{\rho}_0=\hat{\mathbb{I}}/d$ does not purify. 
Indeed, in this case, the matrix rank of $\hat{\rho}_{\bm{b};n}\propto\hat{Y}_{\bm{b};n}\hat{\rho}_0\hat{Y}_{\bm{b};n}^\dagger=\frac{1}{d}\sum_{i=1}^{\tilde{d}}\Lambda_{i,\bm{b}}^2\ket{\Psi_{i,\bm{b};n}}\bra{\Psi_{i,\bm{b};n}}$ is larger than one due to $\Lambda_{\tilde{d},\bm{b}}\neq0$. 

From Eq.~\eqref{eq:limit_Zn}, we can also understand that the breakdown of purification is related to the existence of dark subspaces. 
A subspace $\mathcal{D} \subseteq \mc{H}$ of quantum pure states is referred to as a dark subspace if, for any $\ket{\psi}\in\mathcal{D}$, there exist unitary operators $\{\hat{U}_{\bm{b};m}\}_{\bm{b},m}$ such that 
\begin{align}
    \hat{\mathsf{M}}_{\bm{b};m}\ket{\psi}\propto\hat{U}_{\bm{b};m}\ket{\psi}
\end{align}
is satisfied for arbitrary $\bm{b}$ and $m$. 
If there is a $\tilde{d}$-dimensional dark subspace with $\tilde{d}\geq2$,  Eq.~\eqref{eq:Benoist-purification} is satisfied with $\hat{\mathcal{Q}}=\hat{P}_\mathcal{D}$, which is the projector onto the dark subspace. 
To see the relation between the dark subspace and the breakdown of purification, we compare $\hat{Z}_{\bm{b};n+m}$ and $\hat{Z}_{\bm{b};n}$ for sufficiently large $n$ where Eq. (\ref{eq:limit_Y}) is satisfied. 
Since $\hat{Y}_{\bm{b};n+m}$ satisfies $\hat{Y}_{\bm{b};n+m}\propto\hat{\mathsf{M}}_{\vartheta^n\bm{b};m}\hat{Y}_{\bm{b};n}$, the former becomes
\begin{align}
    \hat{Z}_{\bm{b};n+m}=\hat{Y}_{\bm{b};n+m}^\dagger\hat{Y}_{\bm{b};n+m}&\propto\sum_{ij}\Lambda_{i,\bm{b}}\Lambda_{j,\bm{b}}\ket{\Phi_{i,\bm{b}}}\bra{\Psi_{i,\bm{b};n}}\hat{\mathsf{M}}_{\vartheta^n\bm{b};m}^\dagger\hat{\mathsf{M}}_{\vartheta^n\bm{b};m}\ket{\Psi_{j,\bm{b};n}}\bra{\Phi_{j,\bm{b}}},
    \label{eq:Zn+m}
\end{align}
with $\vartheta^n\bm{b}=(b_{n+1,},b_{n+2},\ldots)$ [see also Eq.~\eqref{eq:shift}]. 
Due to the convergence in Eq.~\eqref{eq:limit_Zn}, $\hat{Z}_{\bm{b};n+m}=\hat{Z}_{\bm{b};n}$ should be satisfied. 
Here, we consider the situation where $\Lambda_{\tilde{d},\bm{b}}\neq0$ is satisfied with $\tilde{d}\geq2$ and $\Lambda_{i>\tilde{d},\bm{b}}=0$.
This means that the quantum trajectory corresponding to $\bm{b}$ does not purify. 
In this case, comparing Eqs.~\eqref{eq:Zn+m} and \eqref{eq:Zn}, with $\Lambda_{i,\bm{b};n}=\Lambda_{i,\bm{b}}$ and $\ket{\Phi_{i,\bm{b};n}}=\ket{\Phi_{i,\bm{b}}}$, we realize that 
\begin{align}
\bra{\Psi_{i,\bm{b};n}}\hat{\mathsf{M}}_{\vartheta^n\bm{b};m}^\dagger\hat{\mathsf{M}}_{\vartheta^n\bm{b};m}\ket{\Psi_{j,\bm{b};n}}\propto\delta_{ij}
\end{align}
should be satisfied for any $m$ and $i,j\leq\tilde{d}$. 
This means that $\ket{\Psi_{i,\bm{b};n}}$ resides in the $\tilde{d}$-dimensional dark subspace.  

We note that the relation between purification and the dark subspace applies to dynamics from arbitrary initial states, while in the discussion above the initial state is chosen to be the maximally mixed state $\hat{\rho}_0=\hat{\mathbb{I}}/d$. 
This is because the maximally mixed state exhibits absolute continuity with respect to any initial state; atypical behaviors of $\hat{Z}_{\bm{b};n}$ not observed in trajectories with $\hat{\rho}_0=\hat{\mathbb{I}}/d$ are also not observed in typical trajectories with arbitrary initial states. 
Explanations about the absolute continuity will be given in Sec.~\ref{sec:typical-convergence_Lyapunov-spectrum}.

\section{Typical behaviors of nonlinear quantities and the Lyapunov spectrum in quantum trajectories}
\label{sec:nonlear-quantity_Lyapunov-spectrum}
In this chapter, we further discuss the typical behaviors of quantum trajectories. 
While Chapter~\ref{sec:linear-quantity_purification} mainly focused on linear observables and purification, we here consider the typical behaviors of nonlinear quantities and the Lyapunov spectrum. 
As will be shown in Chapter~\ref{sec:mipt}, these quantities can reveal intriguing properties of quantum trajectories that may be invisible in the corresponding averaged CPTP dynamics. 
Discussions in this chapter are based on Refs.~\cite{benoist2019invariant, benoist2023limit} by Benoist and coauthors. 
While we focus here on the discrete-time dynamics, discussions on continuous-time dynamics can be found in Ref.~\cite{benoist2021invariant}.

\subsection{Invariant measure and ergodicity of measurement outcomes}
\label{sec:invariant-measure_ergodicity_outcomes}
We first introduce the notions of invariant measure and the ergodicity of measurement outcomes in quantum trajectories. 
These play important roles when we study the time average of nonlinear quantities and the Lyapunov spectrum, which will be discussed in Secs.~\ref{sec:ergodicity_nonlinear-quantity} and~\ref{sec:Lyapunov-analysis}, respectively.

As reviewed in Chapter~\ref{sec:trajectory_master-equation}, a quantum trajectory is described by a sequence of measurement outcomes
\begin{align}
    \bm{b}=(b_1,b_2,\ldots).
\end{align}
The set of the first $n$ components of $\bm{b}$ is denoted by
\begin{align}
    \bm{b}_n=(b_1,b_2,\ldots,b_n),
\end{align}
which is also often used to describe quantum trajectories. 
To explore concepts related to measure theory, we consider a set of outcome sequences $\mathfrak{B}_n \subseteq \{ \bm{b} \}$ where a certain condition is imposed on $\bm{b}_n$, i.e., 
\begin{align}
    \mathfrak{B}_n=\{\bm{b}:\bm{b}_n\in\mathsf{B}_n\},\ \ \mathsf{B}_n \subseteq \{\bm{b}_n\}.
\end{align}
Here, $\{\bm{b}_n\}$ is the set of all possible $\bm{b}_n$ with $n$ being fixed. 
As an example of a set $\mathsf{B}_n$, we can consider $\{\bm{b}_n:b_1=0,b_n=0\}$. 
We also consider shifted sequences of measurement outcomes,
\begin{align}\label{eq:shift}
    \vartheta^m\bm{b}=(b_{m+1},b_{m+2},\ldots),
\end{align}
to describe the dynamics of quantum trajectories.

When we take an initial state $\hat{\rho}_0$, the probability measure that $\mathfrak{B}_n$ is realized is determined through the Born rule, i.e., 
\begin{align}
    P_{\rho_0}(\mathfrak{B}_n)
    =\sum_{\bm{b}_n \in \mathsf{B}_n}\mathrm{Tr}
    \left(\hat{\mathsf{M}}_{\bm{b};n}\hat{\rho}_0\hat{\mathsf{M}}_{\bm{b};n}^\dagger\right),
    \label{eq:born-rule}
\end{align}
where $\hat{\mathsf{M}}_{\bm{b};n}$ is the product of Kraus operators corresponding to $\bm{b}_n$,
\begin{align}
    \hat{\mathsf{M}}_{\bm{b};n}=\hat{M}_{b_n}\hat{M}_{b_{n-1}}\cdots\hat{M}_{b_1},
\end{align}
as defined in Eq.~\eqref{eq:product_Kraus-operators}. 
If the probability measure of $\mathfrak{B}_n$ is the same as that of $\vartheta^{-1}\mathfrak{B}_n$, i.e., 
\begin{align}
    P_{\rho_0}(\mathfrak{B}_n)=P_{\rho_0}(\vartheta^{-1}\mathfrak{B}_n)
    \label{eq:probability_outcomes_shifted}
\end{align}
is satisfied for any $\mathfrak{B}_n$, the probability measure is referred to as an invariant measure. 
Here, $\vartheta^{-m}\mathfrak{B}_n$ is defined as 
\begin{align}
    \vartheta^{-m}\mathfrak{B}_n=\{\bm{b}:\vartheta^m\bm{b} \in \mathfrak{B}_n\}.
\end{align}
In terms of the CPTP dynamics with an initial state $\hat{\rho}_0$, averaged over measurement outcomes, the right-hand side of Eq.~\eqref{eq:probability_outcomes_shifted} is written as 
\begin{align}
    P_{\rho_0}(\vartheta^{-m}\mathfrak{B}_n)
    =P_{\mathcal{E}^m[\rho_0]}(\mathfrak{B}_n)
    \label{eq:shift_probability-measure}
\end{align}
with $m=1$.

If the probability measure of a set $\mathfrak{B}_n$ invariant under the shift becomes $0$ or $1$, the measure is said to be ergodic. 
This condition can be written as
\begin{align}
    P_{\rho_0}(\mathfrak{B}_n) \in \{ 0, 1 \},\ \ \forall \, \mathfrak{B}_n \ \mathrm{s.t.} \ \mathfrak{B}_n=\vartheta^{-1}\mathfrak{B}_n.
    \label{eq:ergodicity_theta}
\end{align}
For example, when we choose a set $\mathfrak{B}_\infty=\{\bm{b}:b_i=0\,\forall\,i\}$, this set is invariant under the shift, i.e., $\vartheta^{-1}\mathfrak{B}_\infty=\mathfrak{B}_\infty$. 
Such a set satisfies $P_{\rho_0}(\mathfrak{B}_\infty)=0$ if the measure of $\mathfrak{B}_\infty$ is ergodic and $\hat{\rho}_0$ satisfies $\mathrm{Tr}(\hat{M}_b\hat{\rho}_0\hat{M}_b^\dagger)\neq0$ with $b\neq0$. 
On the other hand, if we choose $\mathfrak{B}_\infty=\{\bm{b}\}$, this set satisfies $\vartheta^{-1}\mathfrak{B}_\infty=\mathfrak{B}_\infty$ and $P_{\rho_0}(\mathfrak{B}_\infty)=1$. 

The ergodicity of measurement outcomes in Eq.~\eqref{eq:ergodicity_theta} means that a nontrivial invariant set $\mathfrak{B}_n=\vartheta^{-1}\mathfrak{B}_n$ satisfying $0<P_{\rho_0}(\mathfrak{B}_n)<1$ is absent. 
We can also intuitively understand the meaning of the ergodicity of measurement outcomes by considering a set $\mathfrak{B}_n$ that satisfies $0 < P_{\rho_0}(\mathfrak{B}_n) < 1$. 
If the ergodicity in Eq.~\eqref{eq:ergodicity_theta} is satisfied, $\vartheta^{-1}\mathfrak{B}_n$ always deviates from $\mathfrak{B}_n$, i.e., $\vartheta^{-1}\mathfrak{B}_n\neq\mathfrak{B}_n$ holds. 
Since $\vartheta^{-m}\mathfrak{B}_n$ never returns to $\mathfrak{B}_n$, by applying $\vartheta^{-1}$ to $\mathfrak{B}_n$ repeatedly, we find that $\{\vartheta^{-m}\mathfrak{B}_n\}_{m=0}^M$ eventually covers almost all trajectories in $\bm{b}$ with increasing $M$. 
Indeed, if we consider a set $\overline{\mathfrak{B}_n}=\lim_{M\rightarrow\infty}\bigcup_{m=0}^M\vartheta^{-m}\mathfrak{B}_n$ where $0<P_{\rho_0}(\mathfrak{B}_n)<1$ is satisfied, the ergodicity leads to $P_{\rho_0}(\overline{\mathfrak{B}_n})=1$ owing to $\vartheta^{-1}\overline{\mathfrak{B}_n}=\overline{\mathfrak{B}_n}$. 
The schematic picture is given in Fig.~\ref{fig:ergodicity_outcomes}. 

We note that, if the outcome $b_i$ at each step $i$ obeys an independent and identically distributed (i.i.d.) distribution $\{p_b^\mathrm{iid}\}$, $P_\mathrm{iid}(\mathfrak{B}_n)=\sum_{\bm{b}_n \in \mathsf{B}_n}\prod_{i=1}^np_{b_i}^\mathrm{iid}$ becomes an invariant measure that is ergodic. 
In contrast, in quantum trajectories where the probability of measurement outcomes obeys the Born rule in Eq.~\eqref{eq:born-rule}, it is nontrivial whether or not an invariant measure exists and ergodicity is satisfied. 

\begin{figure}[!h]
\centering\includegraphics[width=7cm]{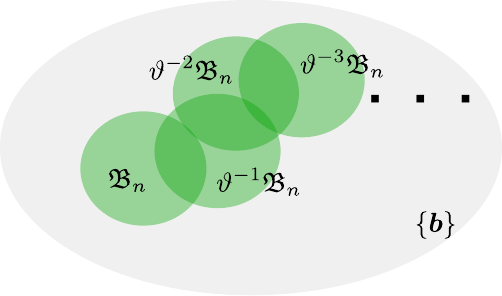}
\caption{
Intuitive picture for the ergodicity of measurement outcomes. 
The gray ellipse represents the entire set $\{\bm{b}\}$. 
The leftmost green circle is a set $\mathfrak{B}_n$ that satisfies $0<P_{\rho_0}(\mathfrak{B}_n)<1$. 
If the ergodicity in Eq.~\eqref{eq:ergodicity_theta} holds, such a set always satisfies $\vartheta^{-1}\mathfrak{B}_n\neq\mathfrak{B}_n$. 
This means that $\{\vartheta^{-m}\mathfrak{B}_n\}_{m=0}^\infty$ covers almost all trajectories in $\{\bm{b}\}$.
}
\label{fig:ergodicity_outcomes}
\end{figure}

If the stationary state $\hat{\rho}_\mathrm{ss}=\mathcal{E}[\hat{\rho}_\mathrm{ss}]$ is unique, we can show that there exists an invariant and ergodic measure of $\mathfrak{B}_n$~\cite{benoist2019invariant}. 
Indeed, if we choose the initial state as the stationary state $\hat{\rho}_\mathrm{ss}$, the probability measure evaluated through $\hat{\rho}_\mathrm{ss}$ becomes an invariant measure,
\begin{align}
    P_{\rho_\mathrm{ss}}(\vartheta^{-1}\mathfrak{B}_n)
    =P_{\rho_\mathrm{ss}}(\mathfrak{B}_n),
    \label{eq:invariant-measure_outcome-sequence}
\end{align}
which can be understood from Eq.~\eqref{eq:shift_probability-measure}. 
To show the ergodicity based on the uniqueness of $\hat{\rho}_\mathrm{ss}$, we calculate the probability measure of the outcome sequences such that $\bm{b} \in \mathfrak{B}_n$ and $\vartheta^m\bm{b} \in \tilde{\mathfrak{B}}_n$ are satisfied. 
When $m \geq n$, such a probability measure becomes
\begin{align}
    P_{\rho_\mathrm{ss}}\left(\mathfrak{B}_n\cap\vartheta^{-m}\tilde{\mathfrak{B}}_n\right)
    =\sum_{\bm{b}_n \in \mathsf{B}_n}\sum_{\tilde{\bm{b}}_n \in \tilde{\mathsf{B}}_n}
    \mathrm{Tr}\left(\hat{\mathsf{M}}_{\tilde{\bm{b}};n}\mathcal{E}^{m-n}
    \left[\hat{\mathsf{M}}_{\bm{b};n}\hat{\rho}_\mathrm{ss}
    \hat{\mathsf{M}}_{\bm{b};n}^\dagger\right]\hat{\mathsf{M}}^\dagger_{\tilde{\bm{b}};n}\right).
    \label{eq:joint-probability}
\end{align}
We consider the time average of $P_{\rho_\mathrm{ss}}\left(\mathfrak{B}_n\cap\vartheta^{-m}\tilde{\mathfrak{B}}_n\right)$ with respect to $m$. 
The time average of the CPTP map in the sum can be evaluated as
\begin{align}
    \lim_{M\rightarrow\infty}\frac{1}{M}\sum_{m=n}^{n+M-1}
    \mathcal{E}^{m-n}\left[\hat{\mathsf{M}}_{\bm{b};n}\hat{\rho}_\mathrm{ss}\hat{\mathsf{M}}_{\bm{b};n}^\dagger\right]=\mathrm{Tr}\left(\hat{\mathsf{M}}_{\bm{b};n}\hat{\rho}_\mathrm{ss}\hat{\mathsf{M}}_{\bm{b};n}^\dagger\right)\hat{\rho}_\mathrm{ss},
    \label{eq:time-average_CPTP-map}
\end{align}
where the uniqueness of the stationary state is used. 
Equation~\eqref{eq:time-average_CPTP-map} leads to
\begin{align}
    \lim_{M\rightarrow\infty}\frac{1}{M}\sum_{m=n}^{n+M-1}P_{\rho_\mathrm{ss}}\left(\mathfrak{B}_n\cap\vartheta^{-m}\tilde{\mathfrak{B}}_n\right)
    &=\sum_{\bm{b}_n \in \mathsf{B}_n}\mathrm{Tr}
    \left(\hat{\mathsf{M}}_{\bm{b};n}\hat{\rho}_\mathrm{ss}
    \hat{\mathsf{M}}_{\bm{b};n}^\dagger\right)
    \sum_{\tilde{\bm{b}}_n \in \tilde{\mathsf{B}}_n}\mathrm{Tr}
    \left(\hat{\mathsf{M}}_{\tilde{\bm{b}};n}\hat{\rho}_\mathrm{ss}
    \hat{\mathsf{M}}^\dagger_{\tilde{\bm{b}};n}\right)\nonumber\\
    &=P_{\rho_\mathrm{ss}}\left(\mathfrak{B}_n\right)
    P_{\rho_\mathrm{ss}}\left(\tilde{\mathfrak{B}}_n\right).
    \label{eq:time-average_joint-probability}
\end{align}
If we choose $\mathfrak{B}_n$ and $\tilde{\mathfrak{B}}_n$ such that $\mathfrak{B}_n=\vartheta^{-1}\mathfrak{B}_n$ and $\mathfrak{B}_n=\tilde{\mathfrak{B}}_n$ are satisfied, the time average on the left-hand side of Eq.~\eqref{eq:time-average_joint-probability} becomes 
\begin{align}
    \lim_{M\rightarrow\infty}\frac{1}{M}\sum_{m=n}^{n+M-1}
    P_{\rho_\mathrm{ss}}\left(\mathfrak{B}_n\cap\vartheta^{-m}\mathfrak{B}_n\right)
    =P_{\rho_\mathrm{ss}}\left(\mathfrak{B}_n\right),
    \label{eq:time-average_joint-probability_invariant}
\end{align}
where $\mathfrak{B}_n\cap\vartheta^{-m}\mathfrak{B}_n=\mathfrak{B}_n\cap\mathfrak{B}_n=\mathfrak{B}_n$ is used.
Thus, from Eqs.~\eqref{eq:time-average_joint-probability} and~\eqref{eq:time-average_joint-probability_invariant}, we can obtain 
\begin{align}
    P_{\rho_\mathrm{ss}}\left(\mathfrak{B}_n\right)
    =P_{\rho_\mathrm{ss}}^2\left(\mathfrak{B}_n\right),\ \ \forall \, \mathfrak{B}_n \ \mathrm{s.t.} \ \mathfrak{B}_n=\vartheta^{-1}\mathfrak{B}_n,
\end{align}
which means $P_{\rho_\mathrm{ss}}\left(\mathfrak{B}_n\right) \in \{ 0, 1\}$. This completes the proof of ergodicity in Eq.~\eqref{eq:ergodicity_theta}.

\subsection{Ergodicity of nonlinear quantities}
\label{sec:ergodicity_nonlinear-quantity}
Here, we consider functions of a pure state $\ketbra{\psi}$, denoted by $f(\psi)$, and their time averages. 
An important example of $f(\psi)$ is the entanglement entropy $-\mathrm{Tr}_X\left[\hat{\rho}_X(\psi)\ln\hat{\rho}_X(\psi)\right]$, where $X$ is a subsystem and $\hat{\rho}_X(\psi)$ is the reduced density matrix with respect to $X$, i.e., $\hat{\rho}_X(\psi)=\mathrm{Tr}_{\bar{X}}(\ket{\psi}\bra{\psi})$ with $\bar{X}$ being the complement of $X$. 
Indeed, the entanglement entropy is often explored in monitored quantum systems to detect measurement-induced transitions, which will be reviewed in Sec.~\ref{sec:entanglement_MIPT}. 
While the time averages of linear observables are determined through the time average of $\hat{\rho}_{\bm{b};n}$ as discussed in Sec.~\ref{sec:ergodicity_linear-observable}, this is not the case when we consider nonlinear quantities $f(\psi)$. 
The main message of this section is Eq.~\eqref{eq:average_nonlinear-quantity_any-measure}, which states that the time average of $f(\psi_{\bm{b};n})$ corresponds to the average of $f(\psi)$ over an invariant measure $\nu_\mathrm{ss}$ for pure states, as will be explained below. 
Monitored quantum systems exhibit this correspondence when typical trajectories purify and the averaged CPTP dynamics exhibits a unique steady state, as summarized in Table~\ref{tab:ergodicity}. 
We note that the conditions for purification and irreducibility have been discussed in Secs.~\ref{sec:pur} and~\ref{sec:irreducibility}, respectively, while the positive definiteness of $\hat{\rho}_\mathrm{ss}$ is not necessary for Eq.~\eqref{eq:average_nonlinear-quantity_any-measure} to be satisfied. 

To explore the averages of $f(\psi)$, we consider a situation where quantum pure states $\{\ket{\psi}\}$ are sampled from a probability measure $\nu$. 
The average of $f(\psi)$ with respect to $\nu$ is given by
\begin{align}
    \mathsf{E}_\nu[f(\psi)]=\int f(\psi)d\nu(\psi).
\end{align}
If we take $f(\psi)=\ket{\psi}\bra{\psi}$, the average becomes the density matrix\footnote{
For example, if pure states $\{\ket{\phi_i}\}$ are sampled with probabilities $\{p_i\}$, the measure is $d\nu(\psi)=\sum_ip_i\delta(\phi_i-\psi)d\psi$ and the corresponding density matrix becomes $\hat{\rho}_\nu=\sum_ip_i\ket{\phi_i}\bra{\phi_i}$.
},
\begin{align}\hat{\rho}_\nu=\mathsf{E}_\nu\left[\ket{\psi}{\bra{\psi}}\right]
    =\int\ket{\psi}\bra{\psi}d\nu(\psi).
    \label{eq:density-matrix_averaged_measure}
\end{align}
We note that there is no one-to-one correspondence between $\hat{\rho}_\nu$ and $\nu$, that is, several measures can lead to the same density matrix; for example, $d\nu(\psi)=\frac{1}{d}\sum_{i=1}^d\delta(\psi-\phi_i)d\psi$ for any orthonormal set $\{\ket{\phi_i}\}$ satisfying $\langle\phi_i|\phi_j\rangle=\delta_{ij}$ leads to the maximally mixed state $\hat{\rho}_\nu=\mathbb{\hat{I}}/d$. 

When a quantum measurement described by $\{\hat{M}_b\}$ is performed, $\nu$ is altered. 
Starting from an initial measure $\nu_0$, the probability measure after $n$ measurements becomes
\begin{align}
    \nu_n(S)=\int\sum_{\bm{b}_n}\bra{\psi_0}\hat{\mathsf{M}}_{\bm{b};n}^\dagger \hat{\mathsf{M}}_{\bm{b};n}\ket{\psi_0}\chi_S(\psi_{\bm{b};n})d\nu_0(\psi_0),
    \label{eq:nu-n}
\end{align}
where $\ket{\psi_{\bm{b};n}}=\hat{\mathsf{M}}_{\bm{b};n}\ket{\psi_0}/\sqrt{\bra{\psi_0}\hat{\mathsf{M}}_{\bm{b};n}^\dagger\hat{\mathsf{M}}_{\bm{b};n}\ket{\psi_0}}$ as defined in Eq.~\eqref{qtpure}, $S$ is a set of pure states, and $\chi_S(\psi)=1\,(0)$ if $\ket{\psi} \in S\,(\notin S)$. 
The average of a function $f(\psi)$ with respect to $\nu_n$ becomes  
\begin{align}\label{eq:average_nu-n}
    \mathsf{E}_{\nu_n}[f(\psi)]=\int\sum_{\bm{b}_n}\bra{\psi_0}\hat{\mathsf{M}}_{\bm{b};n}^\dagger \hat{\mathsf{M}}_{\bm{b};n}\ket{\psi_0}f(\psi_{\bm{b};n})d\nu_0(\psi_0).
\end{align}
In Eqs.~\eqref{eq:nu-n} and~\eqref{eq:average_nu-n}, the states $\{\ket{\psi_0}\}$ sampled with the initial measure $\nu_0$ are transformed to $\{\ket{\psi_{\bm{b};n}}\}_{\bm{b}_n}$, and they are averaged over all possible $\{\bm{b}_n\}$ with weights $\{\bra{\psi_0}\hat{\mathsf{M}}_{\bm{b};n}^\dagger \hat{\mathsf{M}}_{\bm{b};n}\ket{\psi_0}\}_{\bm{b}_n}$. 
For the CPTP dynamics $\hat{\rho}_{n+1}=\sum_b\hat{M}_b\hat{\rho}_{n}\hat{M}_b^\dagger$ as in Eq.~\eqref{eq:KrausRep} starting from the initial state $\hat{\rho}_0=\hat{\rho}_{\nu_0}$, the density matrix at step $n$ becomes
\begin{align}
\hat{\rho}_{n}=\hat{\rho}_{\nu_n},
\label{eq:rho_measure}
\end{align}
which originates from Eq.~\eqref{eq:density-matrix_averaged_measure}. 
If $\nu_n(S)=\nu_0(S)$ is satisfied for any $n$ and $S$, such a measure $\nu_0$ is referred to as an invariant measure. 
It was shown that there exists at least one invariant measure $\nu_\mathrm{ss}$ in Refs.~\cite{benoist2019invariant, benoist2023limit}. 
We note that the invariant measure $\nu_\mathrm{ss}(S)$ for pure states discussed in this section is different from the invariant measure $P_{\rho_\mathrm{ss}}(\mathfrak{B}_n)$ for measurement outcomes discussed in Sec.~\ref{sec:invariant-measure_ergodicity_outcomes}.

In Ref.~\cite{benoist2019invariant}, it was also shown that there exists a \textit{unique} invariant measure $\nu_\mathrm{ss}$, which satisfies
\begin{align}
    \lim_{n\rightarrow\infty}\nu_n=\nu_\mathrm{ss}
    \label{eq:invariant-measure_pure-state}
\end{align}
for any initial measure $\nu_0$, when typical trajectories exhibit purification and the corresponding CPTP map has a unique stationary state $\hat{\rho}_\mathrm{ss}$. 
The unique invariant measure $\nu_\mathrm{ss}$ corresponds to the unique stationary state, $\hat{\rho}_\mathrm{ss}=\hat{\rho}_{\nu_\mathrm{ss}}$, through Eq.~\eqref{eq:rho_measure}.
In such a situation, the time average of $f(\psi)$ corresponds to the average over the invariant measure $\nu_\mathrm{ss}$,
\begin{align}
    \overline{f(\psi_{\bm{b};n})}=\mathsf{E}_{\nu_\mathrm{ss}}[f(\psi)]\ \mathrm{for\ any\ initial\ measure}\ \nu_0,
    \label{eq:average_nonlinear-quantity_any-measure}
\end{align}
almost surely, which was obtained in Ref.~\cite{benoist2023limit}. 
Here, the overline denotes the time average defined in Eq.~\eqref{eq:time-average}.
We note that $f(\psi)$ is assumed to be a continuous function in the following, which is technically important but not detailed here.

\subsubsection{Uniqueness of the invariant measure}
We now present an outline of the proof of the uniqueness and convergence of the invariant measure, which is given in Eq.~\eqref{eq:invariant-measure_pure-state}, based on the uniqueness of the stationary state $\hat{\rho}_\mathrm{ss}$ and the purification discussed in Sec.~\ref{sec:pur}.
Provided that there exists at least one invariant measure, our goal is to show that the existence of more than one invariant measure is inconsistent with a unique $\hat{\rho}_\mathrm{ss}$ if typical trajectories purify. 

If purification occurs in a typical trajectory $\bm{b}$, the matrix rank of $\hat{\mathsf{M}}_{\bm{b};n}$ in Eq.~\eqref{eq:product_Kraus-operators} typically becomes $1$ as $n\rightarrow\infty$. 
In this case, $\hat{\mathsf{M}}_{\bm{b};n}$ can be approximated as 
\begin{align}
    \hat{\mathsf{M}}_{\bm{b};n}\propto\ket{\Psi_{1,\bm{b};n}}\bra{\Phi_{1,\bm{b};n}},
    \label{eq:evolution-operator_rank-1}
\end{align}
for large $n$ (see also Sec.~\ref{sec:rank-M_purification}). 
Here, $\ket{\Psi_{1,\bm{b};n}}$ and $\ket{\Phi_{1,\bm{b};n}}$ are the eigenstates of $\hat{\mathsf{M}}_{\bm{b};n}\hat{\mathsf{M}}_{\bm{b};n}^\dagger$ and $\hat{\mathsf{M}}_{\bm{b};n}^\dagger\hat{\mathsf{M}}_{\bm{b};n}$, respectively, corresponding to the largest eigenvalue. 
We note that $\ket{\Psi_{1,\bm{b};n}}$ in Eq.~\eqref{eq:evolution-operator_rank-1} corresponds to the ground state of an effective Hamiltonian describing the quantum trajectory dynamics, which will be introduced in Sec.~\ref{sec:Lyapunov-analysis}.
Thus, quantum trajectories of pure states $\ket{\psi_{\bm{b};n}}$ approach $\ket{\Psi_{1,\bm{b};n}}$, i.e.,
\begin{align}
    \ket{\psi_{\bm{b};n}}\simeq\ket{\Psi_{1,\bm{b};n}}
\end{align}
is satisfied for sufficiently large $n$.

Then, if we consider $f(\psi_{\bm{b};n})$ averaged over the initial measure $\nu_0$, its long-time limit becomes
\begin{align}
    \lim_{n\rightarrow\infty}
    \mathsf{E}_{\nu_n}\left[f(\psi)\right]&
    =\lim_{n\rightarrow\infty}\sum_{\bm{b}_n}\int\bra{\psi_0}\hat{\mathsf{M}}_{\bm{b};n}^\dagger\hat{\mathsf{M}}_{\bm{b};n}\ket{\psi_0}f(\psi_{\bm{b};n})d\nu_0(\psi_0)\nonumber\\
    &=\lim_{n\rightarrow\infty}\sum_{\bm{b}_n}\int\bra{\psi_0}\hat{\mathsf{M}}_{\bm{b};n}^\dagger\hat{\mathsf{M}}_{\bm{b};n}\ket{\psi_0}d\nu_0(\psi_0)f(\Psi_{1,\bm{b};n})\nonumber\\
    &=\lim_{n\rightarrow\infty}\sum_{\bm{b}_n}\mathrm{Tr}\left(\hat{\mathsf{M}}_{\bm{b};n}\int d\nu_0(\psi_0)\ket{\psi_0}{\bra{\psi_0}}\hat{\mathsf{M}}_{\bm{b};n}^\dagger\right)f(\Psi_{1,\bm{b};n})\nonumber\\
    &=\lim_{n\rightarrow\infty}\mathbb{E}_{\rho_0}\left[f(\Psi_{1,\bm{b};n})\right],
    \label{eq:average_actual-trajectory_ground-state}
\end{align}
where $\hat{\rho}_0=\int d\nu_0(\psi)\ket{\psi}\bra{\psi}$ is the initial state corresponding to the initial measure $\nu_0$. 
In the second and third lines of Eq.~\eqref{eq:average_actual-trajectory_ground-state}, we use the fact that $\ket{\Psi_{1,\bm{b};n}}$ is determined only from $\hat{\mathsf{M}}_{\bm{b};n}$ and is thus independent of the initial pure state $\ket{\psi_0}$. 
In the last line, $\mathbb{E}_{\rho_0}$ represents the average over $\{\bm{b}\}$ in the situation where the initial state is $\hat{\rho}_0$, in the same way as $\mathbb{E}$ in other chapters. 
The initial-state dependence is explicitly written in this chapter since it becomes important which $\hat{\rho}_0$ is chosen, when we discuss the ergodicity of nonlinear quantities and the typical convergence of the Lyapunov spectrum. 

Since the existence of at least one invariant measure has been proven in Refs.~\cite{benoist2019invariant, benoist2023limit}, we take the initial measure $\nu_0$ as an invariant measure $\nu_\mathrm{ss}$. 
Then, Eq.~\eqref{eq:average_actual-trajectory_ground-state} leads to
\begin{align}
    \mathsf{E}_{\nu_\mathrm{ss}}\left[f(\psi)\right]
    =\lim_{n\rightarrow\infty}
    \mathbb{E}_{\rho_\mathrm{ss}}\left[f(\Psi_{1,\bm{b};n})\right],
    \label{eq:average_nonlinear-quantity_invariant-measure_limit}
\end{align}
where $\hat{\rho}_\mathrm{ss}$ is the unique stationary state of the corresponding CPTP dynamics. 
To show the uniqueness of the invariant measure $\nu_\mathrm{ss}$, we suppose that there exist two distinct invariant measures $\nu_\mathrm{ss}^a$ and $\nu_\mathrm{ss}^b$.
From Eq.~\eqref{eq:average_nonlinear-quantity_invariant-measure_limit}, both measures lead to trajectory averages starting from the same unique stationary state $\hat{\rho}_\mathrm{ss}$; hence, these measures satisfy 
\begin{align}
    \mathsf{E}_{\nu_\mathrm{ss}^a}\left[f(\psi)\right]
    =\mathsf{E}_{\nu_\mathrm{ss}^b}\left[f(\psi)\right].
    \label{eq:equivalence_several-measures}
\end{align}
Since $f(\psi)$ is an arbitrary nonlinear function, Eq.~\eqref{eq:equivalence_several-measures} implies $\nu_\mathrm{ss}^a=\nu_\mathrm{ss}^b$, which contradicts the assumption. 
This means that the invariant measure is unique and proves Eq.~\eqref{eq:invariant-measure_pure-state}.

\subsubsection{Coincidence between the time average and ensemble average}
\label{sec:time-average_ensemble-average}
We give an outline of the proof of Eq.~\eqref{eq:average_nonlinear-quantity_any-measure}, assuming the uniqueness of the steady state of the CPTP dynamics and purification in typical trajectories. 
To this end, we first apply Birkhoff's ergodic theorem~\cite{walters2000introduction} to $\overline{f(\Psi_{1,\bm{b};n})}$. 
Second, we consider the dynamics of nonlinear functions and introduce the notion of harmonic functions. 
Combining these, we can see that Eq.~\eqref{eq:average_nonlinear-quantity_any-measure} is satisfied almost surely.

We notice from Eq.~\eqref{eq:evolution-operator_rank-1} that 
\begin{align}
    \overline{f(\psi_{\bm{b};n})}=\overline{f(\Psi_{1,\bm{b};n})}
    \label{eq:time-average_nonlinear-quantity_ground-state}
\end{align}
is satisfied when the trajectory $\bm{b}$ purifies, which is assumed to be typically satisfied. 
Here, the left-hand side is well defined only when the initial state is a pure state. 
On the other hand, the initial state can also be a mixed state when we compute the right-hand side of Eq.~\eqref{eq:time-average_nonlinear-quantity_ground-state} since $\ket{\Psi_{1,\bm{b};n}}$ is independent of $\hat{\rho}_0$. 
To evaluate $\overline{f(\Psi_{1,\bm{b};n})}$, we again consider typical trajectories when the initial state is the unique steady state $\hat{\rho}_\mathrm{ss}$ in the averaged CPTP dynamics. 
In this case, the invariant measure for sequences $\bm{b}$ becomes ergodic, as shown in Sec.~\ref{sec:invariant-measure_ergodicity_outcomes}. 
On the basis of the invariant and ergodic measure of $\bm{b}$, Birkhoff's ergodic theorem ensures
\begin{align}
    \overline{f(\Psi_{1,\bm{b};n})}=C_f\ \mathrm{if\ the\ initial\ state\ is}\ \hat{\rho}_\mathrm{ss}
    \label{eq:average_nonlinear-quantity_invariant-measure}
\end{align}
almost surely, where $C_f$ is a constant independent of $\bm{b}$. 
Details of Birkhoff's ergodic theorem are given in Appendix~\ref{app:Birkhoff-theorem}. 
Taking the average of Eq.~\eqref{eq:average_nonlinear-quantity_invariant-measure} over all possible $\bm{b}$, the constant $C_f$ becomes
\begin{align}
    C_f=\mathbb{E}_{\rho_\mathrm{ss}}(C_f)
    =\mathsf{E}_{\nu_\mathrm{ss}}[f(\psi)],
    \label{eq:average_nonlinear-function_constant}
\end{align}
which can be understood from Eq.~\eqref{eq:average_nonlinear-quantity_invariant-measure_limit}. 

To show Eq.~\eqref{eq:average_nonlinear-quantity_any-measure} on the basis of Eq.~\eqref{eq:average_nonlinear-quantity_invariant-measure}, we consider a continuous function $g(\psi)$ and its evolution governed by 
\begin{align}
    g_n(\psi_0)=\sum_{\bm{b}_n} \bra{\psi_0}\hat{\mathsf{M}}_{\bm{b};n}^\dagger\hat{\mathsf{M}}_{\bm{b};n}\ket{\psi_0}g(\psi_{\bm{b};n}).
    \label{eq:dynamics_function}
\end{align}
If $g(\psi_0)$ is not changed by the trajectory dynamics, i.e., 
\begin{align}
    g_n(\psi_0)=g(\psi_0),
    \label{eq:harmonic-function}
\end{align}
the function $g(\psi_0)$ is referred to as a harmonic function. 
Equations~\eqref{eq:nu-n}, \eqref{eq:average_nu-n}, \eqref{eq:dynamics_function}, and~\eqref{eq:harmonic-function} tell us that any harmonic function satisfies 
\begin{align}
    g(\psi_0)=g_n(\psi_0)
    &=\sum_{\bm{b}_n}\bra{\psi_0}\hat{\mathsf{M}}_{\bm{b};n}^\dagger \hat{\mathsf{M}}_{\bm{b};n}\ket{\psi_0}g(\psi_{\bm{b};n})\nonumber\\
    &=\int\sum_{\bm{b}_n}\bra{\phi}\hat{\mathsf{M}}_{\bm{b};n}^\dagger \hat{\mathsf{M}}_{\bm{b};n}\ket{\phi}g(\psi_{\bm{b};n}) d\nu^{\psi_0}(\phi)=\mathsf{E}_{\nu_n^{\psi_0}}\left[g(\psi)\right],
    \label{eq:sample-average_harmonic-function}
\end{align}
where $\nu^{\psi_0}(S)=\int_{\phi \in S}\delta(\psi_0-\phi)d\phi$ is the measure of the initial pure state $\ket{\psi_0}$.

We can show that harmonic and continuous functions are constants, i.e., such functions are independent of $\ket{\psi_0}$. 
To this end, we consider a probability measure averaged over $n$-step dynamics,
\begin{align}
    \tilde{\nu}_n^{\psi_0}=\frac{1}{n}\sum_{m=0}^{n-1}\nu_m^{\psi_0}.
    \label{eq:time-average_measure}
\end{align}
From Eqs.~\eqref{eq:sample-average_harmonic-function} and~\eqref{eq:time-average_measure}, we notice that any harmonic function $g(\psi_0)$ can be written as 
\begin{align}
    g(\psi_0)=\frac{g(\psi_0)+g_1(\psi_0)+\cdots +g_{n-1}(\psi_0)}{n}=\mathsf{E}_{\tilde{\nu}_n^{\psi_0}}[g(\psi)]
    \label{eq:time-average_harmonic-function}
\end{align}
for arbitrary $n$. 
In addition, since there exists a unique invariant measure $\nu_\mathrm{ss}$, $\tilde{\nu}_n^{\psi_0}$ satisfies
\begin{align}
    \lim_{n\rightarrow\infty}\tilde{\nu}_n^{\psi_0}=\nu_\mathrm{ss}.
    \label{eq:limit_measure}
\end{align}
Equations~\eqref{eq:time-average_harmonic-function} and~\eqref{eq:limit_measure} lead to
\begin{align}
    g(\psi_0)=\mathsf{E}_{\nu_\mathrm{ss}}[g(\psi)],
    \label{eq:time-average_harmonic-function_invariant-measure}
\end{align}
which means that the harmonic and continuous function $g(\psi_0)$ does not depend on $\ket{\psi_0}$. 

Using Eq.~\eqref{eq:time-average_harmonic-function_invariant-measure}, we wish to show Eq.~\eqref{eq:average_nonlinear-quantity_any-measure}. 
To this end, we consider a function defined as\footnote{
The function $G(\psi_0)$ is continuous~\cite{benoist2023limit}, which is technically important but not detailed here.
}
\begin{align}
    G(\psi_0)&=P_{\psi_0}\left(\left\{\bm{b}
    :\lim_{N\rightarrow\infty}
    \frac{1}{N}\sum_{n=0}^{N-1}
    f(\psi_{\bm{b};n})=\mathsf{E}_{\nu_\mathrm{ss}}[f(\psi)]\right\}\right)\nonumber\\
    &=P_{\psi_0}\left(\left\{\bm{b}
    :\lim_{N\rightarrow\infty}
    \frac{1}{N}\sum_{n=0}^{N-1}
    f(\Psi_{1,\bm{b};n})=\mathsf{E}_{\nu_\mathrm{ss}}[f(\psi)]\right\}\right),
    \label{eq:G-psi}
\end{align}
which is the probability of the set of quantum trajectories where $\lim_{N\rightarrow\infty}\frac{1}{N}\sum_{n=0}^{N-1}f(\psi_{\bm{b};n})=\mathsf{E}_{\nu_\mathrm{ss}}[f(\psi)]$ is satisfied with the initial state $\ket{\psi_0}$.
Here, Eqs.~\eqref{eq:average_nonlinear-quantity_invariant-measure} and~\eqref{eq:average_nonlinear-function_constant} mean that $\overline{f(\Psi_{1,\bm{b};n})}=\mathsf{E}_{\nu_\mathrm{ss}}[f(\psi)]$ is satisfied almost surely if we take the initial state as the unique stationary state $\hat{\rho}_\mathrm{ss}=\int\ket{\psi_0}\bra{\psi_0}d\nu_\mathrm{ss}(\psi_0)$. 
Therefore, integrating $G(\psi_0)$ with respect to the unique invariant measure $\nu_\mathrm{ss}$, we can obtain
\begin{align}
    \int G(\psi_0)d\nu_\mathrm{ss}(\psi_0)=P_{\rho_\mathrm{ss}}\left(\left\{\bm{b}
    :\overline{f(\Psi_{1,\bm{b};n})}=\mathsf{E}_{\nu_\mathrm{ss}}[f(\psi)]\right\}\right)=1,
    \label{eq:integral_G-psi}
\end{align}
where $P_\rho(\mathfrak{B})$ is defined in Eq.~\eqref{eq:born-rule}.
This is because $\ket{\Psi_{1,\bm{b};n}}$ is independent of $\ket{\psi_0}$ and thus we can evaluate the probability in the second line of Eq.~\eqref{eq:G-psi} through the measure $P_{\rho_\mathrm{ss}}(\mathfrak{B})$ for outcomes, in the same way as in Eq.~\eqref{eq:average_actual-trajectory_ground-state}. 

We can also show that the function $G(\psi_0)$ becomes a harmonic function as
\begin{align}
    G_1(\psi_0)&=\sum_b
    \bra{\psi_0}\hat{M}_b^\dagger\hat{M}_b\ket{\psi_0}
    P_{\psi_b}\left(\left\{\bm{b}:
    \lim_{N\rightarrow\infty}\frac{1}{N}
    \sum_{n=0}^{N-1}f(\Psi_{1,\bm{b};n})
    =\mathsf{E}_{\nu_\mathrm{ss}}[f(\psi)]
    \right\}\right)\nonumber\\
    &=\sum_b
    \bra{\psi_0}\hat{M}_b^\dagger\hat{M}_b\ket{\psi_0}
    P_{\psi_0}\left(\left\{\bm{b}:
    \lim_{N\rightarrow\infty}\frac{1}{N}
    \sum_{n=0}^{N-1}f(\Psi_{1,\bm{b};{n+1}})
    =\mathsf{E}_{\nu_\mathrm{ss}}[f(\psi)]\middle|b_1=b\right\}\right)\nonumber\\
    &=P_{\psi_0}\left(\left\{\bm{b}:
    \lim_{N\rightarrow\infty}\frac{N+1}{N}\frac{1}{N+1}
    \sum_{n=0}^{N}f(\Psi_{1,\bm{b};n})-\frac{f(\psi_0)}{N}
    =\mathsf{E}_{\nu_\mathrm{ss}}[f(\psi)]\right\}\right)\nonumber\\
    &=G(\psi_0),
\end{align}
where $\ket{\psi_b}=\hat{M}_b\ket{\psi_0}/\sqrt{\bra{\psi_0}\hat{M}_b^\dagger\hat{M}_b\ket{\psi_0}}$. 
Therefore, the harmonic function $G(\psi_0)$ is constant, as shown in Eq.~\eqref{eq:time-average_harmonic-function_invariant-measure}. 
In addition, Eq.~\eqref{eq:integral_G-psi} tells us that the constant value becomes
\begin{align}
    G(\psi_0)=1.
\end{align}
This means that Eq.~\eqref{eq:average_nonlinear-quantity_any-measure} is satisfied almost surely.

\subsection{Lyapunov spectrum of quantum trajectories}
\label{sec:Lyapunov-analysis}
The Lyapunov spectral analysis has been used to study quantum trajectories in monitored systems. 
Indeed, measurement-induced phase transitions, their critical properties, purification timescales, and topological physics have been explored through Lyapunov spectral analysis~\cite{zabalo2022operator, kumar2024boundary, aziz2024critical, chakraborty24charge, bulchandani2024random, mochizuki2025measurement, xiao2024topology, oshima2025topology, mochizuki2025transitions}. 
These studies will be reviewed in Chapter~\ref{sec:mipt}. 
Here, we explain the theoretical aspects and numerical procedures of Lyapunov analysis in quantum systems exposed to indirect measurements. 

In Lyapunov spectral analysis, we consider the effective Hamiltonian 
\begin{align}
    \hat{H}_{\bm{b};n}=-\frac{1}{2n}
    \ln\left(\hat{\mathsf{M}}_{\bm{b};n}\hat{\mathsf{M}}_{\bm{b};n}^\dagger\right),
    \label{eq:effective-Hamiltonian}
\end{align}
where $\hat{\mathsf{M}}_{\bm{b};n}=\hat{M}_{b_n}\hat{M}_{b_{n-1}}\cdots\hat{M}_{b_1}$. 
In a quantum trajectory labeled by $\bm{b}$, the Lyapunov exponents $\{\varepsilon_{i,\bm{b};n}\}_i$ are defined as the eigenvalues of the effective Hamiltonian, i.e., 
\begin{align}
    \hat{H}_{\bm{b};n}\ket{\Psi_{i,\bm{b};n}}
    =\varepsilon_{i,\bm{b};n}\ket{\Psi_{i,\bm{b};n}}.
    \label{eq:Lyapunov-exponent}
\end{align}
In the following, we order the Lyapunov exponents as $\varepsilon_{i,\bm{b};n}\leq\varepsilon_{i+1,\bm{b};n}$. 

The main result is that, if the corresponding CPTP dynamics is irreducible, the Lyapunov exponents converge to values independent of $\bm{b}$ in the long-time regime~\cite{benoist2019invariant, arnold1995random},
\begin{align}
    \varepsilon_i=\lim_{n\rightarrow\infty}\varepsilon_{i,\bm{b};n},
    \label{eq:typical-convergence_Lyapunov-exponent}
\end{align}
almost surely for any initial state. 
In other words, the set of trajectories $\tilde{\mathfrak{B}}_\infty$ such that the Lyapunov exponents depend on the outcomes exhibits zero measure, i.e., $P_{\rho_0}(\tilde{\mathfrak{B}}_\infty)=0$ for any initial state $\hat{\rho}_0$. 
Table~\ref{tab:ergodicity} summarizes the sufficient condition for Eq.~\eqref{eq:typical-convergence_Lyapunov-exponent} to be satisfied. 
Discussions on irreducibility, or equivalently on the unique positive-definite steady state in the averaged CPTP dynamics, were given in Sec.~\ref{sec:irreducibility}. 
If Eq.~\eqref{eq:typical-convergence_Lyapunov-exponent} is satisfied, we can characterize some typical features of quantum trajectories, e.g., the purification timescale, through the Lyapunov spectrum.

\subsubsection{Typical convergence of the Lyapunov spectrum owing to the irreducibility}
\label{sec:typical-convergence_Lyapunov-spectrum}
To show Eq.~\eqref{eq:typical-convergence_Lyapunov-exponent}, we use Kingman's subadditive ergodic theorem, on the basis of the invariant measure and the ergodicity discussed in Sec.~\ref{sec:invariant-measure_ergodicity_outcomes}. 
The theorem is applicable to a sequence of functions $\{f_n(\bm{b})\}_{n=1,2,\ldots}$ that satisfies
\begin{align}
    f_{n+m}(\bm{b}) \leq f_n(\bm{b})+f_m(\vartheta^n\bm{b}).
    \label{eq:subadditive-function}
\end{align}
As detailed in Appendix~\ref{app:proof_Kingman-theorem}, Kingman's subadditive ergodic theorem states the following: if there exists an invariant measure for $\mathfrak{B}\subseteq\{\bm{b}\}$ that is ergodic and if $\{f_n(\bm{b})\}_n$ satisfies Eq.~\eqref{eq:subadditive-function} for any $\bm{b}$, then the subadditive function $f_n(\bm{b})$ divided by $n$ converges to an asymptotic value independent of measurement outcomes, i.e.,
\begin{align}
    \gamma=\lim_{n\rightarrow\infty}
    \frac{f_n(\bm{b})}{n},
    \label{eq:Kingman-theorem}
\end{align}
almost surely with respect to the invariant measure. 
We note that the existence of an invariant and ergodic measure for $\mathfrak{B}$ is ensured if we take the initial state as the unique positive-definite steady state $\hat{\rho}_\mathrm{ss}$ of the corresponding CPTP dynamics that is assumed to be irreducible.

To apply Kingman's subadditive ergodic theorem, we need to identify an appropriate function satisfying the subadditive property.
To this end, we consider the exterior powers of vectors and operators. 
Given a finite-dimensional Hilbert space $\mathcal{H}$, vectors in the space $\land^k\mathcal{H}$ can be written as $\sum_ic_i\left(\ket{\psi_i^1}\land\ket{\psi_i^2}\land\cdots\land\ket{\psi_i^k}\right)$, where $\{c_i\}_i$ are complex numbers and $\{\ket{\psi_i^j}\}_{i,j}$ are vectors in $\mathcal{H}$. 
The elements $\{\ket{\psi_i^1}\land\ket{\psi_i^2}\land\cdots\land\ket{\psi_i^k}\}_i$ satisfy
\begin{align}
    \ket{\psi^{j-1}}\land(\ket{\psi^j}+\ket{\tilde{\psi}^j})\land\ket{\psi^{j+1}}
    &=\ket{\psi^{j-1}}\land\ket{\psi^j}\land\ket{\psi^{j+1}}
    +\ket{\psi^{j-1}}\land\ket{\tilde{\psi}^j}\land\ket{\psi^{j+1}},\\
    \ket{\psi^1}\land\cdots\land c\ket{\psi^j}\land\cdots\land\ket{\psi^k}
    &=c\ket{\psi^1}\land\cdots\land\ket{\psi^j}\land\cdots\land\ket{\psi^k},\\
    \ket{\psi^{\pi(1)}}\land\ket{\psi^{\pi(2)}}\land\cdots\land\ket{\psi^{\pi(k)}}
    &=\mathrm{sgn}(\pi)\ket{\psi^1}\land\ket{\psi^2}\land\cdots\land\ket{\psi^k},
\end{align}
where $\pi$ is a permutation of $(1,2,\ldots,k)$. 
The inner product of two vectors in $\land^k\mathcal{H}$ is given by
\begin{align}
    \left<\ket{\phi^1}\land\cdots\land\ket{\phi^k},\ket{\psi^1}\land\cdots\land\ket{\psi^k}\right>
    =\det(\langle\phi^i|\psi^j\rangle).
\end{align}
An operator $\land^k\hat{A}$ acting on $\land^k\mathcal{H}$ is defined as
\begin{align}
    (\land^k\hat{A})(\ket{\psi^1}\land\cdots\land\ket{\psi^k})
    =(\hat{A}\ket{\psi^1})\land\cdots\land (\hat{A}\ket{\psi^k}),
\end{align}
where $\hat{A}$ is an operator in $\mathbb{B}[\mathcal{H}]$. 
The norm of such an operator is expressed as
\begin{align}
    \left\|\land^k\hat{A}\right\|=\prod_{i=1}^k\Lambda_i,
    \label{eq:norm_exterior-power}
\end{align}
where $\{\Lambda_i\}$ are the singular values of $\hat{A}$ ordered as $\Lambda_i\geq\Lambda_{i+1}$. 
Thus, the norm satisfies the inequality
\begin{align}
    \left\|\land^k(\hat{A}\hat{B})\right\|\leq
    \left\|\land^k\hat{A}\right\|\left\|\land^k\hat{B}\right\|,
    \label{eq:norm-inequality_exterior-power}
\end{align}
which results from the log-majorization of singular values (see, e.g., Corollary~4.3.5 in Ref.~\cite{hiai2010matrix}). 

Here, we focus on the logarithm of the norm for the exterior power of $\hat{\mathsf{M}}_{\bm{b};n}$,
\begin{align}
    f_{n,k}(\bm{b})=\ln\left(\left\|\land^k\hat{\mathsf{M}}_{\bm{b};n}\right\|\right).
    \label{eq:exterior-power_M}
\end{align}
Owing to Eq.~\eqref{eq:norm-inequality_exterior-power}, $f_{n,k}(\bm{b})$ is a subadditive function satisfying
\begin{align}
    f_{n+m,k}(\bm{b}) \leq f_{n,k}(\bm{b})+f_{m,k}(\vartheta^n\bm{b})
    \label{eq:f_subadditive}
\end{align}
for any $k$. 
Then, we can apply Kingman's subadditive ergodic theorem to $f_{n,k}(\bm{b})$ if we take the initial state as the unique positive-definite steady state $\hat{\rho}_\mathrm{ss}$; consequently, there is a limit independent of measurement outcomes,
\begin{align}
    \gamma_k=\lim_{n\rightarrow\infty}
    \frac{f_{n,k}(\bm{b})}{n},
    \label{eq:gamma-k}
\end{align}
almost surely with respect to the invariant measure $P_{\rho_\mathrm{ss}}(\mathfrak{B})$. 
From Eqs.~\eqref{eq:effective-Hamiltonian}, \eqref{eq:Lyapunov-exponent}, and~\eqref{eq:norm_exterior-power}, we can understand that $\gamma_k$ is the sum of the Lyapunov exponents,
\begin{align}
    \gamma_k=-\sum_{j=1}^k\varepsilon_j.
    \label{eq:gamma-k_Lyapunov-exponenets}
\end{align}
Thus, when we take $\hat{\rho}_\mathrm{ss}$ as the initial state, the Lyapunov spectrum typically becomes independent of the sequence of measurement outcomes $\bm{b}$.

Furthermore, based on the positive definiteness of $\hat{\rho}_\mathrm{ss}$, we can also show that the Lyapunov spectrum typically satisfies Eq.~\eqref{eq:typical-convergence_Lyapunov-exponent} for any initial state $\hat{\rho}_0$~\cite{benoist2019invariant}.
This is because the positive definite $\hat{\rho}_\mathrm{ss}$ ensures absolute continuity; any set of outcomes $\mathfrak{B}_n$ that satisfies $P_{\rho_\mathrm{ss}}(\mathfrak{B}_n)= 0$ as $n\rightarrow\infty$ also satisfies $P_{\rho_0}(\mathfrak{B}_n)= 0$ for any $\hat{\rho}_0$\footnote{
To see this, we first note 
$
P_{\rho_\mathrm{ss}}(\mathfrak{B}_n)=\sum_{\bm{b}_n \in \mathsf{B}_n}\mathrm{Tr}\left(\sqrt{\hat{\rho}_\mathrm{ss}}\hat{\mathsf{M}}_{{\bm{b};n}}^\dagger\hat{\mathsf{M}}_{{\bm{b};n}}\sqrt{\hat{\rho}_\mathrm{ss}}\right).
$
This is lower bounded by $\lambda_\mathrm{min}(\hat{\rho}_\mathrm{ss})\sum_{\bm{b}_n \in \mathsf{B}_n}\mathrm{Tr}\left(\hat{\mathsf{M}}_{{\bm{b};n}}^\dagger\hat{\mathsf{M}}_{{\bm{b};n}}\right)$, where $\lambda_\mathrm{min}(\hat{\rho}_\mathrm{ss})$ is the minimum eigenvalue of $\hat{\rho}_\mathrm{ss}$. 
Note that, since $\hat{\rho}_\mathrm{ss}$ is positive definite, $\lambda_\mathrm{min}(\hat{\rho}_\mathrm{ss})>0$.
Now, for a general initial state $\hat{\rho}_0$ we have
$
P_{\rho_0}(\mathfrak{B}_n)=\sum_{\bm{b}_n \in \mathsf{B}_n}\mathrm{Tr}\left(\hat{\rho}_0\hat{\mathsf{M}}_{{\bm{b};n}}^\dagger\hat{\mathsf{M}}_{{\bm{b};n}}\right
)\leq \sum_{\bm{b}_n \in \mathsf{B}_n}\mathrm{Tr}\left(\hat{\mathsf{M}}_{{\bm{b};n}}^\dagger\hat{\mathsf{M}}_{{\bm{b};n}}\right)
$. 
Therefore, we have
\begin{align}\label{eq:absolute}
P_{\rho_0}(\mathfrak{B}_n)\leq \frac{1}{\lambda_\mathrm{min}(\hat{\rho}_\mathrm{ss})}P_{\rho_\mathrm{ss}}(\mathfrak{B}_n).
\end{align}
}.
Consequently, atypical trajectories with the initial state $\hat{\rho}_\mathrm{ss}$, in which the Lyapunov spectrum may depend on measurement outcomes, also become atypical in trajectories starting from another initial state $\hat{\rho}_0$.

\subsubsection{Spectral gap and purification}
\label{sec:spectral_gap_purification}
The spectral gap, obtained in the Lyapunov analysis,
\begin{align}
    \Delta=\varepsilon_2-\varepsilon_1,
    \label{eq:def of Lyapunov gap}
\end{align}
is profoundly related to purification, which was discussed in Sec.~\ref{sec:pur}. 
Here, we assume that there is a unique positive-definite steady state $\hat{\rho}_\mathrm{ss}=\mathcal{E}[\hat{\rho}_\mathrm{ss}]$, and thus Eq.~\eqref{eq:typical-convergence_Lyapunov-exponent} is satisfied in typical trajectories. 
Then, the timescale for purification is determined by the inverse of the spectral gap, $1/\Delta$. 
This can be understood from the singular value decomposition of $\hat{\mathsf{M}}_{\bm{b};n}$,
\begin{align}
    \hat{\mathsf{M}}_{\bm{b};n}&=\sum_{i=1}^de^{-\varepsilon_{i,\bm{b};n}n}
    \ket{\Psi_{i,\bm{b};n}}\bra{\Phi_{i,\bm{b};n}}\nonumber\\
    &\simeq e^{-\varepsilon_1n}\sum_{i=1}^de^{-(\varepsilon_i-\varepsilon_1)n}
    \ket{\Psi_{i,\bm{b};n}}\bra{\Phi_{i,\bm{b};n}},
    \label{eq:singula-value_decomposition}
\end{align}
where $\ket{\Phi_{i,\bm{b};n}}$ is the $i$th eigenstate of $\hat{\mathsf{M}}_{\bm{b};n}^\dagger\hat{\mathsf{M}}_{\bm{b};n}$ that satisfies $\hat{\mathsf{M}}_{\bm{b};n}^\dagger\hat{\mathsf{M}}_{\bm{b};n}\ket{\Phi_{i,\bm{b};n}}=e^{-2\varepsilon_{i,\bm{b};n}n}\ket{\Phi_{i,\bm{b};n}}$. 
Here, the approximation on the right-hand side of Eq.~\eqref{eq:singula-value_decomposition} is valid in the long-time regime where $\varepsilon_{i,\bm{b};n}\simeq\varepsilon_i$ is satisfied. 
When $\varepsilon_1$ is nondegenerate and $n\gg1/\Delta$ is satisfied, terms with $i\geq2$ in the sum can be neglected, and thus we can approximate $\hat{\mathsf{M}}_{\bm{b};n}$ as a rank-$1$ matrix. 
Within this approximation, the matrix rank of the density operator $\hat{\rho}_{\bm{b};n}=\hat{\mathsf{M}}_{\bm{b};n}\hat{\rho}_0\hat{\mathsf{M}}_{\bm{b};n}^\dagger/\mathrm{Tr}\left(\hat{\mathsf{M}}_{\bm{b};n}\hat{\rho}_0\hat{\mathsf{M}}_{\bm{b};n}^\dagger\right)$ also becomes $1$, and the pure state $\ket{\Psi_{1,\bm{b};n}}\bra{\Psi_{1,\bm{b};n}}$ is approximately realized.

It can also be rigorously shown that, if the condition for the purification of typical trajectories is satisfied, there is always a nonzero spectral gap,
\begin{align}
    \Delta>0,
    \label{eq:nonzero-gap}
\end{align}
in finite-dimensional monitored quantum systems~\cite{benoist2019invariant}. 
To see this, we consider the function
\begin{align}
    g_n=\mathbb{E}_{\mathbb{I}/d}\left[d\frac{\left\|\land^2\hat{\mathsf{M}}_{\bm{b};n}\right\|}
    {\mathrm{Tr}\left(\hat{\mathsf{M}}_{\bm{b};n}^\dagger\hat{\mathsf{M}}_{\bm{b};n}\right)}\right]
    =\sum_{\bm{b}_n}\left\|\land^2\hat{\mathsf{M}}_{\bm{b};n}\right\|,
\end{align}
where $\mathbb{E}_\rho[f(\bm{b}_n)]=\sum_{\bm{b}_n}f(\bm{b}_n)\mathrm{Tr}\left[\hat{\mathsf{M}}_{\bm{b};n}\hat{\rho}\hat{\mathsf{M}}_{\bm{b};n}^\dagger\right]$. 
Here, $g_n$ is a submultiplicative function due to the inequality $\left\|\land^2\hat{\mathsf{M}}_{\bm{b};n+m}\right\|\leq\left\|\land^2\hat{\mathsf{M}}_{\bm{b};n}\right\|\left\|\land^2\hat{\mathsf{M}}_{\vartheta^n\bm{b};m}\right\|$. 
Then, $\ln(g_n)$ is subadditive, and we can apply Fekete's subadditive lemma\footnote{
A subadditive function $f_{n+m}\leq f_n+f_m$ always satisfies $f_n/n \leq (qf_k)/(kq+r)+f_r/n \leq f_k/k+\max(f_0,f_1,\cdots,f_r)/n$, where $q$ is an integer, $n=kq+r$, and $0\leq r \leq k-1$. 
This leads to $\limsup_{n\rightarrow\infty}f_n/n\leq\inf_{k\geq1}f_k/k\leq\liminf_{n\rightarrow\infty}f_n/n$ and thus $\lim_{n\rightarrow\infty}f_n/n=\inf_{k\geq1}f_k/k$.
} to it: the limit of $\ln(g_n)/n$ exists and is given by $\lim_{n\rightarrow\infty}\ln(g_n)/n=\inf_{n\geq1}\ln(g_n)/n$. 

As discussed in Sec.~\ref{sec:pur}, if purification typically occurs, then
\begin{align}
    \lim_{n\rightarrow\infty}\frac{\left\|\land^2\hat{\mathsf{M}}_{\bm{b};n}\right\|}
    {\mathrm{Tr}\left(\hat{\mathsf{M}}_{\bm{b};n}^\dagger\hat{\mathsf{M}}_{\bm{b};n}\right)}=\lim_{n\rightarrow\infty}\Lambda_{1,\bm{b};n}\Lambda_{2,\bm{b};n}=0
    \label{eq:limit_gn_integrant}
\end{align}
is satisfied in a typical trajectory $\bm{b}$. 
Here, $\Lambda_{i,\bm{b};n}$ is the $i$th singular value of $\hat{\mathsf{M}}_{\bm{b};n}/\sqrt{\mathrm{Tr}\left(\hat{\mathsf{M}}_{\bm{b};n}^\dagger\hat{\mathsf{M}}_{\bm{b};n}\right)}$, ordered as $\Lambda_{i,\bm{b};n}\geq\Lambda_{i+1,\bm{b};n}$.  
From Eq.~\eqref{eq:limit_gn_integrant}, we can obtain\footnote{
Owing to $\left\|\land^2\hat{\mathsf{M}}_{\bm{b};n}\right\|\leq\left\|\hat{\mathsf{M}}_{\bm{b};n}\right\|^2$, $\left\|\land^2\hat{\mathsf{M}}_{\bm{b};n}\right\|/\mathrm{Tr}\left(\hat{\mathsf{M}}_{\bm{b};n}^\dagger\hat{\mathsf{M}}_{\bm{b};n}\right)\leq1$ is always satisfied, which allows us to apply Lebesgue's dominated convergence theorem to $\lim_{n\rightarrow\infty}g_n$ and to exchange the average and the limit.
}
\begin{align}
    \lim_{n\rightarrow\infty}g_n
    =\mathbb{E}_{\mathbb{I}/d}\left[d\lim_{n\rightarrow\infty}\frac{\left\|\land^2\hat{\mathsf{M}}_{\bm{b};n}\right\|}
    {\mathrm{Tr}\left(\hat{\mathsf{M}}_{\bm{b};n}^\dagger\hat{\mathsf{M}}_{\bm{b};n}\right)}\right]=0.
    \label{eq:limit_gn}
\end{align}
Fekete's subadditive lemma and Eq.~\eqref{eq:limit_gn} lead to
\begin{align}
    \lim_{n\rightarrow\infty}\frac{\ln(g_n)}{n}=\inf_{n\geq1}\frac{\ln(g_n)}{n}<0.
    \label{eq:limit_log-gn_negative}
\end{align}
Equation~\eqref{eq:limit_log-gn_negative} implies that there are constants $\lambda$ and $C$ that satisfy 
\begin{align}
    g_n \leq C\lambda^n,\ \ 0<\lambda<1.
    \label{eq:inequality_gn_exponential-decay}
\end{align}

To show Eq.~\eqref{eq:nonzero-gap} based on Eq.~\eqref{eq:inequality_gn_exponential-decay}, we note that the inequality $\mathrm{Tr}\left(\hat{\mathsf{M}}_{\bm{b};n}^\dagger\hat{\mathsf{M}}_{\bm{b};n}\right) \leq d\left\|\hat{\mathsf{M}}_{\bm{b};n}\right\|^2$ results in
\begin{align}
    \mathbb{E}_{\mathbb{I}/d}\left(\frac{\left\|\land^2\hat{\mathsf{M}}_{\bm{b};n}\right\|}
    {\left\|\hat{\mathsf{M}}_{\bm{b};n}\right\|^2}\right) \leq g_n.
    \label{eq:inequality_gn_norm}
\end{align}
Applying Jensen's inequality to Eq.~\eqref{eq:inequality_gn_norm}, we can obtain 
\begin{align}
    \mathbb{E}_{\mathbb{I}/d}
    \left[\frac{1}{n}\ln\left(\frac{\left\|\land^2\hat{\mathsf{M}}_{\bm{b};n}\right\|}
    {\left\|\hat{\mathsf{M}}_{\bm{b};n}\right\|^2}\right)\right]
    \leq\frac{1}{n}\ln\left[\mathbb{E}_{\mathbb{I}/d}
    \left(\frac{\left\|\land^2\hat{\mathsf{M}}_{\bm{b};n}\right\|}
    {\left\|\hat{\mathsf{M}}_{\bm{b};n}\right\|^2}\right)\right]\leq\frac{1}{n}\ln(g_n).
    \label{eq:inequality_gn_log-norm}
\end{align}
Here, from Eqs.~\eqref{eq:exterior-power_M}, \eqref{eq:gamma-k}, and~\eqref{eq:gamma-k_Lyapunov-exponenets}, we see that $\ln\left(\left\|\land^2\hat{\mathsf{M}}_{\bm{b};n}\right\|\right)/n$ and $\ln\left(\left\|\hat{\mathsf{M}}_{\bm{b};n}\right\|^2\right)/n$ converge almost surely to $-\varepsilon_1-\varepsilon_2$ and $-2\varepsilon_1$, respectively, in the long-time limit. 
Such convergence is guaranteed when the CPTP dynamics averaged over measurement outcomes is irreducible, as discussed in Sec.~\ref{sec:typical-convergence_Lyapunov-spectrum}. 
Thus, taking the limit $n\rightarrow\infty$ in Eq.~\eqref{eq:inequality_gn_log-norm} leads to
\begin{align}
    \lim_{n\rightarrow\infty}\mathbb{E}_{\mathbb{I}/d}
    \left[\frac{1}{n}
    \ln\left(\frac{\left\|\land^2\hat{\mathsf{M}}_{\bm{b};n}\right\|}
    {\left\|\hat{\mathsf{M}}_{\bm{b};n}\right\|^2}\right)\right]
    =\varepsilon_1-\varepsilon_2\leq\lim_{n\rightarrow\infty}\frac{1}{n}\ln(g_n)\leq\ln(\lambda).
    \label{eq:inequality_gap}
\end{align}
Equation~\eqref{eq:inequality_gap} with $0<\lambda<1$ means that there is always a nonzero spectral gap, i.e., Eq.~\eqref{eq:nonzero-gap} is satisfied. 

We note that $\Delta$ can be a decreasing function of $d$, and thus the spectral gap can vanish in the limit $d\rightarrow\infty$, while $\Delta$ is always nonzero in monitored systems with finite $d$ where typical trajectories purify. 
Such a vanishing gap, which causes the purification timescale to diverge, characterizes entangled phases in monitored quantum many-body systems, as will be reviewed in Secs.~\ref{sec:purification_MIPT} and~\ref{sec:Lyapunov-spectrum_MIPT}. 
We also note that the order of the average $\mathbb{E}_{\mathbb{I}/d}$ and the limit $n\rightarrow\infty$ is interchanged in Eq.~\eqref{eq:inequality_gap}.
This interchange is ensured by Lebesgue's dominated convergence theorem when we consider indirectly monitored systems where $\ln\left(\left\|\land^2\hat{\mathsf{M}}_{\bm{b};n}\right\|/\left\|\hat{\mathsf{M}}_{\bm{b};n}\right\|^2\right)/n$ does not diverge. 
More rigorous treatments applicable to quantum systems exposed to direct (projective) measurements, where $\varepsilon_{i\neq1}$ can diverge, can be found in Ref.~\cite{benoist2019invariant}.

\subsubsection{First Lyapunov exponent and probability of typical trajectories}
The first Lyapunov exponent is related to the probability of realizing typical trajectories. 
Indeed, for typical trajectories,
\begin{align}
    \lim_{n\rightarrow\infty}\left[
    \frac{1}{n}\ln\left(\left\|\hat{\mathsf{M}}_{\bm{b};n}\ket{\psi_0}\right\|\right)
    -\frac{1}{n}\ln\left(\left\|\hat{\mathsf{M}}_{\bm{b};n}\right\|\right)\right]=0
    \label{eq:typical-convergence_norm}
\end{align}
is satisfied, where $\ket{\psi_0}$ is an arbitrary initial state. 
In Eq.~\eqref{eq:typical-convergence_norm}, the first term corresponds to the decay rate of the probability that an outcome sequence $\bm{b}_n$ is realized while the second term corresponds to the first Lyapunov exponent $\varepsilon_1$. They coincide in the long-time limit. 

To show Eq.~\eqref{eq:typical-convergence_norm}, we again consider $\hat{Z}_{\bm{b};n}$ defined in Eq.~\eqref{eq:Zn}. 
Using $\hat{Z}_{\bm{b};n}$, the norm of $\hat{\mathsf{M}}_{\bm{b};n}\ket{\psi_0}$ for the initial state $\ket{\psi_0}$ can be written as
\begin{align}
    \left\|\hat{\mathsf{M}}_{\bm{b};n}\ket{\psi_0}\right\|^2
    =\bra{\psi_0}\hat{Z}_{\bm{b};n}\ket{\psi_0}
    \mathrm{Tr}\left(\hat{\mathsf{M}}_{\bm{b};n}^\dagger\hat{\mathsf{M}}_{\bm{b};n}\right).
    \label{eq:norm_Zn}
\end{align}
Since $\left\|\hat{\mathsf{M}}_{\bm{b};n}\right\|^2 \leq \mathrm{Tr}\left(\hat{\mathsf{M}}_{\bm{b};n}^\dagger\hat{\mathsf{M}}_{\bm{b};n}\right) \leq d\left\|\hat{\mathsf{M}}_{\bm{b};n}\right\|^2$ is always satisfied, Eq.~\eqref{eq:norm_Zn} leads to
\begin{align}
    \bra{\psi_0}\hat{Z}_{\bm{b};n}\ket{\psi_0}
    \left\|\hat{\mathsf{M}}_{\bm{b};n}\right\|^2
    \leq\left\|\hat{\mathsf{M}}_{\bm{b};n}\ket{\psi_0}\right\|^2
    \leq\bra{\psi_0}\hat{Z}_{\bm{b};n}\ket{\psi_0}
    d\left\|\hat{\mathsf{M}}_{\bm{b};n}\right\|^2.
    \label{eq:inequality_norm}
\end{align}
In monitored quantum systems, $0<\bra{\psi_0}\hat{Z}_{\bm{b};n}\ket{\psi_0}\leq1$ is satisfied in typical trajectories, owing to the Born rule. 
In addition, Eq.~\eqref{eq:limit_Zn} means that $\bra{\psi_0}\hat{Z}_{\bm{b};n}\ket{\psi_0}$ does not decay exponentially, and thus $\lim_{n\rightarrow\infty}\ln\left[\bra{\psi_0}\hat{Z}_{\bm{b};n}\ket{\psi_0}\right]/n=0$ is satisfied. 
Since finite-dimensional systems are considered here, $\lim_{n\rightarrow\infty}\ln(d)/n=0$ is also satisfied. 
Therefore, taking the logarithm of Eq.~\eqref{eq:inequality_norm}, dividing by $n$, and taking the limit $n\rightarrow\infty$, we can obtain Eq.~\eqref{eq:typical-convergence_norm}.

\subsubsection{Numerical procedure} \label{sec:numerical_procedure}
When we numerically compute the Lyapunov exponents, it is difficult to directly diagonalize $\hat{\mathsf{M}}_{\bm{b};n}\hat{\mathsf{M}}_{\bm{b};n}^\dagger$ since the singular values of $\hat{\mathsf{M}}_{\bm{b};n}$ decay exponentially with $n$ and can quickly fall below the machine precision. 
However, $\{\varepsilon_i\}_i$ and $\{\ket{\Psi_{i,\bm{b};n}}\}_i$ with $i=1,2,\ldots,q$ can be obtained efficiently through the Gram-Schmidt orthonormalization, if $q$ is not so large. 
To this end, we compute the dynamics of a set of states $\ket{\tilde{\Psi}_{i,\bm{b};n}}$ that approach $\ket{\Psi_{i,\bm{b};n}}$ for large $n$. 

First, we prepare $q$ initial states $\ket{\tilde{\Psi}_{i,0}}$ that are orthonormalized as $\langle\tilde{\Psi}_{i,0}|\tilde{\Psi}_{j,0}\rangle=\delta_{ij}$. 
Second, we compute the states $\ket{\tilde{\Psi}_{i,\bm{b};n}}$ by partitioning the time evolution into $m$ blocks, each of which consists of $c$ steps,
\begin{align}
    \ket{\varphi_{i,\bm{b};mc}}
    =\hat{M}_{b_{mc}}\hat{M}_{b_{mc-1}}\cdots\hat{M}_{b_{(m-1)c+1}}
    \ket{\tilde{\Psi}_{i,\bm{b};(m-1)c}},
\end{align}
where $\ket{\tilde{\Psi}_{i,\bm{b};0}}=\ket{\tilde{\Psi}_{i,0}}$. 
If $c$ is not so large, we can avoid numerical underflow since the singular values of $\hat{M}_{b_{mc}}\hat{M}_{b_{mc-1}}\cdots\hat{M}_{b_{(m-1)c+1}}$ can be within the numerical precision. 
Third, we carry out the Gram-Schmidt orthonormalization of $\ket{\varphi_{i,\bm{b};mc}}$,
\begin{align}
    \ket{\phi_{i,\bm{b};mc}}
    =\left(\hat{\mathbb{I}}-\hat{\Pi}_{i,\bm{b};mc}\right)
    \ket{\varphi_{i,\bm{b};mc}},
\end{align}
where $\hat{\Pi}_{1,\bm{b};mc}=0$, $\hat{\Pi}_{i,\bm{b};mc}=\sum_{j=1}^{i-1}\ket{\tilde{\Psi}_{j,\bm{b};mc}}\bra{\tilde{\Psi}_{j,\bm{b};mc}}$ for $i\geq2$, and $\hat{\mathbb{I}}$ is the identity operator. 
Here, $\ket{\tilde{\Psi}_{i,\bm{b};mc}}$ is given by
\begin{align}
    \ket{\tilde{\Psi}_{i,\bm{b};mc}}=\frac{\ket{\phi_{i,\bm{b};mc}}}
    {\sqrt{\langle\phi_{i,\bm{b};mc}|\phi_{i,\bm{b};mc}\rangle}}.
\end{align}

Thus, $\ket{\tilde{\Psi}_{i,\bm{b};mc}}$ is obtained from $\ket{\varphi_{i,\bm{b};mc}}$ and $\{\ket{\tilde{\Psi}_{j,\bm{b};mc}}\}_j$ with $j=1,2,\ldots,i-1$, at each step $m$. 
In the procedure explained above, candidates for the Lyapunov exponents are obtained as
\begin{align}
    \tilde{\varepsilon}_{i,\bm{b};mc}
    =-\frac{1}{mc}\sum_{\ell=1}^m\ln\left(\sqrt{\langle\phi_{i,\bm{b};\ell c}|\phi_{i,\bm{b};\ell c}\rangle}\right).
\end{align}
For sufficiently large $m$, $\tilde{\varepsilon}_{i,\bm{b};mc}$ and $\ket{\tilde{\Psi}_{i,\bm{b};mc}}$ approach the $i$th Lyapunov exponent and the corresponding eigenmode in Eq.~\eqref{eq:Lyapunov-exponent},
\begin{align}
    \tilde{\varepsilon}_{i,\bm{b};mc}\rightarrow\varepsilon_{i,\bm{b};mc},\ \ 
    \ket{\tilde{\Psi}_{i,\bm{b};mc}}\rightarrow\ket{\Psi_{i,\bm{b};mc}},
    \label{eq:candidates}
\end{align}
respectively. 
We do not detail the reason why Eq.~\eqref{eq:candidates} is satisfied for large $m$, which has been elucidated in Refs.~\cite{ershov1998concept, mochizuki2025transitions}. 
We note that the procedure explained above may not be applicable to quantum systems exposed to projective measurements. 
This is because the matrix rank of $\hat{M}_{b_{mc}}\hat{M}_{b_{mc-1}}\cdots\hat{M}_{b_{(m-1)c+1}}$ is smaller than $d$ and thus $\{\ket{\varphi_{i,\bm{b};mc}}\}_i$ with $i\geq2$ can reside in the kernel of the time-evolution operator. 
When this occurs, we cannot perform the Gram-Schmidt orthonormalization, and the method becomes ineffective. 


\section{Measurement-induced phase transitions and their Lyapunov analysis}
\label{sec:mipt}
While the preceding chapters have focused on the general and formal properties of open quantum systems and quantum trajectories, this chapter explores the measurement-induced phase transition (MIPT) as an intriguing phenomenon unique to the quantum trajectories of many-body systems, which has received substantial attention and witnessed rapid progress in recent years.
Although there have already been several reviews on this topic~\cite{Potter2022, fisher2023random, Lunt2022Quantum, Skinner23lecture, HanZeLi2025Measurement}, we here provide an overview of MIPTs from three perspectives—entanglement, purification, and spectrum—with a particular focus on the Lyapunov analysis for quantum trajectories as discussed in the previous chapter.

\subsection{Entanglement transition}
\label{sec:entanglement_MIPT}
\subsubsection{General concepts}
The MIPTs are a class of non-equilibrium quantum phase transitions that primarily arise in many-body systems subject to both unitary evolution and quantum measurements.
Without measurements, a highly entangled state is realized due to the strong scrambling of the state's information by the unitary evolution.
If, however, we measure some local observables for the state and extract the information so frequently during the evolution, the superposition of the state is mostly broken, and the entanglement dies out.
Such competitive actions in the dynamics cause a phase transition in the qualitative behavior of the entanglement, which is the most prototypical example of MIPTs.

These MIPTs are prominently studied in random quantum circuits (RQCs) and systems under Hamiltonian dynamics, where unitary evolution is interspersed with local measurements.
Such dynamics generate stochastic quantum trajectories, reflecting the probabilistic nature of quantum measurements governed by the Born rule.
Because of this inherent randomness, the evolution of a quantum system under repeated measurements yields not a single deterministic state, but an ensemble of quantum trajectories $\{\ket{\psi_{\bm{b};t}}\}_{\bm{b}}$ conditioned on the sequence $\bm{b}$\footnote{\label{f:b-continuous}
While $\bm{b}$ represents a sequence of measurement outcomes in previous chapters, the meaning of $\bm{b}$ is extended in this chapter;
in RQCs treated in this chapter, $\bm{b}$ contains which positions are chosen for local measurements, which outcomes are obtained by the measurements, and what unitaries are applied in the trajectory.
That is, the physical source of randomness may not be due to the measurement alone.
Nevertheless, the other sources of randomness can safely be treated in the framework of Kraus operators as well, so most of the general discussions in the previous chapters still hold;
see the next footnote for some subtle points.
}.

A crucial feature of the MIPTs is that they may not manifest in the ensemble-averaged density matrix, $\mathbb{E}[\hat{\rho}_{\bm{b};t}]=\sum_{\bm{b}_t} p_{\bm{b};t}\ket{\psi_{\bm{b};t}}\bra{\psi_{\bm{b};t}}$\footnote{
The average $\mathbb{E}$ is taken over the three kinds of randomness above, positions of quantum measurements, measurement outcomes, and random unitaries.
If unitary matrices $\{\hat{U}\}$ are sampled from a continuous set, e.g. Haar random unitaries, $\sum_{\bm{b}_t}$ includes the integral $\prod_{s=1}^t\int d\mu(U_s)$ where $\mu$ is the probability measure of $\{\hat{U}\}$. 
We here apply such an abuse of symbols for simplicity, where the average expressed by the discrete sum includes integrals over continuous variables.
With this replacement, discussions in Chapters~\ref{sec:CPTPspectra}, \ref{sec:linear-quantity_purification}, and~\ref{sec:nonlear-quantity_Lyapunov-spectrum} are applicable to situations reviewed in this chapter, where unitary dynamics are interspersed with quantum measurements.
}, where $p_{\bm{b};t}=\bra{\psi_0}\hat{\mathsf{M}}_{\bm{b};t}^\dagger\hat{\mathsf{M}}_{\bm{b};t}\ket{\psi_0}$ is the probability (density) to obtain the sequence $\bm{b}$ and $\ket{\psi_0}$ is an initial state.
Instead, the MIPTs emerge from properties of the trajectory ensemble $\{ \ket{\psi_{\bm{b};t}} \}_{\bm{b}}$ itself.
By tuning the measurement rate or strength $\omega$, one obtains a trajectory ensemble that depends on $\omega$.
However, if the CPTP map corresponding to the monitored system is irreducible and unital, the averaged density matrix approaches the maximally mixed state $\hat{\rho}_\mathrm{ss}=\hat{\mathbb{I}}/d$, irrespective of $\omega$.
Therefore, no distinction arises in any physical quantities evaluated on the averaged state. 
In this case, linear functions of a density matrix cannot probe the properties specific to each trajectory, since the average of any linear function coincides with the expectation value evaluated on the averaged state, as discussed in Sec.~\ref{sec:ergodicity_linear-observable}.
Such an invisibility of MIPTs in the averaged density matrix occurs in various monitored systems. 

In contrast, nonlinear functions averaged over a trajectory ensemble, whose values generally differ from those of the same functions evaluated on the averaged state $\mathbb{E}[\hat{\rho}_{\bm{b};t}]$, can probe the $\omega$-dependent physics (see also Sec.~\ref{sec:nonlinear-observables}).
Such nonlinear functions include the R\'enyi entanglement entropy,
\begin{align}
    S_{X,\alpha}(\bm{b};\!t) = \frac{1}{1-\alpha}\ln\Tr [\hat{\rho}_{\bm{b};t, X}^\alpha]
    \qquad (0\leq\alpha<1\:\:\:\mathrm{or}\:\:\:1<\alpha),
\end{align}
and the von Neumann entanglement entropy, obtained in the limit $\alpha\to1$:
\begin{align}
    S_{X}(\bm{b};\!t) = \lim_{\alpha\to1}S_{X,\alpha}(\bm{b};\!t) =  -\Tr \left(\hat{\rho}_{\bm{b};t, X} \ln \hat{\rho}_{\bm{b};t, X}\right).
\end{align}
Here, $\hat{\rho}_{\bm{b};t, X} = \Tr_{\bar{X}}[\hat{\rho}_{\bm{b};t}]$ is the reduced density matrix of a pure-state trajectory $\hat{\rho}_{\bm{b};t} = \ket{\psi_{\bm{b};t}}\bra{\psi_{\bm{b};t}}$ for a subsystem $X$, with $\bar{X}$ being the complement of $X$.
The MIPTs are revealed in the trajectory average of these nonlinear quantities.
If the corresponding CPTP map has a unique steady state and the purification property, we can instead look at the temporal average of them in a single typical trajectory, owing to the ergodicity discussed in Sec.~\ref{sec:ergodicity_nonlinear-quantity}. 

In early foundational works~\cite{li2018quantum, skinner2019measurement, li19measurement, chan2019unitary}, MIPTs were identified in $(1+1)$-dimensional RQCs where a chain of qubits is evolved by random unitary gates and projective measurements.
Each unitary gate $\hat{U}_{l, l+1}$ is randomly chosen from the uniform Haar measure or Clifford gates and acts on the qubits on two neighboring sites $l$ and $l+1$.
These unitary gates are arranged in a brickwork manner, as depicted in Fig.~\ref{fig:circuit_schematic}(a).
After one layer of unitary gates, the qubits are measured at each site $l$ in a certain basis with a probability $p$.
The corresponding Kraus operators are given, for example, by $\hat{M}_{l,\pm}=(\mathbb{\hat{I}}\pm\hat{\sigma}^z_l)/2$, where $\hat{\sigma}^z_l$ is the Pauli $z$ operator at the site $l$ and $\pm 1$ are possible measurement outcomes.
This layer of measurements contains the randomness in the measurement positions and in the measurement outcomes determined by the Born rule.
Note that the randomness about the unitary gates and measurement positions are controllable in the sense that they can be manually drawn from a given probability distribution or probability measure upon running a circuit, whereas the randomness about the measurement outcomes is intrinsic and uncontrollable\footnote{\label{f:Kraus-unitary_whole}
The quantum and classical random processes originated respectively from the Born rule and the probability $p$ can be unified by redefining the Kraus operators as $\hat{M}_{l,\tilde{b}=0}=\sqrt{1-p}\mathbb{\hat{I}}$ and $\hat{M}_{l,\tilde{b}=\pm}=\sqrt{p}(\mathbb{\hat{I}}\pm\hat{\sigma}^z_l)/2$.
Further unification of measurements and random unitary dynamics by $\hat{U}=\prod_l\hat{U}_{l,l+1}$ is realized by considering the one-step Kraus operators $\hat{M}_b=(\prod_l\hat{M}_{l,\tilde{b}_l})\hat{U}$, where $\{b\}=\{\tilde{b}_1,\cdots,\tilde{b}_L,U\}$ with $\tilde{b}_l\in\{0,+,-\}$.
Thus, even for such unitary/measurement-hybrid random dynamics, one can apply the previous discussion of the ergodicity in Chapters~\ref{sec:linear-quantity_purification} and~\ref{sec:nonlear-quantity_Lyapunov-spectrum} without any issues.
}, which makes analytical treatment of monitored systems complicated.
Hereafter, we consider the chain of qubits and denote the length of the chain by $L$ unless otherwise specified.
\begin{figure}[!h]
\centering\includegraphics[width=\linewidth]{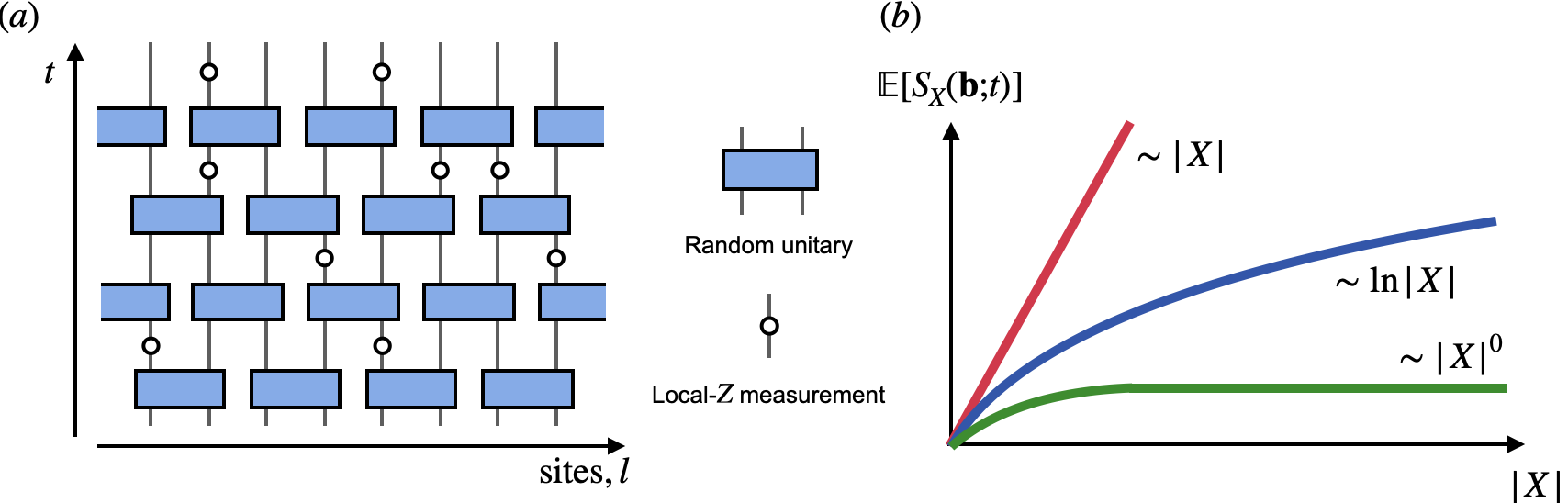}
\caption{
(a) Schematic of a monitored random quantum circuit.
Each unitary gate is randomly chosen in an independent manner, and a local-$Z$ measurement is applied on each qubit with a probability $p$.
(b) Schematic of the long-time values of the averaged entanglement entropies, $\mathbb{E}[S_{X}(\bm{b};\!t)]$.
The entanglement entropy shows a phase transition from a volume-law phase for $p<p_c$ (red curve) to an area-law phase for $p>p_c$ (green curve).
At the transition $p=p_c$, it shows a logarithmic scaling in the subsystem size $|X|$ (blue curve).
The figures were created based on Refs.~\cite{li19measurement, skinner2019measurement}.
}
\label{fig:circuit_schematic}
\end{figure}

After evolution of an initial state through the RQC, we obtain an ensemble of output pure-state trajectories.
By averaging the late-time ($t\gtrsim L$) entanglement entropy over the trajectory ensemble, we can observe a phase transition in its qualitative scaling behaviors:
For low measurement rates ($p < p_c$), the system resides in a highly entangled ``volume-law'' phase, where the entanglement entropy scales extensively with the subsystem size, $\mathbb{E}[S_X(\bm{b};\!t)]=\mathcal{O}(|X|)$, with $|X|$ being the length of a contiguous subsystem $X$.
For high measurement rates ($p > p_c$), the system is in a weakly entangled ``area-law'' phase with constant entropy\footnote{
Note that we consider one-dimensional systems, where the ``area" of a subsystem is a zero-dimensional boundary.
}, $\mathbb{E}[S_X(\bm{b};\!t)]=\mathcal{O}(|X|^0)$.
At the critical point $p = p_c$, the system exhibits universal scaling behavior characteristic of a continuous phase transition.
This includes the logarithmic scaling of the entanglement entropies and the power-law decay of the squared connected correlation functions.
The entanglement scaling across the transition is summarized as
\begin{align}
    \mathbb{E}[S_X(\bm{b};\!t)] \propto
    \left\{
    \begin{array}{ll}
    |X| \qquad & (p < p_c) \\
    \ln |X| \qquad & (p = p_c) \\
    |X|^0 \qquad & (p > p_c)
    \end{array}
\right.,
\end{align}
and schematically shown in Fig.~\ref{fig:circuit_schematic}(b).
This transition, known as the \emph{measurement-induced entanglement transition}, is a nontrivial phenomenon observable only at the level of individual quantum trajectories\footnote{
This monitored system is unital and irreducible, i.e., the maximally maxed state $\hat{\rho}_\mathrm{ss}=\hat{\mathbb{I}}/2^L$ is the unique stationary state in the averaged CPTP dynamics in the whole parameter region. 
Indeed, we can easily confirm that $\mathcal{E}[\hat{\mathbb{I}}]=\hat{\mathbb{I}}$ and (3) of Theorem \ref{thm:IrreducibilityByKraus} are satisfied for any $p$, since we can construct any operators through multiplications and linear combinations of local Haar unitaries $\{\hat{U}_{l,l+1}\}$.
}.

\subsubsection{Classical toy model --- Vertical minimal cut}
\label{sec:VerticalMinimalCut}

To gain an intuitive understanding of the entanglement transition, one can employ a toy model that focuses on the zeroth R\'enyi entanglement entropy, $S_{X,0}(\bm{b};\!t)$, which is called the Hartley entanglement entropy~\cite{skinner2019measurement, Skinner23lecture}.
Note that the Hartley entanglement entropy counts the Schmidt rank (the number of nonzero eigenvalues) of the reduced density matrix $\hat{\rho}_{\bm{b};t,X}$, and thus its value does not care about the measurement outcomes.
For a $(1+1)$-dimensional RQC with Haar-random unitary gates and projective measurements at rate $p$, the dynamics of the Hartley entanglement entropy can be analyzed through the exact mapping to a classical bond percolation problem on the two-dimensional square lattice (see Fig.~\ref{fig:vertical_mincut}).

\begin{figure}[!h]
\centering\includegraphics[width=\linewidth]{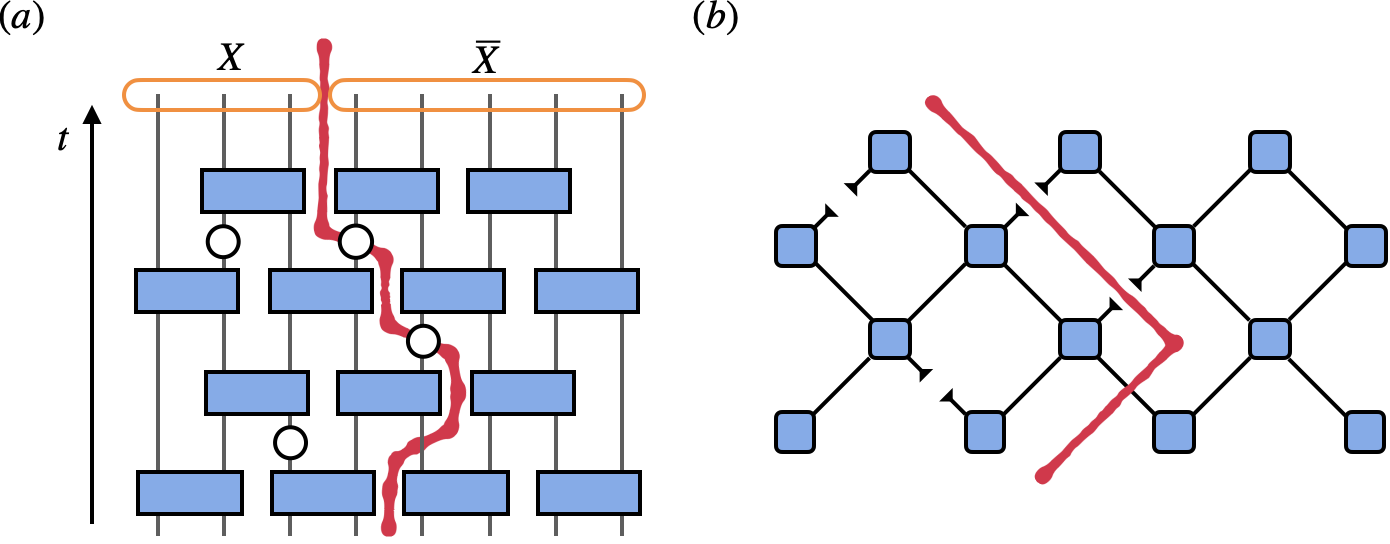}
\caption{
(a) A $(1+1)$-dimensional monitored RQC.
The red curve should be drawn from the final time to the initial time to separate the circuit to two circuits in a way that one contains the state of the $|X|$ qubits at the final time, and another contains the state of the $|\overline{X}|=L-|X|$ qubits at the final time.
In the figure, the red curve is a possible vertical minimum cut passing through one link.
(b) The corresponding bond percolation problem on a square lattice where
blue nodes represent the unitary gates. 
The bonds, corresponding to links in (a), are inactivated by measurements.
The figures were created based on Refs.~\cite{skinner2019measurement, Skinner23lecture}.
}
\label{fig:vertical_mincut}
\end{figure}

First, we draw a vertical path that divides the RQC into two circuits. 
The path starts from an arbitrary bond including boundaries at the initial time and ends at the bond connecting the subsystem $X$ and its complement $\bar{X}$ at the final time. 
Let $N_\mathrm{cut}$ be the number of shared links between the two circuits.
Then, the Hartley entanglement entropy is bounded from above by $N_{\mathrm{cut}}$,
\begin{align}
    S_{X,0}(\bm{b};t) \leq N_{\mathrm{cut}}\times\ln2,
\end{align}
since each link can transmit at most one qubit information. 
The equality holds when $N_\mathrm{cut}$ takes the minimum value, i.e., the number of links that must be cut to separate the two subsystems is minimized over all possible paths and possible starting points at the initial time,
\begin{align}
    S_{X,0}(\bm{b};t) = \min N_{\mathrm{cut}}\times\ln2.
\end{align}

When a measurement occurs at a spacetime point in the RQC, we have full information of the qubit at the spacetime point from the measurement outcome without knowing the pre-measurement state, implying that a measured link does not transmit any information.
Hence, the role of the measurement is a break or inactivation of the link.

Crucially, this minimal cut problem is classical in nature;
its consequence is insensitive to the specific measurement outcomes, depending only on their spatiotemporal locations.
This classical problem can be further mapped to two-dimensional bond percolation, where the measurement rate $p$ corresponds to the probability of a bond being inactive.

The percolation model is exactly solvable and exhibits a phase transition at a critical probability $p^{\mathrm{perc}}_c=1/2$.
This transition is described by a conformal field theory (CFT) with a correlation length exponent $\nu^{\mathrm{perc}}=4/3$, which provides the late-time behavior of the Hartley entanglement entropy across the MIPT for a system with length $L$:
\begin{align}
    \lim_{t\to\infty}\mathbb{E}[S_{X,0}(\bm{b};\!t)] \sim
    \left\{
    \begin{array}{ll}
    |X| \qquad &(p < p^{\mathrm{perc}}_c) \\
    \ln |X| \qquad &(p = p^{\mathrm{perc}}_c) \\
    |X|^0 \qquad &(p > p^{\mathrm{perc}}_c)
    \end{array}
\right..
\end{align}
Thus, the model captures the transition from a volume-law phase to an area-law phase.
These minimal cut and percolation pictures provide powerful tools for a qualitative understanding of the concept of MIPTs.

\subsubsection{Beyond minimal cut}
So far, we have focused on the RQC with projective measurements and introduced the mapping of the Hartley entanglement entropy to the classical percolation problem through the minimal cut picture.
An important remark is that the R\'enyi entanglement entropies with a general index $\alpha>0$ cannot be mapped to the percolation problem.

As the Hartley entanglement entropy only provides an upper bound on general R\'enyi entanglement entropy, $S_{A,0} \geq S_{A,\alpha}$ with $\alpha>0$, its critical point $p^{\mathrm{perc}}_c=1/2$ is located above the critical point $p_c$ measured by $S_{A,\alpha}$ with $\alpha>0$.
In fact, the critical point $p_c$ in the same model was reported to be $p_c=0.26\pm0.08$ from numerical data for the von Neumann entanglement entropy~\cite{skinner2019measurement}.
Namely, the dynamics of the general R\'enyi entanglement entropies in the monitored system are still quantum problems so that direct and rigorous predictions of the behaviors of these quantities await further theoretical development. 
See, e.g., Ref.~\cite{Potter2022}, for an analytical approach for the general R\'enyi entanglement entropies.

Even for the Hartley entanglement entropy, its mapping to the percolation picture is justified only when the measurements are projective.
The scope of generic MIPTs, however, extends beyond the projective measurements (see, e.g., Refs. \cite{Szyniszewski2019entanglement, Fuji2020measurement, Alberton2020entanglement, bao20theory, Turkeshi21measurement, LeGal24entanglement}).
For instance, systems evolving under Hamiltonian dynamics and subject to continuous weak measurements\footnote{
In this chapter, we denote the non-projective measurement scheme realized as, e.g., an indirect measurement described in Sec.~\ref{sec:indirect}, by the weak measurement or the generalized measurement, following the standard terminology in the literature.
} also exhibit similar transitions as functions of the measurement strength.
Common protocols for such continuous monitoring include the quantum jump formalism leading to discrete trajectory updates and the quantum diffusion formalism leading to continuous trajectory updates, as explained in Chapter~\ref{sec:trajectory_master-equation}.

\subsection{Purification transition}
\label{sec:purification_MIPT}
\subsubsection{General concepts}
In addition to the entanglement transition, which concerns pure-state properties, MIPTs can also manifest themselves in the properties of mixed-state trajectories.
This is captured by the purification transition~\cite{gullans2020dynamical} (see also Refs.~\cite{loio2023purification, Sierant22measurement}).

To see this, consider an RQC setup where the system is initialized in the maximally mixed state.
The dynamics involves two competing effects: unitary evolution tends to scramble information and tries to keep the state mixed, while projective measurements (at rate $p$) extract information and thus tend to purify it.
This competition can trigger a phase transition related to the timescale for the state to be purified, although the mixed state subject to the measurements at any finite rate is purified in the long-time limit under a certain assumption as discussed in Sec.~\ref{sec:pur}, as long as $L$ is finite.

\begin{figure}[!h]
\centering\includegraphics[width=\linewidth]{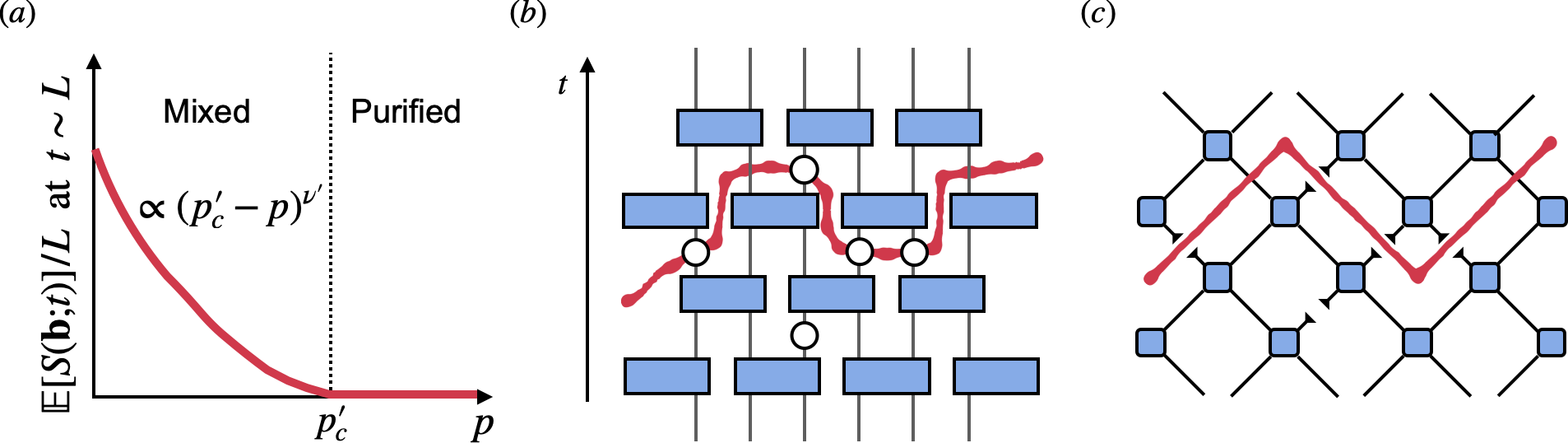}
\caption{
(a) The averaged mixed-state entropy density of the whole system at a time $t\sim L$ against the measurement probability $p$ in an RQC with local measurements starting from the maximally mixed state.
(b) A $(1+1)$-dimensional monitored RQC.
The red curve should be drawn to separate the whole input state and the whole output state.
In the figure, the red curve is a possible horizontal minimum cut.
(c) The corresponding bond percolation problem on a square lattice.
Blue nodes represent the unitary gates and the measured bonds are inactivated.
The figures were created based on Refs.~\cite{gullans2020dynamical, Skinner23lecture, nahum21measurement, bao20theory}.
}
\label{fig:purification_transition}
\end{figure}

Reference~\cite{gullans2020dynamical} numerically confirmed this picture, as summarized in the phase diagram of Fig.~\ref{fig:purification_transition}(a).
For a system of size $L$ starting from the maximally mixed state, the trajectory-averaged entropies of the mixed state,
\begin{align}
    S_{\alpha}(\bm{b};\!t) = \frac{1}{1-\alpha}\ln\Tr [\hat{\rho}_{\bm{b};t}^\alpha],
\end{align}
at times $t\sim L$ exhibit two distinct phases;
the system remains in a mixed phase with volume-law entropy for $p<p'_c$, while it enters a purified phase where the entropy decays exponentially in $L$ for $p>p'_c$.

As we will review in Sec.~\ref{sec:HorizontalMinumalCut}, for the RQC with Haar random unitary gates and projective measurements, this purification problem on the Hartley entropy can be mapped to the same percolation problem as discussed for the entanglement transition.
For general Renyi entropies, Ref.~\cite{gullans2020dynamical} studied a $(1+1)$-dimensional random Clifford circuit\footnote{
The Clifford circuit is a class of quantum circuits that can be efficiently simulated by classical computers.
See, e.g., Ref.~\cite{Aaronson04improved} for details.
} and found that the critical point $p'_c$ and the critical exponent $\nu'$ of the purification transition are numerically identical to those of the entanglement transition ($p_c$ and $\nu$) in the same circuit.
Theoretical frameworks based on replica statistical mechanics models have also shown that both transitions in local circuits can be described by the same effective theory~\cite{bao20theory}.
This correspondence suggests that the entanglement and purification transitions are two facets of the same underlying critical phenomenon.

While it is highly nontrivial to generally establish the equivalence between the entanglement and purification transitions, one can formally view the purification transition as a kind of entanglement transition in the following sense.
Any mixed state of a system can be represented as the reduced state of a larger, pure state by introducing ancilla qubits.
In this picture, the initial maximally mixed state is obtained by tracing out the ancilla qubits from a maximally entangled state between the physical system and the ancilla qubits.
If the monitored circuit acts only on the physical system, its entropy at any time is equal to the entanglement entropy between the system and ancilla.
Therefore, the purification transition of the system can be directly interpreted as an entanglement transition for the system-ancilla partition, although it is still far from proving their equivalence.

The purification transition can be fundamentally characterized by the Lyapunov spectrum of the monitored dynamics.
Consider a $(1+1)$-dimensional RQC of size $L$ without any internal symmetries.
The evolution from time $0$ to $t$ for a given measurement sequence $\bm{b}$ is described by a Kraus operator $\hat{\mathsf{M}}_{\bm{b};t}$, which satisfies the POVM condition $\sum_{\bm{b}_t} \hat{\mathsf{M}}_{\bm{b};t}^\dagger \hat{\mathsf{M}}_{\bm{b};t}  = \hat{\mathbb{I}}$.
If the initial state is maximally mixed, i.e., $\hat{\rho}_{t=0} = \hat{\mathbb{I}}/2^L$, the state at time $t$ is given by
\begin{align}
    \hat{\rho}_{\bm{b};t} = \frac{\hat{\mathsf{M}}_{\bm{b};t} \hat{\mathsf{M}}^\dagger_{\bm{b};t}}{\Tr [\hat{\mathsf{M}}_{\bm{b};t} \hat{\mathsf{M}}^\dagger_{\bm{b};t}]}.
    \label{eq:evolved state from I}
\end{align}
The eigenvalues of $\hat{\mathsf{M}}_{\bm{b};t} \hat{\mathsf{M}}^\dagger_{\bm{b};t}$ are the squares of the singular values of the Kraus operator, $\hat{\mathsf{M}}_{\bm{b};t}$.
Let these singular values be $\Lambda_{j, \bm{b};t} = e^{-\varepsilon_{j, \bm{b};t}t}$.
As discussed in Sec.~\ref{sec:Lyapunov-analysis}, the late-time values $\varepsilon_j=\underset{t\to\infty}{\lim}\varepsilon_{j,\bm{b};t}$ are independent of the trajectory label $\bm{b}$ almost surely, and called the Lyapunov spectrum.
Since measurements cause the state's norm to decrease or remain the same, these exponents are non-negative ($\varepsilon_j\geq0$) for the monitored settings.
The R\'enyi entropies of the state $\hat{\rho}_{\bm{b};t}$ are expressed directly in terms of these singular values:
\begin{align}
    S_\alpha(\bm{b};\!t) = \frac{1}{1-\alpha}\ln\left(\sum_{j=1}^{2^L} \Lambda_{j,\bm{b};t}^{2\alpha}\right).
    \label{eq:Relation bw Entropy and Lyapunov}
\end{align}
Thus, the purification transition is governed by a qualitative change in these mixed-state entropies that are closely connected to the Lyapunov spectrum via Eq.~\eqref{eq:Relation bw Entropy and Lyapunov}.

\subsubsection{Classical toy models --- Minimal-cut approach}
\label{sec:HorizontalMinumalCut}

The classical mapping for the Hartley entanglement entropy, as reviewed in Sec.~\ref{sec:VerticalMinimalCut}, can be extended to the purification transition in a $(1+1)$-dimensional RQC subject to projective measurements.
Analogous to the entanglement transition, which is modeled by the vertical minimal cut separating two spatial regions, the purification transition can be understood through a horizontal minimal cut~\cite{Skinner23lecture, nahum21measurement}.
Importantly, this mapping yields the same classical percolation problem as that for the entanglement transition.
As illustrated in Figs.~\ref{fig:purification_transition}(b) and (c), a horizontal path is drawn to measure the flow of information from the initial time to the final time in the RQC. 
The minimum number of links that must be cut to separate the initial and final states by such a path corresponds to the Hartley mixed-state entropy $S_0(\bm{b};\!t)$ for the maximally mixed initial state;
if there is a horizontal percolating path without any active bonds, the initial state is completely forgotten and the state is rendered to be a pure state.
Even if the measurement probability $p$ is small, there is always a possibility that all sites are measured at the same time and the state is purified, i.e., the entropy ends up with zero after a sufficiently long time.

For a finite time, the entropy decays as $\mathbb{E}[S_0(\bm{b};t)]\sim Le^{-t/\tau_\mathrm{P}^{\mathrm{perc}}(L)}$, but the purification timescale $\tau_\mathrm{P}^{\mathrm{perc}}(L)$ exhibits qualitatively distinct behaviors between the non-percolating phase (mixed phase) and the percolating phase (purified phase),
\begin{align}
    \tau_\mathrm{P}^{\mathrm{perc}}(L)=
    \left\{
    \begin{array}{ll}
    e^{\mathcal{O}(L)} \qquad &(p < p^{\mathrm{perc}}_c) \\
    \mathcal{O}(L^0) \qquad &(p > p^{\mathrm{perc}}_c)
    \end{array}
\right..
\end{align}
This gives the expected change in the entropy density across the purification transition for $t= \mathcal{O}(L)$ as discussed above.

So far, we have focused on the monitored RQCs with local unitary gates acting only on neighboring sites, which exhibit both entanglement and purification transitions.
However, once we consider an RQC with nonlocal unitary gates or a Hamiltonian evolution by long-range interactions, the absence of natural spatial structure obscures the area-law entangled phase and thereby the entanglement transition~\cite{block2022measurement, Minato2022Fate}.
Below, we discuss an example of such a nonlocal monitored system.

\vskip\baselineskip
\noindent\textbf{All-to-all circuit}

\begin{figure}[!h]
\centering\includegraphics[width=\linewidth]{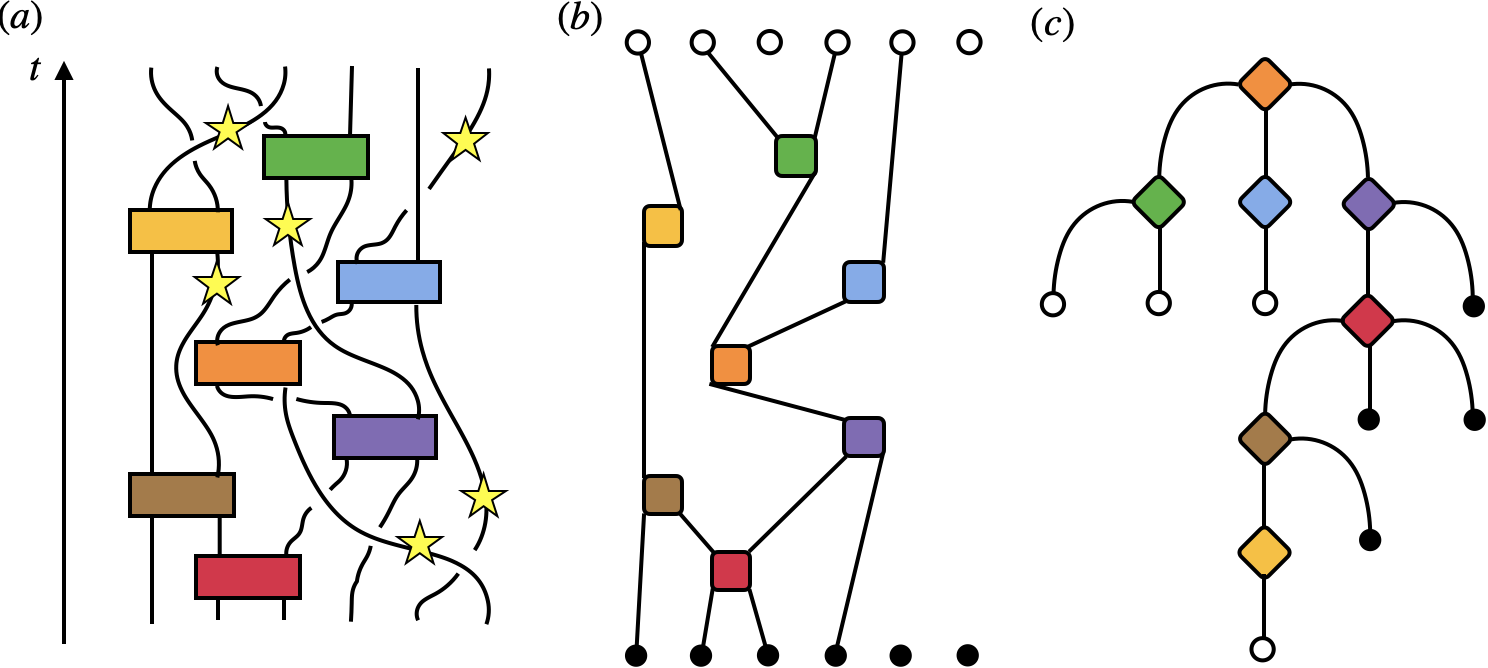}
\caption{(a) Schematic of an all-to-all circuit.
Black worldlines represent the evolution of a particular
spin, with time proceeding vertically.
Colored blocks are two-spin unitary gates, and yellow stars are single-spin measurements.
(b) The corresponding graph, where nodes represent unitaries and edges denote unbroken segments of the worldline.
The color of each node matches the color of the corresponding unitary gate in (a).
Small black (white) circles indicate the initial (final) time associated with each spin.
(c) The classical graph displayed in a tree-like structure, rooted at the orange seed node.
The figures were created based on Ref.~\cite{nahum21measurement}.}
\label{fig:Nahum21_mincut}
\end{figure}

The simplest model of a nonlocal monitored system is a monitored all-to-all circuit of $L$ qubits \cite{nahum21measurement}.
In this model, there are on average $pL$ measurements and $(1-p)L$ unitary gates at random positions in a unit interval of time.
The unitary gates couple randomly-chosen arbitrary pairs of qubits, regardless of their distance.
While there is no simple definition of the entanglement transition in this case, the purification transition, being a mixed-state property, can still occur.

As we have reviewed above, if we focus on the Hartley mixed-state entropy, we can map the problem onto a percolation model on the circuit graph itself. 
We can then analytically find the critical point for the purification transition.
The analysis of the percolation model is regarded as a connectivity problem or branching process, as illustrated in Fig.~\ref{fig:Nahum21_mincut}.
The underlying idea is to imagine that measurements cut the connections (worldlines) between the nodes (gates) of the circuit.
We then ask whether a connected cluster of gates can survive this cutting and grow to an infinite size in the thermodynamics limit $L\rightarrow\infty$.

Specifically, this connectivity problem is modeled as follows:
each gate has four legs (two in-coming, two out-going) potentially connected to other gates.
Each leg can either be cut by a measurement or be successfully connected to a new gate.
When the measurement probability is small, unitary gates form a connected cluster that grows extensively with the system size.
This regime corresponds to the mixed phase.
For a large measurement probability, the unitary gates only form finite-size clusters, leading to a fragmented graph.
The latter regime corresponds to the purified phase.
Since the number of unitary gates is $(1-p)L$ on average in a time slice and each gate has two out-going legs, the effective rate at which a given worldline experiences a unitary action in the time slice is $2(1-p)$.
Thus, the probability $r$ that a given worldline is terminated by a measurement before reaching the next gate becomes $r=p/(p + 2(1-p))=p/(2-p)$.

For a sufficiently large $L$, the local structure of this random graph can be regarded as a tree, as shown in Fig.~\ref{fig:Nahum21_mincut}(c).
Viewed in this way, the connectivity problem is mapped onto a simple branching process:
starting from a seed gate, each unitary node can generate several descendants through its uncut legs, and the growth of the tree represents how quantum information propagates through the circuit.
Each unitary node has three potential downward legs, each of which survives with probability $1-r$.
This gives the average branching number $3(1-r)$.
The percolation transition therefore occurs when the branching number becomes unity, $3(1-r^{\mathrm{perc}}_c)=1$, giving the critical probability $p^{\mathrm{perc}}_c = 4/5$.
This percolation transition corresponds to the purification transition probed by the Hartley mixed-state entropy in the monitored all-to-all circuit.

\subsubsection{Analytical treatments for purification dynamics beyond minimal cut}

In this section, we review several analytical approaches for the purification dynamics of mixed-state trajectories evolved under the competition between unitary dynamics and measurements.

\vskip\baselineskip
\noindent\textbf{Random matrix theory approach for non-local circuits}

The properties of the mixed phase can be analytically treated using the framework of random matrix theory (RMT) for a nonlocal circuit~\cite{bulchandani2024random, DeLuca2025universality}.
Here we review the results of Ref.~\cite{bulchandani2024random}. 
Before analyzing a specific model, we define two key timescales for a general monitored system described by Kraus operators $\hat{\mathsf{M}}_{\bm{b};t}$.
Note that we consider irreducible systems, where the Lyapunov exponents are determined  independently of $\bm{b}$;
see Sec.~\ref{sec:typical-convergence_Lyapunov-spectrum}.

\begin{enumerate}
    \item Rank collapse time $\tau_{\mathrm{RC}}$:
    When starting from the maximally mixed state, measurements tend to reduce the rank of the density matrix.
    The rank collapse time is the time expected for the state to be exactly pure:
    \begin{align} \label{eq:RankCollapseTime}
        \tau_{\mathrm{RC}} = \mathbb{E}[\min\{t:r_{\bm{b};t}=1\}],
    \end{align}
    where $r_{\bm{b};t}=\mathrm{rank}(\hat{\mathsf{M}}_{\bm{b};t}\hat{\mathsf{M}}_{\bm{b};t}^\dagger)$.

    \item Purification time $\tau_\mathrm{P}$:
    If we arrange the singular values of $\hat{\mathsf{M}}_{\bm{b};t}$ as $\Lambda_{1, \bm{b};t}\geq\Lambda_{2, \bm{b};t}\geq\cdots$, the state is dominated by the leading left singular vector $\ket{\Psi_{1, \bm{b};t}}$, which is the ground state of the effective Hamiltonian introduced in Sec.~\ref{sec:Lyapunov-analysis}.
    The approach to a pure state is governed by the ratio $(\Lambda_{2,\bm{b};t}/\Lambda_{1,\bm{b};t})^2$.
    The purification time is defined from the exponential decay of this ratio, which relates to the first two Lyapunov exponents $\varepsilon_1$ and $\varepsilon_2$:
    \begin{align}
        \tau_\mathrm{P}^{-1} = \lim_{t\to\infty}\frac{-\mathbb{E}[\ln(\Lambda_{2,\bm{b};t}/\Lambda_{1,\bm{b};t})^2]}{t} = 2(\varepsilon_2 - \varepsilon_1)
        = 2 \Delta.
        \label{eq:def of purification time}
    \end{align}
    Here, $\Delta$ is the spectral gap introduced in Eq.~\eqref{eq:def of Lyapunov gap}.
\end{enumerate}
We consider $\tau_\mathrm{RC}=\infty$ in the following discussion to avoid the divergence of the second Lyapunov exponent $\varepsilon_2$, which would lead to an ill-defined purification time.

Let us now consider the following non-local quantum circuit on a system of $L$ qubits.
At each timestep, a global Haar-random unitary $\hat{U}\in U(2^L)$ is applied, followed by projective measurements at exactly $pL$ sites chosen randomly.
This measurement projects the system onto a subspace of dimension $2^{(1-p)L}$.
Due to the strong scrambling effect of the global unitaries, this model is always in a mixed phase and does not exhibit a measurement-induced transition.
The rank of the state is fixed to be $r_{\bm{b};t}=2^{(1-p)L}$ for all $t\geq 1$ and almost all unitaries, meaning that the rank collapse time is infinite, $\tau_{\mathrm{RC}}=\infty$.
This allows us to safely study the purification time $\tau_\mathrm{P}$.

We can show that $\tau_\mathrm{P}$ is exponentially long in the system size using RMT.
A key insight is that, due to the Haar randomness of the unitaries, the singular value statistics of the process becomes independent of the specific measurement outcomes.
We can therefore analyze the process for a fixed sequence of projections, effectively bypassing the Born rule average.
The Kraus operators for a $t$-step evolution become equivalent to a product of $t$ random matrices:
\begin{align}
    \hat{\mathsf{M}}_t \sim (\hat{P}\hat{U}_t)(\hat{P}\hat{U}_{t-1})\cdots(\hat{P}\hat{U}_1),
\end{align}
where $\hat{P}$ is a fixed projector onto a $2^{(1-p)L}$-dimensional subspace and each $\hat{U}_t$ is an independent Haar-random unitary.
The Lyapunov exponents $\varepsilon_j$ for such a product of random matrices are known from RMT:
\begin{align}
    \varepsilon_j = \frac{1}{2}[\psi(2^L-j+1) - \psi(2^{(1-p)L}-j+1)],
\end{align}
where $\psi(x)$ is the digamma function\footnote{
The digamma function is the logarithmic derivative of the gamma function $\Gamma(x)$: $\psi(x) = \frac{d}{dx}\ln \Gamma(x) = \Gamma(x)'/\Gamma(x)$.
}.
Using this exact result, we can calculate the purification time:
\begin{align}
    \tau_\mathrm{P} = \frac{1}{2(\varepsilon_2-\varepsilon_1)}
    =\left(\frac{1}{2^{(1-p)L}-1} - \frac{1}{2^L-1}\right)^{-1}
\end{align}
For large $L$, this gives a purification time that is exponentially long in the system size $\tau_\mathrm{P}\sim 2^{(1-p)L}$.

The RMT analysis can be extended to a similar non-local model, but with measurement layers comprising weak measurements.
A single-qubit weak measurement operator can be written as $\hat{M}_{l, \pm}=(\hat{\mathbb{I}}\pm\eta \hat{\sigma}^z_l)/\sqrt{2(1+\eta^2)}$, where $0\leq\eta\leq1$ quantifies the measurement strength.
This ensures $\tau_\mathrm{RC}=\infty$ unless $\eta=1$ and $p=1$.
The Kraus operators corresponding to the whole time evolution $\hat{\mathsf{M}}_{\bm{b};t}$\footnote{
The Kraus operator for one-step evolution is given as explained in the footnote~\ref{f:Kraus-unitary_whole}.
} can be analyzed using similar techniques as the projective case.
Note that since the randomness of unitaries and measurements included in $\bm{b}$ does not affect the singular value statistics of the Kraus operators, we drop the explicit $\bm{b}$-dependence for simplicity.

To analyze the singular value statistics, we focus on the evolution of $\hat{\mathcal{X}}_{t} = 2^{2pLt}\hat{\mathsf{M}}_{t}\hat{\mathsf{M}}^\dagger_{t}$ and its eigenvalues $x_{j}(t)$.
Under an assumption of $pL\eta^2\ll1$, by performing a perturbative expansion in the measurement strength $\eta$ and averaging over the Haar-random unitaries, one obtains a Langevin equation for variables $w_{j}(t)=(\ln x_{j}(t) + \Gamma Nt /4)/2$ with $N=2^L$.
This equation describes the motion of $w_j(t)$ as being driven by random noise while also repelling each other:
\begin{align}
    w_{j}(t+1) - w_{j}(t) = \frac{1}{2}\xi_j(t) + \frac{\Gamma}{4}\sum_{i\neq j}\coth(w_{j}(t) - w_{i}(t)),
    \label{eq:dynamics_w}
\end{align}
where $\xi_j(t)$ is a Gaussian white noise with variance $\langle \xi_i(t_1)\xi_j(t_2)\rangle = \Gamma \delta_{ij}\delta_{t_1t_2}$ and the parameter $\Gamma = e^{-\mathcal{O}(L)}$ for $L\gg1$ sets the strength of both noise and repulsive interaction.

In the continuous time limit of the Langevin equation, we reach a Fokker-Planck equation of the joint probability distribution $P(\vec{w}, s)$ of the variables $\vec{w}(t)=(w_1(t), w_2(t),\ldots)$ at time $s=\Gamma t/8$:
\begin{align}
    \partial_s P(\vec{w},s) = \sum_{j=1}^{N}(-\partial_{w_j}(D_j(\vec{w}) P(\vec{w},s)) + \partial_{w_j}^2 P(\vec{w},s)),
\end{align}
where the drift term is $D_j(\vec{w}) = 2\sum_{i(\neq j)}\coth(w_j-w_i)$.
This equation is a direct analogue of the Dorokhov-Mello-Pereyra-Kumar equation, which describes universal conductance fluctuations in disordered mesoscopic wires~\cite{dorokhov1982transmission, Mello1988macroscopic}.

Remarkably, this Fokker-Planck equation is exactly solvable due to its connection to Calogero-Sutherland models~\cite{le1985isotropic, kulkarni2017emergence, bulchandani2024random}.
The explicit solution for $P(\vec{w}, s)$ is known, allowing for a complete, analytical description of the purification dynamics.
In particular, at late times $t \gg \Gamma^{-1}$, the Lyapunov exponents are given by
\begin{align}
    \varepsilon_j = pL\ln 2 - \frac{\Gamma}{8}(2^L + 2 - 4j),
\end{align}
from which we can read off the purification time as
\begin{align}
    \tau_{\mathrm{P}} = \Gamma^{-1} = e^{\mathcal{O}(L)}
\end{align}
for this model.

\vskip\baselineskip
\noindent\textbf{Universal fluctuations in the mixed phase of fermionic Gaussian systems}

In Ref.~\cite{xiao25universal}, the Fokker-Planck equation for the joint probability distribution of Lyapunov exponents was derived for monitored quantum dynamics of fermionic Gaussian states
\footnote{
“Gaussian states” refer to states whose density matrix can be written as the exponential of a quadratic form of fermionic operators, so that all higher-order correlations factorize according to Wick’s theorem.
See Ref.~\cite{Surace22Fermionic} for a general review and Ref.~\cite{Bravyi05Lagrangian} for measurements that preserve the Gaussianity.
}.
They obtained exact solutions for the Fokker-Planck equation for the mixed phase, which leads to a universal entropy fluctuation.

The system consists of fermions on $L$ sites and starts from the maximally mixed state.
The state is evolved by a unitary operator $\hat{U}_t$, which is generated by a time-dependent random Hamiltonian quadratic in the fermion operators.
Meanwhile, the particle number $\hat{n}_l=\hat{c}_l^\dagger \hat{c}_l$ at each site $l$ is continuously measured where $\hat{c}_l$ ($\hat{c}_l^\dagger$) are fermionic annihilation (creation) operators.
This measurement is described by 
$\hat{P}_t = \exp[\sum_l\{(\hat{n}_l - \langle \hat{n}_l\rangle_{c;t})\sqrt{\gamma} dW_{t,l} - (\hat{n}_l - \langle \hat{n}_l\rangle_{c;t})^2\gamma dt\}]$
where $\gamma$ is the measurement strength and $dW_{t,l}$ is the standard Wiener process (see also Sec.~\ref{sec:Quantum diffusion}).

We now discretize the time as $t=N\Delta t$ so that the unitary dynamics $\hat{U}_{n\Delta t}$ occurs in the interval $((n-1)\Delta t, n\Delta t)$ and the measurement $\hat{P}_{n\Delta t}$ is applied at time $n\Delta t$.
Due to Gaussianity, the information of the state is fully encoded in a single-particle matrix $\mathsf{M}_{\bm{b};t}=P_tU_t\cdots P_{\Delta t}U_{\Delta t}$ where $P_t$ and $U_t$ satisfy $\hat{P}_t \hat{c}_l^\dagger \hat{P}_t^{-1}=\sum_m (P_t)_{lm}\hat{c}_m^\dagger$ and $\hat{U}^\dagger_t \hat{c}_l^\dagger \hat{U}_t=\sum_m (U_t)_{lm} \hat{c}_m^\dagger$, respectively.
Here, the symbols without hats denote the corresponding single-particle matrices, not the many-body operators.
For this dynamics, we construct the effective single-particle Hamiltonian as $H_{\bm{b};t} = -\ln( \mathsf{M}_{\bm{b};t}\mathsf{M}^\dagger_{\bm{b};t})/2t$.
The snapshot single-particle Lyapunov exponents $z_{l, \bm{b};t}$, which are the eigenvalues of $H_{\bm{b};t}$, determine the mixed-state entropy, but we here consider a variable $\zeta_{l, \bm{b};t} = -z_{l, \bm{b};t}t$ instead.

For a simple analytical treatment, the single-particle unitary matrix $U_t$ is modeled as a Haar random $U(L)$ matrix.
We consider the dynamics in an infinitesimal interval $(t, t+\Delta t)$ that renormalizes the probability distribution function $p(\vec{\zeta}_{\bm{b};t},t)$.
Performing a perturbative analysis, we find that the Fokker-Planck equation for $p(\vec{\zeta}_{\bm{b};t},t)$ is derived as
\begin{align} \label{eq:FPKawabata}
    \frac{L+1}{\gamma}\frac{\partial p(\vec{\zeta}_{\bm{b};t},t)}{\partial t} = -\sum_{l=1}^L \frac{\partial [(\mu_{l, \bm{b};t} + \nu_{l, \bm{b};t})p(\vec{\zeta}_{\bm{b};t},t)]}{\partial \zeta_{l, \bm{b};t}} + \frac{1}{2}\sum_{l,m=1}^L\frac{\partial^2[(1+\delta_{lm})p(\vec{\zeta}_{\bm{b};t},t)]}{\partial\zeta_{l, \bm{b};t}\partial\zeta_{m, \bm{b};t}},
\end{align}
where $\mu_{l, \bm{b};t} = \sum_{m\neq l}\coth(\zeta_{l, \bm{b};t}-\zeta_{m, \bm{b};t})$ and $\nu_{l, \bm{b};t} = \sum_m(1+\delta_{lm})\tanh \zeta_{m, \bm{b};t}$.
In order to find its solution, we first need to consider the case of ``forced measurements'' where the probability to obtain a measurement outcome is independent of the pre-measurement state so that $\nu_{l, \bm{b};t}=0$.
In this case, we find the exact solution to the Fokker-Plank equation as
\begin{align}
    p_F(\vec{\zeta}_{\bm{b};t}, t)
    &=\mathcal{N}(t)\left(\prod_{l<m}(\zeta_{l, \bm{b};t}-\zeta_{m, \bm{b};t})\sinh(\zeta_{l, \bm{b};t} - \zeta_{m, \bm{b};t})\right) \nonumber\\
    &\times \exp\left(-\frac{L+1}{2\gamma t}\sum_{l,m}\zeta_{l, \bm{b};t}\left(-\frac{1}{L+1}+\delta_{l m}\right)\zeta_{m, \bm{b};t}\right),
\end{align}
where $\mathcal{N}(t)$ is a normalization constant.
Next, we consider the case where the measurement outcomes obey the Born rule.
Using the fact that the Born probability for the trajectory $\bm{b}_t$ to be realized is proportional to $\prod_{l}\cosh \zeta_{l, \bm{b};t}$~\cite{Cheong2004many}, we can show that the Fokker-Planck equation \eqref{eq:FPKawabata} has the exact solution, 
\begin{align}
    p_B(\vec{\zeta}_{\bm{b};t}, t) = e^{-L\gamma t/2}\left(\prod_{l}\cosh \zeta_{l, \bm{b};t}\right)p_F(\vec{\zeta}_{\bm{b};t}, t).
\end{align}

Next, we consider the R\'enyi mixed-state entropy.
For free fermion systems, it is decomposed into a sum $S_\alpha(\bm{b};t)=\sum_{l=1}^L f_\alpha(\zeta_{l, \bm{b};t})$ with
\begin{align}
    f_\alpha(\zeta)=\frac{1}{1-\alpha}\ln\left[\frac{1}{(1+e^{2\zeta})^\alpha} + \frac{1}{(1+e^{-2\zeta})^\alpha}\right].
\end{align}
Using the exact solution to the Fokker-Planck equation, we can uncover the universal behavior of the entropy in the large-$L$ and short-$t$ limit $1\ll\gamma t\ll L$,
\begin{align}
    \mathbb{E}[S_\alpha(\bm{b};\!t)]\simeq \frac{L}{2\gamma t}\int_{-\infty}^\infty f_{\alpha}(\zeta)d\zeta
    =\frac{\pi^2L}{24\gamma t}\left(1 +\frac{1}{\alpha}\right),
\end{align}
and the universal entropy fluctuations,
\begin{align}
    \mathbb{V}[S_\alpha(\bm{b};\!t)]
    \equiv\mathbb{E}[(S_\alpha(\bm{b};\!t))^2] - \mathbb{E}[(S_\alpha(\bm{b};\!t))]^2
    = \int_{-\infty}^\infty dq \frac{|q|(1-e^{-\pi|q|})}{4\pi^2}
    \left(\int_{-\infty}^\infty f_\alpha(\zeta)e^{-iq\zeta}d\zeta\right)^2.
\end{align}
Specifically, we have $\mathbb{V}[S_2(\bm{b};\!t)]=10\ln2-6\ln\pi$, which was numerically confirmed to appear in other settings, such as those where the unitary evolution is generated by a local Hamiltonian, and those where the measurements are replaced by projective ones~\cite{xiao25universal}.
Reference~\cite{xiao25universal} also argued that the saturating value of the entropy fluctuation is independent of microscopic details, but depends only on fundamental symmetries of the system.

\subsection{Lyapunov spectrum and measurement-induced transitions}
\label{sec:Lyapunov-spectrum_MIPT}
While the previous section focused on analytically tractable models where the Lyapunov spectrum or the mixed-state entropy can be precisely treated, we now shift our focus to the connection between the Lyapunov spectrum and measurement-induced phase transitions, whose analytical treatment remains challenging, primarily through numerical investigations.

This section will explore this connection by addressing two central questions:
(1) How can the universal data of the MIPT be extracted from the Lyapunov spectrum?
(2) How does the qualitative behavior of the Lyapunov spectrum serve as a distinct signature for the measurement-induced phases, and how does this behavior change across the transition?
By addressing these questions, we aim to demonstrate how the Lyapunov spectrum acts as a powerful tool to probe the measurement-induced phases and the underlying universality class.

\subsubsection{Extracting critical data from the Lyapunov spectrum}

The universal properties of an MIPT at its critical point can be revealed by numerically analyzing the Lyapunov spectrum~\cite{zabalo2022operator, chakraborty24charge, aziz2024critical, kumar2024boundary}.
We here particularly focus on its application to the (1+1)-dimensional monitored RQC under the periodic boundary conditions in the spatial direction, as depicted in Fig.~\ref{fig:circuit_schematic}(a).

For an initial state $\hat{\rho}_0$, the probability to obtain a trajectory labeled by $\bm{b}_t$ is given by $p_{\bm{b};t} = \Tr [\hat{\mathsf{M}}_{\bm{b};t} \hat{\rho}_0 \hat{\mathsf{M}}^\dagger_{\bm{b};t}]$.
Using the singular value decomposition of $\hat{\mathsf{M}}_{\bm{b};t}$ in Eq.~\eqref{eq:singula-value_decomposition}, this can be written as 
\begin{align}
p_{\bm{b};t} = \sum_i e^{-2 \varepsilon_{i,\bm{b};t} t} \langle \Phi_{i,\bm{b};t} | \hat{\rho}_0 | \Phi_{i,\bm{b};t} \rangle
\end{align}
where $\varepsilon_{i,\bm{b};t}$ are the Lyapunov exponents and $\ket{\Phi_{i,\bm{b};n}}$ are the corresponding right singular vectors.
As discussed in Sec.~\ref{sec:Lyapunov-analysis}, under the assumption that both irreducibility and purification conditions hold for the CPTP map associated with the monitored RQC, it is dominated by the leading Lyapunov exponent at late times, 
\begin{align}
p_{\bm{b};t} \simeq e^{-2\varepsilon_{1,\bm{b};t}t} \langle \Phi_{1,\bm{b};t} | \hat{\rho}_0 | \Phi_{1,\bm{b};t} \rangle.
\end{align}

The core idea of Ref.~\cite{zabalo2022operator} is to identify each trajectory as a two-dimensional classical statistical mechanics model, which is defined through the ``partition function'' $Z_{\bm{b};t} \equiv p_{\bm{b};t}$\footnote{
Writing the full circuit evolution as $\hat{\mathsf{M}}_{\bm{b};t} = \hat{M}_{b_t} \cdots \hat{M}_{b_2} \hat{M}_{b_1}$, each time slice $\hat{M}_{b_n}$ may be viewed as a transfer matrix for a two-dimensional statistical mechanics model with quenched randomness drawn from $\{ \bm{b} \}$.
The Born probability $p_{\bm{b};t} = \Tr [\hat{\mathsf{M}}_{\bm{b};t} \hat{\rho}_0 \hat{\mathsf{M}}^\dagger_{\bm{b};t}]$ represents two layers of transfer matrix evolution, one with forward evolution by $\hat{M}_{b_n}$ and another with backward evolution by $\hat{M}_{b_n}^\dagger$, which are glued at one side by $\hat{\rho}_0$ while summed over all possible states at the other side.
Such objects are regarded as the partition functions of statistical mechanics models in the literature (see, e.g., Ref.~\cite{Jian2020measurement}).
One may also view $p_{\bm{b};t}$ as a partition function in the Keldysh formalism~\cite{Kamenev2023field}, although the evolution does not preserve the trace of $\hat{\rho}_0$.
}.
The free energy of this classical model is given by 
\begin{align} \label{eq:FreeEnergy}
F_{\bm{b};t} 
= -\ln Z_{\bm{b};t}
\simeq 2\varepsilon_{1,\bm{b};t} t -\ln \langle \Phi_{1,\bm{b};t} | \hat{\rho}_0 | \Phi_{1,\bm{b};t} \rangle.
\end{align}
As discussed in Sec.~\ref{sec:rank-M_purification}, since the right singular vectors $\ket{\Phi_{i,\bm{b};t}}$ asymptotically become $t$-independent, one can safely neglect the second term of Eq.~\eqref{eq:FreeEnergy} for a sufficiently long time.
Averaging the free energy $F_{\bm{b};t}$ over the trajectories yields,
\begin{align}
\mathbb{E}[F_{\bm{b};t}] 
= \sum_{\bm{b}_t}p_{\bm{b};t} F_{\bm{b};t} 
= -\sum_{\bm{b}_t} p_{\bm{b};t} \ln p_{\bm{b};t},
\end{align}
which is nothing but the Shannon entropy of the trajectories and behaves as $\mathbb{E}[F_{\bm{b};t}] \simeq 2\sum_{\bm{b}_t} p_{\bm{b};t} \varepsilon_{1,\bm{b};t} t$ at late times.

According to CFT, the free energy density $f_1(L)\equiv \mathbb{E}[F_{\bm{b};t}]/A$ with an effective spacetime area $A=\alpha L t$, where $\alpha$ is an anisotropy parameter\footnote{
The anisotropy parameter $\alpha$ in the effective area $A$ relates the scales of space and time at the critical point.
It can be determined by finding the time $t_*$ at which spatial and temporal correlations become equal, which leads to the relation $\alpha = (L/\pi t_*)\ln(1+\sqrt{2})$.
A numerical procedure to determine $\alpha$ using the mutual information was proposed in Refs.~\cite{zabalo2022operator, chakraborty24charge}.
}, 
is predicted to obey the following finite-size scaling form at the critical point,
\begin{align}
    f_1(L) = f_1(L=\infty) - \frac{\pi c_{\mathrm{eff}}}{6L^2} + \cdots.
\end{align}
Here, $c_{\mathrm{eff}}$ is a universal number called the effective central charge.
We can also consider generalizations of the free energy density to the higher Lyapunov exponents $\varepsilon_{i,\bm{b};t}$ with $i \geq 2$, which are given by $f_i(L) \equiv 2\mathbb{E}[\varepsilon_{i,\bm{b};t}] / (\alpha L)$.
These free energy densities are expected to obey the finite-size scaling form,
\begin{align}
    f_{i}(L)-f_1(L) = \frac{2\pi x_i^{\mathrm{typ}}}{L^2},
\end{align}
from which we can extract the scaling dimensions $x_i^{\mathrm{typ}}$ of operators in the underlying CFT.

We end this section with several caveats.
When the CPTP map corresponding to the monitored RQC obeys the irreducibility condition, we do not actually need the average over the trajectories to compute the free energy densities; as discussed in Sec.~\ref{sec:typical-convergence_Lyapunov-spectrum}, the Lyapunov exponents converge to values independent of $\bm{b}$ for a typical trajectory.
However, if we consider a finite-size system of length $L$ subject to projective measurements at each site with probability $p$, which is a common setup for the MIPT, there is always the possibility that the Kraus operator $\hat{\mathsf{M}}_{\bm{b};t}$ becomes rank one during the evolution and all higher Lyapunov exponents $\varepsilon_{i,\bm{b};t}$ with $i \geq 2$ diverge thereafter [see also the discussion around Eq.~\eqref{eq:RankCollapseTime}].
A typical situation is that all sites are measured in a single time step, which happens with probability $p^L$.
Although such events may only occur with probability exponentially small in $L$, they can be obstacles in computing higher Lyapunov exponents, as they often require a long convergence time.
In order to successfully perform their finite-size scaling analysis as described above, the higher Lyapunov exponents must converge in a time long enough but much shorter than $e^{\mc{O}(L)}$. 
For these cases, the average over the trajectories may help to improve the convergence.
As we will see in the next section, we need not be worried about the timescale for the convergence when we consider weak measurements.

\subsubsection{Spectral behavior in measurement-induced  phases and transitions}
The Lyapunov spectrum probes not only the universal properties of the MIPT, but also the phases themselves.
Reference~\cite{mochizuki2025measurement} studied the behaviors of the Lyapunov spectrum within the measurement-induced phases for a $(1+1)$-dimensional RQC with local Haar unitary gates and generalized measurements of local qubits, as depicted in Fig.~\ref{fig:Mochizuki25_circuit}(a).
The Kraus operator corresponding to the single-site generalized measurement with an outcome $b=\pm 1$ is given by $\hat{M}_{l,b} = (\hat{\mathbb{I}} + b\eta \hat{\sigma}^z_l)/\sqrt{2(1+\eta^2)}$, where the parameter $0\leq\eta\leq1$ tunes the strength of the measurement.
The Kraus operator representing the whole monitored circuit is denoted by $\hat{\mathsf{M}}_{\bm{b};t}$\footnote{
$\hat{\mathsf{M}}_{\bm{b};t}$ is constructed as in the footnote~\ref{f:Kraus-unitary_whole} with $p=1$.
}, where the $\eta$-dependence is omitted for ease of notation.
We write the singular value decomposition of the Kraus operator as $\hat{\mathsf{M}}_{\bm{b};t} = \sum_{i=1}^{2^L} \Lambda_{i,\bm{b};t} | \Psi_{i,\bm{b};t} \rangle \langle \Phi_{i,\bm{b};t} |$, where $\Lambda_{i,\bm{b};t}$ are the singular values arranged as $\Lambda_{1,\bm{b};t} \geq \Lambda_{2,\bm{b};t} \geq \cdots \geq 0$ and $\ket{\Psi_{i,\bm{b};t}}$ and $\ket{\Phi_{i,\bm{b};t}}$ are the corresponding left and right singular vectors, respectively [see also Eq.~\eqref{eq:singula-value_decomposition}].

In the long-time limit, an initial state $\ket{\psi_0}$ converges to the left singular vector corresponding to the leading singular value $\Lambda_{1,\bm{b};t}$, $\hat{\mathsf{M}}_{\bm{b};t}\ket{\psi_0} \sim \Lambda_{1,\bm{b};t}\ket{\Psi_{1,\bm{b};t}}$, if we assume $\Lambda_{1,\bm{b};t} > \Lambda_{2,\bm{b};t}$\footnote{
We can prove that this monitored circuit satisfies the purification condition discussed in Sec.~\ref{sec:pur} unless $\eta=0$, so that $\Lambda_{1,\bm{b};t} > \Lambda_{2,\bm{b};t}$ holds for $t \to \infty$ according to the results in Sec.~\ref{sec:spectral_gap_purification}.
If we decompose the full Kraus operator as $\hat{\sf{M}}_{\bm{b};t} = \hat{M}_{b_t} \cdots \hat{M}_{b_2} \hat{M}_{b_1}$, each time slice $\{ \hat{M}_{b_i} \}_{b_i}$, which consists of one layer of random two-site unitary gates $\hat{U}_{l,l+1}$ and another layer of generalized measurements $\hat{M}_{l,b}$, is also a set of Kraus operators.
By successively taking the sums over the measurement outcomes at later times, one can reduce the condition in Eq.~\eqref{eq:Benoist-purification} for $\hat{\sf{M}}_{\bm{b};t}$ to that for the first time slice $\hat{M}_{b_1}$.
Now, the time slice $\hat{M}_{b_1}$ is a tensor product of $\hat{X}_{b,b',U_{1,2}}=\hat{M}_{1,b} \hat{M}_{2,b'} \hat{U}_{1,2}$. 
For any $\hat{U}_{1,2}$, $\hat{X}^\dagger_{b,b',U_{1,2}} \hat{X}_{b,b',U_{1,2}}$ can be easily diagonalized, and their eigenvalues are given by $(1 \pm b\eta)^2 (1 \pm b'\eta)^2/(2(1+\eta^2))^2$, with $b,b'\in\{+1,-1\}$.
While each $\hat{M}_{b_1}$ has degenerate eigenvalues, the set of $\hat{M}_{b_1}$ over all possible measurement outcomes has no common eigenspace with dimension greater than one for $0 < \eta \leq 1$.
This implies that an orthogonal projection operator $\hat{\mc{Q}}$ that satisfies $\hat{\mc{Q}} \hat{M}_{b_1}^\dagger \hat{M}_{b_1} \hat{\mc{Q}} = \zeta_{b_1} \hat{\mc{Q}}$ with some $\zeta_{b_1} \geq 0$ for almost all $b_1$ cannot have $\mathrm{rank}[\hat{\mc{Q}}] \geq 2$.
A similar argument also applies to the monitored Kitaev-Majorana circuit discussed in Sec.~\ref{sec:topology_MIPT}.
}.
The long-time state $\ket{\Psi_{1,\bm{b};t}}$ can be regarded as the ``ground state" of the effective Hamiltonian
\begin{align} \label{eq:EffHamMochi}
    \hat{H}_{\bm{b};t} = -\frac{1}{2t}\ln \hat{\mathsf{M}}_{\bm{b};t}\hat{\mathsf{M}}_{\bm{b};t}^\dagger
\end{align}
introduced in Eq.~\eqref{eq:effective-Hamiltonian}, with the ground state energy $\varepsilon_{1,\bm{b};t}=-\ln(\Lambda_{1,\bm{b};t})/t$, which is nothing but the first Lyapunov exponent. 
We note that these monitored systems are irreducible, i.e., (3) of Theorem \ref{thm:IrreducibilityByKraus} is satisfied~\cite{mochizuki2025transitions} for each Haar and Clifford case. 
Therefore, the Lyapunov exponents are independent of $\bm{b}$ and $\ket{\psi_0}$, as discussed in Sec.~\ref{sec:Lyapunov-analysis}.

\begin{figure}[!h]
\centering\includegraphics[width=\linewidth]{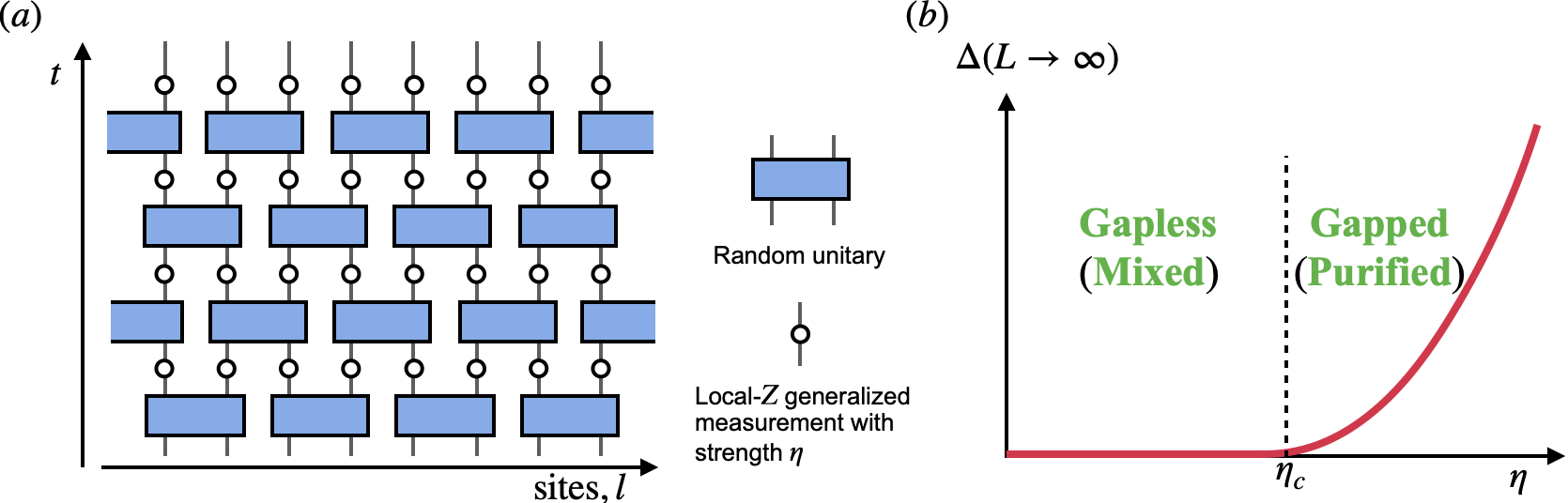}
\caption{
(a) A monitored circuit with generalized measurements.
The two-site Haar random unitary gates are arranged in a brickwork manner.
The generalized measurements with strength $\eta$ are applied at all sites followed by a unitary layer.
(b) Schematic of the long-time values of the first Lyapunov gap $\Delta(L\to\infty)$ against the measurement strength $\eta$.
The location of the purification transition is denoted by $\eta_c$.
The figures were created based on Ref.~\cite{mochizuki2025measurement}.
}
\label{fig:Mochizuki25_circuit}
\end{figure}

Since the rank of $\hat{\mathsf{M}}_{\bm{b};t}$ never becomes one except for the projective case $\eta=1$, the higher Lyapunov exponents $\varepsilon_{i,\bm{b};t}=-\ln(\Lambda_{i,\bm{b};t})/t$ can be safely computed by the numerical procedure in Sec.~\ref{sec:numerical_procedure}.
Since the Lyapunov spectrum converges to values independent of the trajectory $\bm{b}$ for sufficiently long times almost surely, they can, in practice, be computed by running the monitored circuit only once for each $\eta$. 
Hence, we drop the subscript $\bm{b};t$ from $\varepsilon_{i,\bm{b};t}$ hereafter and focus on the converged values of the Lyapunov exponents $\varepsilon_1(L)\leq\varepsilon_2(L)\leq\cdots$ for finite-size systems of length $L$.

Our main focus is the first Lyapunov gap, $\Delta(L) = \varepsilon_2(L) - \varepsilon_1(L)$.
From the numerical results, the Lyapunov gap shows a transition from a gapless phase where the gap exponentially closes in the system size $L$, $\Delta(L)=\exp[-\mathcal{O}(L)]$, to a gapped phase where the gap becomes almost independent of $L$, $\Delta(L)= \mathcal{O}(L^0)$ as $\eta$ increases:
\begin{align}
    \Delta(L)=\begin{cases}
        e^{-\mathcal{O}(L)}&(\eta<\eta_c)\\
        \mathcal{O}(L^0)&(\eta>\eta_c)
    \end{cases}.
\end{align}
The behavior of $\Delta(L\to\infty)$ is schematically shown in Fig.~\ref{fig:Mochizuki25_circuit}(b), and it exhibits the gap closing/opening transition. 
In the gapless phase with $\eta<\eta_c$, the low-lying spectrum also exhibits the exponential decay $\varepsilon_i(L) - \varepsilon_1(L)=\exp[-\mathcal{O}(L)]$, which was numerically confirmed up to $i=10$. 
Regarding the entire spectrum, it was analytically shown that the spectral width always exhibits the bound $\varepsilon_{2^L}(L) - \varepsilon_1(L)\leq\mathcal{O}(L)$, which means that level spacings $\varepsilon_{i+1}(L) - \varepsilon_i(L)$ are exponentially narrow with respect to $L$ for almost all $i$~\cite{mochizuki2025measurement}.

Notably, this spectral transition in fact captures the purification transition.
Remember that different initial states typically yield the same long-time Lyapunov spectrum, owing to the irreducibility. 
This means that the first Lyapunov gap is $\Delta(L)$ even when the initial state is the maximally mixed state
$\hat{\rho}_0 = \hat{\mathbb{I}}/2^L$.
In this case, the gap $\Delta(L)$ gives the purification time by $\tau_\mathrm{P}^{-1}(L)=2\Delta(L)$ as in Eq.~\eqref{eq:def of purification time}.
Then, the spectral transition can be regarded as the transition of the purification time $\tau_\mathrm{P}(L)$ between a mixed phase with $\tau_\mathrm{P}(L)= \exp[\mathcal{O}(L)]$ and a purified phase with $\tau_\mathrm{P}(L)=\mathcal{O}(L^0)$.
Numerically, the location of the spectral transition turns out to be almost the same as that of the entanglement transition, which is consistent with the conjecture that the entanglement transition is the same as the purification transition in the general monitored RQCs.
As pointed out in Ref.~\cite{mochizuki2025measurement}, the coincidence between the entanglement and the spectral transition is analogous to what has been found at the ground-state quantum phase transition in equilibrium systems~\cite{Eisert2010colloquium, laflorencie2016quantum}\footnote{
In the ground state of a Hamiltonian with short-range interactions in one dimension, a gapped phase exhibits a finite energy gap $\Delta(L) = \mc{O}(L^0)$ and area-law entanglement $S_X = \mc{O}(|X|^0)$, whereas a gapless phase typically exhibits a polynomially decaying energy gap $\Delta(L) = \mc{O}[1/\mathrm{poly}(L)]$ and logarithmic violation of the area-law entanglement $S_X = \mc{O}(\ln |X|)$.
The latter should be contrasted with the gapless phase in the monitored systems where $\Delta(L) = \exp[-\mc{O}(L)]$ and $S_X = \mc{O}(|X|)$.
This difference might stem from possible long-range interactions and inhomogeneity in the effective Hamiltonian defined through Eq.~\eqref{eq:EffHamMochi}.
}.
This coincidence was confirmed even when the unitary gates may possess spacetime translation symmetries~\cite{mochizuki2025transitions}.

\subsection{Measurement-induced topology}
\label{sec:topology_MIPT}

Topology is one of the most fundamental concepts for characterizing stable phases of matter~\cite{wen2017colloquium}.
For local Hamiltonians with a finite excitation gap, ground states belonging to different topological phases cannot be adiabatically deformed into one another without closing the gap, a distinction maintained by their discrete topological invariants~\cite{chen2010local}.

A key manifestation of this occurs in symmetry-protected topological (SPT) phases~\cite{gu2009tensor, pollmann2010entanglement, chen2011classification, fidkowski2011topological, lu2012theory, chen2013symmetry}, where nontrivial bulk topology guarantees the presence of gapless boundary states.
These states are robust against any perturbations that preserve the system's symmetry as long as the bulk gap is open.
This phenomenon, known as the bulk-edge correspondence, profoundly influences the physical properties of topological phases and has been intensively researched over the past decades~\cite{hasa2010colloquium, qi2011topological, senthil2015symmetry, witten2016fermion, chiu2016classification}.

Despite significant advancements in equilibrium systems, the role of topology in out-of-equilibrium systems driven by environmental interactions remains not fully understood.
Specifically, how topology manifests itself in monitored quantum systems is still elusive.

Concerning this issue, previous studies have investigated measurement-induced topological phase transitions separating trivial and topological area-law phases, for example, in spin systems with $\mathbb{Z}_2\times \mathbb{Z}_2$ symmetry~\cite{Lavasani2021measurement, klocke2022topological, kuno2023production, morralyepes2023detecting} and in fermion systems with particle-hole symmetry~\cite{Nahum2020entanglement, Merritt2023entanglement, kells2023topological, nehra2025controlling, loio2023purification, Klocke23majorana, fava2023nonlinear, pan2024topological, xiao2024topology, Bhuiyan2025freefermion, oshima2025topology}.
For one-dimensional monitored systems under open boundary conditions (OBC), the existence of a topological phase can be probed by the topological entanglement entropy or the purification dynamics; the latter can be characterized by the exponentially-long-time survival of the entanglement entropy of an ancilla system initially entangled with the main system.
They capture the quantum information encoded in the edge states, if any, that are protected by the bulk topology.

\begin{figure}[!h]
\centering\includegraphics[width=\linewidth]{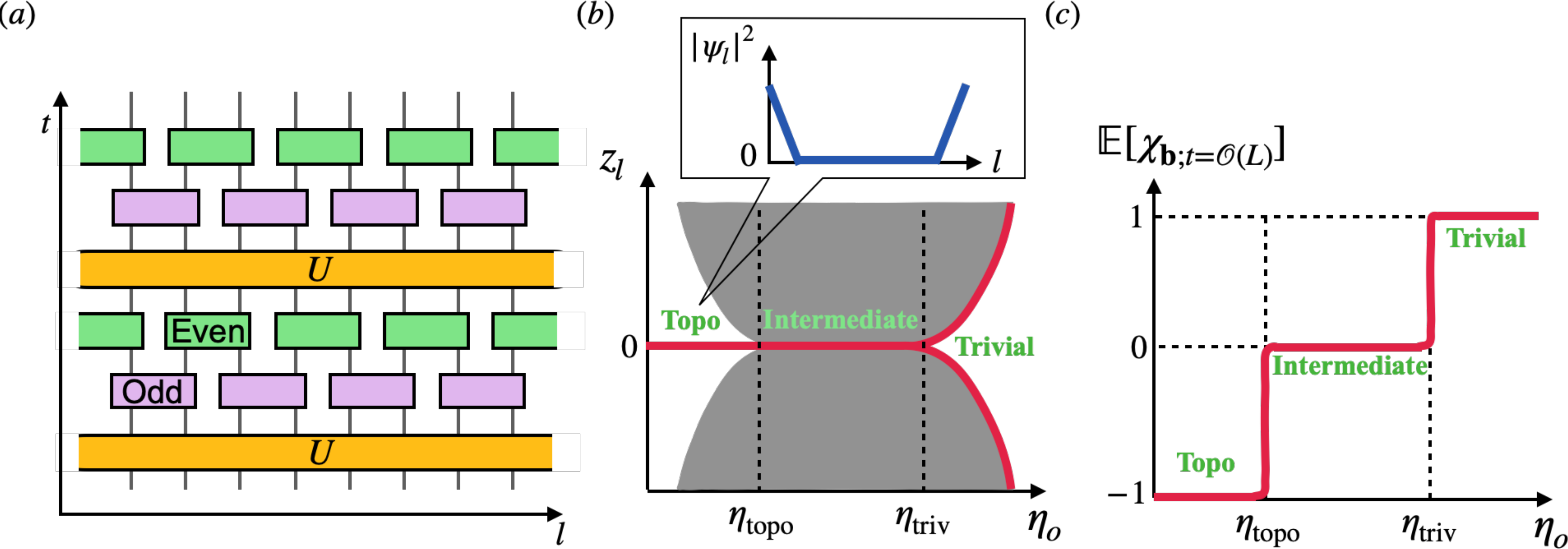}
\caption{
(a) A weakly monitored Kitaev Majorana circuit.
The purple (green) two-site gate represents the generalized measurement of the parity of the neighboring Majorana pair on an odd (even) bond.
The orange gate is the unitary evolution by a local Hamiltonian that preserves particle-hole symmetry, say, the unitary evolution by the Kitaev chain Hamiltonian.
We switch off the unitary evolution when we consider the measurement-only circuit.
(b) The single-particle Lyapunov exponents $z_l$ for the same circuit with unitary evolution by the Kitaev chain Hamiltonian under the OBC.
The red lines highlight the two middle exponents $z_L$ and $z_{L+1}$, and the gray bands correspond to other exponents.
The locations of the transition from a topological area-law phase to an intermediate subvolume-law phase and the one from the intermediate phase to a trivial area-law phase are marked by $\eta_{\mathrm{topo}}$ and $\eta_{\mathrm{triv}}$, respectively.
The inset depicts the average of the squared single-particle wave function amplitude associated with the zero modes $z_L$ and $z_{L+1}$ in the topological phase, showing their localization at edges.
(c) The bulk dynamical topological invariant at $t=\mathcal{O}(L)$ averaged over trajectories for sufficiently large $L$.
The figures were created based on Ref.~\cite{oshima2025topology}.
}
\label{fig:Oshima25_results}
\end{figure}

While the above signatures, e.g., topological entanglement entropy, are suggestive, these are not responsible for the underlying symmetry protection of the topological phase.
Reference~\cite{oshima2025topology} addressed this issue.
It considered a weakly monitored variant of the Kitaev-Majorana chain as shown in Fig.~\ref{fig:Oshima25_results}(a).
The circuit is defined on a one-dimensional chain of $2L$ Majorana fermions $\hat{\gamma}_l = \hat{\gamma}_l^\dagger$, which obey the anticommutation relations $\{\hat{\gamma}_l, \hat{\gamma}_{l'}\}=2\delta_{ll'}$. 
The Majorana fermions are subject to repeated weak measurements of the parity of Majorana pairs on odd and even bonds with the strength $0\leq\eta_o<1$ and $\eta_e=1-\eta_o$, which are described by the Kraus operators,
\begin{align}
    \hat{M}_{l,b}^o = \frac{e^{-ib\theta_o\hat{\gamma}_{2l-1}\hat{\gamma}_{2l}}}{\sqrt{2\cosh2\theta_o}}, \quad
    \hat{M}_{l,b}^e = \frac{e^{-ib\theta_e\hat{\gamma}_{2l}\hat{\gamma}_{2l+1}}}{\sqrt{2\cosh2\theta_e}},
    \qquad\mathrm{with}\quad
    \theta_{o/e}=\tanh^{-1}(\eta_{o/e}),
\end{align}
respectively.
Here, $b=\pm1$ is the measurement outcome randomly determined through the Born probability $p_{l,b}^{o/e} = \left\| \hat{M}_{l,b}^{o/e}\ket{\psi}\right\|^2$ for the normalized pre-measurement state $\ket{\psi}$.
The unitary time evolution in one time step is described by $\hat{U}_{\mathrm{Kitaev}}=e^{i\hat{H}_{\mathrm{Kitaev}}}$ with $\hat{H}_{\mathrm{Kitaev}}=iJ\sum_{l=1}^{2L-1}\hat{\gamma}_l\hat{\gamma}_{l+1}+iJ'\hat{\gamma}_{2L}\hat{\gamma}_1$, where $J$ and $J'$ are real coupling constants.
The open (OBC), periodic (PBC), and antiperiodic (APBC) boundary conditions are specified by $J'=0$, $J$, and $-J$, respectively.
This circuit preserves the particle-hole symmetry $\hat{\gamma}_l \to -\hat{\gamma}_l$, as both measurements and unitary evolution are described by operators bilinear in the Majorana fermions.

Similar monitored circuits with Majorana fermions have been studied in the literature~\cite{Nahum2020entanglement, Merritt2023entanglement, kells2023topological, nehra2025controlling, loio2023purification, Klocke23majorana, fava2023nonlinear}. 
It has been shown that there is a direct transition between two different area-law phases in the absence of unitary evolution: 
one exhibits nontrivial topological signatures and is realized when the measurements on even bonds are dominant.
The other is topologically trivial and realized when the measurements on odd bonds are dominant.
When we add unitary dynamics that preserves the particle-hole symmetry and Gaussianity (see below), such as $\hat{U}_{\mathrm{Kitaev}}$, a subvolume-law entanglement phase where the entanglement entropy $\mathbb{E}[S_{L/2,\alpha}(\bm{b};t)]$ scales as $(\ln L)^2$ emerges between the two area-law phases, as predicted by an effective field theory using a non-linear sigma model~\cite{fava2023nonlinear}.
To summarize, in the circuit model described above, we have three distinct measurement-induced phases; a topological area-law phase ($\eta_o<\eta_{\mathrm{topo}}$), a subvolume-law entanglement phase ($\eta_{\mathrm{topo}}<\eta_o<\eta_{\mathrm{triv}}$), and a trivial area-law phase ($\eta_{\mathrm{triv}}<\eta_o$).

Reference~\cite{oshima2025topology} investigated the measurement-induced topological phase transition in the above model, not only from the entanglement-wise quantities but also from the spectrum, bulk SPT invariant, and Majorana edge modes.
For a trajectory up to time $t$ and the corresponding Kraus operator $\hat{\mathsf{M}}_{\bm{b};t}$, we can define the effective Hamiltonian $\hat{H}_{\bm{b};t}=-\ln(\hat{\mathsf{M}}_{\bm{b};t}\hat{\mathsf{M}}_{\bm{b};t}^\dagger)/2t$, as introduced in Eq.~\eqref{eq:effective-Hamiltonian}.
In the current model, the circuit preserves the Gaussianity so that the full information of the dynamics is encoded in a single-particle matrix\footnote{
The circuit preserves the Gaussianity when it maps a Gaussian state $\hat{\rho} \propto \exp (-i \vec{\gamma}^\mathsf{T} G \vec{\gamma})$ to another Gaussian state $\hat{\rho}' \propto \exp(-i\vec{\gamma}^\mathsf{T} G' \vec{\gamma})$ with real antisymmetric matrices $G$ and $G'$.
In the current model, this is ensured because the corresponding Kraus operator $\hat{\mathsf{M}}_{\bm{b};t}$ is generated by operators bilinear in the Majorana fermions.
}.
Hence, we define the effective single-particle Hamiltonian matrix $H_{\bm{b};t}$ through
\begin{align}
    \hat{H}_{\bm{b};t} = -i\vec{\gamma}^\mathsf{T}H_{\bm{b};t}\vec{\gamma},
\end{align}
where $H_{\bm{b};t}$ is the $2L\times2L$ real antisymmetric matrix written in the basis of Majorana fermions $\vec{\gamma}=(\hat{\gamma}_1, \ldots,\hat{\gamma}_{2L})^{\sf{T}}$.
This enables us to efficiently compute the single-particle Lyapunov spectrum $z_1\leq z_2\leq\cdots\leq z_{2L}$ from the eigenvalues of $H_{\bm{b};t}$.
For the bulk SPT invariant, Ref.~\cite{oshima2025topology} proposed the monitored version of Kitaev's $\mbb{Z}_2$ invariant~\cite{Kitaev01unpaired}:
\begin{align}
    \chi_{\bm{b};t} = P(\hat{H}_{\bm{b};t}^{\mathrm{PBC}})P(\hat{H}_{\bm{b};t}^{\mathrm{APBC}}),
\end{align}
where $P(\hat{H})$ is the ground-state parity of the Hamiltonian $\hat{H}$.
Here, $\hat{H}_{\bm{b};t}^{\mathrm{PBC}/\mathrm{APBC}}$ are the effective Hamiltonians along with the trajectory label $\bm{b}_t$ under the PBC and APBC, respectively, and defined as follows.
First, we evolve the system by circuit $\hat{\mathsf{M}}_{\bm{b};t}$ under the PBC with the Born rule, which gives the effective Hamiltonian $\hat{H}^\mathrm{PBC}_{\bm{b};t}$.
Next, we make an identical copy of the PBC circuit by postselecting all measurement outcomes but flipping the outcomes of the copied circuit only at the boundary bond between $l=1$ and $2L$; this yields the APBC circuit and the corresponding effective Hamiltonian $\hat{H}^\mathrm{APBC}_{\bm{b};t}$.
In this way, we can realize the APBC circuit \textit{with respect to} the given PBC circuit\footnote{
Note that, independent evolution of the two circuits with the PBC and APBC by the Born rule is meaningless since the measurement histories become different between them, yielding the unrelated Hamiltonians; we cannot make a precise meaning and distinction of PBC or APBC in this way.
}.
Then, cutting off the circuits up to time $t$, the dynamical topological invariant is obtained by comparing the ground state parities of $\hat{H}_{\bm{b};t}^{\mathrm{PBC}}$ and $\hat{H}_{\bm{b};t}^{\mathrm{APBC}}$\footnote{
The basic ideas presented here are generally applicable to many-body systems, whereas the numerical calculations in Ref.~\cite{oshima2025topology} employed a slightly different method that exploits the Gaussianity of free fermion systems.
See that reference for details.
}.

Figure~\ref{fig:Oshima25_results}(b) schematically shows that the single-particle Lyapunov spectrum of a typical trajectory under the OBC exhibits a bulk gap in both area-law phases.
In the topological area-law phase, however, two Majorana zero modes as $z_L \sim z_{L+1} \to 0$ for $L \to \infty$ appear.
It was numerically confirmed that these zero modes are localized at the edges, as schematically shown in the inset of Fig.~\ref{fig:Oshima25_results}(b), similarly to the ground state in the topological phase of the static Kitaev chain.
On the other hand, the bulk gap closes in the subvolume-law phase.
For the spectra under the PBC, the numerical results show that the two area-law phases are both gapped with the finite-size gap scaling $\Delta(L)=\mathcal{O}(L^0)$, while the two transition points $\eta_o=\eta_{\mathrm{topo}}$ and $\eta_{\mathrm{triv}}$ both exhibit the critical scaling $\Delta(L)= \mathcal{O}(L^{-1})$.
Interestingly, the gap decays faster than $L^{-1}$ inside the subvolume-law phase, meaning that this phase is gapless but has no straightforward counterpart in the ground states of local Hamiltonians with the particle-hole symmetry alone.

Turning to the topological invariant $\chi_{\bm{b};t}$, it typically converges to $-1$ in the topological area-law phase for $\eta_o<\eta_\mathrm{topo}$ and $+1$ in the trivial area-law phase for $\eta_o>\eta_\mathrm{triv}$ within a short time $t = \mathcal{O}(L)$,
\begin{align}
    \chi_{\bm{b};t}=\begin{cases}
        -1&(\eta_o<\eta_\mathrm{topo})\\
        +1&(\eta_o>\eta_\mathrm{triv})
    \end{cases}.
\end{align}
This numerically ensures the bulk-edge correspondence in the gapped measurement-induced topological phases, i.e., the nontrivial (trivial) topological number $\chi_{\bm{b};t}=-1$ ($\chi_{\bm{b};t}=+1$) leads to the presence (absence) of topological Majorana zero modes.

The gapless subvolume-law phase requires a careful discussion since, at least for finite size systems, the dynamical topological invariant flows towards $\pm1$ in time for general parameters even inside the gapless phase.
Here, we focus on the fact that the convergence time of the dynamical topological invariant $\chi_{\bm{b};t}$ respects the relaxation time scale $\tau_\mathrm{P}(L) \sim 1/\Delta(L)$.
Therefore, the convergence time of $\chi_{\bm{b};t}$ is less than $\mathcal{O}(L)$ for the gapped area-law phases and greater than $\mathcal{O}(L)$ for the gapless subvolume-law phase.
Then, the trajectory-averaged dynamical topological invariant $\mathbb{E}[\chi_{\bm{b};t}]$ at time $t=\mathcal{O}(L)$ is expected to separate the three phases.
In fact, it takes $-1$ in the topological area-law phase for $\eta_o<\eta_{\mathrm{topo}}$, $+1$ in the trivial area-law phase for $\eta_o>\eta_{\mathrm{triv}}$, and $0$ in the subvolume-law phase for $\eta_{\mathrm{topo}}<\eta_o<\eta_{\mathrm{triv}}$ in a sufficiently large system.
The results are schematically shown in Fig.~\ref{fig:Oshima25_results}(c).

The model studied above is a fermionic Gaussian dynamics, and in the convention of Ref.~\cite{xiao2024topology}, it belongs to class D in the tenfold Altland-Zirnbauer (AZ) symmetry class~\cite{altland1997nonstandard}.
The reference~\cite{xiao2024topology} performed the AZ classification for monitored free fermionic systems based on the symmetries of the single-particle time evolution operators.
The classification scheme varies depending on which operators are considered and how the relevant symmetries are defined.
In other works, based on the correspondence between a monitored free fermion dynamics and a static Anderson localization problem~\cite{Jian22criticality, fava2023nonlinear, Poboiko23Thoery}, the AZ classification was instead performed by analyzing the symmetries of the single-particle transfer matrix that describes the corresponding Anderson localization problem~\cite{pan2024topological, Bhuiyan2025freefermion}.
In the latter convention, the above model belongs to class DIII.
Reference~\cite{Bhuiyan2025freefermion} further proposed a general framework to treat symmetries in many-body systems and demonstrated that it consistently reproduces the tenfold AZ symmetry classification in the free fermion limit.
A comprehensive classification of topological phases in interacting monitored dynamics, however, remains elusive.

\section{Conclusion and outlook}
\label{sec:conclusion}
We reviewed the basics and various aspects of monitored quantum systems, focusing on typical behaviors, spectral properties, and many-body phases. 
Regarding the CPTP dynamics averaged over measurement outcomes, we provided pedagogical introductions about various representations, discrete- and continuous-time descriptions, spectral features, and properties of steady states. 
We explained that long-time behaviors of quantum states are profoundly related to the irreducibility and primitivity of the averaged dynamics, and these properties can be examined through the algebra of Kraus operators in the discrete dynamics or jump operators in the continuous dynamics. 
Regarding quantum trajectories, where outcomes of all measurements are recorded, we discussed that the jump statistics, nonlinear observables, ergodicity, purification, and Lyapunov spectrum capture intriguing features of the random trajectories that may not be seen in the averaged CPTP dynamics. 

The aforementioned features of the CPTP dynamics have huge effects on the corresponding quantum trajectories. 
For example, quantum trajectories exhibit the ergodicity of linear observables if the steady state for the averaged dynamics is unique due to, e.g., irreducibility. 
In addition, the irreducibility of the CPTP dynamics, combined with the purification of typical trajectories, leads to the ergodicity of nonlinear observables and the typical convergence of the Lyapunov spectrum. 
We also introduced measurement-induced phase transitions as representative phenomena that occur in quantum trajectories but are usually invisible in the averaged CPTP dynamics; we  explained that the Lyapunov spectrum, purification timescales, and nonlinear observables such as the entanglement entropy, computed only from quantum trajectories, are good indicators to detect measurement-induced phase transitions. 

There are several future directions that should be intriguing. 
One direction is to explore the behaviors of quantum trajectories and the CPTP dynamics in intermediate timescales:
On one hand, the ground states of the effective Hamiltonians describe the behaviors of quantum trajectories in the long-time regime $t \gg 1/\Delta$ where $\Delta$ is the Lyapunov spectral gap, as discussed in Sec.~\ref{sec:Lyapunov-analysis}. 
On the other hand, the steady states of the CPTP dynamics describe the behaviors of averaged quantum states in the long-time regime $t \gg 1/\Delta'$ where $\Delta'$ is the spectral gap of the corresponding CPTP map, as discussed in Chapter~\ref{sec:CPTPspectra}.
However, it is highly nontrivial how other excited states and the corresponding spectrum contribute to their behaviors in intermediate timescales $t \ll 1/\Delta$ or $t \ll 1/\Delta'$.
For example, exploring the relations between the excitation spectrum and the convergent behaviors of time averages of observables should be interesting. 

The second direction is to study the physical consequences of breaking ``nice" conditions, such as the irreducibility, ergodicity, and/or typical purification of quantum trajectories. 
It is known that in reducible monitored quantum systems, the Hilbert space may be decomposed into several orthogonal subspaces, such as the decaying subspace and decoherence-free subspace~\cite{baumgartner2012structures}. 
The dynamics of quantum trajectories become complicated if such subspaces exist~\cite{benoist2024exponentially,schmolke2025asymptotic}. 
Exploring the relations between the spectral features and the decompositions in such reducible quantum systems should be intriguing. 
It should also be interesting to study the dynamical behaviors of the Lyapunov spectrum and nonlinear observables in quantum trajectories of reducible, nonergodic, and/or unpurifying monitored systems.

Specifically, such breakdowns of the ergodic or typical properties of quantum trajectories are often tied to the symmetry inherent to the system, which plays an indispensable role in classifying equilibrium phases.
While we have only briefly touched upon the roles of symmetry in this review, such as dissipative phase transitions and measurement-induced topological transitions, there has been significant recent advancement in the understanding of symmetry in monitored dynamics. 
In open quantum systems, symmetry can be defined either at the level of individual trajectories (strong symmetry) or at the level of averaged mixed states (weak symmetry)~\cite{buvca2012note, albert2014symmetries, Lieu2020symmetry}. 
Particularly in many-body systems, the interplay between these two notions of symmetry leads to nontrivial phenomena never seen in equilibrium systems, such as strong-to-weak symmetry breaking or novel mixed-state topological orders~\cite{Ma2023Average, Lee2023quantum, Ma2025strong, Wang2025intrinsic, Sohal2025noisy, Ellison2025toward}. 
Furthermore, strong symmetry for individual trajectories enriches the landscape of measurement-induced phase transitions, giving rise to phenomena such as spontaneous symmetry breaking and charge-sharpening transitions~\cite{Sang2021measurement, Bao2021symmetry, Agrawal2022entanglement, Barratt2022field, Oshima2023charge, Majidy2023critical}.
This research direction remains a fertile ground for future exploration.

The third direction is to explore how to detect distinctions between quantum trajectories and the corresponding CPTP dynamics. 
If we focus on linear observables, their time average in one quantum trajectory and ensemble average in the CPTP dynamics become the same when the system is ergodic, as discussed in Sec.~\ref{sec:ergodicity_linear-observable}.
This means that it is difficult to find the distinctions on the basis of linear observables.
While we can reveal unique features of quantum trajectories absent in the CPTP dynamics through focusing on nonlinear observables, there is a postselection problem as discussed in Sec.~\ref{sec:nonlinear-observables}.
Finding ways to circumvent these problems should be important in understanding monitored quantum systems and also in probing measurement-induced phase transitions. 
One possible approach is to study statistics of quantum jumps, which is free from the postselection problem. 
Another approach will be to perform feedback operations depending on measurement outcomes\footnote{
Although we have entirely omitted quantum feedback operations in this review, there are several references to this subject; see, e.g, Refs.~\cite{wiseman2009quantum, Dong2010quantum, JingZhang2017quantum, Dong2022quantum, landi2024current, Berberich2024quantum}.
} so that certain properties of quantum trajectories become visible even for the averaged dynamics.

\section*{Acknowledgment}
We would like to deeply thank Naomichi Hatano for encouraging us to write this paper. 
We thank Tristan Benoist, Masaya Nakagawa, Clément Pellegrini, and Hironobu Yoshida for reading our manuscript with valuable comments. 
We also thank Masaaki Tokieda for fruitful discussions. 
R.H. is supported by JSPS KAKENHI Grant No. JP24K16982. 
R. H. and K.M. are supported by JST ERATO Grant Number JPMJER2302, Japan. 
K.M. is supported by JSPS KAKENHI Grant No. JP23K13037. 
H.O. is supported by RIKEN Junior Research Associate Program. 
Y.F. is supported by JSPS KAKENHI Grants No. JP20K14402 and No. JP24K06897.

\appendix

\section{Some important mathematical theorems on positive maps}

\subsection{Russo-Dye theorem}
\label{app:RussoDye}

Here, we provide a proof of the Russo-Dye theorem for a positive unital linear map $\mc{T}: \mbb{B}[\mc{H}] \to \mbb{B}[\mc{H}]$, which is used in Sec.~\ref{sec:GeneralSpecCPTP}. 
The proof follows Ref.~\cite{bhatia2007positive}.
To do so, we first prove the following lemma.
\begin{lemma}
Let $\hat{A} \in \mbb{B}[\mc{H}]$.
Then $\hat{A}$ satisfies $\lVert \hat{A} \rVert \leq 1$ (i.e., $\hat{A}$ is contractive) if and only if $\begin{pmatrix} \hat{\mbb{I}} & \hat{A} \\ \hat{A}^\dagger & \hat{\mbb{I}} \end{pmatrix}$ is positive semidefinite.
\end{lemma}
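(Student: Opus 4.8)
The plan is to reduce the positive semidefiniteness of the $2\times 2$ block operator on $\mc{H}\oplus\mc{H}$ to a single scalar inequality by evaluating its quadratic form on an arbitrary vector $\ket{\Psi}=\ket{x}\oplus\ket{y}$. Writing $\hat{M}=\begin{pmatrix}\hat{\mbb{I}}&\hat{A}\\\hat{A}^\dagger&\hat{\mbb{I}}\end{pmatrix}$ and using the antilinearity of the bra slot to identify $\braket{y|\hat{A}^\dagger|x}=\overline{\braket{x|\hat{A}|y}}$, I would first establish
\begin{align}
\braket{\Psi|\hat{M}|\Psi}=\lVert x\rVert^2+\lVert y\rVert^2+2\,\mathrm{Re}\,\braket{x|\hat{A}|y}.
\end{align}
Thus $\hat{M}\succeq 0$ is equivalent to the statement that the right-hand side is nonnegative for all $\ket{x},\ket{y}\in\mc{H}$.

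Next I would eliminate the phase. Since $\ket{x}$ ranges over all of $\mc{H}$, replacing $\ket{x}$ by $e^{i\varphi}\ket{x}$ leaves the norm terms unchanged while rotating the cross term, and choosing $\varphi$ to make it as negative as possible shows that $\hat{M}\succeq 0$ holds if and only if $\lVert x\rVert^2+\lVert y\rVert^2\geq 2\lvert\braket{x|\hat{A}|y}\rvert$ for all $\ket{x},\ket{y}$. For the forward direction, I would assume $\lVert\hat{A}\rVert\leq 1$; the Cauchy–Schwarz inequality gives $\lvert\braket{x|\hat{A}|y}\rvert\leq\lVert x\rVert\,\lVert\hat{A}\ket{y}\rVert\leq\lVert x\rVert\,\lVert y\rVert$, and the arithmetic–geometric-mean inequality $2\lVert x\rVert\,\lVert y\rVert\leq\lVert x\rVert^2+\lVert y\rVert^2$ then yields the required bound, so $\hat{M}\succeq 0$.

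For the converse, I would assume $\hat{M}\succeq 0$ and let $\ket{y}$ be an arbitrary unit vector; if $\hat{A}\ket{y}=0$ there is nothing to prove, so supposing $\hat{A}\ket{y}\neq 0$ I set $\ket{x}=-\hat{A}\ket{y}/\lVert\hat{A}\ket{y}\rVert$. Then $\braket{x|\hat{A}|y}=-\lVert\hat{A}\ket{y}\rVert$ is real and negative with $\lVert x\rVert=1$, so the scalar inequality collapses to $2-2\lVert\hat{A}\ket{y}\rVert\geq 0$, i.e. $\lVert\hat{A}\ket{y}\rVert\leq 1$; taking the supremum over unit vectors $\ket{y}$ gives $\lVert\hat{A}\rVert\leq 1$. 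I expect the only delicate points to be bookkeeping: keeping the antilinear slot of the inner product consistent when forming the cross terms, and carrying out the phase optimization cleanly. As a cross-check one may instead invoke the Schur-complement characterization, under which $\hat{M}\succeq 0\Leftrightarrow\hat{\mbb{I}}-\hat{A}^\dagger\hat{A}\succeq 0\Leftrightarrow\lVert\hat{A}\rVert\leq 1$, but the direct quadratic-form argument is fully self-contained and avoids invertibility assumptions.
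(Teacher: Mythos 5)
Your proof is correct, but it takes a different route from the paper's. The paper diagonalizes: it writes the singular value decomposition $\hat{A}=\hat{U}\hat{\Lambda}\hat{V}$, conjugates the block matrix by $\hat{U}\oplus\hat{V}^\dagger$ to reduce it to $\begin{pmatrix}\hat{\mbb{I}} & \hat{\Lambda}\\ \hat{\Lambda} & \hat{\mbb{I}}\end{pmatrix}$, and then observes that this is unitarily equivalent to a direct sum of $2\times 2$ blocks with eigenvalues $1\pm\Lambda_j$, so positive semidefiniteness is equivalent to $\Lambda_1=\lVert\hat{A}\rVert\leq 1$. You instead work directly with the quadratic form $\lVert x\rVert^2+\lVert y\rVert^2+2\,\mathrm{Re}\,\braket{x|\hat{A}|y}$, optimize the phase, and use Cauchy--Schwarz together with AM--GM for one direction and the test vector $\ket{x}=-\hat{A}\ket{y}/\lVert\hat{A}\ket{y}\rVert$ for the other; each step checks out, including the identification $\braket{y|\hat{A}^\dagger|x}=\overline{\braket{x|\hat{A}|y}}$ and the handling of the $\hat{A}\ket{y}=0$ case. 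The paper's SVD argument buys slightly more --- it exhibits the entire spectrum $\{1\pm\Lambda_j\}$ of the block operator, not just the sign of its bottom --- and is very short in finite dimensions; your quadratic-form argument is more elementary, does not invoke the SVD at all, and transfers verbatim to infinite-dimensional Hilbert spaces where the paper's finite direct-sum decomposition would need rephrasing. Your closing remark about the Schur complement is also apt: that route is the one the paper itself uses for the companion lemma in Appendix~\ref{app:SchwarzMap}, and you are right that it needs an invertibility (or limiting) argument that your direct computation avoids.
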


\begin{proof}
Let $d = \dim [\mc{H}]$. 
Let $\hat{A} = \hat{U} \hat{\Lambda} \hat{V}$ be the singular value decomposition of $\hat{A}$, where $\hat{U}$ and $\hat{V}$ are unitary matrices and $\hat{\Lambda} = \mathrm{diag}(\Lambda_1, \ldots, \Lambda_d)$ is a diagonal matrix with singular values $\Lambda_1 \geq \cdots \Lambda_d \geq 0$.
Then, we have
\begin{align}
\begin{pmatrix} \hat{\mbb{I}} & \hat{A} \\ \hat{A}^\dagger & \hat{\mbb{I}} \end{pmatrix} 
= \begin{pmatrix} \hat{\mbb{I}} & \hat{U} \hat{\Lambda} \hat{V} \\ \hat{V}^\dagger \hat{\Lambda} \hat{U}^\dagger & \hat{\mbb{I}} \end{pmatrix}
= \begin{pmatrix} \hat{U} & 0 \\ 0 & \hat{V}^\dagger \end{pmatrix} \begin{pmatrix} \hat{\mbb{I}} & \hat{\Lambda} \\ \hat{\Lambda} & \hat{\mbb{I}} \end{pmatrix} \begin{pmatrix} \hat{U}^\dagger & 0 \\ 0 & \hat{V} \end{pmatrix}.
\end{align}
This matrix is unitarily equivalent to the following matrix,
\begin{align}
\begin{pmatrix} 1 & \Lambda_1 \\ \Lambda_1 & 1 \end{pmatrix} \oplus \cdots \oplus \begin{pmatrix} 1 & \Lambda_d \\ \Lambda_d & 1 \end{pmatrix}.
\end{align}
Since each $2 \times 2$ block matrix has eigenvalues $\{ 1 \pm \Lambda_j \}$, this matrix is positive semidefinite if and only if $\lVert \hat{A} \rVert = \Lambda_1 \leq 1$.
\end{proof} 

We are now ready to prove the Russo-Dye theorem.
\begin{theorem}
If $\mc{T}$ is a positive unital linear map, then its operator norm satisfies $\lVert \mc{T} \rVert = 1$.
\end{theorem}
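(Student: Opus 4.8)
The plan is to establish the two inequalities $\|\mc{T}\| \geq 1$ and $\|\mc{T}\| \leq 1$ separately. The lower bound is immediate from unitality: since $\mc{T}[\hat{\mbb{I}}] = \hat{\mbb{I}}$ and $\|\hat{\mbb{I}}\| = 1$, the definition of the operator norm gives $\|\mc{T}\| \geq \|\mc{T}[\hat{\mbb{I}}]\| = 1$. All the substance lies in the reverse bound, which amounts to showing that $\mc{T}$ maps the closed unit ball of $\mbb{B}(\mc{H})$ into itself, i.e.\ $\|\mc{T}[\hat{A}]\| \leq 1$ whenever $\|\hat{A}\| \leq 1$.

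The first key step reduces the problem from arbitrary contractions to unitaries. Starting from the singular value decomposition $\hat{A} = \hat{U}\hat{\Lambda}\hat{V}$ of a contraction $\hat{A}$, where each singular value satisfies $0 \leq \Lambda_j \leq 1$, I would write $\Lambda_j = \cos\theta_j$ and set $\hat{D} = \mathrm{diag}(e^{i\theta_j})$, so that $\hat{\Lambda} = \tfrac{1}{2}(\hat{D} + \hat{D}^\dagger)$ and hence $\hat{A} = \tfrac{1}{2}(\hat{U}\hat{D}\hat{V} + \hat{U}\hat{D}^\dagger\hat{V})$ is the average of two unitaries. By linearity of $\mc{T}$ and the triangle inequality, $\|\mc{T}[\hat{A}]\| \leq \tfrac{1}{2}(\|\mc{T}[\hat{U}\hat{D}\hat{V}]\| + \|\mc{T}[\hat{U}\hat{D}^\dagger\hat{V}]\|)$, so it suffices to prove $\|\mc{T}[\hat{W}]\| \leq 1$ for every unitary $\hat{W}$. (The preceding lemma could be invoked here to certify that contractivity of $\hat{A}$ is equivalent to the stated block positivity, but the decomposition itself requires only the singular value decomposition.)

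The second key step bounds the image of a unitary. For a unitary $\hat{W}$ and any phase $\theta$, the operator $e^{i\theta}\hat{W}$ is again unitary, so its self-adjoint part $\mathrm{Re}(e^{i\theta}\hat{W}) = \tfrac{1}{2}(e^{i\theta}\hat{W} + e^{-i\theta}\hat{W}^\dagger)$ has norm at most $1$, i.e.\ $-\hat{\mbb{I}} \preceq \mathrm{Re}(e^{i\theta}\hat{W}) \preceq \hat{\mbb{I}}$. Because $\mc{T}$ is positive it is order- and Hermiticity-preserving, and because it is unital it fixes $\hat{\mbb{I}}$; applying $\mc{T}$ to this operator inequality and using $\mc{T}[\hat{W}^\dagger] = \mc{T}[\hat{W}]^\dagger$ yields $-\hat{\mbb{I}} \preceq \mathrm{Re}(e^{i\theta}\mc{T}[\hat{W}]) \preceq \hat{\mbb{I}}$, so $\|\mathrm{Re}(e^{i\theta}\mc{T}[\hat{W}])\| \leq 1$ for all $\theta$. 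I would then finish with the elementary identity $\|\hat{B}\| = \sup_\theta \|\mathrm{Re}(e^{i\theta}\hat{B})\|$, which follows by choosing the phase making the matrix element $\bra{\phi}\hat{B}\ket{\psi}$ real and nonnegative at the maximizing unit vectors $\ket{\phi},\ket{\psi}$. This gives $\|\mc{T}[\hat{W}]\| \leq 1$, and combined with the reduction above, the desired $\|\mc{T}\| \leq 1$.

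The main obstacle is that $\mc{T}$ is assumed only positive, not completely (or even $2$-)positive, so one cannot simply apply the map entrywise to the $2\times 2$ block matrix of the lemma and expect positivity to survive. The whole strategy is designed to sidestep this: positivity is used only through the order- and Hermiticity-preserving property on genuinely self-adjoint operators, while the passage to general non-self-adjoint elements is carried entirely by the unitary decomposition and the phase-optimization characterization of the norm.
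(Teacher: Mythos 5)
Your lower bound and the reduction from contractions to unitaries via the singular value decomposition are both correct and match the paper's proof exactly. The gap is in the final step: the identity $\lVert \hat{B} \rVert = \sup_\theta \lVert \mathrm{Re}(e^{i\theta}\hat{B}) \rVert$ is false. Since $\mathrm{Re}(e^{i\theta}\hat{B})$ is Hermitian, its norm equals $\sup_{\lVert\psi\rVert=1} |\mathrm{Re}(e^{i\theta}\langle \psi|\hat{B}|\psi\rangle)|$, so taking the supremum over $\theta$ yields the \emph{numerical radius} $w(\hat{B}) = \sup_{\lVert\psi\rVert=1}|\langle\psi|\hat{B}|\psi\rangle|$, not the operator norm. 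Concretely, for $\hat{B} = \ket{0}\bra{1}$ one has $\lVert\hat{B}\rVert = 1$ but $\mathrm{Re}(e^{i\theta}\hat{B}) = \frac{1}{2}(e^{i\theta}\ket{0}\bra{1} + e^{-i\theta}\ket{1}\bra{0})$ has norm $1/2$ for every $\theta$. Your phase-alignment justification breaks down because for $\ket{\phi}\neq\ket{\psi}$ the matrix element $\bra{\phi}\mathrm{Re}(e^{i\theta}\hat{B})\ket{\psi} = \frac{1}{2}\left(e^{i\theta}\bra{\phi}\hat{B}\ket{\psi} + e^{-i\theta}\overline{\bra{\psi}\hat{B}\ket{\phi}}\right)$ involves two independent off-diagonal entries whose phases cannot in general be aligned simultaneously; only the diagonal case $\ket{\phi}=\ket{\psi}$ works, which is precisely the numerical radius. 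What your argument actually establishes is $w(\mc{T}[\hat{W}]) \leq 1$, which via $\lVert\hat{B}\rVert \leq 2w(\hat{B})$ gives only $\lVert\mc{T}\rVert \leq 2$ — the classical pre-Russo-Dye bound for positive unital maps.

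This is exactly where the content of the theorem lies, and you cannot sidestep the block-matrix lemma the way you hoped. The paper closes the gap by applying the lemma to $\mc{T}[\hat{U}]$ directly: writing the spectral decomposition $\hat{U} = \sum_j e^{i\phi_j}\hat{P}_j$ and using unitality $\sum_j \mc{T}[\hat{P}_j] = \hat{\mbb{I}}$, one gets $\begin{pmatrix} \hat{\mbb{I}} & \mc{T}[\hat{U}] \\ \mc{T}[\hat{U}]^\dagger & \hat{\mbb{I}} \end{pmatrix} = \sum_j \begin{pmatrix} 1 & e^{i\phi_j} \\ e^{-i\phi_j} & 1 \end{pmatrix} \otimes \mc{T}[\hat{P}_j]$, a sum of tensor products of positive semidefinite factors — so the block matrix is positive semidefinite using only the $1$-positivity of $\mc{T}$ on the projectors $\hat{P}_j$, and the lemma then delivers $\lVert\mc{T}[\hat{U}]\rVert\leq 1$. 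If you want to repair your proof, replacing your final step with this spectral-decomposition argument (or some equivalent device) is necessary; bounding Hermitian parts alone cannot produce the sharp constant.
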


\begin{proof}
We first observe that the operator norm $\lVert \mc{T} \rVert$ is equivalently written as 
\begin{align}
\lVert \mc{T} \rVert = \sup_{\lVert \hat{X} \rVert \leq 1} \lVert \mc{T}[\hat{X}] \rVert.
\end{align}
Suppose that $\hat{X} \in \mbb{B}[\mc{H}]$ has $\lVert \hat{X} \rVert \leq 1$.
Let $\hat{X} = \hat{U}' \hat{\Lambda} \hat{V}'$ be the singular value decomposition of $\hat{X}$, where $\hat{U}'$ and $\hat{V}'$ are unitary matrices and $\hat{\Lambda} = \mathrm{diag}(\Lambda_1, \ldots, \Lambda_d)$ is a diagonal matrix with singular values $\Lambda_1 \geq \cdots \geq \Lambda_d \geq 0$.
Since $\Lambda_1 \leq 1$, we can write $\hat{\Lambda}$ as 
\begin{align}
\hat{\Lambda} 
= \begin{pmatrix} \Lambda_1 && \\ & \ddots & \\ && \Lambda_d \end{pmatrix}
= \frac{1}{2} \begin{pmatrix} e^{i\theta_1} && \\ & \ddots & \\ && e^{i\theta_d} \end{pmatrix} + \frac{1}{2} \begin{pmatrix} e^{-i\theta_1} && \\ & \ddots & \\ && e^{-i\theta_d} \end{pmatrix}
\end{align}
with some $\theta_j \in \mbb{R}$. 
Thus, $\hat{X}$ can be written as $\hat{X} = \frac{1}{2}(\hat{U}_+ + \hat{U}_-)$ with unitary matrices $\hat{U}_+$ and $\hat{U}_-$.
We then find
\begin{align}
\lVert \mc{T}[\hat{X}] \rVert 
= \frac{1}{2} \lVert \mc{T}[\hat{U}_+ + \hat{U}_-] \rVert 
\leq \frac{1}{2} \left( \lVert \mc{T}[\hat{U}_+] \rVert + \lVert \mc{T} [\hat{U}_-] \rVert \right).
\end{align}
We now wish to show $\lVert \mc{T}[\hat{U}] \rVert \leq 1$ for unitary $\hat{U}$.
Using the spectral decomposition $\hat{U} = \sum_{j=1}^d e^{i\phi_j} \hat{P}_j$ with $\phi_j \in \mbb{R}$ and rank-one projectors $\hat{P}_j$, we find
\begin{align}
\begin{pmatrix} \hat{\mbb{I}} & \mc{T}[\hat{U}] \\ \mc{T}[\hat{U}]^\dagger & \hat{\mbb{I}} \end{pmatrix}
= \begin{pmatrix} \hat{\mbb{I}} & \sum_j e^{i\phi_j} \mc{T}[\hat{P}_j] \\ \sum_j e^{-i\phi_j} \mc{T}[\hat{P}_j] & \hat{\mbb{I}} \end{pmatrix}
= \sum_{j=1}^d \begin{pmatrix} 1 & e^{i\phi_j} \\ e^{-i\phi_j} & 1 \end{pmatrix} \otimes \mc{T}[\hat{P}_j],
\end{align}
where we have used $\sum_j \mc{T}[\hat{P}_j] = \hat{\mbb{I}}$ at the last equality. 
Since this matrix is positive semidefinite, by the above lemma we obtain $\lVert \mc{T}[\hat{U}] \rVert \leq 1$. 
This implies $\lVert \mc{T}[\hat{X}] \rVert \leq 1$ for $\lVert \hat{X} \rVert \leq 1$ and thus $\lVert \mc{T} \rVert \leq 1$. 
On the other hand, since $\mc{T}$ is unital, we have $\lVert \mc{T}[\hat{\mbb{I}}] \rVert = \lVert \hat{\mbb{I}} \rVert = 1$ and therefore $\lVert \mc{T} \rVert = 1$.
\end{proof}

\subsection{Existence of a stationary state for CPTP maps}
\label{app:StationaryState}

Here, we provide a proof that any CPTP map $\mc{E}: \mbb{B}[\mc{H}] \to \mbb{B}[\mc{H}]$ has at least one stationary state $\hat{\varrho}_0 \in \mbb{B}[\mc{H}]$, i.e., a density operator satisfying $\mc{E}[\hat{\varrho}_0] = \hat{\varrho}_0$.
This can be understood as a consequence of Brouwer's fixed-point theorem, which states that any continuous map $\mc{F}$ from a nonempty compact convex set $\mbb{S} \subset \mbb{R}^n$ to itself has a fixed point $\hat{x}_0 \in \mbb{S}$ so that $\mc{F}[\hat{x}_0] = \hat{x}_0$.
Since we consider a finite-dimensional Hilbert space $\mc{H}$ with $d = \dim[\mc{H}]$, any Hermitian operator $\hat{X} \in \mbb{B}[\mc{H}]$ can be represented as a $d \times d$ Hermitian matrix and is isomorphic to $\mbb{R}^{d^2}$.
The set of density operators is nonempty (since there exists $\hat{\rho}=\hat{\mbb{I}}/d$), convex (since $s \hat{\rho}_1 + (1-s) \hat{\rho}_2$ with $0 \leq s \leq 1$ is a density operator if $\hat{\rho}_1$ and $\hat{\rho}_2$ are density operators), and compact (since boundedness and closedness are ensured by $\Tr(\hat{\rho})=1$ and $\hat{\rho} \succeq 0$).
Since $\mc{E}$ is linear, it is also continuous for finite-dimensional $\mc{H}$.
Thus, the existence of a stationary state is ensured by Brouwer's fixed-point theorem.

However, since we only consider linear CPTP maps and any CPTP map is a contraction for the operator norm, we can explicitly construct a stationary state by adapting the proof of Theorem 4.4.1 in Ref.~\cite{rivas2012open}.

\begin{theorem}
For finite-dimensional $\mc{H}$, any CPTP map $\mc{E}: \mbb{B}[\mc{H}] \to \mbb{B}[\mc{H}]$ has at least one density operator $\hat{\varrho}_0 \in \mbb{B}[\mc{H}]$ such that $\mc{E}[\hat{\varrho}_0] = \hat{\varrho}_0$.
\end{theorem}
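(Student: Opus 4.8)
The plan is to construct a stationary state explicitly via Ces\`aro averaging, rather than invoking Brouwer's theorem, since this construction simultaneously identifies the time-averaged state appearing in Eq.~\eqref{eq:UniqueSSAverage} as a fixed point. First I would fix any density operator to seed the sequence—the maximally mixed state $\hat{\rho} = \hat{\mbb{I}}/d$ is a convenient choice—and form the Ces\`aro averages
\[
\hat{\rho}_N = \frac{1}{N}\sum_{n=0}^{N-1}\mc{E}^n[\hat{\rho}].
\]
Because $\mc{E}$ is CPTP, each iterate $\mc{E}^n[\hat{\rho}]$ is again a density operator, so $\hat{\rho}_N$, being a convex combination of density operators, is itself a density operator for every $N$.

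The key step is a telescoping identity. Applying $\mc{E}$ and shifting the summation index gives
\[
\mc{E}[\hat{\rho}_N] - \hat{\rho}_N = \frac{1}{N}\left(\mc{E}^N[\hat{\rho}] - \hat{\rho}\right).
\]
Since both $\mc{E}^N[\hat{\rho}]$ and $\hat{\rho}$ are density operators, their operator norms are bounded by $1$, so $\lVert \mc{E}[\hat{\rho}_N] - \hat{\rho}_N \rVert \leq 2/N$, which vanishes as $N \to \infty$. Thus $\hat{\rho}_N$ is an approximate fixed point of $\mc{E}$ with an explicit error bound.

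Next I would extract a convergent subsequence. The set of density operators on a finite-dimensional $\mc{H}$ is compact—closed and bounded inside the real space $\mbb{R}^{d^2}$ of Hermitian operators—so the bounded sequence $\{\hat{\rho}_N\}$ has a subsequence $\hat{\rho}_{N_k}$ converging to some $\hat{\varrho}_0$. Closedness of the set guarantees that $\hat{\varrho}_0$ is again a density operator: its trace equals $1$ by continuity of the trace, and positive semidefiniteness is preserved under the limit. Finally, since $\mc{E}$ is linear and hence continuous on the finite-dimensional space, passing to the limit along the subsequence and using $\lVert \mc{E}[\hat{\rho}_{N_k}] - \hat{\rho}_{N_k} \rVert \leq 2/N_k \to 0$ yields $\mc{E}[\hat{\varrho}_0] = \hat{\varrho}_0$.

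The genuine obstacle is not the algebra—the telescoping is immediate—but the analytic input that makes the argument conclusive: one must invoke compactness of the density-operator set to extract a convergent subsequence and to verify that the limit remains a valid state. This is precisely where finite-dimensionality enters, and it is the point that would require more care (and a different argument) in the infinite-dimensional setting alluded to in the footnote of Sec.~\ref{sec:GeneralSpecCPTP}.
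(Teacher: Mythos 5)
Your proof is correct and follows essentially the same route as the paper's appendix on the existence of a stationary state: Ces\`aro averaging of the orbit of a density operator, the telescoping identity $\mc{E}[\hat{\rho}_N]-\hat{\rho}_N = (\mc{E}^N[\hat{\rho}]-\hat{\rho})/N$, and compactness of the set of density operators in finite dimensions. The only difference is that you extract a convergent subsequence and pass to the limit by continuity, whereas the paper asserts convergence of the full Ces\`aro sequence; your version is if anything slightly more careful on that point, but the underlying argument is the same.
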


\begin{proof}
We explicitly construct such a density operator.
Taking any density operator $\hat{\rho} \in \mbb{B}[\mc{H}]$, we consider its time average,
\begin{align}
\hat{\rho}_\textrm{av} = \lim_{N \to \infty} \frac{1}{N} \sum_{n=0}^{N-1} \mc{E}^n[\hat{\rho}],
\end{align}
where $\mc{E}^0[\hat{\rho}] = \hat{\rho}$.
Since $\mc{E}$ is CPTP, each $\hat{\rho}_n = \mc{E}^n[\hat{\rho}]$ is a density operator and their convex sum $(1/N) \sum_{n=0}^{N-1} \hat{\rho}_n$ for a finite $N$ is also a density operator.
Since the set of density operators is compact, the limit $\hat{\rho}_\textrm{av}$ exists within the set and is a density operator.
Since $\mc{E}$ is a continuous linear map for finite-dimensional $\mc{H}$, we have 
\begin{align}
\mc{E}[\hat{\rho}_\textrm{av}] 
= \lim_{N \to \infty} \frac{1}{N} \sum_{n=0}^{N-1}  \mc{E}[\mc{E}^n[\hat{\rho}]] 
= \lim_{N \to \infty} \frac{1}{N} \sum_{n=1}^{N} \mc{E}^{n}[\hat{\rho}].
\end{align}
We then find
\begin{align}
\mc{E}[\hat{\rho}_\textrm{av}] - \hat{\rho}_\textrm{av} = \lim_{N \to \infty} \frac{\mc{E}^N[\hat{\rho}] - \hat{\rho}}{N}.
\end{align}
Since $\| \mc{E}^N[\hat{\rho}] -\hat{\rho} \| \leq \| \mc{E}^N[\hat{\rho}] \| + \| \hat{\rho} \| \leq \| \mc{E} \|^N \| \hat{\rho} \| + \| \hat{\rho} \|  \leq 2$, we have 
\begin{align}
\lim_{N \to \infty} \frac{\| \mc{E}^N[\hat{\rho}] -\hat{\rho} \|}{N} = 0
\end{align}
and therefore
\begin{align}
\lim_{N \to \infty} \frac{\mc{E}^N[\hat{\rho}] - \hat{\rho}}{N} = 0.
\end{align}
Thus, we find $\mc{E}[\hat{\rho}_\textrm{av}] = \hat{\rho}_\textrm{av}$.
\end{proof}

\subsection{Burnside's theorem on matrix algebras}
\label{app:BurnsideTheorem}

A subset $\mbb{A} \subseteq \mbb{B}[\mc{H}]$ is called an algebra if it is closed under scalar multiplication, addition, and multiplication.
A set $\mbb{S} \subseteq \mbb{B}[\mc{H}]$ is said to be irreducible if the only subspaces of $\mc{H}$ invariant under the action of $\mbb{S}$ are $\{0\}$ and $\mc{H}$.
Then, Burnside's theorem on matrix algebras specialized to $\mbb{A} \subseteq \mbb{B}[\mc{H}]$ is stated as follows.

\begin{theorem}[Burnside's theorem on matrix algebras]
An algebra $\mbb{A} \subseteq \mbb{B}[\mc{H}]$ is irreducible if and only if $\mbb{A} = \mbb{B}[\mc{H}]$.
\end{theorem}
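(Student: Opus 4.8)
The plan is to dispatch the easy implication immediately and then devote the real work to the forward direction, following the classical strategy of Halperin and Rosenthal \cite{Halperin1980burnside}: an irreducible algebra must contain a rank-one operator, and a single rank-one operator together with irreducibility generates all of $\mbb{B}[\mc{H}]$. For the easy direction, if $\mbb{A} = \mbb{B}[\mc{H}]$ then for any nonzero $\ket{\psi}$ and any $\ket{\phi}$ the operator $\ket{\phi}\bra{\psi}/\braket{\psi|\psi}$ lies in $\mbb{A}$ and sends $\ket{\psi}$ to $\ket{\phi}$, so any nonzero invariant subspace is already all of $\mc{H}$; hence $\mbb{A}$ is irreducible. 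Throughout I would assume $\mbb{A} \neq \{0\}$, which for $\dim \mc{H} \geq 2$ is automatic from irreducibility (the zero algebra leaves every subspace invariant) and which in all applications in the main text holds because $\hat{\mbb{I}} \in \mbb{A}$.

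The workhorse lemma I would establish first is that for every nonzero $\ket{v} \in \mc{H}$ one has $\mathrm{span}(\mbb{A}\ket{v}) = \mc{H}$. Indeed, $\mathrm{span}(\mbb{A}\ket{v})$ is invariant because $\mbb{A}\,\mbb{A} \subseteq \mbb{A}$, and it is nonzero because $\{\ket{v}: \mbb{A}\ket{v} = 0\}$ is itself an invariant subspace, hence $\{0\}$ or $\mc{H}$, the latter forcing $\mbb{A} = \{0\}$. Irreducibility then promotes the nonzero invariant subspace $\mathrm{span}(\mbb{A}\ket{v})$ to all of $\mc{H}$.

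The main obstacle is the existence of a rank-one element. I would take a nonzero $\hat{A} \in \mbb{A}$ of minimal rank $r$ and argue $r = 1$ by contradiction. Assuming $r \geq 2$, set $W = \mathrm{ran}(\hat{A})$ with $\dim W = r \geq 2$. Using the span lemma applied to a nonzero $\ket{w_1} \in W$, I would show that $\hat{A}\hat{B}\ket{w_1}$ cannot lie in $\mbb{C}\ket{w_1}$ for all $\hat{B} \in \mbb{A}$ (otherwise $\hat{A}(\mathrm{span}\,\mbb{A}\ket{w_1}) = \hat{A}\mc{H} = W$ would be one-dimensional), so some $\hat{B}$ makes $S := \hat{A}\hat{B}|_W : W \to W$ non-scalar. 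Since $\mbb{C}$ is algebraically closed, $S$ has an eigenvalue $\lambda$, and being non-scalar, $S - \lambda\,\mathrm{id}_W$ is nonzero but singular, i.e. $0 < \mathrm{rank}(S-\lambda\,\mathrm{id}_W) \leq r-1$. The key computation is that $\hat{C} := \hat{A}\hat{B}\hat{A} - \lambda\hat{A} \in \mbb{A}$ acts as $\hat{C}\ket{v} = (S - \lambda\,\mathrm{id}_W)(\hat{A}\ket{v})$, so its range equals $(S-\lambda\,\mathrm{id}_W)(W)$; thus $\hat{C}$ is a nonzero element of rank strictly below $r$, contradicting minimality and forcing $r=1$.

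Finally I would bootstrap: given $\ket{f}\bra{g} \in \mbb{A}$ with $\ket{f}, \ket{g} \neq 0$ and arbitrary targets $\ket{a}, \ket{b}$, the span lemma yields $\hat{S} \in \mbb{A}$ with $\hat{S}\ket{f} = \ket{a}$. For the bra side I would use that $U$ is $\mbb{A}$-invariant iff $U^\perp$ is $\mbb{A}^\dagger$-invariant, so $\mbb{A}^\dagger$ is also irreducible and the span lemma gives $\hat{T} \in \mbb{A}$ with $\hat{T}^\dagger\ket{g} = \ket{b}$, i.e. $\bra{g}\hat{T} = \bra{b}$. Then $\hat{S}(\ket{f}\bra{g})\hat{T} = \ket{a}\bra{b} \in \mbb{A}$, and since $\ket{a},\ket{b}$ are arbitrary, $\mbb{A}$ contains every rank-one operator; as these span $\mbb{B}[\mc{H}]$, we conclude $\mbb{A} = \mbb{B}[\mc{H}]$. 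The only genuinely delicate point is the rank-reduction step above; everything else is linear algebra organized around the span lemma.
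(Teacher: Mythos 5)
Your proof is correct and follows essentially the same route as the paper's (the Halperin--Rosenthal argument: a minimal-rank element must be rank one, and irreducibility then generates all rank-one operators, which span $\mbb{B}[\mc{H}]$). If anything you are slightly more careful than the paper at the bootstrap step, where you justify via irreducibility of $\mbb{A}^\dagger$ why one can find $\hat{T} \in \mbb{A}$ with $\hat{T}^\dagger \ket{g} = \ket{b}$ — a point the paper's proof passes over silently.
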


\begin{proof}
Since the if part is trivial, we prove the only if part following Ref.~\cite{Halperin1980burnside}.

We first show that $\mbb{A}$ contains a rank-1 operator.
Let $\hat{T}_0 \in \mbb{A}$ be a nonzero operator with the smallest rank $r$ among all elements in $\mbb{A}$.
Suppose $r>1$. 
Let $\ket{x_1}, \ket{x_2} \in \mc{H}$ be such that $\hat{T}_0 | x_1 \rangle$ and $\hat{T}_0 | x_2 \rangle$ are linearly independent.
Since $\mbb{A} | x \rangle = \mc{H}$ for any nonzero $\ket{x} \in \mc{H}$ and $\hat{T}_0 | x_1 \rangle \neq 0$, there exists $\hat{A} \in \mbb{A}$ such that $\hat{A} \hat{T}_0 | x_1 \rangle = | x_2 \rangle$.
Then, $\hat{T}_0 \hat{A} \hat{T}_0 | x_1 \rangle$ and $\hat{T}_0 | x_1 \rangle$ are linearly independent, so that $\hat{T}_0 \hat{A} \hat{T}_0 - \lambda \hat{T}_0$ is nonzero for all $\lambda \in \mbb{C}$.
When restricted to $\hat{T}_0 \mc{H}$, $\hat{T}_0 \hat{A}$ has an eigenvector $\ket{y_0} \in \hat{T}_0 \mc{H}$ with eigenvalue $\lambda_0$, and hence $\hat{T}_0 \hat{A} -\lambda_0 \hat{\mbb{I}}$ has a rank lower than $\dim [\hat{T}_0 \mc{H}] = r$.
This contradicts the assumption that $\hat{T}_0$ has the lowest rank, and thus $r$ must be 1.

We next show that $\mbb{A}$ contains all rank-1 operators.
Since $\hat{T}_0 \in \mbb{A}$ has rank 1, we can write $\hat{T}_0 = | u_0 \rangle \langle v_0 |$ with some $| u_0 \rangle, | v_0 \rangle \in \mc{H}$.
Since $\mbb{A} | u_0 \rangle = \mbb{A} | v_0 \rangle = \mc{H}$, for any $| u \rangle, | v \rangle \in \mc{H}$ there exist $\hat{B}, \hat{C} \in \mbb{A}$ such that $| u \rangle = \hat{B} | u_0 \rangle$ and $| v \rangle = \hat{C}^\dagger | v_0 \rangle$. 
Since $\mbb{A}$ is an algebra, $\hat{B} \hat{T}_0 \hat{C} \in \mbb{A}$ for any $\hat{B}, \hat{C} \in \mbb{A}$.
In this way, we can construct $\hat{B} \hat{T}_0 \hat{C} = | u \rangle \langle v | \in \mbb{A}$ for any pairs of $\ket{u}, \ket{v} \in \mc{H}$.
Since any $\hat{X} \in \mbb{B}[\mc{H}]$ can be written as a sum of rank-1 operators $| u \rangle \langle v |$, we prove $\mbb{A} = \mbb{B}[\mc{H}]$.
\end{proof}

\subsection{CP unital map is a Schwarz map}
\label{app:SchwarzMap}

Here, we give a proof that a CP unital map $\mc{T}: \mbb{B}[\mc{H}] \to \mbb{B}[\mc{H}]$ is a Schwarz map, i.e., a map satisfying the Kadison-Schwarz inequality $\mc{T}[\hat{A}^\dagger \hat{A}] \succeq \mc{T}[\hat{A}^\dagger] \mc{T}[\hat{A}]$ for any $\hat{A} \in \mbb{B}[\mc{H}]$.
To begin with, we first prove the following lemma:

\begin{lemma}
Let $\hat{A}$, $\hat{B}$, $\hat{C} \in \mbb{B}[\mc{H}]$.
If a block matrix $\hat{M} := \begin{pmatrix} \hat{A} & \hat{B}^\dagger \\ \hat{B} & \hat{C} \end{pmatrix}$ is positive semidefinite and $\hat{C}$ is positive definite, then $\hat{A} \succeq \hat{B}^\dagger \hat{C}^{-1} \hat{B}$.
\end{lemma}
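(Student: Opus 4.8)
The plan is to prove the desired operator inequality directly from the definition of positive semidefiniteness, by feeding $\hat{M}$ a family of test vectors in $\mc{H} \oplus \mc{H}$ engineered to isolate the Schur complement $\hat{A} - \hat{B}^\dagger \hat{C}^{-1} \hat{B}$. Before anything else I would observe that $\hat{M} \succeq 0$ forces $\hat{M}$ to be Hermitian, so that $\hat{A}$ and $\hat{C}$ are themselves Hermitian; combined with the hypothesis $\hat{C} \succ 0$, this guarantees that $\hat{C}^{-1}$ exists and is again Hermitian (indeed positive definite). This is what makes the object $\hat{B}^\dagger \hat{C}^{-1} \hat{B}$ appearing in the claim well-defined in the first place.

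First I would fix an arbitrary $\ket{x} \in \mc{H}$ and introduce the test vector
\begin{align}
\ket{\Psi} = \begin{pmatrix} \ket{x} \\ -\hat{C}^{-1} \hat{B} \ket{x} \end{pmatrix} \in \mc{H} \oplus \mc{H}.
\end{align}
The key step is to apply $\hat{M}$ to $\ket{\Psi}$: by design the lower block of $\hat{M} \ket{\Psi}$ is $\hat{B} \ket{x} - \hat{C} \hat{C}^{-1} \hat{B} \ket{x} = 0$, so only the upper block $(\hat{A} - \hat{B}^\dagger \hat{C}^{-1} \hat{B}) \ket{x}$ survives. Taking the inner product with $\ket{\Psi}$ then collapses, since the lower component of $\ket{\Psi}$ pairs with the vanishing lower block, to
\begin{align}
\langle \Psi | \hat{M} | \Psi \rangle = \langle x | (\hat{A} - \hat{B}^\dagger \hat{C}^{-1} \hat{B}) | x \rangle.
\end{align}
Because $\hat{M} \succeq 0$ gives $\langle \Psi | \hat{M} | \Psi \rangle \geq 0$ for every $\ket{\Psi}$, and $\ket{x}$ was arbitrary, I conclude $\langle x | (\hat{A} - \hat{B}^\dagger \hat{C}^{-1} \hat{B}) | x \rangle \geq 0$ for all $\ket{x}$, which is precisely $\hat{A} \succeq \hat{B}^\dagger \hat{C}^{-1} \hat{B}$.

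As an alternative and more structural route, I could instead exhibit the block factorization $\hat{M} = \hat{L}^\dagger \hat{D} \hat{L}$ with the unit block-triangular $\hat{L} = \begin{pmatrix} \hat{\mbb{I}} & 0 \\ \hat{C}^{-1}\hat{B} & \hat{\mbb{I}} \end{pmatrix}$ and block-diagonal $\hat{D} = (\hat{A} - \hat{B}^\dagger \hat{C}^{-1} \hat{B}) \oplus \hat{C}$; since $\hat{L}$ is invertible, congruence preserves positivity, so $\hat{D} = (\hat{L}^{-1})^\dagger \hat{M} \hat{L}^{-1} \succeq 0$ forces each diagonal block, in particular the Schur complement, to be positive semidefinite. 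I do not expect any serious obstacle: the computation is a routine block-matrix manipulation, and the only points demanding care are the Hermiticity of $\hat{C}^{-1}$ (so that the upper-right entry of $\hat{L}^\dagger$ is correctly $\hat{B}^\dagger \hat{C}^{-1}$) and the invertibility of $\hat{L}$ legitimizing the congruence. Since the test-vector argument sidesteps even these by never inverting $\hat{L}$ explicitly, I would present it as the primary proof.
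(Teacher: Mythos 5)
Your primary argument is correct and is essentially identical to the paper's proof: both use the test vector $\begin{pmatrix} \ket{x} \\ -\hat{C}^{-1}\hat{B}\ket{x} \end{pmatrix}$ and evaluate $\langle \Psi | \hat{M} | \Psi \rangle$ to isolate the Schur complement. The alternative congruence factorization you sketch is a fine variant but adds nothing needed here.
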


\begin{proof}
Let $\ket{x} \in \mc{H}$. 
Since $\hat{C}$ is invertible, we can define $v = \begin{pmatrix} \ket{x} \\ -\hat{C}^{-1} \hat{B} \ket{x} \end{pmatrix}$. 
Since $\hat{M} \succeq 0$, $v^\dagger \hat{M} v \geq 0$ and thus
\begin{align}
0 \leq v^\dagger \hat{M} v 
= \begin{pmatrix} \bra{x} & -\bra{x} \hat{B}^\dagger \hat{C}^{-1} \end{pmatrix} \begin{pmatrix} \hat{A} & \hat{B}^\dagger \\ \hat{B} & \hat{C} \end{pmatrix} \begin{pmatrix} \ket{x} \\ -\hat{C}^{-1} \hat{B} \ket{x} \end{pmatrix}
= \bra{x} \left( \hat{A} - \hat{B}^\dagger \hat{C}^{-1} \hat{B} \right) \ket{x}.
\end{align}
Since this must hold for any $\ket{x} \in \mc{H}$, we have $\hat{A} \succeq \hat{B}^\dagger \hat{C}^{-1} \hat{B}$.
\end{proof}

We then prove the original proposition.

\begin{theorem}
If a map $\mc{T}: \mbb{B}[\mc{H}] \to \mbb{B}[\mc{H}]$ is CP and unital, $\mc{T}$ is also a Schwarz map.
\end{theorem}

\begin{proof}
Let us consider a block matrix $\hat{X} := \begin{pmatrix} \hat{A}^\dagger \hat{A} & \hat{A}^\dagger \\ \hat{A} & \hat{\mbb{I}} \end{pmatrix}$ with any $\hat{A} \in \mbb{B}[\mc{H}]$. 
This matrix is obviously positive semidefinite as it can be written as $\hat{X} = \begin{pmatrix} \hat{A}^\dagger \\ \hat{\mbb{I}} \end{pmatrix} \begin{pmatrix} \hat{A} & \hat{\mbb{I}} \end{pmatrix}$.
Since $\mc{T}$ is CP, it is also 2-positive\footnote{
As given in Eq.~\eqref{eq:DefCP}, a linear map $\mc{T}: \mbb{B}[\mc{H}] \to \mbb{B}[\mc{H}]$ is CP if $(\mc{T} \otimes \mc{I}_n)[\hat{R}] \succeq 0$ for all $n \in \mbb{N}$ and for $\hat{R} \succeq 0$ on $\mbb{B}[\mc{H} \otimes \mc{H}_n]$, where $\mc{H}_n$ is an auxiliary $n$-dimensional Hilbert space and $\mc{I}_n: \mbb{B}[\mc{H}_n] \to \mbb{B}[\mc{H}_n]$ is the identity map.
When this property holds for a specific $n$, $\mc{T}$ is called $n$-positive.
}
and thus 
\begin{align}
\hat{M} 
:= (\mc{T} \otimes \mc{I}_2)[\hat{X}] 
= \begin{pmatrix} \mc{T} [\hat{A}^\dagger \hat{A}] & \mc{T}[\hat{A}^\dagger] \\
\mc{T}[\hat{A}] & \mc{T}[\hat{\mbb{I}}] \end{pmatrix} 
\succeq 0.
\end{align}
Since $\mc{T}$ is unital, we have
\begin{align}
\hat{M} = \begin{pmatrix} \mc{T}[\hat{A}^\dagger \hat{A}] & (\mc{T}[\hat{A}])^\dagger \\ \mc{T}[\hat{A}] & \hat{\mbb{I}} \end{pmatrix} \succeq 0,
\end{align}
where we have used the Hermiticity preserving property of $\mc{T}$, $\mc{T}[\hat{A}^\dagger] = (\mc{T}[\hat{A}])^\dagger$.
Using the above lemma, we find the Kadison-Schwarz inequality $\mc{T}[\hat{A}^\dagger \hat{A}] \succeq \mc{T}[\hat{A}^\dagger] \mc{T}[\hat{A}]$.
\end{proof}

\subsection{Schur's lemma on self-adjoint sets}
\label{app:SchurLemma}

We call $\mbb{S} \subseteq \mbb{B}[\mc{H}]$ a self-adjoint set if $\hat{S}^\dagger \in \mbb{S}$ for every $\hat{S} \in \mbb{S}$.
We say that $\mbb{S} \subseteq \mbb{B}[\mc{H}]$ is irreducible if the only subspaces of $\mc{H}$ invariant under the action of $\mbb{S}$ are $\{0\}$ and $\mc{H}$.
The commutant of $\mbb{S}$ is defined as a set of operators commuting with all elements of $\mbb{S}$, that is, 
\begin{align}
\mbb{S}' = \{ \hat{T} \in \mbb{B}[\mc{H}] : [\hat{T}, \hat{S}] = 0 \textrm{ for all } \hat{S} \in \mbb{S} \}.
\end{align}
Then, we have the following theorem known as Schur's lemma in the context of operator algebra (see, e.g., Theorem~5.1.6 of Ref.~\cite{deitmar2014harmonic}).

\begin{theorem}[Schur's lemma]
Let $\mbb{S} \subseteq \mbb{B}[\mc{H}]$ be a self-adjoint set. 
Then $\mbb{S}' = \mbb{C} \hat{\mbb{I}}$ if and only if $\mbb{S}$ is irreducible.
\end{theorem}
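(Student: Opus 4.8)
The plan is to prove the two implications separately, handling the converse by contraposition, and to lean throughout on one elementary but crucial preliminary: because $\mbb{S}$ is self-adjoint, its commutant $\mbb{S}'$ is itself a self-adjoint set. Indeed, if $\hat{T}\in\mbb{S}'$ then $[\hat{T},\hat{S}]=0$ for all $\hat{S}\in\mbb{S}$, and taking adjoints gives $[\hat{T}^\dagger,\hat{S}^\dagger]=0$; since $\hat{S}^\dagger$ ranges over all of $\mbb{S}$ as $\hat{S}$ does, we conclude $\hat{T}^\dagger\in\mbb{S}'$. I would also use freely that $\mbb{S}'$ is always an algebra containing $\hat{\mbb{I}}$.

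For the implication ``$\mbb{S}$ irreducible $\Rightarrow\mbb{S}'=\mbb{C}\hat{\mbb{I}}$'', I would first reduce to Hermitian elements: given $\hat{T}\in\mbb{S}'$, the self-adjointness of $\mbb{S}'$ lets me write $\hat{T}=\hat{T}_1+i\hat{T}_2$ with $\hat{T}_1,\hat{T}_2$ Hermitian and both lying in $\mbb{S}'$, so it suffices to show every Hermitian $\hat{T}\in\mbb{S}'$ is a scalar. For such a $\hat{T}$, I would invoke the spectral theorem to write $\hat{T}=\sum_k\lambda_k\hat{P}_k$ with distinct real eigenvalues $\lambda_k$ and orthogonal spectral projections $\hat{P}_k$. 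Since each $\hat{P}_k$ is a polynomial in $\hat{T}$ (by Lagrange interpolation) and $\mbb{S}'$ is an algebra, each $\hat{P}_k\in\mbb{S}'$; hence $\hat{P}_k$ commutes with every $\hat{S}\in\mbb{S}$, which makes $\mathrm{range}(\hat{P}_k)$ an $\mbb{S}$-invariant subspace. Irreducibility forces each such range to be $\{0\}$ or $\mc{H}$, and since the nonzero projections are mutually orthogonal and sum to $\hat{\mbb{I}}$, there can be only one of them, equal to $\hat{\mbb{I}}$; thus $\hat{T}=\lambda_1\hat{\mbb{I}}$.

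For the converse I would argue by contraposition: assuming $\mbb{S}$ is reducible, there is a subspace $\mc{V}$ with $\{0\}\subsetneq\mc{V}\subsetneq\mc{H}$ invariant under $\mbb{S}$, and I would build a nontrivial element of $\mbb{S}'$ from the orthogonal projection $\hat{P}$ onto $\mc{V}$. The key step is to show $\mc{V}^\perp$ is also invariant, which is exactly where self-adjointness is indispensable: for $\ket{w}\in\mc{V}^\perp$, $\ket{v}\in\mc{V}$, and $\hat{S}\in\mbb{S}$, one has $\langle v|\hat{S}|w\rangle=\langle\hat{S}^\dagger v|w\rangle=0$ because $\hat{S}^\dagger\in\mbb{S}$ keeps $\mc{V}$ invariant. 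With both $\mc{V}$ and $\mc{V}^\perp$ invariant, a direct computation on the decomposition $\ket{\psi}=\ket{v}+\ket{w}$ yields $[\hat{P},\hat{S}]=0$, so $\hat{P}\in\mbb{S}'\setminus\mbb{C}\hat{\mbb{I}}$. The main obstacle, and the place where the hypothesis does the real work, is the forward direction: without self-adjointness the commutant need not be closed under adjoints and a general $\hat{T}\in\mbb{S}'$ could fail to be diagonalizable, so the spectral-projection argument would collapse. Reducing to Hermitian elements via the self-adjointness of $\mbb{S}'$ is precisely what sidesteps this difficulty.
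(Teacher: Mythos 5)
Your proposal is correct and follows essentially the same route as the paper's proof in Appendix~\ref{app:SchurLemma}: both reduce to Hermitian elements of the commutant via its self-adjointness, both use the spectral decomposition to produce $\mbb{S}$-invariant eigenspaces for the forward direction (your detour through Lagrange-interpolated spectral projections is an equivalent way of seeing that the eigenspaces are invariant), and both establish the converse by showing that the orthogonal complement of an invariant subspace is invariant so that the projection lies in $\mbb{S}'$. The only cosmetic difference is that you phrase the converse as a contraposition while the paper argues it directly.
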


\begin{proof}
Let $\hat{T} \in \mbb{S}'$.
For any $\hat{S} \in \mbb{S}$, we have $[\hat{T}, \hat{S}]=0$ and $[\hat{T}^\dagger, \hat{S}^\dagger] = [\hat{S}, \hat{T}]^\dagger = 0$. 
Since $\hat{S}^\dagger \in \mbb{S}$, $\hat{T}^\dagger$ is also an element of $\mbb{S}'$ (this means that $\mbb{S}'$ is also self-adjoint).
Since any $\hat{T} \in \mbb{B}[\mc{H}]$ can be decomposed into a sum of two Hermitian operators $\hat{T}_1 = \frac{1}{2}(\hat{T} + \hat{T}^\dagger)$ and $\hat{T}_2 = \frac{1}{2i}(\hat{T} - \hat{T}^\dagger)$ as $\hat{T} = \hat{T}_1 + i\hat{T}_2$, we can assume $\hat{T}^\dagger = \hat{T}$ without loss of generality.

Suppose that $\mbb{S}$ is irreducible.
Let $\hat{T} = \sum_{i=1}^d \lambda_i | \psi_i \rangle \langle \psi_i |$ be the spectral decomposition of $\hat{T}$ with eigenvalues $\lambda_i \in \mbb{R}$ and eigenvectors $\ket{\psi_i} \in \mc{H}$.
Let $\mc{V}_\lambda \subseteq \mc{H}$ be an eigenspace spanned by the eigenvectors of $\hat{T}$ with the eigenvalue $\lambda$.
For any $\ket{\psi} \in \mc{V}_\lambda$ and $\hat{S} \in \mbb{S}$, we have $\hat{T} \hat{S} | \psi \rangle = \hat{S} \hat{T} | \psi \rangle = \lambda \hat{S} | \psi \rangle$.
Since $\hat{S} | \psi \rangle \in \mc{V}_\lambda$, $\mc{V}_\lambda$ is an invariant subspace of $\mc{H}$ under $\mbb{S}$.
However, such a subspace must be $\{0\}$ or $\mc{H}$ by assumption, and thus $\mc{V}_\lambda = \mc{H}$ for any $\lambda$.
This implies $\lambda_i = \lambda$ and hence $\hat{T} = \lambda \hat{\mbb{I}}$. 
This proves $\mbb{S}' = \mbb{C} \hat{\mbb{I}}$.

Suppose that $\mbb{S}' = \mbb{C} \hat{\mbb{I}}$.
Let $\mc{V} \subseteq \mc{H}$ be an invariant subspace of $\mc{H}$ under $\mbb{S}$ and $\mc{V}_\perp$ be its orthogonal complement.
For any $\ket{v} \in \mc{V}$, $\ket{u} \in \mc{V}_\perp$, and $\hat{S} \in \mbb{S}$, we have $\hat{S}^\dagger | v \rangle \in \mc{V}$ and $\langle v | \hat{S} | u \rangle = (\hat{S}^\dagger | v \rangle)^\dagger | u \rangle$ = 0. 
This implies $\hat{S} | u \rangle \in \mc{V}_\perp$ and that $\mc{V}_\perp$ is also an invariant subspace of $\mc{H}$ under $\mbb{S}$.
Then, let $\hat{P} \in \mbb{B}[\mc{H}]$ be an orthogonal projection operator to $\mc{V}$.
Since any $\ket{x} \in \mc{H}$ can be written as $\ket{x} = \ket{v}+\ket{u}$ with some $\ket{v} \in \mc{V}$ and $\ket{u} \in \mc{V}_\perp$, we have $\hat{P} \hat{S} | x \rangle = \hat{S} | v \rangle = \hat{S} \hat{P} | x \rangle$. 
Thus, we have $[\hat{P}, \hat{S}] = 0$ for any $\hat{S} \in \mbb{S}$.
By assumption, $\hat{P}$ must be proportional to the identity operator, i.e. $\hat{P} = \mu \hat{\mbb{I}}$ with some $\mu \in \mbb{C}$. 
Since $\hat{P}^2 = \hat{P}$, we have $\mu=0$ or $1$ and therefore $\mc{V} = \{0\}$ or $\mc{H}$.
\end{proof}

\section{Proof of Proposition~\ref{prop:irreducibility}}
\label{app:ProofIrreducibility}

Here we provide a proof of Proposition~\ref{prop:irreducibility} for the irreducibility of CPTP maps $\mc{E}$.

\begin{proof}
The proof for (1) $\Leftrightarrow$ (2) follows Ref.~\cite{carbone2016open}, while that for (1) $\Leftrightarrow$ (3) follows Refs.~\cite{evans1978spectral, wolf2012quantum}.

(1) $\Rightarrow$ (2): 
Suppose that there exists an orthogonal projection operator $\hat{P} \notin \{ 0, \hat{\mbb{I}}\}$ such that $\hat{P} \preceq \mc{E}^\dagger[\hat{P}]$.
For any positive semidefinite $\hat{\rho} \in \mbb{B}[\mc{H}]$, we have 
\begin{align}
\Tr[\hat{P} \hat{\rho} \hat{P}] 
\leq \Tr[\hat{P} \hat{\rho} \hat{P} \mc{E}^\dagger[\hat{P}]] 
= \Tr[\hat{P} \mc{E}[\hat{P} \hat{\rho} \hat{P}] \hat{P}] 
\leq \Tr[\mc{E}[\hat{P} \hat{\rho} \hat{P}]]
= \Tr[\hat{P} \hat{\rho} \hat{P}],
\end{align}
which yields the equality $\Tr[\hat{P} \mc{E}[\hat{P} \hat{\rho} \hat{P}] \hat{P}] = \Tr[\mc{E}[\hat{P} \hat{\rho} \hat{P}]]$. 
Since $\mc{E}[\hat{P} \hat{\rho} \hat{P}] \succeq 0$,
this implies $\mc{E}[\hat{P} \hat{\rho} \hat{P}] = \hat{P} \mc{E}[\hat{P} \hat{\rho} \hat{P}] \hat{P} \in \hat{P} \mbb{B}[\mc{H}] \hat{P}$\footnote{
Let $\hat{A} \in \mbb{B}[\mc{H}]$ be a positive semidefinite operator and $\hat{P} \in \mbb{B}[\mc{H}]$ be an orthogonal projection operator. 
If $\Tr[\hat{A}] = \Tr[\hat{P} \hat{A}]$, then $\hat{A} = \hat{P} \hat{A} \hat{P}$. 
\textit{Proof}. Let $\hat{A} = \sum_n \lambda_n \ket{\psi_n} \bra{\psi_n}$ be the spectral decomposition of $\hat{A}$ with $\lambda_n \geq 0$, $\ket{\psi_n} \in \mc{H}$, and $\| |\psi_n \rangle \|=1$.
Using $\Tr[\hat{A}]=\Tr[\hat{P} \hat{A}]$, we have $\sum_n \lambda_n (1- \| \hat{P} | \psi_n \rangle \|^2) = \sum_n \lambda_n \| (\hat{\mathbb{I}}-\hat{P}) | \psi_n \rangle \|^2 = 0$, which implies $\lambda_n = 0$ or $(\hat{\mbb{I}}-\hat{P}) \ket{\psi_n}=0$.
Then, we have $(\hat{\mbb{I}}-\hat{P}) \hat{A} = \hat{A} (\hat{\mbb{I}}-\hat{P})=0$ and thus $\hat{A} = \hat{P} \hat{A}=\hat{A} \hat{P} = \hat{P} \hat{A} \hat{P}$.
}.
Moreover, $\mc{E}[\hat{P} \hat{X} \hat{P}] \in \hat{P} \mbb{B}[\mc{H}] \hat{P}$ holds for any (not necessarily positive semidefinite) $\hat{X} \in \mbb{B}[\mc{H}]$, since $\mc{E}$ is linear and $\hat{X}$ can always be decomposed into a linear combination of positive semidefinite operators\footnote{
Any $\hat{X} \in \mbb{B}[\mc{H}]$ can be decomposed as $\hat{X} = \hat{H}_+ + i\hat{K}_+ -(\hat{H}_- + i\hat{K}_-)$ with some positive semidefinite operators $\hat{H}_\pm, \hat{K}_\pm \in \mbb{B}[\mc{H}]$.
}. 
We therefore find $\hat{P} \notin \{ 0, \hat{\mbb{I}} \}$ such that $\mc{E}[\hat{P} \mbb{B}[\mc{H}] \hat{P}] \subseteq \hat{P} \mbb{B}[\mc{H}] \hat{P}$.

(2) $\Rightarrow$ (1):
Suppose that there exists an orthogonal projection operator $\hat{P} \notin \{ 0, \hat{\mbb{I}} \}$ such that $\mc{E}[\hat{P} \mbb{B}[\mc{H}] \hat{P}] \subseteq \hat{P} \mbb{B}[\mc{H}] \hat{P}$.
For any $\hat{X} \in \mbb{B}[\mc{H}]$, the following holds
\begin{align}\label{eqXP}
\Tr[\hat{X} \hat{P}] 
= \Tr[\hat{P} \hat{X} \hat{P}] 
= \Tr[\mc{E}[\hat{P} \hat{X} \hat{P}]]
= \Tr[\hat{P} \mc{E}[\hat{P} \hat{X} \hat{P}] \hat{P}]
= \Tr[\hat{X} \hat{P} \mc{E}^\dagger[\hat{P}] \hat{P}].
\end{align}
Here, the second equality comes from TP of $\mc{E}$ and the third equality from $\mc{E}[\hat{P} \hat{X} \hat{P}] \in \hat{P} \mbb{B}[\mc{H}] \hat{P}$ by the assumption, which implies $\mc{E}[\hat{P} \hat{X} \hat{P}] = \hat{P} \mc{E}[\hat{P} \hat{X} \hat{P}] \hat{P}$.
From Eq.~\eqref{eqXP}, we obtain $\hat{P} = \hat{P} \mc{E}^\dagger[\hat{P}] \hat{P}$. 
Let $\hat{Q} = \hat{\mbb{I}}-\hat{P}$ be the projection operator to the complement of $\hat{P}$.
Since $\mc{E}^\dagger$ is unital, i.e., $\mc{E}^\dagger[\hat{\mbb{I}}]=\hat{\mbb{I}}$, we have $\hat{P} \mc{E}^\dagger[\hat{Q}] \hat{P} = \hat{P} (\hat{\mbb{I}}-\mc{E}^\dagger[\hat{P}]) \hat{P} = 0$.
Since $\mc{E}^\dagger[\hat{Q}] \succeq 0$, 
this implies $\hat{P} \mc{E}^\dagger[\hat{Q}] = \mc{E}^\dagger[\hat{Q}] \hat{P} = 0$\footnote{
Let $\hat{A} \in \mbb{B}[\mc{H}]$ be a positive semidefinite operator and $\hat{P} \in \mbb{B}[\mc{H}]$ be an orthogonal projection operator. 
If $\hat{P} \hat{A} \hat{P}=0$, then $\hat{P} \hat{A} = \hat{A} \hat{P} = 0$. 
\textit{Proof.} Since $\hat{A} \succeq 0$, there is a unique decomposition $\hat{A} = \hat{A}^{1/2} \hat{A}^{1/2}$ by some positive semidefinite operator $\hat{A}^{1/2} = (\hat{A}^{1/2})^\dagger \in \mbb{B}[\mc{H}]$.
Since $\langle \psi | \hat{P} \hat{A} \hat{P} | \psi \rangle = \| \hat{A}^{1/2} \hat{P} \ket{\psi} \|^2 = 0$ holds for any $\ket{\psi} \in \mc{H}$, we have $\hat{A}^{1/2} \hat{P} = 0$ and thus $\hat{A}^{1/2} \hat{A}^{1/2} \hat{P} = \hat{A} \hat{P} = 0$. 
Its Hermitian conjugate gives $(\hat{A} \hat{P})^\dagger = \hat{P} \hat{A} = 0$.
} and thus $\mc{E}^\dagger[\hat{Q}] = (\hat{P}+\hat{Q}) \mc{E}^\dagger[\hat{Q}] (\hat{P}+\hat{Q}) = \hat{Q} \mc{E}^\dagger[\hat{Q}] \hat{Q}$.
Finally, we find that $\hat{P} \notin \{ 0, \hat{\mbb{I}} \}$ satisfies
\begin{align}
\mc{E}^\dagger[\hat{P}] = \hat{\mbb{I}} -\mc{E}^\dagger[\hat{Q}] = \hat{\mbb{I}} -\hat{Q} \mc{E}^\dagger[\hat{Q}] \hat{Q} = \hat{P} + \hat{Q} \mc{E}^\dagger[\hat{P}] \hat{Q} \succeq \hat{P}.
\end{align}

(1) $\Rightarrow$ (3): 
We first show that irreducibility in the sense of (1) implies that $(\mc{I}+\mc{E})^{d-1}[\hat{\rho}] \succ 0$ for any nonzero $\hat{\rho} \succeq 0$ in $\mbb{B}[\mc{H}]$, which gives another definition of irreducibility when $\mc{H}$ has a finite dimension $d$.
Here, $(\mc{I}+\mc{E})[\hat{X}] = \hat{X} + \mc{E}[\hat{X}]$. 
We then show that it further implies (3).

Let $\hat{\rho} \in \mbb{B}[\mc{H}]$ be nonzero and positive semidefinite. 
Assume that (1) holds for $\mc{E}$. 
Since $\mc{E}[\hat{\rho}] \succeq 0$, $\ker((\mc{I}+\mc{E})[\hat{\rho}]) \subseteq \ker(\hat{\rho})$\footnote{
Let $\ket{\psi} \in \ker(\hat{\rho}+\mc{E}[\hat{\rho}])$.
Then, $\langle \psi | (\hat{\rho}+\mc{E}[\hat{\rho}]) | \psi \rangle = 0$ and thus $\langle \psi | \hat{\rho} | \psi \rangle = -\langle \psi | \mc{E}[\hat{\rho}] | \psi \rangle$. 
Since $\hat{\rho} \succeq 0$ and $\mc{E}[\hat{\rho}] \succeq 0$, this implies $\hat{\rho} \ket{\psi} = \hat{\mc{E}}[\hat{\rho}] \ket{\psi} = 0$. 
As this holds for any $\ket{\psi} \in \ker(\hat{\rho}+\mc{E}[\hat{\rho}])$, we have $\ker(\hat{\rho}+\mc{E}[\hat{\rho}]) \subseteq \ker(\hat{\rho}), \ker(\mc{E}[\hat{\rho}])$. 
[We can actually prove $\ker(\hat{\rho} + \mc{E}[\hat{\rho}]) = \ker(\hat{\rho}) \cap \ker(\mc{E}[\hat{\rho}])$].}. 
Suppose that the equality $\ker((\mc{I}+\mc{E})[\hat{\rho}]) = \ker(\hat{\rho})$ holds, that is $\ker(\mc{E}[\hat{\rho}]) \supseteq \ker(\hat{\rho})$.
Let $\hat{P} \in \mbb{B}[\mc{H}]$ be an orthogonal projection operator onto the image of $\hat{\rho}$.
We then introduce a set of positive semidefinite operators whose images lie in the image of $\hat{\rho}$: $\mbb{F}_P := \{ \hat{\sigma} \succeq 0 : \ker(\hat{\sigma}) \supseteq \ker(\hat{P}) \}$. 
Obviously, $\hat{\rho} \in \mbb{F}_P$ and $\mc{E}[\hat{\rho}] \in \mbb{F}_P$.
Since $\hat{\rho}$ is strictly positive ($\hat{\rho} \succ 0$) in the space $\hat{P} \mbb{B}[\mc{H}]\hat{P}$, there exists $c>0$ such that $\hat{\sigma} \preceq c \hat{\rho}$ for any $\hat{\sigma} \in \mbb{F}_P$\footnote{
If $\hat{A} \in \mbb{B}[\mc{H}]$ is positive definite and $\hat{B} \in \mbb{B}[\mc{H}]$ is positive semidefinite, then there exists $c>0$ such that $\hat{B} \preceq c \hat{A}$.
\textit{Proof}. Since $\hat{A}$ is invertible, we can introduce $\hat{B}' = \hat{A}^{-1/2} \hat{B} \hat{A}^{-1/2}$. 
Since $\hat{B}' \succeq 0$ and $\hat{B}' \in \mbb{B}[\mc{H}]$, we have $0 \preceq \hat{B}' \preceq \| \hat{B}' \| \hat{\mbb{I}}$ or equivalently $\langle \psi | \hat{B}' | \psi \rangle \leq \| \hat{B}' \| \langle \psi | \psi \rangle$ for any $\ket{\psi} \in \mc{H}$. 
This can be rewritten by $\ket{\phi} = \hat{A}^{-1/2} \ket{\psi}$ as $\langle \phi | \hat{B} | \phi \rangle \leq \| \hat{B}' \| \langle \phi | \hat{A} | \phi \rangle$. 
Since this holds for any $\ket{\phi} \in \mc{H}$, we have $\hat{B} \preceq \| \hat{B}' \| \hat{A}$. 
Setting $c=\| \hat{B}' \|$ proves the claim.
This obviously holds even when $\mbb{B}[\mc{H}]$ is restricted to a subspace $\hat{P} \mbb{B}[\mc{H}] \hat{P}$.
}. 
Since $\mc{E}$ is positive, this gives $\mc{E}[\hat{\sigma}] \preceq c \mc{E}[\hat{\rho}]$ and thus $\ker(\mc{E}[\hat{\sigma}]) \supseteq \ker(\mc{E}[\hat{\rho}]) \supseteq \ker(\hat{\rho})$\footnote{
Let $\hat{A}, \hat{B} \in \mbb{B}[\mc{H}]$ be positive semidefinite operators. 
If there exists $c>0$ such that $\hat{B} \preceq c \hat{A}$, then $\textrm{ker}(\hat{A}) \subseteq \textrm{ker}(\hat{B})$. 
\textit{Proof}. Let $\ket{\psi} \in \textrm{ker}(\hat{A})$, then $\hat{A} \ket{\psi}=0$. 
Since $\hat{B} \preceq c\hat{A}$, we have $0 \leq \langle \psi | (c\hat{A}-\hat{B}) | \psi \rangle = -\langle \psi | \hat{B} | \psi \rangle$.
Since $\hat{B} \succeq 0$, this implies $\langle \psi | \hat{B} | \psi \rangle = 0$ and thus $\hat{B} \ket{\psi} = 0$.
As this holds for any $\ket{\psi} \in \textrm{ker}(\hat{A})$, we have $\textrm{ker}(\hat{A}) \subseteq \textrm{ker}(\hat{B})$.
}. 
This establishes $\mc{E}(\mbb{F}_P) \subseteq \mbb{F}_P \subseteq \hat{P} \mbb{B}[\mc{H}] \hat{P}$. 
Since $\mc{E}$ is linear and $\hat{P} \hat{X} \hat{P}$ for any $\hat{X} \in \mbb{B}[\mc{H}]$ can be decomposed into a linear combination of operators in $\mbb{F}_P$, we have $\mc{E}[\hat{P} \hat{X} \hat{P}] \in \hat{P} \mbb{B}[\mc{H}] \hat{P}$ and hence $\mc{E}[\hat{P} \mbb{B}[\mc{H}] \hat{P}] \subseteq \hat{P} \mbb{B}[\mc{H}] \hat{P}$. 
However, (1) implies that such $\hat{P}$ must be $\hat{P} = \hat{\mbb{I}}$ and $\hat{\rho} \succ 0$ (i.e., full rank), which leads to $(\mc{I}+\mc{E})^{d-1}[\hat{\rho}] \succ 0$.
In contrast, when $\hat{\rho}$ is not full rank, we must instead have $\ker((\mc{I}+\mc{E})[\hat{\rho}]) \subset \ker(\hat{\rho})$ and thus $\textrm{rank}((\mc{I}+\mc{E})[\hat{\rho}]) > \textrm{rank}(\hat{\rho})$.
Therefore, even in that case, $\hat{\rho}$ becomes full rank at least after $(d-1)$ times applications of $\mc{I}+\mc{E}$, that is, $(\mc{I}+\mc{E})^{d-1}[\hat{\rho}] \succ 0$.

By expanding $(\mc{I}+\mc{E})^{d-1}[\hat{\rho}]$ and $\exp(s\mc{E})[\hat{\rho}]$ for $s>0$ in powers of $\mc{E}$, we find that all terms are positive semidefinite in both expansions.
Since all powers appearing in $(\mc{I}+\mc{E})^{d-1}[\hat{\rho}]$ are also present in $\exp(s\mc{E})[\hat{\rho}]$, there exists some $c>0$ such that $\exp(s\mc{E})[\hat{\rho}] \succeq c(\mc{I}+\mc{E})^{d-1}[\hat{\rho}]$.
This proves $\exp(s\mc{E})[\hat{\rho}] \succ 0$ for any $s>0$.

(3) $\Rightarrow$ (1): 
Suppose that there exists an orthogonal projection operator $\hat{P} \notin \{ 0, \hat{\mbb{I}} \}$ such that $\mc{E}[\hat{P}] = \hat{P} \hat{\sigma} \hat{P}$ for some positive semidefinite operator $\hat{\sigma} \in \mbb{B}[\mc{H}]$.
This implies $\mc{E}[\hat{P}] = \hat{P} \hat{\sigma} \hat{P}\preceq c \hat{P}$ for some $c>0$\footnote{
If $\hat{P} \in \mbb{B}[\mc{H}]$ is an orthogonal projection operator and $\hat{A} \in \mbb{B}[\mc{H}]$ is a positive semidefinite operator, then there exists $c>0$ such that $\hat{P} \hat{A} \hat{P} \preceq c \hat{P}$.
\textit{Proof}. For any $\ket{\psi} \in \mc{H}$, we have $\langle \psi | \hat{P} \hat{A} \hat{P} | \psi \rangle \leq \| \hat{P} |\psi \rangle \| \| \hat{A} \hat{P} |\psi \rangle \|$ by the Cauchy-Schwarz inequality. 
Then, $\| \hat{P} |\psi \rangle \| \| \hat{A} \hat{P} |\psi \rangle \| \leq \| \hat{A} \| \| \hat{P} |\psi \rangle \|^2 = \| \hat{A} \| \langle \psi | \hat{P} | \psi \rangle$. 
Setting $c=\| \hat{A} \|$ yields $\hat{P} \hat{A} \hat{P} \preceq c\hat{P}$.
}
and then $\mc{E}^n[\hat{P}] \preceq c^n \hat{P}$ for $n \in \mbb{N}$. 
We thus find $\exp(s\mc{E})[\hat{P}] \preceq e^{cs} \hat{P}$ for any $s > 0$.
This implies that $\exp(s\mc{E})[\hat{P}]$ is not positive definite for all $s>0$, as we have $\langle \psi | \exp(s\mc{E})[\hat{P}] | \psi \rangle \leq e^{cs} \langle \psi | \hat{P} | \psi \rangle = 0$ for $\ket{\psi} \in \textrm{ker}(\hat{P})$.
\end{proof}

\section{Outline of the proof of the ergodicity explained in Sec.~\ref{KMerg}} 
\label{app:erg}
Here, we show the proof of the ergodicity explained in Sec.~\ref{KMerg}.
We basically follow Ref.~\cite{kummerer2005quantum} but try to illustrate it in a physicist-friendly way, at the cost of mathematical rigor.

We first consider the conditional expectation value
\aln{\label{juyou}
\mbb{E}\lrl{\hat{\rho}_{\bm{b};n}|b_1,\ldots,b_{n-1}}=\sum_{b_{n}}\frac{\ml{E}_{b_{n}}[\hat{\rho}_{\bm{b};n-1}]}{\Tr[\ml{E}_{b_{n}}[\hat{\rho}_{\bm{b};n-1}]]}\cdot \Tr[\ml{E}_{b_{n}}[\hat{\rho}_{\bm{b};n-1}]]=\ml{E}[\hat{\rho}_{\bm{b};n-1}]
}
and define
\aln{
\delta_{\bm{b};n}=\Tr\lrl{\lrs{\hat{\rho}_{\bm{b};n}-\ml{E}[\hat{\rho}_{\bm{b};n-1}]}\hat{A}},
}
which serves as the fluctuation of the trajectory measured by an observable $\hat{A}$, given $\hat{\rho}_{\bm{b};n-1}$.
We have
\aln{
\mbb{E}[\delta_{\bm{b};n}|b_1,\ldots,b_{n-1}]=0.
}

Now, for the weighted sum of fluctuations
\aln{
Y_{\bm{b};n}=\sum_{s=1}^n\frac{\delta_{\bm{b};s}}{s},
}
We find
\aln{
\mbb{E}[Y_{\bm{b};n}|b_1,\ldots,b_{n-1}]=Y_{\bm{b};n-1}+\frac{1}{n}\mbb{E}[\delta_{\bm{b};n}|b_1,\ldots,b_{n-1}]=Y_{\bm{b};n-1}.
}
This means that $Y_{\bm{b};n}$ is a martingale. 
We also note that $Y_{\bm{b};n}$ is bounded as well,
\aln{
\mbb{E}[|Y_{\bm{b};n}|]^2\leq \mbb{E}[Y_{\bm{b};n}^2]
=\sum_{ss'}\frac{1}{ss'}\mbb{E}[\delta_{\bm{b};s}\delta_{\bm{b};s'}]=\sum_s\frac{1}{s^2}
\mbb{E}[\delta_{\bm{b};s}^2]\leq 4\|\hat{A}\|^2\sum_s\frac{1}{s^2}
\leq\frac{2\pi^2\|\hat{A}\|^2}{3},
}
where $\|\hat{A}\|$ is the operator norm for $\hat{A}$.
Therefore, we can use the martingale convergence theorem~\cite{hall2014martingale} to find that
\aln{
Y_{\bm{b};\infty}:=\lim_{n\ra\infty}Y_{\bm{b};n}=\lim_{n\ra\infty}\sum_{s=1}^n\frac{\delta_{\bm{b};s}}{s}
}
exists almost surely with respect to the probability measure for quantum trajectories.

Now, using Kronecker's lemma\footnote{
Kronecker's lemma states the following. Let us consider an infinite sequence $\{x_s\}_{s=1}^\infty$ satisfying $\sum_{s=1}^\infty x_s<\infty$. 
Then, for all $\{g_n\}_n$ such that $0<g_1\leq\cdots\leq g_n$ with $\lim_{n\ra\infty}g_n=\infty$, we find
\aln{
\lim_{n\ra\infty}\frac{1}{g_n}\sum_{s=1}^ng_sx_s=0.
} 
We here consider $g_s=s$ and $x_s=\delta_s/s$.
}, we find
\aln{
\lim_{n\ra\infty}\frac{1}{n}\sum_{s=1}^{n}\delta_{\bm{b};s}=0
}
almost surely.
This means
\aln{
\lim_{n\ra\infty}\frac{1}{n}\sum_{s=1}^n\Tr\lrl{\lrs{\hat{\rho}_{\bm{b};s}-\ml{E}[\hat{\rho}_{\bm{b};s-1}]}\hat{A}}=0.
}
Since $\lim_{n\ra\infty}\frac{1}{n}\Tr[\hat{\rho}_{\bm{b};n}\hat{A}]=\lim_{n\ra\infty}\frac{1}{n}\Tr[\hat{\rho}_{\bm{b};n=0}\hat{A}]=0$, we can rewrite this equation as
\aln{\label{ergproof1}
\lim_{n\ra\infty}\frac{1}{n}\sum_{s=0}^{n-1}\Tr\lrl{\lrs{\hat{\rho}_{\bm{b};s}-\ml{E}[\hat{\rho}_{\bm{b};s}]}\hat{A}}=0.
}

Next, we replace $\hat{A}$ with 
\aln{
\ml{E}^\dag[\hat{A}]=\sum_b\hat{M}_b^\dag\hat{A}\hat{M}_b,
}
obtaining
\aln{\label{ergproof2}
\lim_{n\ra\infty}\frac{1}{n}\sum_{s=0}^{n-1}\Tr\lrl{\lrs{\ml{E}[\hat{\rho}_{\bm{b};s}]-\ml{E}^2[\hat{\rho}_{\bm{b};s}]}\hat{A}}=0.
}
Summing up Eqs.~\eqref{ergproof1} and~\eqref{ergproof2}, we find
\aln{
\lim_{n\ra\infty}\frac{1}{n}\sum_{s=0}^{n-1}\Tr\lrl{\lrs{\hat{\rho}_{\bm{b};s}-\ml{E}^2[\hat{\rho}_{\bm{b};s}]}\hat{A}}=0.
}
We can repeat this procedure $m$ times and take the average over $m$, obtaining
\aln{
\lim_{n\ra\infty}\frac{1}{n}\sum_{s=0}^{n-1}\Tr\lrl{\lrs{\hat{\rho}_{\bm{b};s}-\frac{1}{M}\sum_{m=0}^{M-1}\ml{E}^m[\hat{\rho}_{\bm{b};s}]}\hat{A}}=0.
}
By taking $M\ra\infty$ limit\footnote{
We assume that $\lim_{n\ra\infty}$ and $\lim_{M\ra\infty}$ commute.
}, we finally obtain
\aln{\label{ergproof3}
\lim_{n\ra\infty}\frac{1}{n}\sum_{s=0}^{n-1}\Tr\lrl{\lrs{\hat{\rho}_{\bm{b};s}-\av{\ml{E}^m}[\hat{\rho}_{\bm{b};s}]}\hat{A}}=0.
}

The next step is to notice
\aln{
\mbb{E}\lrl{\Tr\lrl{\hat{A}\av{\ml{E}^m}[\hat{\rho}_{\bm{b};n+1}]}|b_1,\ldots,b_n}=\Tr\lrl{\hat{A}\av{\ml{E}^m}[\hat{\rho}_{\bm{b};n}]}
}
because Eq.~\eqref{juyou} and $\av{\ml{E}^m}\circ \ml{E}=\av{\ml{E}^m}$ are satisfied.
This means that $\Tr\lrl{\hat{A}\av{\ml{E}^m}[\hat{\rho}_{\bm{b};n}]}$ is also martingale, and we can again employ the martingale convergence theorem.
That is,
\aln{
F_{\bm{b}}[\hat{A}]:=\lim_{n\ra\infty}\Tr\lrl{\hat{A}\av{\ml{E}^m}[\hat{\rho}_{\bm{b};n}]}
}
exists almost surely.
Substituting this into Eq.~\eqref{ergproof3}, we find that $\hat{\rho}_{\bm{b}}^\mr{\infty}=\overline{\hat{\rho}_{\bm{b};n}}$ exists and satisfies
\aln{
\Tr[\hat{\rho}_{\bm{b}}^\infty\hat{A}]=F_{\bm{b}}[\hat{A}].
}
Thus, (1) in Sec.~\ref{KMerg} is proved.

As the final step, we note
\aln{
F_{\bm{b}}[\ml{E}^\dag[\hat{A}]]
=\lim_{n\ra\infty}\Tr\lrl{\hat{A}\ml{E}\circ\av{\ml{E}^m}[\hat{\rho}_{\bm{b};n}]}
=\lim_{n\ra\infty}\Tr\lrl{\hat{A}\av{\ml{E}^m}[\hat{\rho}_{\bm{b};n}]}
=F_{\bm{b}}[\hat{A}]
}
and thus
\aln{
\Tr[\hat{\rho}_{\bm{b}}^\infty\hat{A}]=\Tr[\ml{E}[\hat{\rho}_{\bm{b}}^\infty]\hat{A}],
}
where we have used $\Tr[\hat{A}\ml{E}[\hat{B}]]=\Tr[\ml{E}^\dag[\hat{A}]\hat{B}]$.
Since this holds for arbitrary $\hat{A}$, we can conclude
\aln{\label{ergproof4}
\ml{E}[\hat{\rho}_{\bm{b}}^\infty]=\hat{\rho}_{\bm{b}}^\infty
}
almost surely.
We also find
\aln{
\mbb{E}\lrl{F_{\bm{b}}[\hat{A}]}=\lim_{n\ra\infty}\Tr\lrl{\hat{A}\av{\ml{E}^m}[\ml{E}^n[\hat{\rho}_0]]}=\Tr\lrl{\hat{A}\av{\ml{E}^m}[\hat{\rho}_0]}
}
because $\av{\ml{E}^m}\circ \ml{E}^n=\av{\ml{E}^m}$.
This means that
\aln{
\mbb{E}\lrl{\hat{\rho}_{\bm{b}}^\infty}=\av{\ml{E}^m}[\hat{\rho}_0]
}
almost surely.
Therefore, (2) in Sec.~\ref{KMerg} is proved.
Finally, if there is a unique stationary state $\hat{\rho}_\mr{ss}$ of $\ml{E}$, Eq.~\eqref{ergproof4} ensures
\aln{
\hat{\rho}_{\bm{b}}^\infty=\hat{\rho}_\mr{ss}
}
almost surely, which proves (3).

\section{Outline of the proof of the purification explained in Sec.~\ref{purMK}} 
\label{app:pur}
Here, we outline the proof of the sufficient condition for the purification of typical quantum trajectories presented in Ref.~\cite{maassen2006purification}.
First, we note that the statement in Sec.~\ref{purMK} is justified from the fact that either of the following statements holds (i.e., if (1)' is not satisfied, then (2)' holds):
\begin{enumerate}
\renewcommand{\labelenumi}{(\arabic{enumi})'}
\item 
For all $k\in\mbb{N}$,
\aln{
\lim_{n\ra\infty}P_{\bm{b};n}^{(k)}= 1
}    
almost surely, where
\aln{
P_{\bm{b};n}^{(k)}=\Tr\lrl{\lrs{\hat{\rho}_{\bm{b};n}}^k}.
}

\item
There exists a mixed (i.e., not pure) state $\hat{\sigma}$ such that for all $b$, there exists $\zeta_b\:(\geq 0)$ and
\aln{
\hat{M}_b\hat{\sigma}\hat{M}_b^\dag\sim\zeta_b\hat{\sigma},
}
where $\sim$ denotes the unitary equivalence.
\end{enumerate}
In fact, (1)' clearly indicates that a quantum trajectory purifies almost surely by considering $k=2$.
Moreover, as will be explained in the next paragraph, (2)' indicates $(\star)$ in Sec.~\ref{purMK}.
Therefore, if $(\star)$ does not hold, (2)' does not hold, either, which leads to the fact that (1)' and then the purification of almost all trajectories hold.

Showing that (2)' indicates ($\star$) is a bit complicated: suppose that $\hat{\sigma}$ has a support onto which we can define a projection operator  $\hat{\ml{Q}}$.
Let us denote $\mr{det}_+(\hat{A})$ as the product of all strictly positive eigenvalues of a Hermitian operator $\hat{A}$.
Then, by using the polar decomposition of $\hat{M}_b\hat{\ml{Q}}$ as $\hat{M}_b\hat{\ml{Q}}=\hat{v}_b\sqrt{\hat{\ml{Q}}\hat{M}_b^\dag\hat{M}_b\hat{\ml{Q}}}$ ($\hat{v}_b$ is unitary), we have
\aln{
\mr{det}_+({\zeta}_b\hat{\sigma})&=
\mr{det}_+(\hat{M}_b\hat{\sigma}\hat{M}_b^\dag)
=
\mr{det}_+(\hat{M}_b\hat{\ml{Q}}\hat{\sigma}\hat{\ml{Q}}\hat{M}_b^\dag)
=
\mr{det}_+\lrs{\hat{v}_b\sqrt{\hat{\ml{Q}}\hat{M}_b^\dag\hat{M}_b\hat{\ml{Q}}}\hat{\sigma}\sqrt{\hat{\ml{Q}}\hat{M}_b^\dag\hat{M}_b\hat{\ml{Q}}}\hat{v}_b^\dag}\nonumber\\
&=\mr{det}_+\lrs{\sqrt{\hat{\ml{Q}}\hat{M}_b^\dag\hat{M}_b\hat{\ml{Q}}}\hat{\sigma}\sqrt{\hat{\ml{Q}}\hat{M}_b^\dag\hat{M}_b\hat{\ml{Q}}}}=
\mr{det}_{\ml{Q}}\lrs{\sqrt{\hat{\ml{Q}}\hat{M}_b^\dag\hat{M}_b\hat{\ml{Q}}}\hat{\sigma}\sqrt{\hat{\ml{Q}}\hat{M}_b^\dag\hat{M}_b\hat{\ml{Q}}}}\nonumber\\
&=\mr{det}_{\ml{Q}}\lrs{{\hat{\ml{Q}}\hat{M}_b^\dag\hat{M}_b\hat{\ml{Q}}}}\mr{det}_{\ml{Q}}\lrs{\hat{\sigma}}
}
where we have used the unitary invariance of the spectrum and $\det_\mathcal{Q}$ is the standard determinant for the projected space determined by $\hat{\ml{Q}}$.
Since we also have $\mr{det}_+({\zeta}_b\hat{\sigma})=\mr{det}_\ml{Q}({\zeta}_b\hat{\ml{Q}}\hat{\sigma}\hat{\ml{Q}})=\mr{det}_\ml{Q}({\zeta}_b\hat{\ml{Q}})\mr{det}_\ml{Q}(\hat{\sigma})$ and $\mr{det}_\ml{Q}(\hat{\sigma})>0$, we have 
\aln{\label{detjoken}
\mr{det}_\ml{Q}({\zeta}_b\hat{\ml{Q}})=\mr{det}_{\ml{Q}}\lrs{{\hat{\ml{Q}}\hat{M}_b^\dag\hat{M}_b\hat{\ml{Q}}}}. 
}
This indicates 
\aln{\label{tracejoken}
\Tr[\zeta_b{\hat{\ml{Q}}}]\leq \Tr[\hat{\ml{Q}}\hat{M}_b^\dag\hat{M}_b\hat{\ml{Q}}],
}
where the equality condition is achieved only when $\zeta_b{\hat{\ml{Q}}}=\hat{\ml{Q}}\hat{M}_b^\dag\hat{M}_b\hat{\ml{Q}}$\footnote{
This is known by considering the eigenvalues of $\hat{\ml{Q}}\hat{M}_b^\dag\hat{M}_b\hat{\ml{Q}}$, say $x_1,\cdots, x_q$ with $q=\mr{rank}(\hat{\ml{Q}})$. 
We find that Eq.~\eqref{detjoken} means $\zeta_b^q=x_1\cdots x_q\leq \lrs{\frac{1}{q}\sum_{s=1}^qx_s}^q$, where the equality condition is $x_1=\cdots =x_q$. 
This indicates $q\zeta_b\leq \sum_{s=1}^qx_s$.
}.
In contrast, since\footnote{
Note that $\sum_b\zeta_b=\sum_b\Tr[\zeta_b\hat{\sigma}]=\sum_b\Tr[\hat{M}_b\hat{\sigma}\hat{M}_b^\dag]=1.$
} $\sum_b\Tr[\zeta_b{\hat{\ml{Q}}}]=q=\sum_b\Tr[\hat{\ml{Q}}\hat{M}_b^\dag\hat{M}_b\hat{\ml{Q}}]$ and $\Tr[\zeta_b{\hat{\ml{Q}}}]\geq 0$ for each $b$, we actually find the equality in Eq.~\eqref{tracejoken}.
Therefore, we finally have
\aln{
\hat{\ml{Q}}\hat{M}_b^\dag\hat{M}_b\hat{\ml{Q}}=\zeta_b\hat{\ml{Q}},
}
which is the condition ($\star$).

Now, let us show the statement that either (1)' or (2)' holds.
We first define the function $D_k$:
\aln{
D_k[\hat{\rho}]=\sum_bp_b\lrm{\Tr\lrl{\lrs{\frac{\hat{M}_b\hat{\rho}\hat{M}_b^\dag}{p_b}}^k}-\Tr[\hat{\rho}^k]}^2,
}
where $p_b=\Tr[\hat{M}_b^\dag\hat{M}_b\hat{\rho}]$.
By definition, we have
\aln{
D_k[\hat{\rho}_{\bm{b};n}]=\mbb{E}\lrl{(P_{\bm{b};n+1}^{(k)}-P_{\bm{b};n}^{(k)})^2|b_1,\ldots,b_n}.
}
Now, applying the inequality by Nielsen~\cite{nielsen2001characterizing}, we obtain
\aln{
\mbb{E}\lrl{P_{\bm{b};n+1}^{(k)}|b_1,\ldots,b_n}\geq \mbb{E}\lrl{P_{\bm{b};n}^{(k)}|b_1,\ldots,b_n}=P_{\bm{b};n}^{(k)}
}
with $\mbb{E}[|P_{\bm{b};n}^{(k)}|]\leq 1$, which indicates that $P_{\bm{b};n}^{(k)}$ is a positive submartingale bounded by 1.
Therefore, by the convergence theorem of submartingale, we have 
\aln{
P_{\bm{b};\infty}^{(k)}:=\lim_{n\ra\infty}P_{\bm{b};n}^{(k)}
}
almost surely.
Then, we have\footnote{
Note that $\mbb{E}\lrl{P_{\bm{b};n}^{(k)}\lrs{P_{\bm{b};n+1}^{(k)}-P_{\bm{b};n}^{(k)}}}\geq 0$ because $P_{\bm{b};n}^{(k)}$ is the submartingale.
}
\aln{
\sum_{n=0}^\infty\mbb{E}[D_k[\hat{\rho}_{\bm{b};n}]]
&= \sum_{n=0}^\infty\mbb{E}\lrl{{\lrs{P_{\bm{b};n+1}^{(k)}}^2-\lrs{P_{\bm{b};n}^{(k)}}^2}}-2\mbb{E}\lrl{P_{\bm{b};n}^{(k)}\lrs{P_{\bm{b};n+1}^{(k)}-P_{\bm{b};n}^{(k)}}}\nonumber\\
&\leq\sum_{n=0}^\infty\mbb{E}\lrl{\lrs{P_{\bm{b};n+1}^{(k)}}^2}-\mbb{E}\lrl{\lrs{P_{\bm{b};n}^{(k)}}^2}
= \mbb{E}\lrl{\lrs{P_{\bm{b};\infty}^{(k)}}^2}-\mbb{E}\lrl{\lrs{P_{\bm{b};0}^{(k)}}^2}\leq 1.
}
Therefore, we have
\aln{\label{EDk0}
\lim_{n\ra\infty}\sum_{k=1}^d\mbb{E}\lrl{D_k[\hat{\rho}_{\bm{b};n}]}=0,
}
where $d$ is the dimension of the Hilbert space, which is assumed to be finite.

Now, suppose that (1)' does not hold. Then, for some (and thus all) $k\geq 2$, we find 
\aln{
c_k:=\mbb{E}\lrl{P_{\bm{b};\infty}^{(k)}}<1.
}
Employing the inequality by Nielsen, we have
\aln{\label{c_2p_2}
\mbb{E}[P_{\bm{b};n}^{(2)}]\leq \mbb{E}[P_{\bm{b};\infty}^{(2)}]=c_2<1.
}

Let us define a set $S_n$ of quantum trajectories that satisfies $P_{\bm{b};n}^{(2)}\leq \frac{c_2+1}{2}<1$ and its complement $S_n^c$.
Then, using Eq.~\eqref{c_2p_2}, we have
\aln{
c_2\geq \mbb{E}\lrl{P_{\bm{b};n}^{(2)}}\geq \mbb{E}\lrl{P_{\bm{b};n}^{(2)}\chi_{S_n^c}(\bm{b})}
\geq \frac{c_2+1}{2}\mbb{P}[S_n^c]=\frac{c_2+1}{2}\lrs{1-\mbb{P}[S_n]}
}
for all $n$, where $\chi$ is the indicator function and $\mbb{P}[S]$ denotes the probability where quantum trajectories are in a set $S$.
This leads to 
\aln{\label{psnc2}
\mbb{P}[S_n]\geq \frac{1-c_2}{1+c_2}.
}

Next, we note that we can take a quantum trajectory characterized by a sequence of outcomes $\bm{b}^* \in S_n$ such that it satisfies
\aln{
\mbb{P}[S_n]\sum_{k=1}^d{D_k[\hat{\rho}_{\bm{b}^*;n}]}
\leq\mbb{E}\lrl{\chi_{S_n}(\bm{b})\cdot\sum_{k=1}^d{D_k[\hat{\rho}_{\bm{b};n}]}},
}
since the right-hand side denotes the average over $\bm{b}$.
For this quantum trajectory, we have
\aln{\label{c2ineq}
\sum_{k=1}^d{D_k[\hat{\rho}_{\bm{b}^*;n}]}\leq \frac{1+c_2}{1-c_2}\sum_{k=1}^d{\mbb{E}\lrl{D_k[\hat{\rho}_{\bm{b};n}]}}
}
due to Eq.~\eqref{psnc2} and the fact
\aln{
\mbb{E}\lrl{\chi_{S_n}(\bm{b})\cdot\sum_{k=1}^d{D_k[\hat{\rho}_{\bm{b};n}]}}
\leq\sum_{k=1}^d{\mbb{E}\lrl{D_k[\hat{\rho}_{\bm{b};n}]}}.
}
Since the trajectory is in $S_n$, $\hat{\rho}_{\bm{b}^*;n}$ is in the compact set of the quantum state
\aln{\label{setpuri}
\lrm{\hat{\rho}\: : \:\Tr[\hat{\rho}^2]\leq \frac{1+c_2}{2}}
}
for every $n$.

Now, let us consider $n\ra\infty$. 
From Eqs.~\eqref{EDk0} and~\eqref{c2ineq}, we have
\aln{\label{limndk}
\lim_{n\ra\infty}\sum_{k=1}^d{D_k[\hat{\rho}_{\bm{b}^*;n}]}=0
}
and, from Eq.~\eqref{setpuri}, we have
\aln{
\lim_{n\ra\infty}P_{\bm{b}^*;n}^{(2)}\leq \frac{1+c_2}{2}<1, 
}
which indicates that the state remains a mixed state for the trajectory characterized by $\bm{b}^*$.

Finally, Eq.~\eqref{limndk} means that $\lim_{n\ra\infty}{D_k[\hat{\rho}_{\bm{b}^*;n}]}=0$
for all $k=1,\ldots,d$, and this condition is written as
\aln{
\lim_{n\ra\infty}\Tr\lrl{\hat{M}_b\hat{\rho}_{\bm{b}^*;n}\hat{M}_b^\dag}\lrm{\Tr\lrl{\lrs{\frac{\hat{M}_b\hat{\rho}_{\bm{b}^*;n}\hat{M}_b^\dag}{\Tr\lrl{\hat{M}_b\hat{\rho}_{\bm{b}^*;n}\hat{M}_b^\dag}}}^k}-\Tr\lrl{\lrs{\hat{\rho}_{\bm{b}^*;n}}^k}}^2=0
}
for all $k=1,\ldots,d$ and $b$.
This means that, for $n\ra\infty$, either $\Tr\lrl{\hat{M}_b\hat{\rho}_{\bm{b}^*;n}\hat{M}_b^\dag}=0$ or $\Tr\lrl{\lrs{\frac{\hat{M}_b\hat{\rho}_{\bm{b}^*;n}\hat{M}_b^\dag}{\Tr\lrl{\hat{M}_b\hat{\rho}_{\bm{b}^*;n}\hat{M}_b^\dag}}}^k}-\Tr\lrl{\lrs{\hat{\rho}_{\bm{b}^*;n}}^k}=0\:(k=1,\cdots, d)$ holds. 
The former means that (2)' holds with $\zeta_b=0$.
The latter indicates that $\frac{\hat{M}_b\hat{\rho}_{\bm{b}^*;n}\hat{M}_b^\dag}{\Tr\lrl{\hat{M}_b\hat{\rho}_{\bm{b}^*;n}\hat{M}_b^\dag}}$ and $\hat{\rho}_{\bm{b}^*;n}$ should be unitarily equivalent due to the coincidence of the $k$th moments with $k=1,\ldots, d$, meaning that (2)' is proven.

\section{Birkhoff's ergodic theorem used in Sec.~\ref{sec:time-average_ensemble-average}}
\label{app:Birkhoff-theorem}
We explore the time average of a function $f(\Psi_{1,\bm{b};m})$,
\begin{align}
F_n(\bm{b})=\frac{1}{n}\sum_{m=\tilde{m}}^{\tilde{m}+n-1}f(\Psi_{1,\bm{b};m}),
\label{eq:time-average_finite}
\end{align}
where $\tilde{m}$ is an integer such that $\varepsilon_{1,\bm{b};m}$ is nondegenerate and thus $\ket{\Psi_{1,\bm{b};m}}$ is unique for almost all trajectories and for all $m\geq\tilde{m}$. 
In the following, we focus on situations where $m\geq\tilde{m}$ is satisfied. 
Here, we consider the case where the initial state is the stationary state, $\hat{\rho}_0=\hat{\rho}_\mathrm{ss}$, and thus the probability distribution of $\{b\}$ at each step obeys the invariant measure determined from $\hat{\rho}_\mathrm{ss}$. 
On the basis of the invariant measure, Birkhoff's ergodic theorem states that the three equations below are satisfied in typical trajectories:
\begin{align}
F^\pm(\bm{b})&=F^\pm(\vartheta\bm{b}),
\label{eq:theta-invariance_lim-sup_lim-inf}\\
F^+(\bm{b})&=F^-(\bm{b}),
\label{eq:lim-sup=lim-inf}\\
\mathbb{E}_{\rho_\mathrm{ss}}\left[F^\pm(\bm{b})\right]&=\mathbb{E}_{\rho_\mathrm{ss}}\left[f(\Psi_{1,\bm{b};m})\right],
\label{eq:sample-average}
\end{align}
where $F^\pm(\bm{b})$ are defined as
\begin{align}
F^+(\bm{b})=\limsup_{n\rightarrow\infty}F_n(\bm{b}),\ \ 
F^-(\bm{b})=\liminf_{n\rightarrow\infty}F_n(\bm{b}).
\end{align}
The outline of the proof for Eqs.~\eqref{eq:theta-invariance_lim-sup_lim-inf}-\eqref{eq:sample-average} is explained as follows, on the basis of Ref.~\cite{walters2000introduction}. 
Equations~\eqref{eq:theta-invariance_lim-sup_lim-inf} and~\eqref{eq:lim-sup=lim-inf} mean that we can consider the limit $\lim_{n\rightarrow\infty}F_n(\bm{b})$ invariant under the shift of $\bm{b}$.
If the dynamics is ergodic with respect to the shift of $\bm{b}$, as in Eq.~\eqref{eq:ergodicity_theta}, $\{\vartheta^{-n}\mathfrak{B}\}_n$ covers almost all trajectories for an arbitrary set $\mathfrak{B}$ with $P_{\rho_\mathrm{ss}}(\mathfrak{B})\neq0$, and thus $\lim_{n\rightarrow\infty}F_n(\bm{b})=F^\pm(\bm{b})$ typically becomes independent of $\bm{b}$. 
Then, Eq.~\eqref{eq:sample-average} can be written as
\begin{align}
\lim_{n\rightarrow\infty}F_n(\bm{b})
&=\mathbb{E}_{\rho_\mathrm{ss}}\left[f(\Psi_{1,\bm{b};m})\right],
\label{eq:time-average=sample-average}
\end{align}
which means that the time average of $f(\Psi_{1,\bm{b};m})$ (on the left-hand side) corresponds to the ensemble average of $f(\Psi_{1,\bm{b};m})$ (on the right-hand side). 

We first show inequalities
\begin{align}
\mathbb{E}_{\rho_\mathrm{ss}}^{\mathfrak{C}_\alpha^+ \cap \mathfrak{D}}\left[f(\Psi_{1,\bm{b};m})\right]
\geq\alpha P_\mathrm{\rho_\mathrm{ss}}(\mathfrak{C}_\alpha^+ \cap \mathfrak{D}),\ \ \mathfrak{C}_\alpha^+=\left\{\bm{b} :
 \sup_{n\geq1}F_n(\bm{b})>\alpha\right\},
\label{eq:integral-inequality_plus}\\
\mathbb{E}_{\rho_\mathrm{ss}}^{\mathfrak{C}_\beta^- \cap \mathfrak{D}}\left[f(\Psi_{1,\bm{b};m})\right]
\leq\beta P_{\rho_\mathrm{ss}}(\mathfrak{C}_\beta^- \cap \mathfrak{D}),\ \ \mathfrak{C}_\beta^-=\left\{\bm{b} :
 \inf_{n\geq1}F_n(\bm{b})<\beta\right\},
\label{eq:integral-inequality_minus}
\end{align}
where $\alpha$ and $\beta$ are arbitrary real values. 
Here, $\mathfrak{D}$ is an arbitrary invariant set $\mathfrak{D}=\vartheta^{-1}\mathfrak{D}$ and $\mathbb{E}_{\rho_\mathrm{ss}}^\mathfrak{B}\left[f(\Psi_{1,\bm{b};m})\right]=\sum_{\bm{b} \in \mathfrak{B}}\mathrm{Tr}\left(\hat{\mathsf{M}}_{\bm{b};\infty}\hat{\rho}_\mathrm{ss}\hat{\mathsf{M}}_{\bm{b};\infty}^\dagger\right)f(\Psi_{1,\bm{b};m})$ is the average of $f(\Psi_{1,\bm{b};m})$ over $\mathfrak{B}$. 
Birkhoff's ergodic theorem can be derived from Eqs.~\eqref{eq:integral-inequality_plus} and~\eqref{eq:integral-inequality_minus}. 
To show Eqs.~\eqref{eq:integral-inequality_plus} and~\eqref{eq:integral-inequality_minus}, we evaluate functions
\begin{align}
F_N^{+\alpha}(\bm{b})=\sup_{0 \leq n \leq N}n\left[F_n(\bm{b})-\alpha\right],\ \ 
F_N^{-\beta}(\bm{b})=\inf_{0 \leq n \leq N}n\left[F_n(\bm{b})-\beta\right].
\label{eq:max-min_F}
\end{align}
Here, for $n=0$, we set $n\left[F_n(\bm{b})-\gamma\right]=0$ with $\gamma=\alpha,\beta$. 
On the basis of $F_N^{\pm\gamma}(\bm{b})$, we consider two sets of outcomes $\bm{b}$,
\begin{align}
\mathfrak{F}_N^{+\alpha}=\left\{\bm{b} : F_N^{+\alpha}(\bm{b})>0\right\},\ \ 
\mathfrak{F}_N^{-\beta}=\left\{\bm{b} : F_N^{-\beta}(\bm{b})<0\right\}.
\label{eq:F-N_plus-minus}
\end{align}
If a sequence $\bm{b}$ is included in $\mathfrak{F}_N^{+\alpha}$, \begin{align}
    F_N^{+\alpha}(\vartheta\bm{b}) \geq n\left[F_n(\vartheta\bm{b})-\alpha\right]
    \label{eq:inequality_F_+}
\end{align}
is satisfied for an integer $n$ in a range $0 \leq n \leq N-1$. 
In the same way, if $\bm{b}\in\mathfrak{F}_N^{-\beta}$ and $0 \leq n \leq N-1$ hold,
\begin{align}
    F_N^{-\beta}(\vartheta\bm{b}) \leq n\left[F_n(\vartheta\bm{b})-\beta\right]
    \label{eq:inequality_F_-}
\end{align}
is satisfied. 
Adding $f(\Psi_{1,\bm{b};\tilde{m}})-\alpha$ and $f(\Psi_{1,\bm{b};\tilde{m}})-\beta$ on both sides of Eqs.~\eqref{eq:inequality_F_+} and~\eqref{eq:inequality_F_-}, respectively, we can obtain
\begin{align}
    F_N^{+\alpha}(\vartheta\bm{b})+f(\Psi_{1,\bm{b};\tilde{m}})-\alpha &\geq (n+1)\left[F_{n+1}(\bm{b})-\alpha\right],\ \ 
    \bm{b}\in\mathfrak{F}_N^{+\alpha}\\ 
    F_N^{-\beta}(\vartheta\bm{b})+f(\Psi_{1,\bm{b};\tilde{m}})-\beta &\leq (n+1)\left[F_{n+1}(\bm{b})-\beta\right],\ \ 
    \bm{b}\in\mathfrak{F}_N^{-\beta}.
\end{align}
These lead to
\begin{align}
    f(\Psi_{1,\bm{b};\tilde{m}})-\alpha &\geq F_N^{+\alpha}(\bm{b})-F_N^{+\alpha}(\vartheta\bm{b}),\ \ \bm{b}\in\mathfrak{F}_N^{+\alpha},
    \label{eq:inequality_f-alpha_+}\\ 
    f(\Psi_{1,\bm{b};\tilde{m}})-\beta &\leq F_N^{-\beta}(\bm{b})-F_N^{-\beta}(\vartheta\bm{b}),\ \ \bm{b}\in\mathfrak{F}_N^{-\beta}.
    \label{eq:inequality_f-beta_-}
\end{align}
Here, we focus on sequences $\bm{b}$ included in an invariant set $\mathfrak{D}=\vartheta^{-1}\mathfrak{D}$. 
Then, taking the average of Eq.~\eqref{eq:inequality_f-alpha_+} over $\mathfrak{F}_N^{+\alpha} \cap \mathfrak{D}$, we can obtain 
\begin{align}
    \mathbb{E}_{\rho_\mathrm{ss}}^{\mathfrak{F}_N^{+\alpha} \cap \mathfrak{D}}\left[f(\Psi_{1,\bm{b};\tilde{m}})-\alpha\right]&\geq \mathbb{E}_{\rho_\mathrm{ss}}^{\mathfrak{F}_N^{+\alpha} \cap \mathfrak{D}}\left[F_N^{+\alpha}(\bm{b})-F_N^{+\alpha}(\vartheta\bm{b})\right]\nonumber\\&=\mathbb{E}_{\rho_\mathrm{ss}}^{\mathfrak{F}_N^{+\alpha} \cap \mathfrak{D}}\left[F_N^{+\alpha}(\bm{b})\right]-\mathbb{E}_{\rho_\mathrm{ss}}^{\vartheta^{-1}(\mathfrak{F}_N^{+\alpha} \cap \mathfrak{D})}\left[F_N^{+\alpha}(\vartheta\bm{b})\right]=0,
\end{align}
where $\vartheta^{-1}(\mathfrak{F}_N^{+\alpha} \cap \mathfrak{D})=\mathfrak{F}_N^{+\alpha} \cap \mathfrak{D}$ is used. 
In the same way, taking the average of Eq.~\eqref{eq:inequality_f-beta_-} over $\mathfrak{F}_N^{-\beta} \cap \mathfrak{D}$, we can obtain
\begin{align}
    \mathbb{E}_{\rho_\mathrm{ss}}^{\mathfrak{F}_N^{-\beta} \cap \mathfrak{D}}\left[f(\Psi_{1,\bm{b};\tilde{m}})-\beta\right]&\leq \mathbb{E}_{\rho_\mathrm{ss}}^{\mathfrak{F}_N^{-\beta} \cap \mathfrak{D}}\left[F_N^{-\beta}(\bm{b})-F_N^{-\beta}(\vartheta\bm{b})\right]\nonumber\\&=\mathbb{E}_{\rho_\mathrm{ss}}^{\mathfrak{F}_N^{-\beta} \cap \mathfrak{D}}\left[F_N^{-\beta}(\bm{b})\right]-\mathbb{E}_{\rho_\mathrm{ss}}^{\vartheta^{-1}(\mathfrak{F}_N^{-\beta} \cap \mathfrak{D})}\left[F_N^{-\beta}(\vartheta\bm{b})\right]=0.
\end{align}
These inequalities are equivalent to
\begin{align}
\mathbb{E}_{\rho_\mathrm{ss}}^{\mathfrak{F}_N^{+\alpha} \cap \mathfrak{D}}\left[f(\Psi_{1,\bm{b};m})\right]\geq\alpha P_{\rho_\mathrm{ss}}(\mathfrak{F}_N^{+\alpha} \cap \mathfrak{D}),\ \ 
\mathbb{E}_{\rho_\mathrm{ss}}^{\mathfrak{F}_N^{-\beta} \cap \mathfrak{D}}\left[f(\Psi_{1,\bm{b};m})\right]\leq \beta P_{\rho_\mathrm{ss}}(\mathfrak{F}_N^{-\beta} \cap \mathfrak{D}),
\label{eq:max-min_ergodic-theorem}
\end{align}
where $\tilde{m}$ is replaced by $m$. 
The replacement is valid because $\tilde{m}$ in Eq.~\eqref{eq:time-average_finite} can take arbitrary integers as long as $\varepsilon_{1,\bm{b};m\geq\tilde{m}}$ is nondegenerate almost surely. 
Here, we notice that $\mathfrak{C}_\gamma^\pm$ can be written as
\begin{align}
\mathfrak{C}_\gamma^\pm
=\lim_{N\rightarrow\infty}\mathfrak{F}_N^{\pm\gamma}.
\end{align}
Thus, taking the limit $N\rightarrow\infty$ in Eq.~\eqref{eq:max-min_ergodic-theorem}, we can obtain Eqs.~\eqref{eq:integral-inequality_plus} and~\eqref{eq:integral-inequality_minus}.

To show Eq.~\eqref{eq:theta-invariance_lim-sup_lim-inf}, we consider an identical equation
\begin{align}
\frac{n+1}{n}F_{n+1}(\bm{b})
=F_n(\vartheta\bm{b})+\frac{1}{n}f(\Psi_{1,\bm{b};\tilde{m}}).
\end{align}
Then, taking $\limsup_{n\rightarrow\infty}$ and $\liminf_{n\rightarrow\infty}$, we can obtain Eq.~\eqref{eq:theta-invariance_lim-sup_lim-inf} owing to $\limsup_{n\rightarrow\infty}f(\Psi_{1,\bm{b};\tilde{m}})/n=\liminf_{n\rightarrow\infty}f(\Psi_{1,\bm{b};\tilde{m}})/n=0$.

To show Eq.~\eqref{eq:lim-sup=lim-inf}, we consider a set
\begin{align}
\mathfrak{F}_{\alpha\beta}=\{\bm{b} : F^-(\bm{b})<\beta\ \mathrm{and}\ \alpha<F^+(\bm{b})\}.
\end{align}
From Eq.~\eqref{eq:theta-invariance_lim-sup_lim-inf}, we can understand that $\mathfrak{F}_{\alpha\beta}$ is the invariant set, $\mathfrak{F}_{\alpha\beta}=\vartheta^{-1}\mathfrak{F}_{\alpha\beta}$. 
Therefore, Eq.~\eqref{eq:integral-inequality_plus} leads to
\begin{align}
\mathbb{E}_{\rho_\mathrm{ss}}^{\mathfrak{F}_{\alpha\beta}}\left[f(\Psi_{1,\bm{b};m})\right]
=\mathbb{E}_{\rho_\mathrm{ss}}^{\mathfrak{F}_{\alpha\beta} \cap \mathfrak{C}_\alpha^+}\left[f(\Psi_{1,\bm{b};m})\right]
\geq\alpha P_\mathrm{\rho_\mathrm{ss}}(\mathfrak{F}_{\alpha\beta}).
\end{align}
Here, $\mathfrak{F}_{\alpha\beta} \in \mathfrak{C}_\alpha^+$ is used.
In the same way, Eq.~\eqref{eq:integral-inequality_minus} and $\mathfrak{F}_{\alpha\beta} \in \mathfrak{C}_\beta^-$ result in
\begin{align}
\mathbb{E}_{\rho_\mathrm{ss}}^{\mathfrak{F}_{\alpha\beta}}\left[f(\Psi_{1,\bm{b};m})\right]
=\mathbb{E}_{\rho_\mathrm{ss}}^{\mathfrak{F}_{\alpha\beta} \cap \mathfrak{C}_\beta^-}\left[f(\Psi_{1,\bm{b};m})\right]
\leq\beta P_\mathrm{\rho_\mathrm{ss}}(\mathfrak{F}_{\alpha\beta}).
\end{align}
Thus, $\alpha P_{\rho_\mathrm{ss}}(\mathfrak{F}_{\alpha\beta})\leq\beta P_{\rho_\mathrm{ss}}(\mathfrak{F}_{\alpha\beta})$ holds, which means
\begin{align}
P_\mathrm{\rho_\mathrm{ss}}(\mathfrak{F}_{\alpha\beta})=0
\label{eq:zero-measure}
\end{align}
if $\alpha>\beta$ is satisfied. 
Equation~\eqref{eq:zero-measure} indicates that Eq.~\eqref{eq:lim-sup=lim-inf} is satisfied almost surely.

To show Eq.~\eqref{eq:sample-average}, we consider invariant sets
\begin{align}
\mathfrak{D}_{nk}^+=\left\{\bm{b} : \frac{k}{n} < F^+(\bm{b}) \leq \frac{k+1}{n} \right\},\ \ \mathfrak{D}_{nk}^-=\left\{\bm{b} : \frac{k}{n} \leq F^-(\bm{b}) < \frac{k+1}{n} \right\}.
\end{align}
For arbitrary $\epsilon>0$, 
\begin{align}
\mathbb{E}_{\rho_\mathrm{ss}}^{\mathfrak{D}_{nk}^+}\left[f(\Psi_{1,\bm{b};m})\right]
=\mathbb{E}_{\rho_\mathrm{ss}}^{\mathfrak{D}_{nk}^+ \cap \mathfrak{C}_{\frac{k}{n}-\epsilon}^+}\left[f(\Psi_{1,\bm{b};m})\right]
\geq\left(\frac{k}{n}-\epsilon\right)P_{\rho_\mathrm{ss}}(\mathfrak{D}_{nk}^+)
\end{align}
is satisfied, which can be understood from Eq.~\eqref{eq:integral-inequality_plus}. 
This leads to
\begin{align}
\mathbb{E}_{\rho_\mathrm{ss}}^{\mathfrak{D}_{nk}^+}\left[F^+(\bm{b})\right]\leq\frac{k+1}{n}P_{\rho_\mathrm{ss}}(\mathfrak{D}_{nk}^+)
\leq\frac{1}{n}P_{\rho_\mathrm{ss}}(\mathfrak{D}_{nk}^+)+\mathbb{E}_{\rho_\mathrm{ss}}^{\mathfrak{D}_{nk}^+}\left[f(\Psi_{1,\bm{b};m})\right],
\label{eq:integral-inequality_plus_D}
\end{align}
where the limit $\epsilon\rightarrow0$ is taken. 
Taking the summation over all integers $k$ and taking the limit $n\rightarrow\infty$ in Eq.~\eqref{eq:integral-inequality_plus_D}, we can obtain
\begin{align}
\mathbb{E}_{\rho_\mathrm{ss}}\left[F^+(\bm{b})\right]\leq\mathbb{E}_{\rho_\mathrm{ss}}\left[f(\Psi_{1,\bm{b};m})\right].
\label{eq:average_lower-bound}
\end{align}
In the same way, Eq.~\eqref{eq:integral-inequality_minus} leads to
\begin{align}
\mathbb{E}_{\rho_\mathrm{ss}}^{\mathfrak{D}_{nk}^-}\left[f(\Psi_{1,\bm{b};m})\right]=\mathbb{E}_{\rho_\mathrm{ss}}^{\mathfrak{D}_{nk}^- \cap \mathfrak{C}_{\frac{k+1}{n}+\epsilon}^-}\left[f(\Psi_{1,\bm{b};m})\right]
\leq\left(\frac{k+1}{n}+\epsilon\right)P_{\rho_\mathrm{ss}}(\mathfrak{D}_{nk}^-),
\end{align}
from which we can obtain
\begin{align}
\mathbb{E}_{\rho_\mathrm{ss}}^{\mathfrak{D}_{nk}^-}\left[F^-(\bm{b})\right]\geq\frac{k}{n}P_{\rho_\mathrm{ss}}(\mathfrak{D}_{nk}^-)
\geq-\frac{1}{n}P_{\rho_\mathrm{ss}}(\mathfrak{D}_{nk}^-)+\mathbb{E}_{\rho_\mathrm{ss}}^{\mathfrak{D}_{nk}^-}\left[f(\Psi_{1,\bm{b};m})\right],
\label{eq:integral-inequality_minus_D}
\end{align}
with $\epsilon\rightarrow0$ and thus
\begin{align}
\mathbb{E}_{\rho_\mathrm{ss}}\left[f(\Psi_{1,\bm{b};m})\right]\leq\mathbb{E}_{\rho_\mathrm{ss}}\left[F^-(\bm{b})\right]
\label{eq:average_upper-bound}
\end{align}
is satisfied, where the summation over $k$ and the limit $n\rightarrow\infty$ are taken. 
From Eqs.~\eqref{eq:average_lower-bound}, \eqref{eq:average_upper-bound}, and~\eqref{eq:lim-sup=lim-inf}, we can obtain Eq.~\eqref{eq:sample-average}.

\section{Kingman's subadditive ergodic theorem used in Sec.~\ref{sec:typical-convergence_Lyapunov-spectrum}}
\label{app:proof_Kingman-theorem}
We show Eq.~\eqref{eq:Kingman-theorem} for a sequence of functions $\{f_n(\bm{b})\}_n$ that satisfy the subadditivity
\begin{align}
    f_{n+m}(\bm{b})\leq f_n(\bm{b})+f_m(\vartheta^n\bm{b}).
    \label{eq:subadditive-function_appendix}
\end{align}
The outline of the proof explained here is based on Ref.~\cite{steele1989kingman}. 
In the following, we assume that the initial state is a stationary state $\hat{\rho}_0=\hat{\rho}_\mathrm{ss}=\mathcal{E}(\hat{\rho}_\mathrm{ss})$, which leads to the invariant measure  $P_{\rho_0}(\mathfrak{B})=P_{\rho_0}(\vartheta^{-1}\mathfrak{B})$, where $\mathfrak{B}$ is an arbitrary set of $\bm{b}$. 
We also assume $f_n(\bm{b})\leq0$ for arbitrary $n$ and $\bm{b}$, which is satisfied when we choose the function as in Eq.~\eqref{eq:exterior-power_M}. 
We consider two limits
\begin{align}
    f^+(\bm{b})=\limsup_{n\rightarrow\infty}\frac{f_n(\bm{b})}{n},\ \ 
    f^-(\bm{b})=\liminf_{n\rightarrow\infty}\frac{f_n(\bm{b})}{n}.
\end{align}
If $f^+(\bm{b})=f^-(\bm{b})$ is satisfied, the limit of $f_n(\bm{b})/n$ exists. The limits are typically invariant under the shift,
\begin{align}
    f^\pm(\bm{b})=f^\pm(\vartheta\bm{b}).
    \label{eq:limit_sup-inf_invariant}
\end{align}
Equation~\eqref{eq:limit_sup-inf_invariant} originates from the subadditivity of $f_n(\bm{b})$, as explained below. 
Taking limits $\limsup_{n\rightarrow\infty}$ and $\liminf_{n\rightarrow\infty}$ of an inequality $\frac{n+1}{n}\frac{f_{n+1}(\bm{b})}{n+1} \leq \frac{f_1(\bm{b})+f_n(\vartheta\bm{b})}{n}$, which is obtained from Eq.~\eqref{eq:subadditive-function_appendix}, $f^\pm(\bm{b})\leq f^\pm(\vartheta\bm{b})$ is satisfied. 
This inequality results in $\{\bm{b}:f^\pm(\bm{b})>\alpha\}\in\{\bm{b}:f^\pm(\vartheta\bm{b})>\alpha\}$ where $\alpha$ is a real number. 
Since $\bm{b}\rightarrow\vartheta\bm{b}$ is a measure-preserving transformation, $\{\bm{b}:f^\pm(\bm{b})>\alpha\}$ and $\{\bm{b}:f^\pm(\vartheta\bm{b})>\alpha\}=\vartheta^{-1}\{\bm{b}:f^\pm(\bm{b})>\alpha\}$ are different at most by a measure-zero set, $P_{\rho_\mathrm{ss}}(\left\{\bm{b}:f^\pm(\bm{b})>\alpha\right\})=P_{\rho_\mathrm{ss}}(\left\{\bm{b}:f^\pm(\vartheta\bm{b})>\alpha\right\})$, for arbitrary $\alpha$. 
Therefore, Eq.~\eqref{eq:limit_sup-inf_invariant} is satisfied for typical sequences of $\{\bm{b}\}$. 

\begin{figure}[htbp]
\centering\includegraphics[width=\linewidth]{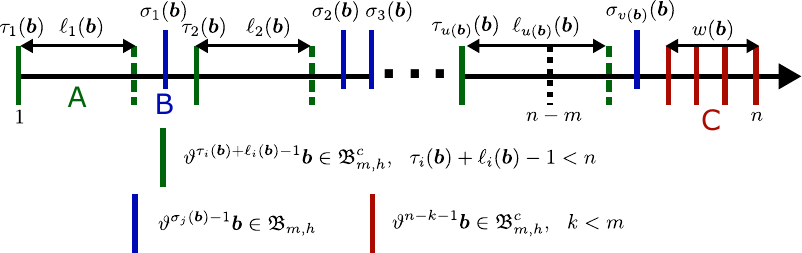}
\caption{How to separate the interval $[1,n]$ for the proof of Kingman's subadditive ergodic theorem.}
\label{fig:separation_kingman-theorem}
\end{figure}

To show the equivalence between $f^+(\bm{b})$ and $f^-(\bm{b})$, we consider a function
\begin{align}
    F_h(\bm{b})=\max\{h,f^-(\bm{b})\},
\end{align}
with $h<0$ being a real number. 
Owing to $f^-(\vartheta\bm{b})=f^-(\bm{b})$, the function is invariant by the shift of $\bm{b}$, $F_h(\vartheta\bm{b})=F_h(\bm{b})$. 
Defining a set of $\bm{b}$,
\begin{align}
    \mathfrak{B}_{m,h}=\left\{\bm{b}:f_\ell(\bm{b})/\ell
    >F_h(\bm{b})+\delta\ \mathrm{for\ all}\ 1 \leq \ell \leq m\right\},
\end{align}
we classify integers $i\in[1,n]$ into three classes A, B, and C, where the real number $\delta$ and the integer $m$ satisfy $\delta>0$ and $m<n$, respectively. 
For the classification, fixing a sequence of measurement outcomes $\bm{b}$, we sweep $i$ from $i=1$ and check whether $\vartheta^{i-1}\bm{b}$ is included in $\mathfrak{B}_{m,h}^c$. 
Here, $\mathfrak{B}_{m,h}^c$ is the complement of $\mathfrak{B}_{m,h}$. If $\vartheta^{i-1}\bm{b}\in \mathfrak{B}_{m,h}^c$ is satisfied, we label the integer $i$ as $\tau(\bm{b})$. 
In this case, there is a positive integer $\ell(\bm{b}) \le m$ which satisfies $f_{\ell(\bm{b})}(\vartheta^{\tau(\bm{b})-1}\bm{b})/\ell(\bm{b}) \leq F_h(\bm{b})+\delta$, where $F_h(\vartheta^{i-1}\bm{b})=F_h(\bm{b})$ is used. 
Then, if $\tau(\bm{b})+\ell(\bm{b})<n$ is satisfied, the interval $[\tau(\bm{b}),\tau(\bm{b})+\ell(\bm{b})-1]$ is classified into A and we continue sweeping from $\tau(\bm{b})+\ell(\bm{b})$. 
We write the number of intervals in the class A as $u(\bm{b})$; there are intervals $\{[\tau_j(\bm{b}),\tau_j(\bm{b})+\ell_j(\bm{b})-1]\}_j$ with $j=1,2,\ldots,u(\bm{b})$. 
If $\vartheta^{i-1}\bm{b}\in \mathfrak{B}_{m,h}$ is satisfied, such $i$ is classified into B and labeled as $\sigma(\bm{b})$. 
We write the number of integers in the class B as $v(\bm{b})$; there are integers $\{\sigma_j(\bm{b})\}_j$ with $j=1,2,\ldots,v(\bm{b})$. 
If $\vartheta^{i-1}\bm{b}\in \mathfrak{B}_{m,h}^c$ is satisfied but $i+\ell-1>n$ holds for the smallest $\ell$ that satisfies $f_\ell(\vartheta^{i-1}\bm{b})/\ell \leq F_h(\bm{b})+\delta$, such $i$ is classified into C. 
We write the number of integers in the class C as $w(\bm{b})$. 
The schematic picture for the classification is shown in Fig.~\ref{fig:separation_kingman-theorem}.
Then, through the subadditivity in Eq.~\eqref{eq:subadditive-function_appendix}, we can obtain an inequality
\begin{align}
    f_n(\bm{b})&\leq\sum_{i=1}^{u(\bm{b})}
    f_{\ell_i(\bm{b})}\left(\vartheta^{\tau_i(\bm{b})-1}\bm{b}\right)
    +\sum_{j=1}^{v(\bm{b})}f_1\left(\vartheta^{\sigma_j(\bm{b})-1}\bm{b}\right)
    +\sum_{k=1}^{w(\bm{b})}f_1(\vartheta^{n-k}\bm{b})\nonumber\\
    &\leq\left[F_h(\bm{b})+\delta\right]\sum_{i=1}^{u(\bm{b})}\ell_i(\bm{b})
    \label{eq:inequality_f}
\end{align}
where $f_1(\bm{b})\leq0$ is used. 
Here, by construction, an inequality 
\begin{align}
    n-m
    \leq\sum_{i=1}^{u(\bm{b})}\ell_i(\bm{b})
    +\sum_{j=1}^n\chi_{\mathfrak{B}_{m,h}}(\vartheta^{j-1}\bm{b})
    \label{eq:inequality_intervals}
\end{align}
is also satisfied, where $\chi_{\mathfrak{B}_{m,h}}(\bm{b})=1$ if $\bm{b} \in \mathfrak{B}_{m,h}$ and $\chi_{\mathfrak{B}_{m,h}}(\bm{b})=0$ if $\bm{b} \in \mathfrak{B}_{m,h}^c$. 
Dividing Eq.~\eqref{eq:inequality_intervals} by $n$ and taking the limit $n\rightarrow\infty$ lead to
\begin{align}
    1\leq
    \lim_{n\rightarrow\infty}\frac{1}{n}\sum_{i=1}^{u(\bm{b})}\ell_i(\bm{b})
    +\lim_{n\rightarrow\infty}\frac{1}{n}\sum_{j=1}^n\chi_{\mathfrak{B}_{m,h}}(\vartheta^{j-1}\bm{b}).
    \label{eq:inequality_ell}
\end{align}
Then, we also take the limit $m\rightarrow\infty$, which leads to $\chi_{\mathfrak{B}_{m,h}}(\bm{b})=0$. 
This means 
\begin{align}
    \lim_{m\rightarrow\infty}\lim_{n\rightarrow\infty}\frac{1}{n}\sum_{i=1}^{u(\bm{b})}\ell_i(\bm{b})=1,
    \label{eq:ell_sum}
\end{align}
since $\frac{1}{n}\sum_{i=1}^{u(\bm{b})}\ell_i(\bm{b})\leq1$ is always satisfied. 
On the basis of Eq.~\eqref{eq:ell_sum}, diving Eq.~\eqref{eq:inequality_f} by $n$ and taking the limit $\lim_{m\rightarrow\infty}\limsup_{n\rightarrow\infty}$, we can obtain 
\begin{align}
    f^+(\bm{b}) \leq F_h(\bm{b})+\delta.
    \label{eq:inequality_f+}
\end{align}
Since $h$ and $\delta$ can take arbitrary small values within the range $h<0$ and $\delta>0$, Eq.~\eqref{eq:inequality_f+} results in
\begin{align}
    f^+(\bm{b}) \leq f^-(\bm{b}),
    \label{eq:inequality_+-}
\end{align}
which means $f^+(\bm{b})=f^-(\bm{b})$. 
When the dynamics is ergodic with respect to the shift $\bm{b}\rightarrow\vartheta\bm{b}$, i.e., Eq.~\eqref{eq:ergodicity_theta} is satisfied, $f^\pm(\bm{b})=f^\pm(\vartheta\bm{b})$ almost surely becomes independent of $\bm{b}$. 
This is because $\{\vartheta^{-n}\mathfrak{B}\}_n$ cover almost all trajectories for an arbitrary set $\mathfrak{B}$ as long as $P_{\rho_\mathrm{ss}}(\mathfrak{B})\neq0$ is satisfied. 
Then, Eq.~\eqref{eq:inequality_+-} leads to the typical convergence of $f_n(\bm{b})/n$ to a $\bm{b}$-independent value $\gamma$,
\begin{align}
    \lim_{n\rightarrow\infty}\frac{f_n(\bm{b})}{n}=\gamma,
\end{align}
which corresponds to Eq.~\eqref{eq:Kingman-theorem} in the main text. 


\bibliographystyle{ptephy}

%

\vspace{0.2cm}
\noindent


\end{document}